\DeclareMathAlphabet\mathbfcal{OMS}{cmsy}{b}{n}
\newcommand{\del}{\partial}
\renewcommand{\tilde}{\widetilde}
\renewcommand{\bar}{\overline}
\tikzset{snake it/.style={decorate, decoration=snake}}
\tikzset{
    >=stealth',
    punkt/.style={
           rectangle,
           rounded corners,
           draw=black, very thick,
           text width=6.5em,
           minimum height=2em,
           text centered},
    pil/.style={
           ->,
           thick,
           shorten <=2pt,
           shorten >=2pt,},
  on each segment/.style={
    decorate,
    decoration={
      show path construction,
      moveto code={},
      lineto code={
        \path [#1]
        (\tikzinputsegmentfirst) -- (\tikzinputsegmentlast);
      },
      curveto code={
        \path [#1] (\tikzinputsegmentfirst)
        .. controls
        (\tikzinputsegmentsupporta) and (\tikzinputsegmentsupportb)
        ..
        (\tikzinputsegmentlast);
      },
      closepath code={
        \path [#1]
        (\tikzinputsegmentfirst) -- (\tikzinputsegmentlast);
      },
    },
  },
  mid arrow/.style={postaction={decorate,decoration={
        markings,
        mark=at position .5 with {\arrow[#1]{stealth'}}
      }}}
}
\newtheorem{theorem}{Theorem}
\newtheorem{lemma}[theorem]{Lemma}
\theoremstyle{definition}
\newtheorem{definition}[theorem]{Definition}
\newtheorem{protocol}[theorem]{Protocol}
\newtheorem{remark}[theorem]{Remark}
\newtheorem{assumption}[theorem]{Assumption}
\newtheorem*{argument}{Argument}
\title{The connected wedge theorem and its consequences}
\author[a,b]{Alex May,}
\author[a, c]{Jonathan Sorce,}
\author[b]{and Beni Yoshida}
\affiliation[a]{Stanford Institute for Theoretical Physics, Stanford University, 382 Via Pueblo Mall, Stanford, CA 94305-4060, U.S.A.}
\affiliation[b]{Perimeter Institute for Theoretical Physics, Waterloo, Ontario N2L 2Y5, Canada}
\affiliation[c]{Center for Theoretical Physics, Massachusetts Institute of Technology, 182 Memorial Dr, Cambridge, MA 02142, U.S.A.}
\emailAdd{may@phas.ubc.ca}
\emailAdd{jsorce@mit.edu}
\emailAdd{byoshida@perimeterinstitute.ca}
\abstract{In the AdS/CFT correspondence, bulk causal structure has consequences for boundary entanglement. In quantum information science, causal structures can be replaced by distributed entanglement for the purposes of information processing. In this work, we deepen the understanding of both of these statements, and their relationship, with a number of new results. 
Centrally, we present and prove a new theorem, the $n$-to-$n$ connected wedge theorem, which considers $n$ input and $n$ output locations at the boundary of an asymptotically AdS$_{2+1}$ spacetime described by AdS/CFT. 
When a sufficiently strong set of causal connections exists among these points in the bulk, a set of $n$ associated regions in the boundary will have extensive-in-N mutual information across any bipartition of the regions. 
The proof holds in three bulk dimensions for classical spacetimes satisfying the null curvature condition and for semiclassical spacetimes satisfying standard conjectures.
The $n$-to-$n$ connected wedge theorem gives a precise example of how causal connections in a bulk state can emerge from large-N entanglement features of its boundary dual.
It also has consequences for quantum information theory: it reveals one pattern of entanglement which is sufficient for information processing in a particular class of causal networks.
We argue this pattern is also necessary, and give an AdS/CFT inspired protocol for information processing in this setting.

Our theorem generalizes the $2$-to-$2$ connected wedge theorem proven in arXiv:1912.05649. We also correct some errors in the proof presented there, in particular a false claim that existing proof techniques work above three bulk dimensions.
}
\begin{document} 
\maketitle
\flushbottom

\section{Introduction}

Among the most basic puzzles in quantum gravity is how a gravitating spacetime emerges from underlying quantum mechanical degrees of freedom. 
A sharp context in which to study this problem is the AdS/CFT correspondence, where a $(d+1)$-dimensional asymptotically anti-de Sitter spacetime is recorded into the state of a $d$-dimensional conformal field theory living at the asymptotic boundary. 
Here, a puzzle arises from the observation that the causal structure of the bulk theory is state-dependent, while the causal structure of the boundary is fixed.
In particular, for certain choices of boundary points and bulk spacetimes, the set of causal relationships among those points may be richer in the bulk than on the boundary.
For instance, in vacuum AdS$_3,$ one can choose points $\{c_1, c_2, r_1, r_2\}$ so that there exists a bulk point $p$ where timelike signals from $c_1,c_2$ can meet, then travel to points $r_1$ and $r_2$, while no such point $p$ exists on the boundary.
As we add more points --- say, $\{c_1, \dots, c_n, r_1, \dots, r_n\}$ --- we can construct complicated causal relationships in the bulk that are not present on the boundary.
See figure \ref{fig:causalstructures} for examples. 

This raises the following question: \emph{What data in a boundary state records the pattern of causal relationships that exist in its bulk dual?}

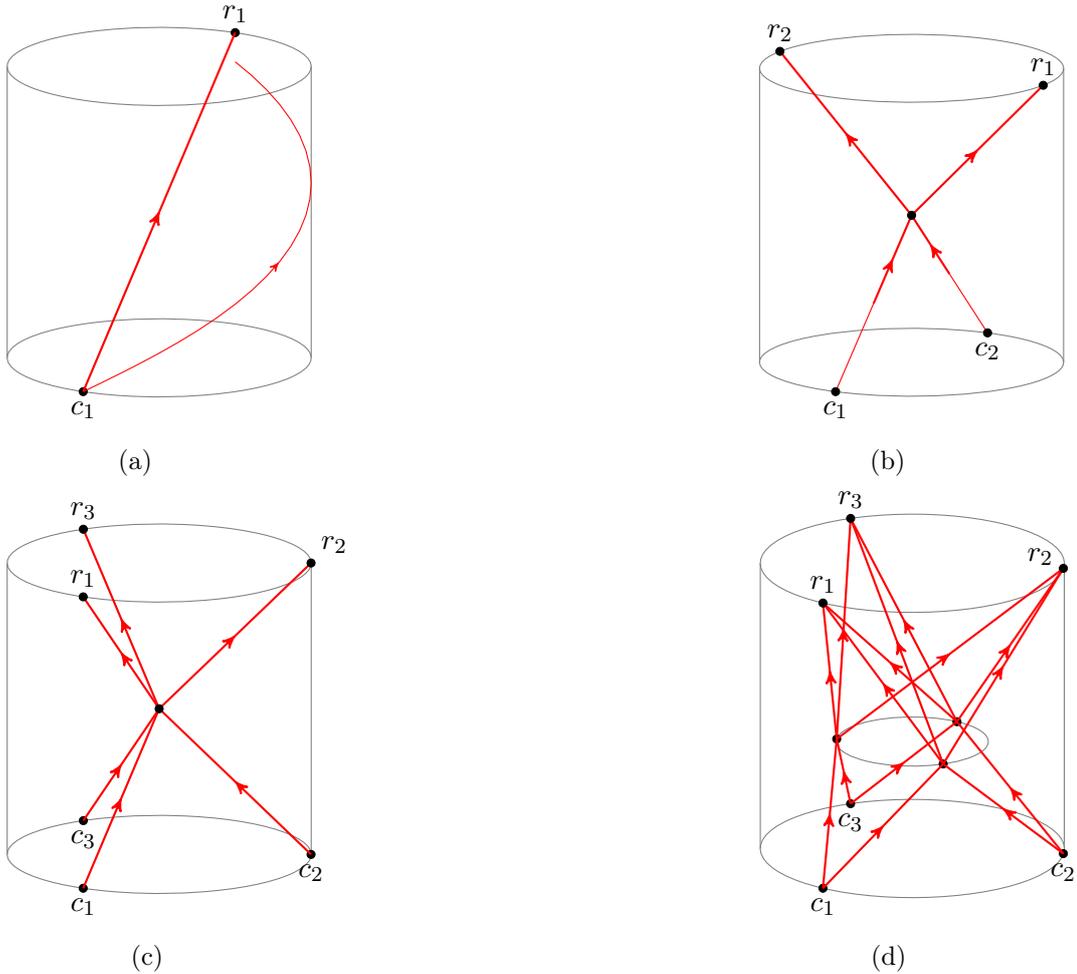
\begin{figure}
    \centering
    \subfloat[\label{fig:1to1cylinder}]{
    \tdplotsetmaincoords{15}{0}
    \begin{tikzpicture}[scale=1.0,tdplot_main_coords]
    \tdplotsetrotatedcoords{0}{30}{0}
    \draw[gray] (-2,1,0) -- (-2,5,0);
    \draw[gray] (2,1,0) -- (2,5,0);
    
    \begin{scope}[tdplot_rotated_coords]
    
    \begin{scope}[canvas is xz plane at y=1]
    \draw[gray] (0,0) circle[radius=2] ;
    \end{scope}
    
    \begin{scope}[canvas is xz plane at y=5]
    \draw[gray] (0,0) circle[radius=2] ;
    \end{scope}
    
    \draw plot [mark=*, mark size=1.5] coordinates{(0,1,-2)};
    \node[below] at (0,1,-2) {$c_1$};
    
    \draw plot [mark=*, mark size=1.5] coordinates{(0,5,2)};
    \node[above] at (0,5,2) {$r_1$};
    
    \draw[thick,red,mid arrow] (0,1,-2) -- (0,5,2);
    
    \draw[domain=0:180,variable=\x,red,mid arrow] plot ({-2*cos(-\x-90)}, {1+4*\x/200}, {2*sin(-\x-90)});
    
    \end{scope}
    \end{tikzpicture}
    }
    \hfill
    \centering
    \subfloat[\label{fig:2to2cylinder}]{
    \tdplotsetmaincoords{13}{0}
        \begin{tikzpicture}[scale=1.0,tdplot_main_coords]
    \tdplotsetrotatedcoords{0}{30}{0}
    \draw[gray] (-2,1,0) -- (-2,5,0);
    \draw[gray] (2,1,0) -- (2,5,0);
    
    \begin{scope}[tdplot_rotated_coords]
    
    \begin{scope}[canvas is xz plane at y=1]
    \draw[gray] (0,0) circle[radius=2] ;
    \end{scope}
    
    \begin{scope}[canvas is xz plane at y=5]
    \draw[gray] (0,0) circle[radius=2] ;
    \end{scope}
    
    \draw[red] (0,1,-2) -- (0,2,-1);
    \draw[red] (0,1,2) -- (0,2,1);
    
    \draw[thick, red,mid arrow] (0,3,0) -- (2,5,0);
    \draw[thick, red,mid arrow] (0,3,0) -- (-2,5,0);
    
    \draw[thick,red,mid arrow] (0,2,-1) -- (0,3,0);
    \draw[thick,red,mid arrow] (0,2,1) -- (0,3,0);
    
    \draw plot [mark=*, mark size=1.5] coordinates{(2,5,0)};
    \node[above] at (2,5,0) {$r_1$};
    \draw plot [mark=*, mark size=1.5] coordinates{(-2,5,0)};
    \node[above] at (-2,5,0) {$r_2$};
    
    \draw plot [mark=*, mark size=1.5] coordinates{(0,1,-2)};
    \node[below] at (0,1,-2) {$c_1$};
    \draw plot [mark=*, mark size=1.5] coordinates{(0,1,2)};
    \node[below] at (0,1,2) {$c_2$};
    \draw plot [mark=*, mark size=1.5] coordinates{(0,3,0)};
    
    \end{scope}
    \end{tikzpicture}
    }
    \hfill
    \subfloat[\label{fig:3to3cylinder}]{
    \tdplotsetmaincoords{15}{0}
    \begin{tikzpicture}[scale=1.0,tdplot_main_coords]
    \tdplotsetrotatedcoords{0}{30}{0}
    \draw[gray] (-2,1,0) -- (-2,5,0);
    \draw[gray] (2,1,0) -- (2,5,0);
    
    \begin{scope}[tdplot_rotated_coords]
    
    \begin{scope}[canvas is xz plane at y=1]
    \draw[gray] (0,0) circle[radius=2] ;
    \end{scope}
    
    \begin{scope}[canvas is xz plane at y=5]
    \draw[gray] (0,0) circle[radius=2] ;
    \end{scope}
    
    \draw plot [mark=*, mark size=1.5] coordinates{(0,1,-2)};
    \node[below] at (0,1,-2) {$c_1$};
    
    \draw plot [mark=*, mark size=1.5] coordinates{(1.73,1,1)};
    \node[below] at (1.73,1,1) {$c_2$};
    
    \draw plot [mark=*, mark size=1.5] coordinates{(-1.73,1,1)};
    \node[below] at (-1.73,1,1) {$c_3$};
    
    \draw[thick,red,mid arrow] (-1.73,1,1) -- (0,3,0);
    \draw[thick,red,mid arrow] (0,3,0) -- (-1.73,5,1);
    
    \draw[thick,red,mid arrow] (1.73,1,1) -- (0,3,0) ;
    \draw[thick,red,mid arrow] (0,3,0) -- (1.73,5,1);
    
    \draw[thick,red,mid arrow] (0,1,-2) -- (0,3,0);
    \draw[thick,red,mid arrow] (0,3,0) -- (0,5,-2);
    
    \draw plot [mark=*, mark size=1.5] coordinates{(0,3,0)};
    
    \draw plot [mark=*, mark size=1.5] coordinates{(0,5,-2)};
    \node[above] at (0,5,-2) {$r_1$};
    
    \draw plot [mark=*, mark size=1.5] coordinates{(-1.73,5,1)};
    \node[above] at (-1.73,5,1) {$r_3$};
    
    \draw plot [mark=*, mark size=1.5] coordinates{(1.73,5,1)};
    \node[above right] at (1.73,5,1) {$r_2$};

    \end{scope}
    \end{tikzpicture}
    }
    \hfill
    \centering
    \subfloat[\label{fig:3to3cylinderalternate}]{
    \tdplotsetmaincoords{19}{0}
    \begin{tikzpicture}[scale=1.0,tdplot_main_coords]
    \tdplotsetrotatedcoords{0}{36}{0}
    \draw[gray] (-2,1,0) -- (-2,5,0);
    \draw[gray] (2,1,0) -- (2,5,0);
    
    \begin{scope}[tdplot_rotated_coords]
    
    \begin{scope}[canvas is xz plane at y=1]
    \draw[gray] (0,0) circle[radius=2] ;
    \end{scope}
    
    \begin{scope}[canvas is xz plane at y=5]
    \draw[gray] (0,0) circle[radius=2] ;
    \end{scope}
    
    \begin{scope}[canvas is xz plane at y=2.5]
    \draw[gray] (0,0) circle[radius=1] ;
    \end{scope}
    
    \draw plot [mark=*, mark size=1.5] coordinates{(0,1,-2)};
    \node[below] at (0,1,-2) {$c_1$};
    \draw plot [mark=*, mark size=1.5] coordinates{(1.73,1,1)};
    \node[below] at (1.73,1,1) {$c_2$};
    \draw plot [mark=*, mark size=1.5] coordinates{(-1.73,1,1)};
    \node[below] at (-1.73,1,1) {$c_3$};

    \draw plot [mark=*, mark size=1.5] coordinates{(0.866,2.5,-0.5)};
    \draw plot [mark=*, mark size=1.5] coordinates{(-0.866,2.5,-0.5)};
    \draw plot [mark=*, mark size=1.5] coordinates{(0,2.5,1)};
    
    \draw[red,thick,mid arrow] (0,1,-2) -- (-0.866,2.5,-0.5);
    \draw[red,thick,mid arrow] (0,1,-2) -- (0.866,2.5,-0.5);
    
    \draw[red,thick,mid arrow] (1.73,1,1) -- (0,2.5,1);
    \draw[red,thick,mid arrow] (1.73,1,1) -- (0.866,2.5,-0.5);
    
    \draw[red,thick,mid arrow] (-1.73,1,1) -- (0,2.5,1);
    \draw[red,thick,mid arrow] (-1.73,1,1) -- (-0.866,2.5,-0.5);
    
    \draw[red,thick,mid arrow] (0,2.5,1) -- (-1.73,5,1);
    \draw[red,thick,mid arrow] (0,2.5,1) -- (1.73,5,1);
    \draw[red,thick,mid arrow] (0,2.5,1) -- (0,5,-2);
    
    \draw[red,thick,mid arrow] (-0.866,2.5,-0.5)  -- (-1.73,5,1);
    \draw[red,thick,mid arrow] (-0.866,2.5,-0.5)  -- (1.73,5,1);
    \draw[red,thick,mid arrow] (-0.866,2.5,-0.5)  -- (0,5,-2);
    
    \draw[red,thick,mid arrow] (0.866,2.5,-0.5)  -- (-1.73,5,1);
    \draw[red,thick,mid arrow] (0.866,2.5,-0.5)  -- (1.73,5,1);
    \draw[red,thick,mid arrow] (0.866,2.5,-0.5)  -- (0,5,-2);
    
    \draw plot [mark=*, mark size=1.5] coordinates{(0,5,-2)};
    \node[above] at (0,5,-2) {$r_1$};
    \draw plot [mark=*, mark size=1.5] coordinates{(-1.73,5,1)};
    \node[above] at (-1.73,5,1) {$r_3$};
    \draw plot [mark=*, mark size=1.5] coordinates{(1.73,5,1)};
    \node[above left] at (1.73,4.9,1) {$r_2$};
    
    \end{scope}
    \end{tikzpicture}
    }
    \caption{(a) With one input point and one output point the bulk causal structure is fixed by the boundary causal structure: the Gao-Wald theorem \cite{gao2000theorems} dictates that bulk light rays are never faster than boundary light rays. (b) With two input and two output points the bulk causal structure may be stronger than the boundary one. Shown here is a case where the four points scatter causally in the bulk but not in the boundary. An essential part of reconciling this discrepancy is entanglement in the boundary state, as discussed in \cite{may2019quantum, may2020holographic, may2021holographic}. (c, d) For three or more input and output points there are many possible bulk causal structures, which again may feature scattering regions not appearing in the boundary. In this article we study how these richer causal structures are supported through boundary entanglement.}
    \label{fig:causalstructures}
\end{figure}

In \cite{may2019quantum, may2020holographic, may2021holographic}, this gravitational problem was shown to be related to the problem of understanding \textit{causal networks} in quantum information processing.
A causal network is a directed graph, where each vertex represents a spacetime location where quantum operations can take place, and each directed edge represents a one-way causal relationship that allows quantum systems to be transferred between the corresponding vertices.
Example causal networks are sketched in figure \ref{fig:causalnetworks}.
A causal network has ``input'' vertices, where all edges are outgoing, and ``output'' vertices, where all edges are ingoing.
They appear in varied contexts in quantum information theory, including in relativistic quantum information \cite{kent2012quantum,hayden2016summoning,hayden2019localizing,dolev2021distributing,dolev2019constraining} and position based cryptography \cite{chandran2009position,kent2011quantum,kent2006tagging, malaney2010location}, while similar objects appear in network coding \cite{leung2010quantum} and quantum foundations \cite{chaves2021causal}.
In these contexts, one asks: \emph{For a given network, what operations can be performed on a quantum system distributed across its input vertices?}
This question becomes richer if one allows entangled states to be distributed among vertices in the network, especially among the input vertices.
One then asks: \textit{Given an operation, what pattern of entanglement is needed to perform it using this network? How much of that entanglement do we need?}

\begin{figure}
    \centering
    \subfloat[\label{fig:networks1to1}]{
    \begin{tikzpicture}[scale=0.65]
    
    \begin{scope}[shift={(0,0)}]
    \draw[mid arrow, red] (-2,-2) -- (2,2);
    
    \draw plot [mark=*, mark size=2] coordinates{(-2,-2)};
    \draw plot [mark=*, mark size=2] coordinates{(2,2)};
    \end{scope}
    
    \begin{scope}[shift={(6,0)}]
    \draw[mid arrow, red] (-2,-2) -- (2,2);
    
    \draw plot [mark=*, mark size=2] coordinates{(-2,-2)};
    \draw plot [mark=*, mark size=2] coordinates{(2,2)};
    \end{scope}
    
    \end{tikzpicture}
    }
    \hfill
    \centering
    \subfloat[\label{fig:networks2to2}]{
    \begin{tikzpicture}[scale=0.65]
    
    \begin{scope}[shift={(0,0)}]
    \draw[mid arrow, red] (0,0) -- (2,2);
    \draw[mid arrow, red] (0,0) -- (-2,2);
    \draw[mid arrow, red] (-2,-2) -- (0,0);
    \draw[mid arrow, red] (2,-2) -- (0,0);
    
    \draw plot [mark=*, mark size=2] coordinates{(0,0)};
    \draw plot [mark=*, mark size=2] coordinates{(-2,-2)};
    \draw plot [mark=*, mark size=2] coordinates{(2,2)};
    \draw plot [mark=*, mark size=2] coordinates{(-2,2)};
    \draw plot [mark=*, mark size=2] coordinates{(2,-2)};
    \end{scope}
    
    \begin{scope}[shift={(6,0)}]
    \draw[mid arrow, red] (-2,-2) -- (0,0);
    \draw[red, mid arrow] (0,0) -- (2,2);
    \draw[mid arrow, red] (2,-2) -- (0.1,-0.1);
    \draw[mid arrow, red] (-0.1,0.1) -- (-2,2);
    \draw[mid arrow, red] (-2,-2) -- (-2,2);
    \draw[mid arrow, red] (2,-2) -- (2,2);
    
    \draw plot [mark=*, mark size=2] coordinates{(-2,-2)};
    \draw plot [mark=*, mark size=2] coordinates{(2,2)};
    \draw plot [mark=*, mark size=2] coordinates{(-2,2)};
    \draw plot [mark=*, mark size=2] coordinates{(2,-2)};
    \end{scope}
    
    \end{tikzpicture}
    }
    \hfill
    \centering
    \subfloat[\label{fig:networks3to3}]{
    \begin{tikzpicture}[scale=0.65]

    \begin{scope}[shift={(0,0)}]
    \draw[mid arrow, red] (0,0) -- (2,2);
    \draw[mid arrow, red] (0,0) -- (-2,2);
    \draw[mid arrow, red] (-2,-2) -- (0,0);
    \draw[mid arrow, red] (2,-2) -- (0,0);
    \draw[mid arrow, red] (0,-2) -- (0,0);
    \draw[mid arrow, red] (0,0) -- (0,2);
    
    \draw plot [mark=*, mark size=2] coordinates{(0,0)};
    \draw plot [mark=*, mark size=2] coordinates{(-2,-2)};
    \draw plot [mark=*, mark size=2] coordinates{(2,2)};
    \draw plot [mark=*, mark size=2] coordinates{(-2,2)};
    \draw plot [mark=*, mark size=2] coordinates{(2,-2)};
    \draw plot [mark=*, mark size=2] coordinates{(0,-2)};
    \draw plot [mark=*, mark size=2] coordinates{(0,2)};
    \end{scope}
    
    \begin{scope}[shift={(6,0)}]
    \draw[mid arrow, red] (0,0) -- (2,2);
    \draw[mid arrow, red] (0,0) -- (-2,2);
    \draw[mid arrow, red] (-2,-2) -- (0,0);
    \draw[mid arrow, red] (2,-2) -- (0,0);
    \draw[mid arrow, red] (2,-2) -- (1,0);
    \draw[mid arrow, red] (-2,-2) -- (-1,0);
    \draw[mid arrow, red] (0,-2) -- (1,0);
    \draw[mid arrow, red] (0,-2) -- (-1,0);
    
    \draw[mid arrow, red] (-1,0) -- (-2,2);
    \draw[mid arrow, red] (1,0) -- (2,2);
    \draw[mid arrow, red] (1,0) -- (0,2);
    \draw[mid arrow, red] (-1,0) -- (0,2);
    
    \draw plot [mark=*, mark size=2] coordinates{(0,0)};
    \draw plot [mark=*, mark size=2] coordinates{(-2,-2)};
    \draw plot [mark=*, mark size=2] coordinates{(2,2)};
    \draw plot [mark=*, mark size=2] coordinates{(-2,2)};
    \draw plot [mark=*, mark size=2] coordinates{(2,-2)};
    \draw plot [mark=*, mark size=2] coordinates{(0,-2)};
    \draw plot [mark=*, mark size=2] coordinates{(0,2)};
    
    \draw plot [mark=*, mark size=2] coordinates{(-1,0)};
    \draw plot [mark=*, mark size=2] coordinates{(1,0)};
    \end{scope}

    \end{tikzpicture}
    }
    \hfill
    \subfloat[\label{fig:networks3to3alternate}]{
    
    \begin{tikzpicture}[scale=0.65]
    
    \begin{scope}[shift={(0,0)}]
    \draw[mid arrow, red] (0,0) -- (2,2);
    \draw[mid arrow, red] (0,0) -- (-2,2);
    \draw[mid arrow, red] (-2,-2) -- (0,0);
    \draw[mid arrow, red] (2,-2) -- (0,0);
    \draw[mid arrow, red] (2,-2) -- (1,0);
    \draw[mid arrow, red] (-2,-2) -- (-1,0);
    \draw[mid arrow, red] (0,-2) -- (1,0);
    \draw[mid arrow, red] (0,-2) -- (-1,0);
    
    \draw[mid arrow, red] (-1,0) -- (-2,2);
    \draw[mid arrow, red] (1,0) -- (2,2);
    \draw[mid arrow, red] (1,0) -- (0,2);
    \draw[mid arrow, red] (-1,0) -- (0,2);
    
    \draw[mid arrow, red] (0,0) -- (0,2);
    \draw[mid arrow, red] (-1,0) -- (2,2);
    \draw[mid arrow, red] (1,0) -- (-2,2);
    
    \draw plot [mark=*, mark size=2] coordinates{(0,0)};
    \draw plot [mark=*, mark size=2] coordinates{(-2,-2)};
    \draw plot [mark=*, mark size=2] coordinates{(2,2)};
    \draw plot [mark=*, mark size=2] coordinates{(-2,2)};
    \draw plot [mark=*, mark size=2] coordinates{(2,-2)};
    \draw plot [mark=*, mark size=2] coordinates{(0,-2)};
    \draw plot [mark=*, mark size=2] coordinates{(0,2)};
    
    \draw plot [mark=*, mark size=2] coordinates{(-1,0)};
    \draw plot [mark=*, mark size=2] coordinates{(1,0)};
    \end{scope}
    
    \begin{scope}[shift={(6,0)}]
    \draw[mid arrow, red] (0,0) -- (2,2);
    \draw[mid arrow, red] (0,0) -- (-2,2);
    \draw[mid arrow, red] (-2,-2) -- (0,0);
    \draw[mid arrow, red] (2,-2) -- (0,0);
    \draw[mid arrow, red] (2,-2) -- (1,0);
    \draw[mid arrow, red] (-2,-2) -- (-1,0);
    \draw[mid arrow, red] (0,-2) -- (1,0);
    \draw[mid arrow, red] (0,-2) -- (-1,0);
    
    \draw[mid arrow, red] (-1,0) -- (-2,2);
    \draw[mid arrow, red] (1,0) -- (2,2);
    \draw[mid arrow, red] (1,0) -- (0,2);
    \draw[mid arrow, red] (-1,0) -- (0,2);
    
    \draw plot [mark=*, mark size=2] coordinates{(0,0)};
    \draw plot [mark=*, mark size=2] coordinates{(-2,-2)};
    \draw plot [mark=*, mark size=2] coordinates{(2,2)};
    \draw plot [mark=*, mark size=2] coordinates{(-2,2)};
    \draw plot [mark=*, mark size=2] coordinates{(2,-2)};
    \draw plot [mark=*, mark size=2] coordinates{(0,-2)};
    \draw plot [mark=*, mark size=2] coordinates{(0,2)};
    
    \draw plot [mark=*, mark size=2] coordinates{(-1,0)};
    \draw plot [mark=*, mark size=2] coordinates{(1,0)};
    \end{scope}
    
    \end{tikzpicture}
    }
    \caption{Causal networks. 
    Vertices with no incoming edges are input points, where quantum systems are input to the network, and vertices with no outgoing edges are output points, where quantum systems are output from the network.
    Here, we show examples of pairs of causal networks which are related by the AdS/CFT correspondence. 
    On the left we show the causal network describing the bulk, on the right we show a (simplified) causal network describing the boundary, with labels corresponding to the configurations shown in figure \ref{fig:causalstructures}.
    In (a) the bulk and boundary networks agree, while in the remaining cases they differ.
    The AdS/CFT correspondence implies that these pairs of networks allow for the same information processing tasks to be completed, when the boundary network has access to whatever entanglement is present in the holographic CFT. 
   }
    \label{fig:causalnetworks}
\end{figure}
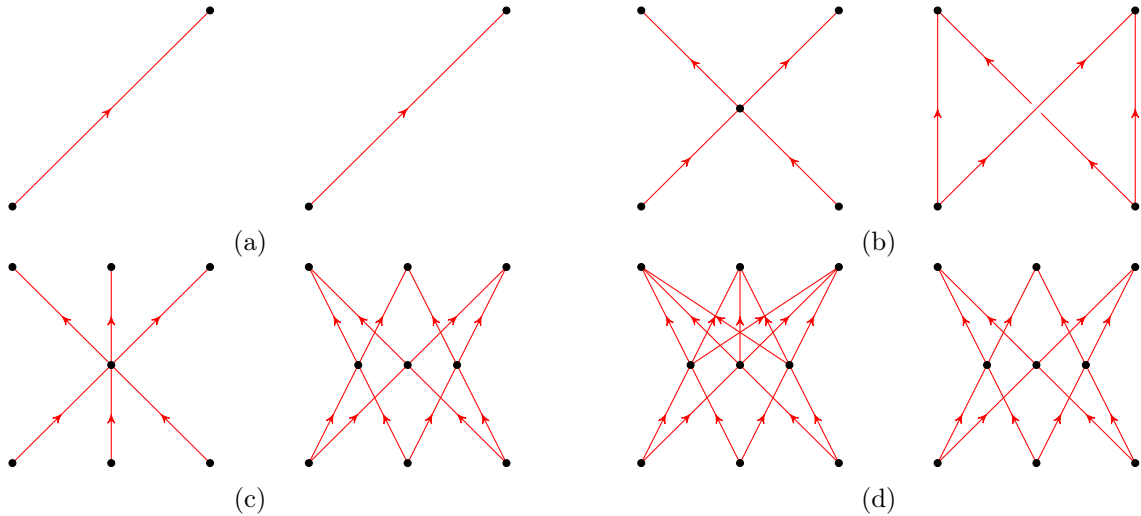

The gravitational problem and the causal network problem are related by the observation that in gravity, once a bulk spacetime and a set of boundary points have been specified, the causal relationships among those points in the bulk and on the boundary give rise to different causal networks.
The dual nature of AdS/CFT, however, means that operations that can be performed in the bulk network can also be performed in the boundary network.
On some networks, one can show that certain operations can be performed only if the network has access to a resource state with particular properties; this way of reasoning about information processing in causal networks falls under the study of ``quantum tasks.''
The quantum tasks perspective allows for tools from the study of information processing on causal networks to be applied toward understanding bulk emergence in AdS/CFT, and vice versa.

A first step toward establishing this relationship was made in \cite{may2019quantum, may2020holographic, may2021holographic}, where ideas from the study of causal networks were used to argue that a certain $2$-to-$2$ causal structure in a bulk spacetime must be supported by a certain large-N pattern of mutual information in its boundary state.
Using the (quantum) extremal surface formula for boundary entanglement \cite{ryu2006holographic, ryu2006aspects, headrick2007holographic, hubeny2007covariant, lewkowycz2013generalized, faulkner2013quantum, dong2016deriving, engelhardt2015quantum, dong2018entropy}, this was converted into a statement relating the causal structure of a spacetime to the geometry of its extremal surfaces.
That statement was proven directly using the tools of general relativity, up to some technical caveats addressed in the present work.

In this paper, we establish a much richer link between the theory of spacetime emergence and the theory of causal networks.
Our main result is a generalization of the theorem proved in \cite{may2019quantum, may2020holographic, may2021holographic}, which reveals boundary-entanglement consequences for a larger class of bulk causal structures than the one discussed in those papers.
We call this generalization the \emph{$n$-to-$n$ connected wedge theorem}.
In addition to revealing a way that bulk causal structure is supported by boundary correlations, the $n$-to-$n$ connected wedge theorem establishes a correspondence between certain causal networks: a network corresponding to bulk causal structure, and a network corresponding to boundary causal structure plus boundary-state entanglement.
AdS/CFT implies that any quantum operation that can be done in the bulk network must be implementable in the boundary network when it has access to the entanglement resources present in the boundary state.
We show examples of networks related in this way in figure \ref{fig:causalnetworks}.\footnote{In fact we show simplified boundary networks here, this is discussed in section \ref{sec:tasks-background} in more detail.}
Our second main result is to understand the connected wedge theorem from the perspective of quantum information.
We show that the minimal bulk causal network required by the theorem is in fact sufficient to perform arbitrary quantum operations \textit{without} any initial correlation.
We also show that the corresponding boundary networks require entanglement to perform the same quantum operations, and that the pattern of entanglement they require is exactly the one implied by the $n$-to-$n$ connected wedge theorem.

As a complementary result, for the $n=3$ case of the connected wedge theorem, we give an example of a protocol for completing a quantum operation using the boundary causal network.
The boundary protocol works by using the entanglement resource to encode input systems into an entanglement-assisted quantum error correcting code (EAQECC), which then evolves under transversal operations at intermediary layers of the boundary network, before being decoded at the output vertices.
The pattern of entanglement required by the code is exactly the pattern implied by the $3$-to-$3$ connected wedge theorem, and since semiclassical AdS/CFT is a quantum error correcting code \cite{almheiri2015bulk}, boundary time evolution will generally look like local evolution of an encoded state.
We discuss what the EAQECC model might teach us about the relationship between bulk and boundary dynamics in quantum gravity.

We can summarize the context for our work with figure \ref{fig:triangle}, which depicts relationships among our three objects of study: bulk causal structure, bulk extremal surfaces, and boundary correlation. 
The (quantum) extremal surface formula relates extremal surfaces and boundary correlations, as measured by the von Neumann entropy. 
Geometric tools in general relativity establish relationships between causal structure and extremal surfaces.
Reasoning about information processing in causal networks relates bulk causal structure and boundary correlation via the language of quantum tasks.
Importantly, the connections among these three ideas allows geometric tools and information-theoretic tools to inform each other.

\begin{figure}
    \centering
    \begin{tikzpicture}[scale=0.8]
    
    \node[align=center] at (0,0) {\textbf{bulk} \\ \textbf{causal features}};

    \node[align=center] at (-4,-4) {\textbf{bulk} \\\textbf{extremal surfaces}};
    \draw[->,blue] (-0.75,-0.75) -- (-3.25,-3.25);
    
    \node[align=center] at (4,-4) {\textbf{boundary} \\\textbf{correlation}};
    \draw[->,blue] (0.75,-0.75) -- (3.25,-3.25);
    
    \draw[blue,<->] (-1.75,-4) -- (2.3,-4);
    
    \node[below] at (0,-4.25) {\small{(quantum)} };
    \node[below] at (0,-4.8) {\small{extremal surface formula} };
    
    \node[right,align=center] at (2.25,-2) {\small{quantum} \\ \small{tasks}};
    \node[left,align=center] at (-2.4,-2) {\small{general} \\ \small{relativity}};
    
    \end{tikzpicture}
    \caption{The expanded geometry-correlation connection present in AdS/CFT.}
    \label{fig:triangle}
\end{figure}
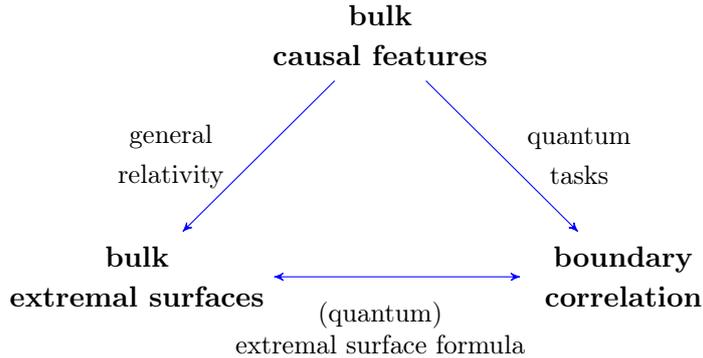

The plan of the paper is as follows.

In section \ref{sec:overview}, we review the motivation behind the $2$-to-$2$ connected wedge theorem discussed in \cite{may2019quantum, may2020holographic, may2021holographic}, and state our new, $n$-to-$n$ generalization.

In section \ref{sec:timelike-cylinders}, we prove several lemmas about the causal structure of timelike cylinders that will be needed both in the geometric proof of the $n$-to-$n$ connected wedge theorem in section \ref{sec:GR-proof}, and in the theorem's information-theoretic interpretation in section \ref{sec:QI-argument}.

In section \ref{sec:GR-proof}, we prove the $n$-to-$n$ connected wedge theorem stated in section \ref{sec:overview}.

In section \ref{sec:tasks-background} we review the framework for thinking about holography in terms of quantum tasks, which was introduced in \cite{may2019quantum} and developed in \cite{may2020holographic, may2021bulk, may2021holographic, may2021quantum}.
The framework presented in section \ref{sec:tasks-background} has some differences from those papers; insights from the present paper have led us to emphasize the use of causal networks and to refine the correspondence between bulk and boundary quantum tasks.

In section \ref{sec:QI-argument}, we give an information-theoretic argument for the $n$-to-$n$ connected wedge theorem. We first show that the bulk causal structure required by the theorem is sufficient to perform any bulk quantum task, and further that it is sufficient to efficiently perform a generalization of the $\textbf{B}_{84}$ task discussed in \cite{may2019quantum,may2020holographic,may2021quantum}. 
We then argue that for the boundary to imitate this task under the causal restrictions of the boundary cylinder, they must share entanglement in the pattern implied by the $n$-to-$n$ connected wedge theorem.

In section \ref{sec:boundaryprotocols}, we discuss how the pattern of entanglement guaranteed by the connected wedge theorem can be used to perform quantum tasks in the boundary network. 
We propose a class of protocols based of entanglement assisted quantum error correcting codes (EAQECC). 
We discuss how these protocols may capture interesting features of boundary dynamics.

In the discussion, we make a geometric observation that the relevant bulk causal objects, the $2$-to-all scattering regions, sit inside of the entanglement wedge of the boundary input regions. 
This result is extends an observation made in \cite{may2020holographic}. 
We also make a number of remarks and comment on open questions. 

In appendix \ref{sec:GR}, we collect relevant terminology from general relativity and some lemma's dealing with causal structure we need for the proof of the connected wedge theorem. 
In appendix \ref{sec:bulkprotocolappendix}, we give a detailed protocol which completes arbitrary quantum tasks in the bulk causal structure. 
In appendix \ref{app:lower-bound-proof} we prove a lemma lower bounding the mutual information necessary for the $\mathbf{B}_{84}$ task. 
Various versions of this lemma (or pieces of its proof) have appeared in \cite{may2019quantum,may2020holographic,may2021holographic,may2021quantum}, but we give a complete and simplified version. 
In appendix \ref{sec:CodeConstructions} we describe the error-correcting codes used in our boundary protocols.  

Finally we note that some of the techniques discussed in section \ref{sec:boundaryprotocols} have overlap with the discussion in the forthcoming paper \cite{dolev2022holography}, which will discuss the causal network appearing in figure \ref{fig:networks2to2} and its relationship to holography. 

\vspace{0.2cm}
\noindent \textbf{Summary of notation:}
\vspace{0.2cm}

\noindent Spacetime notation: (See also appendix \ref{app:GR-travel-guide}.)
\begin{itemize}
    \item We use italic capital letters $\mathcal{A}, \mathcal{B}, \mathcal{C}, \dots$ for boundary spacetime regions.
    \item The entanglement wedge of a boundary region $\mathcal{A}$ is denoted by $E_{\mathcal{A}}$. 
    \item The quantum extremal surface associated to region $\mathcal{A}$ is denoted $\gamma_{\mathcal{A}}$.
    \item We use plain capital letters $A, B, C, \dots$ to refer to bulk spacetime regions.
    \item We use script capital letters such as $\mathscr{F}$ to refer to bulk null surfaces.
    \item We use $J^\pm(A)$ to denote the causal future or past of region $A$ taken in the bulk geometry, and $\hat{J}^\pm(\mathcal{A})$ to denote the causal future or past of $\mathcal{A}$ within the boundary spacetime.
    Bulk sets like $J^+(A)$ are taken within the conformally completed spacetime that includes the asymptotic boundary, so they can include boundary points.
    \item For spacetime sets $S_1, \dots, S_n,$ we write
    \begin{equation}
        J(S_1, \dots, S_j \rightarrow S_{j+1} \dots S_n) \equiv J^+(S_1) \cap \dots \cap J^+(S_j) \cap J^-(S_{j+1}) \cap \dots \cap J^-(S_n).
    \end{equation}
    An analogous notation for boundary causality holds with $J$ replaced by $\hat{J}.$
\end{itemize}
Quantum notation:
\begin{itemize}
    \item We use capital letters to denote quantum systems $A,B,C,...$
    \item We use boldface, script capital letters for quantum channels, $\mathbfcal{N}(\cdot)$, $\mathbfcal{T}(\cdot)$,...
    \item We use boldface capital letters to denote unitaries or isometries, $\mathbf{U}, \mathbf{V},...$
\end{itemize}

\section{Overview of the connected wedge theorem}
\label{sec:overview}

We first elaborate the notation for causal sets summarized at the end of the introduction.
Further details, along with the definitions of many technical terms used below, are collected in appendix \ref{sec:glossary}.

Given a spacetime $\mathcal{M}$ with a conformal boundary, and given a subset $S$ of the spacetime-with-boundary, the set $J^+(S)$ denotes the set of all points in $\mathcal{M}$ or its boundary that can be reached from points in $S$ by future-directed causal curves.
The set $\hat{J}^+(S)$ denotes the set of all points in the conformal boundary of $\mathcal{M}$ that can be reached by future-directed causal curves that lie entirely in the boundary.
Substituting the symbol ``+'' for ``-'' is equivalent to replacing ``future'' with ``past.''
For a collection of sets $S_1, \dots, S_n$, we use the notation
\begin{equation}
    J(S_1, \dots, S_j \rightarrow S_{j+1}, \dots, S_n)
        \equiv J^+(S_1) \cap \dots \cap J^+(S_j) \cap J^-(S_{j+1}) \cap \dots \cap J^-(S_{n}).
\end{equation}
Analogously, we define
\begin{equation}
    \hat{J}(S_1, \dots, S_j \rightarrow S_{j+1}, \dots, S_n)
        \equiv \hat{J}^+(S_1) \cap \dots \cap \hat{J}^+(S_j) \cap \hat{J}^-(S_{j+1}) \cap \dots \cap \hat{J}^-(S_n).
\end{equation}

With this notation in hand, we can now review the $n=2$ case of the connected wedge theorem, considered previously in \cite{may2019quantum, may2020holographic, may2021holographic}.

\subsection{The 2-to-2 case}

The setting of the $2$-to-$2$ connected wedge theorem considers four points $\{c_1, c_2, r_1, r_2\}$ in the boundary of an asymptotically globally AdS$_{2+1}$ spacetime.
See figure \ref{fig:causalstructures}.
We choose these points such that within the boundary spacetime, $r_1$ and $r_2$ are each in the future of $c_1$ and in the future of $c_2$.
In our notation, this means
\begin{align}
    \mathcal{V}_1
        \equiv \hat{J}(c_1\rightarrow r_1,r_2)
        \neq \varnothing,\\
    \mathcal{V}_2
        \equiv \hat{J}(c_2\rightarrow r_1,r_2)
        \neq \varnothing.
\end{align}
We will refer to $\mathcal{V}_1$ and $\mathcal{V}_2$ as \emph{input regions}.
Further, we ensure that there are no boundary points in the future of both $c_1$ and $c_2$ and in the past of both $r_1$ and $r_2.$
In our notation, this means
\begin{equation}
    \hat{J}(c_1, c_2 \rightarrow r_1, r_2)
        = \varnothing.
\end{equation}
We say this configuration of points ``cannot scatter on the boundary.''
As discussed in the introduction, for some configurations of boundary points with this property it is possible to choose bulk spacetimes for which the configuration \textit{can} scatter, meaning that  $J(c_1,c_2\rightarrow r_1,r_2)$ is nonempty and in fact has nonempty interior. 
This is the simplest example of a ``causal discrepancy'' between the bulk and boundary spacetimes, various examples of which were shown in figure \ref{fig:causalnetworks}. 

If the chosen bulk spacetime corresponds to a physical state in the AdS/CFT correspondence,\footnote{There are plenty of examples of physical states in which such a causal discrepancy between bulk and boundary can be realized; for example, the AdS$_{2+1}$ vacuum where $c_1$ and $c_2$ are antipodal points on the boundary slice $t=0$ and $r_1$ and $r_2$ are antipodal points on the boundary slice $t=\pi+\epsilon$, rotated by $\pi/2$ relative to $c_1$ and $c_2$.} then a quantum information argument given in \cite{may2019quantum} suggests consequences for the entanglement structure of its boundary dual.
Consider introducing quantum systems $A_1$, $A_2$ at $c_1$ and $c_2$, and arranging for these systems to travel into the region $J(c_1,c_2\rightarrow r_1,r_2)$, undergo some interaction unitary $\mathbf{U}_{A_1A_2}$, then emerge again at points $r_1$ and $r_2$.
The boundary description of this process must have the same inputs and outputs, so that the same unitary is implemented, but it is unclear how this unitary can be realized in the boundary given that there is no scattering region in the boundary spacetime where $A_1$ and $A_2$ can interact. 
To resolve this, \cite{may2019quantum} argued that the boundary interaction is reproduced using the causal network shown at right in figure \ref{fig:networks2to2}, and pointed out that for this network to reproduce the bulk interaction, entanglement between the input regions $\mathcal{V}_1$ and $\mathcal{V}_2$ is necessary.\footnote{In fact, having a large amount of entanglement between these regions is also sufficient to reproduce any bulk interaction, though many questions remain as to how much entanglement is needed for a given interaction. See e.g. \cite{may2022complexity} for a discussion.} 
Based on this, \cite{may2019quantum} conjectured that whenever $J(c_1,c_2\rightarrow r_1,r_2)$ is nonempty in the bulk, so that an interaction can happen there, the mutual information between regions $\mathcal{V}_1$ and $\mathcal{V}_2$ is large in the sense of scaling as $\Omega(\ell_{\text{AdS}}/G_N)$ in the semiclassical limit $\ell_{\text{AdS}}/G_N \rightarrow \infty$.

In holographic states where boundary entanglement entropies can be computed using the Hubeny-Rangamani-Ryu-Takayanagi (HRRT) formula \cite{hubeny2007covariant}, the statement that $\mathcal{V}_1$ and $\mathcal{V}_2$ share extensive mutual information can be cast as a statement about areas of bulk extremal surfaces.
In particular, the statement is that the entanglement wedge of $\mathcal{V}_1 \cup \mathcal{V}_2$ is connected.
The statement that a $2$-to-$2$ causal discrepancy between a bulk state and its boundary dual implies a connected entanglement wedge is known as the \textit{$2$-to-$2$ connected wedge theorem}.

\begin{theorem}[\textbf{$2$-to-$2$ connected wedge theorem}] \label{thm:first-connected-wedge-theorem}
    Let $c_1, c_2, r_1, r_2$ be points on a global AdS$_{2+1}$ boundary such that
    the sets $\mathcal{V}_1 \equiv \hat{J}(c_1 \rightarrow r_1, r_2)$ and $\mathcal{V}_2 \equiv \hat{J}(c_2 \rightarrow r_1, r_2)$  are nonempty, while the set $\hat{J}(c_1, c_2 \rightarrow r_1, r_2)$ is empty.

    Let $\mathcal{M}$ be an asymptotically AdS$_{2+1},$ AdS-hyperbolic spacetime satisfying the null curvature condition, with at least one global AdS$_{2+1}$ boundary on which $c_1, c_2, r_1, r_2$ are specified.

    If $J(c_1, c_2 \rightarrow r_1, r_2)$ has nonempty interior, and if the HRRT surface of $\mathcal{V}_1 \cup \mathcal{V}_2$ can be found using the maximin formula \cite{maximin, maximin2}, then the entanglement wedge of $\mathcal{V}_1 \cup \mathcal{V}_2$ is connected.
\end{theorem}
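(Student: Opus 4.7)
\medskip\noindent\textbf{Proof plan.} The plan is to argue by contradiction: I would assume that the entanglement wedge of $\mathcal{V}_1\cup\mathcal{V}_2$ is disconnected, so that by the HRRT prescription $\gamma_{\mathcal{V}_1\cup\mathcal{V}_2}=\gamma_{\mathcal{V}_1}\sqcup\gamma_{\mathcal{V}_2}$ and the wedges $E_{\mathcal{V}_1},E_{\mathcal{V}_2}$ are disjoint bulk regions. I then aim to produce a single bulk point that is forced to lie in both, contradicting the disjointness.

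The first step is essentially definitional. Using the scattering hypothesis I pick a point $p$ in the interior of $J(c_1,c_2\rightarrow r_1,r_2)$. Since $c_i\in\bar{\mathcal{V}}_i$ and $r_1,r_2\in\bar{\mathcal{V}}_i$, the inclusions $p\in J^+(c_i)$ and $p\in J^-(r_1)\cap J^-(r_2)$ immediately place $p$ in the bulk causal wedge $J^+(\mathcal{V}_i)\cap J^-(\mathcal{V}_i)$ of each input region. The second, substantial, step is to upgrade causal wedge to entanglement wedge: I would invoke Wall's causal-wedge inclusion, which asserts that the bulk causal wedge of any boundary region is contained in its HRRT entanglement wedge, and which holds precisely under the hypotheses imposed here (AdS-hyperbolicity, the null curvature condition, and applicability of the maximin formulation). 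Applied to $\mathcal{V}_1$ and $\mathcal{V}_2$, this yields $p\in E_{\mathcal{V}_1}\cap E_{\mathcal{V}_2}$, contradicting disjointness in the disconnected phase and proving the theorem.

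The main obstacle is the causal-wedge inclusion, which is the sole nontrivial geometric input. It relies on focusing of null congruences under the null curvature condition, combined with the maximin characterization of $\gamma_{\mathcal{V}_i}$ as a minimal surface on a common Cauchy slice; the hypotheses of the theorem are tailored precisely to license this step, while the remaining bookkeeping (choice of $p$, topology of disjoint wedges) is routine. I would expect this one-point pigeonhole strategy to fail for the $n$-to-$n$ generalization, since connectedness of $E_{\mathcal{V}_1\cup\cdots\cup\mathcal{V}_n}$ across \emph{every} bipartition is strictly stronger than pairwise overlap of entanglement wedges at one scattering point, which is presumably why a more delicate geometric argument specific to three bulk dimensions is needed in section \ref{sec:GR-proof}.
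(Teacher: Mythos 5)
The proposal has a fatal gap at the step you call ``essentially definitional.'' You claim that a point $p$ in the interior of $J(c_1,c_2\rightarrow r_1,r_2)$ lies in the bulk causal wedge $J^+(\mathcal{V}_i)\cap J^-(\mathcal{V}_i)$ of each input region. That is false. We do have $p\in J^+(c_i)\subseteq J^+(\mathcal{V}_i)$, but the second inclusion fails: $r_1$ and $r_2$ are \emph{not} points of $\mathcal{V}_i$ --- they sit strictly outside the diamond $\mathcal{V}_i$ on the boundary, at the tips of the null rays that cut off its future horizon --- so $p\in J^-(r_1)\cap J^-(r_2)$ gives no control over $J^-(\mathcal{V}_i)$. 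In fact the opposite is true. Since $p\in J^+(c_2)$ while the causal wedge of $\mathcal{V}_1$ is, by definition, the set of bulk points that can both receive a signal from and send a signal to $\mathcal{V}_1$, having $p$ in the bulk future of $c_2$ already precludes $p\in J^-(\mathcal{V}_1)$ in any reasonable causal geometry. This is manifest in the vacuum example the paper mentions: with $c_1,c_2$ antipodal at $t=0$ and $r_1,r_2$ antipodal at $t=\pi+\epsilon$ rotated by $\pi/2$, each $\mathcal{V}_i$ is a boundary diamond of angular size $\pi/2$; its causal wedge (which equals its entanglement wedge in vacuum) is pinned near the boundary and does not reach the center of AdS, while the scattering region sits precisely at the center.

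Consequently there is no single-point pigeonhole: causal wedge inclusion is not available because $p$ is not in either causal wedge, and indeed $p$ is generically not in $E_{\mathcal{V}_1}$ or $E_{\mathcal{V}_2}$ individually. The correct relation --- noted in the paper's discussion --- is that the scattering region lies inside $E_{\mathcal{V}_1\cup\mathcal{V}_2}$, the \emph{joint} entanglement wedge, and only once connectedness is established. The paper's proof is therefore necessarily more global: it assumes disconnectedness, builds the future focusing lightsheet $\mathscr{F}$ off the disconnected HRRT surface $\gamma_{\mathcal{V}}$ on a (restricted) maximin slice, intersects $\mathscr{F}$ with the past lightsheets $\mathscr{P}_j$ of the output wedges to define non-intersecting dividing curves (non-intersection is where the bulk scattering hypothesis is used, via the existence of an open family of curves on $\mathscr{F}$ connecting the two wedge components through the common past of the output wedges), and then focuses those curves back down to $\Sigma$ to produce a surface homologous to $\gamma_{\mathcal{V}}$ with strictly smaller area, contradicting minimality. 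Your one-point argument bypasses all of that machinery, but it rests on an inclusion that does not hold.
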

This theorem was proven\footnote{The proof presented in \cite{may2020holographic} had some oversights that are addressed in the present paper when we prove the theorem's $n$-to-$n$ generalization; for example, spacetimes with singularities were not treated carefully.} in \cite{may2020holographic} using the tools of classical general relativity. It was also shown that the conclusion of the theorem still holds for semiclassical holographic states that do not satisfy the null curvature condition, so long as they satisfy standard conjectures that generalize classical tools to semiclassical spacetimes.

A generalization of this theorem was found in \cite{may2021holographic} by applying the ideas of entanglement wedge reconstruction \cite{CKNR, jafferis2016relative, dong2016reconstruction, cotler2019entanglement} to the quantum information reasoning originally given in \cite{may2019quantum}. 
Along with the input regions $\mathcal{V}_1$ and $\mathcal{V}_2$, here we define the output regions
\begin{align}
    \mathcal{W}_1 &\equiv \hat{J}(c_1,c_2\rightarrow r_1),\\
    \mathcal{W}_2 &\equiv \hat{J}(c_1,c_2\rightarrow r_2),
\end{align}
and denote the entanglement wedge of a boundary region $\mathcal{X}$ by $E_{\mathcal{X}}$.
The key observation is that a bulk process with input systems $A_1,A_2$ beginning anywhere in $\mathcal{V}_1$ and $\mathcal{V}_2$, undergoing some interaction unitary, then ending up in $\mathcal{W}_1$ and $\mathcal{W}_2$ is equally well described in the boundary by a network of the form shown in the right most network in figure \ref{fig:networks2to2}. 
In particular, entanglement between the input regions is again required to reproduce this interaction in the boundary picture.
Thus, whenever the four entanglement wedges share a scattering region in the bulk, the boundary regions $\mathcal{V}_1$ and $\mathcal{V}_2$ should be entangled at order $\Omega(\ell_{AdS}/G_N)$. 
By causal wedge inclusion \cite{maximin, headrick2014causality, engelhardt2015quantum}, the bulk future of $E_{\mathcal{V}_{j}}$ always contains the bulk future of $c_j,$ and the bulk past of $E_{\mathcal{W}_j}$ always contains the bulk past of $r_j.$
So there can be cases when a 2-to-2 bulk causal interaction exists for the entanglement wedges $E_{\mathcal{V}_1}, E_{\mathcal{V}_2}, E_{\mathcal{W}_1}, E_{\mathcal{W}_2},$ but not for the boundary points $c_1, c_2, r_1, r_2.$ 
This implies the following theorem, which is suggested by the reasoning of this paragraph, is strictly stronger than theorem \ref{thm:first-connected-wedge-theorem}.

\begin{theorem}
    Let $c_1, c_2, r_1, r_2$ be points on a global AdS$_{2+1}$ boundary such that
    the sets $\mathcal{V}_1$ and $\mathcal{V}_2$ (defined as above)  are nonempty, while the set $\hat{J}(c_1, c_2 \rightarrow r_1, r_2)$ is empty.

    Let $\mathcal{M}$ be an asymptotically AdS$_{2+1},$ AdS-hyperbolic spacetime satisfying the null curvature condition, with at least one global AdS$_{2+1}$ boundary on which $c_1, c_2, r_1, r_2$ are specified. Let $E_{\mathcal{V}_j}$ and $E_{\mathcal{W}_j}$ be the entanglement wedges of $\mathcal{V}_j$ and $\mathcal{W}_j.$ 

    If $J(E_{\mathcal{V}_1}, E_{\mathcal{V}_2} \rightarrow E_{\mathcal{W}_1}, E_{\mathcal{W}_2})$ has nonempty interior, and if the HRRT surface of $\mathcal{V}_1 \cup \mathcal{V}_2$ can be found using the maximin formula \cite{maximin, maximin2}, then the entanglement wedge of $\mathcal{V}_1 \cup \mathcal{V}_2$ is connected.
\end{theorem}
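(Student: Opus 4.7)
The plan is to reduce this to Theorem \ref{thm:first-connected-wedge-theorem} by manufacturing new boundary points $c_1', c_2', r_1', r_2'$ whose boundary diamonds $\mathcal{V}_j' \equiv \hat{J}(c_j' \rightarrow r_1', r_2')$ and $\mathcal{W}_j' \equiv \hat{J}(c_1', c_2' \rightarrow r_j')$ sit inside $\mathcal{V}_j$ and $\mathcal{W}_j$ respectively, yet which still scatter in the ordinary bulk sense, i.e.\ $J(c_1', c_2' \rightarrow r_1', r_2')$ has nonempty interior. The role of the containment $\mathcal{V}_j' \subseteq \mathcal{V}_j$ is this: if $E_{\mathcal{V}_1 \cup \mathcal{V}_2}$ were disconnected as $E_{\mathcal{V}_1} \sqcup E_{\mathcal{V}_2}$, then entanglement wedge nesting would give $E_{\mathcal{V}_1' \cup \mathcal{V}_2'} \subseteq E_{\mathcal{V}_1} \sqcup E_{\mathcal{V}_2}$, with nontrivial intersection with each summand via $E_{\mathcal{V}_j'} \subseteq E_{\mathcal{V}_j}$; this is incompatible with the connectedness of $E_{\mathcal{V}_1' \cup \mathcal{V}_2'}$ predicted by Theorem \ref{thm:first-connected-wedge-theorem}.

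To produce the primed boundary points I would fix a point $p$ in the interior of $J(E_{\mathcal{V}_1}, E_{\mathcal{V}_2} \rightarrow E_{\mathcal{W}_1}, E_{\mathcal{W}_2})$ and, for each $j$, pick causal curves witnessing $p \in J^+(E_{\mathcal{V}_j})$ and $p \in J^-(E_{\mathcal{W}_j})$. Each such curve I would then extend inside the corresponding entanglement wedge until it reaches the conformal boundary, declaring $c_j'$ to be the past endpoint of the curve from $E_{\mathcal{V}_j}$ and $r_j'$ to be the future endpoint of the curve into $E_{\mathcal{W}_j}$. The extension is supposed to land $c_j'$ in $\mathcal{V}_j$ and $r_j'$ in $\mathcal{W}_j$, using that $E_{\mathcal{V}_j}$ is the bulk domain of dependence of a spacelike slice whose conformal-boundary component is a Cauchy slice of $\mathcal{V}_j$.

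With the primed points in hand I would verify the hypotheses of Theorem \ref{thm:first-connected-wedge-theorem}. The no-boundary-scattering condition follows quickly from the hypothesis for the unprimed points: any boundary point witnessing $\hat{J}(c_1', c_2' \rightarrow r_1', r_2') \neq \varnothing$ would also lie in $\hat{J}^+(c_1) \cap \hat{J}^+(c_2) \cap \hat{J}^-(r_1) \cap \hat{J}^-(r_2) = \varnothing$, using $c_j' \in \mathcal{V}_j \subseteq \hat{J}^+(c_j) \cap \hat{J}^-(r_1) \cap \hat{J}^-(r_2)$ and $r_j' \in \mathcal{W}_j \subseteq \hat{J}^+(c_1) \cap \hat{J}^+(c_2) \cap \hat{J}^-(r_j)$. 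The regions $\mathcal{V}_j'$ and $\mathcal{W}_j'$ are nonempty because the curves routed through $p$ place each $r_k'$ in the boundary future of each $c_j'$, while the bulk scattering condition is immediate since $p$ lies in an open set contained in $J(c_1', c_2' \rightarrow r_1', r_2')$. Theorem \ref{thm:first-connected-wedge-theorem} then delivers connectedness of $E_{\mathcal{V}_1' \cup \mathcal{V}_2'}$, which the nesting argument of the first paragraph parlays into connectedness of $E_{\mathcal{V}_1 \cup \mathcal{V}_2}$.

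The main obstacle is the extension step. Without care, a past-directed causal curve from $p$ entering $E_{\mathcal{V}_j}$ could exit through the past null sheet generated by $\gamma_{\mathcal{V}_j}$ rather than through $\mathcal{V}_j$, so that the constructed $c_j'$ would lie outside the intended boundary diamond. Controlling this requires a careful use of the domain-of-dependence characterization of the entanglement wedge, the achronality of $\gamma_{\mathcal{V}_j}$, and the anchoring of $\gamma_{\mathcal{V}_j}$ on $\partial \mathcal{V}_j$. The nonempty-interior assumption on $J(E_{\mathcal{V}_1}, E_{\mathcal{V}_2} \rightarrow E_{\mathcal{W}_1}, E_{\mathcal{W}_2})$ provides enough slack to perturb $p$ and the routed curves into generic positions where the endpoints fall inside the interiors of $\mathcal{V}_j$ and $\mathcal{W}_j$.
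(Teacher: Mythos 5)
Your reduction strategy is a genuinely different route from the paper's own proof of this statement, which is the direct general-relativity argument of section~\ref{sec:GR-proof} (the statement is the $n=2$ instance of theorem~\ref{thm:n-to-n-first-statement}, proved via focusing lightsheets, dividing curves, and the maximin formula, all phrased directly in terms of $E_{\mathcal{V}_j}$ and $E_{\mathcal{W}_j}$). Unfortunately there is a genuine gap in your plan, and it sits exactly where you suspected: the extension step.

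The hypothesis you are given is that $J(E_{\mathcal{V}_1}, E_{\mathcal{V}_2} \rightarrow E_{\mathcal{W}_1}, E_{\mathcal{W}_2})$ has nonempty interior. To manufacture $c_1'$ you need a point $p$ in this scattering region together with a boundary point $c_1' \in \mathcal{V}_1$ satisfying $p \in J^+(c_1')$; in other words, you need $p \in J^+(\mathcal{V}_1)$. But the hypothesis only gives $p \in J^+(E_{\mathcal{V}_1})$, and $J^+(E_{\mathcal{V}_1})$ is in general strictly larger than $J^+(\mathcal{V}_1)$. This is precisely because the entanglement wedge strictly contains the causal wedge: there can be points $q \in E_{\mathcal{V}_1}$ whose past inside $E_{\mathcal{V}_1}$ is cut off by the past Cauchy horizon $H^-(\Sigma_1)$ (generated off $\gamma_{\mathcal{V}_1}$) before ever reaching the conformal boundary, and for such $q$ there is no past-directed causal curve from $q$ to $\mathcal{V}_1$ at all, inside or outside the wedge. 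Choosing a different $q$ does not help: $p \in J^+(q')$ with $q' \in E_{\mathcal{V}_1} \cap J^+(\mathcal{V}_1)$ would force $p \in J^+(\mathcal{V}_1)$, which the hypothesis does not provide. In fact, the content of this theorem being strictly stronger than theorem~\ref{thm:first-connected-wedge-theorem} is exactly the class of configurations where the wedges scatter but the boundary points do not; your reduction throws away precisely those cases. The nesting argument and the verification of the no-boundary-scattering condition for the primed points are fine, and the construction does succeed whenever $J(\mathcal{V}_1, \mathcal{V}_2 \rightarrow \mathcal{W}_1, \mathcal{W}_2)$ itself already has nonempty interior — but in that regime the unprimed points already satisfy the hypothesis of theorem~\ref{thm:first-connected-wedge-theorem}, so you would have gained nothing. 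To get the strictly stronger statement one really does need to run the lightsheet argument with $E_{\mathcal{V}_j}$ and $E_{\mathcal{W}_j}$ in place of $c_j$ and $r_j$, as the paper does.
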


In \cite{may2021holographic}, this theorem was shown to follow from identical arguments to the ones given in \cite{may2020holographic}.
We view this theorem as more fundamental than the ``points'' version discussed above, so we will simply call it the $2$-to-$2$ connected wedge theorem.

\subsection{The \texorpdfstring{$n$}{TEXT}-to-\texorpdfstring{$n$}{TEXT} theorem}

Now, we can develop the more general $n$-to-$n$ connected wedge theorem. 
We consider a specification of $2n$ points $c_1, \dots, c_n, r_1, \dots, r_n$ on a global AdS$_{2+1}$ boundary, and define input and output regions in analogy with the 2-to-2 case:
\begin{align}
    \mathcal{V}_j & \equiv \hat{J}(c_j \rightarrow r_1, \dots, r_n), \\
    \mathcal{W}_j & \equiv \hat{J}(c_1, \dots, c_n \rightarrow r_j).
\end{align}
We will require that each $\mathcal{V}_j$ is nonempty (which implies that each $\mathcal{W}_j$ is nonempty), which is the statement that each input can causally signal all outputs.
We will further require that all 2-to-$n$ and $n$-to-2 causal regions are empty on the boundary:
\begin{equation}
    \hat{J}(c_j, c_k \rightarrow r_1, \dots, r_n)
        = \hat{J}(c_1, \dots, c_n \rightarrow r_j, r_k) = \varnothing.
\end{equation}
This is a natural generalization of the ``no $2$-to-$2$ boundary interaction'' requirement imposed in the previous subsection.
Note however that we allow $2$-to-$(n-1)$ and $(n-1)$-to-$2$ scattering regions to exist in the boundary. 

For a bulk spacetime containing this configuration on one of its asymptotic boundaries, the $n$-to-$n$ connected wedge theorem gives a causal condition under which the entanglement wedge of $\mathcal{V}_1 \cup \dots \cup \mathcal{V}_n$ is connected.
The most naive generalization of the causal condition that appeared in the previous subsection would be the existence of an $n$-to-$n$ scattering region, i.e., an open interior for the region $J(c_1, \dots, c_n \rightarrow r_1, \dots, r_n).$
This would lend itself naturally to the kind of quantum information argument given in \cite{may2019quantum} and mentioned above for the $2$-to-$2$ theorem; such a bulk causal structure allows $n$ systems $A_1, \dots, A_n$ to come together and interact, and reproducing these interactions in the weaker boundary causal structure plausibly places constraints on the entanglement structure of the boundary state.
However, while we will see that the existence of an $n$-to-$n$ scattering region is \textit{sufficient} to prove connectedness of $E_{\mathcal{V}_1 \cup \dots \cup \mathcal{V}_n}$, the same conclusion can be reached via a much weaker causal condition.

It may help to discuss a simple example before the general case.
For $n=3,$ our gravitational proof shows that connectedness of $E_{\mathcal{V}_1 \cup \mathcal{V}_2 \cup \mathcal{V}_3}$ follows when \textit{at least two} of the bulk regions
\begin{align}
    J(E_{\mathcal{V}_1}, E_{\mathcal{V}_2} \rightarrow E_{\mathcal{W}_1}, E_{\mathcal{W}_2}, E_{\mathcal{W}_3}), \\
    J(E_{\mathcal{V}_1}, E_{\mathcal{V}_3} \rightarrow E_{\mathcal{W}_1}, E_{\mathcal{W}_2}, E_{\mathcal{W}_3}), \\
    J(E_{\mathcal{V}_2}, E_{\mathcal{V}_3} \rightarrow E_{\mathcal{W}_1}, E_{\mathcal{W}_2}, E_{\mathcal{W}_3}),
\end{align}
have nonempty interior.
Having a 3-to-3 scattering region would imply that all three of these sets have nonempty interior; having two of these sets with nonempty interior, however, does not imply the existence of a 3-to-3 scattering region.

For larger $n$, the importance of 2-to-all bulk causal structures persists.
The appropriate generalization to arbitrary $n$ is most easily phrased in terms of the \textit{$2$-to-all causal graph}, $\Gamma_{2 \rightarrow \text{all}}$, defined by
\begin{align}
        \Gamma_{2\rightarrow \text{all}} \equiv \{(j,k): J(E_{\mathcal{V}_j}, E_{\mathcal{V}_k} \rightarrow E_{\mathcal{W}_1}, \dots, E_{\mathcal{W}_n}) \text{ has nonempty interior} \}.
\end{align}
In words, $\Gamma_{2\rightarrow \text{all}}$ is the graph whose vertices are $\{1, \dots, n\},$ and where vertices $j$ and $k$ are connected by an edge if the bulk region $J(E_{\mathcal{V}_j}, E_{\mathcal{V}_k} \rightarrow E_{\mathcal{W}_1}, \dots, E_{\mathcal{W}_n})$ has nonempty interior.
The causal condition checked by the $n$-to-$n$ connected wedge theorem is connectedness of this graph, which we think of as a ``sufficient'' family of bulk causal connections in a sense that we will momentarily make precise.

\begin{theorem} \label{thm:n-to-n-first-statement}
    Let $c_1, \dots, c_n, r_1, \dots, r_n$ be points on a global AdS$_{2+1}$ boundary such that
    the sets $\mathcal{V}_j$ and $\mathcal{W}_j$ are nonempty, while also satisfying
    \begin{equation}
        \hat{J}(c_j, c_k \rightarrow r_1, \dots, r_n) = \hat{J}(c_1, \dots, c_n \rightarrow r_j, r_k) = \varnothing.
    \end{equation}
    Let $\mathcal{M}$ be an asymptotically AdS$_{2+1},$ AdS-hyperbolic spacetime satisfying the null curvature condition, with at least one global AdS$_{2+1}$ boundary on which $c_1, \dots, c_n, r_1, \dots, r_n$ are specified.
    
    If the $2$-to-all causal graph $\Gamma_{2\rightarrow\text{all}}$ is connected, and if the HRRT surface of $\mathcal{V}_1 \cup \dots \cup \mathcal{V}_n$ can be found using the maximin formula \cite{maximin, maximin2}, then the entanglement wedge of $\mathcal{V}_1 \cup \dots \cup \mathcal{V}_n$ is connected.
\end{theorem}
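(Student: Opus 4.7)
My plan is to reduce Theorem \ref{thm:n-to-n-first-statement} to a pairwise statement indexed by the edges of $\Gamma_{2\rightarrow\text{all}}$ and then use graph connectedness to stitch the pairwise conclusions together. Concretely, the key lemma I would prove is: if $(j,k)$ is any edge of $\Gamma_{2\rightarrow\text{all}}$, then $\mathcal{V}_j$ and $\mathcal{V}_k$ lie in a common connected component of the entanglement wedge $E_{\mathcal{V}_1 \cup \dots \cup \mathcal{V}_n}$. Given this lemma, connectedness of $\Gamma_{2\rightarrow\text{all}}$ immediately forces every pair of input regions into a single component, so the full entanglement wedge is connected. Note that one cannot simply invoke the extremal-wedge version of the 2-to-2 theorem on each pair: the boundary hypothesis $\hat{J}(c_j, c_k \rightarrow r_1, \dots, r_n) = \varnothing$ is strictly weaker than $\hat{J}(c_j, c_k \rightarrow r_l, r_m) = \varnothing$ for any fixed pair of outputs, and in any case the 2-to-2 theorem concludes only connectedness of $E_{\mathcal{V}_j \cup \mathcal{V}_k}$ considered in isolation, which is not manifestly equivalent to $\mathcal{V}_j$ and $\mathcal{V}_k$ sharing a component inside the larger $n$-region entanglement wedge.

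The key lemma I would prove by adapting the maximin / competitor-surface argument of the 2-to-2 proof in \cite{may2020holographic}. Suppose for contradiction that $\mathcal{V}_j$ and $\mathcal{V}_k$ lie in distinct connected components of $E_{\mathcal{V}_1 \cup \dots \cup \mathcal{V}_n}$, corresponding to a bipartition $(S, T)$ of $\{1, \dots, n\}$ with $j \in S$ and $k \in T$. By the maximin formula, pass to a Cauchy slice $\Sigma$ on which the HRRT surface splits as $\gamma^S \sqcup \gamma^T$, each piece area-minimizing in its homology class. Pick $x$ in the interior of $J(E_{\mathcal{V}_j}, E_{\mathcal{V}_k} \rightarrow E_{\mathcal{W}_1}, \dots, E_{\mathcal{W}_n})$, which is nonempty by the assumption $(j,k) \in \Gamma_{2\rightarrow\text{all}}$. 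From $x$, trace past-directed null sheets $\mathscr{F}^-_j, \mathscr{F}^-_k$ back to $E_{\mathcal{V}_j}, E_{\mathcal{V}_k}$, and future-directed null sheets $\mathscr{F}^+_1, \dots, \mathscr{F}^+_n$ forward to the $E_{\mathcal{W}_l}$. Intersecting these sheets with $\Sigma$ produces $1$-dimensional curves whose lengths are controlled by the focusing theorem (via the null curvature condition). The goal is then to splice appropriate segments of these intersection curves together with portions of $\gamma^S$ and $\gamma^T$ to produce a competitor in the homology class of $\bigcup_l \mathcal{V}_l$ on $\Sigma$ with strictly smaller total length, contradicting the minimality of $\gamma^S \sqcup \gamma^T$.

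The main obstacle I anticipate is constructing the competitor correctly in the multi-output setting. In the 2-to-2 case only two past and two future null sheets meet at the scattering point, producing a simple ``bow-tie'' whose combinatorics are tractable; here $n$ future sheets emanate from $x$, and one must judiciously pick a subset of them to assemble a surface in the correct homology class on $\Sigma$. I expect the cyclic ordering of the $c_l, r_l$ along the asymptotic cylinder, combined with the hypotheses $\hat{J}(c_j, c_k \rightarrow r_1, \dots, r_n) = \varnothing$ and $\hat{J}(c_1, \dots, c_n \rightarrow r_j, r_k) = \varnothing$, to supply exactly the structural control needed for this to work; the lemmas about timelike cylinders advertised in section \ref{sec:timelike-cylinders} are presumably what makes this precise. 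Several technical issues flagged in the abstract --- singularities, the restriction of causal operations to entanglement-wedge domains of dependence, and the fact that the original 2-to-2 argument does not extend above three bulk dimensions --- would also need to be handled carefully, which is part of why the proof is restricted to AdS$_{2+1}$. Finally, the semiclassical case follows by replacing the HRRT surface with a quantum extremal surface, area by generalized entropy, and the null curvature condition by the appropriate focusing conjecture, with the overall logical skeleton unchanged.
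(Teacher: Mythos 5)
The proposal departs from the paper's proof in a way that introduces a genuine gap. The paper does \emph{not} argue pairwise: it assumes the entire wedge $E_{\mathcal{V}_1 \cup \dots \cup \mathcal{V}_n}$ is disconnected and derives a global contradiction by constructing a single competitor cross-section of the future lightsheet $\mathscr{F} = \bar{\del J^+(E_{\mathcal{V}} \cap \Sigma) - (E_{\mathcal{V}} \cap \Sigma)}$. Graph connectedness of $\Gamma_{2\rightarrow\text{all}}$ enters at one very specific point: it is used to show that the $n$ ``dividing curves'' $\alpha_j$ (roughly, the intersections of the past lightsheets $\mathscr{P}_j$ of the output wedges with $\mathscr{F}$) are mutually non-intersecting. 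This is proved by contradiction: if $\alpha_j$ and $\alpha_k$ intersected, one could build a blocking curve across $\mathscr{F}$ with only a single point in $\bigcap_\ell J^-(E_{\mathcal{W}_\ell})$, and then connectedness of the graph guarantees some open family of curves on $\mathscr{F}$ must cross it, a contradiction. Without this non-intersection property the competitor cross-section is not even well-defined, let alone homologous to $\gamma_{\mathcal{V}}$.

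Your proposed key lemma --- that a single edge $(j,k)\in \Gamma_{2\rightarrow\text{all}}$ forces $\mathcal{V}_j$ and $\mathcal{V}_k$ into a common component --- is logically \emph{stronger} than the theorem, since it would imply the theorem but is stated without the global connectedness hypothesis. As such, the paper's technique cannot prove it: the non-intersection of the dividing curves (which you would still need to build a competitor) uses the full graph structure, not just one edge. You could imagine a configuration where $\alpha_\ell$ and $\alpha_m$ intersect for some other $\ell,m$, the resulting blocking curve does not separate the components containing $\mathcal{V}_j$ and $\mathcal{V}_k$, and no contradiction with the single edge $(j,k)$ arises. Whether or not your lemma is actually true, the proposed competitor construction would fail in this situation. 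Separately, your sketch of the competitor (trace null sheets from a single bulk scattering point $x$, intersect with $\Sigma$, splice) resembles the older $2$-to-$2$ argument of \cite{may2020holographic}, which this paper explicitly says it corrects: the paper instead does a two-stage focusing, first to a cross-section of $\mathscr{F}$ assembled from pieces of the $\mathscr{P}_j$ plus caustics (and closed curves wrapping singularities), and then back down to $\Sigma$ via a single past lightsheet. Intersecting null sheets from $x$ directly with $\Sigma$ does not control areas via the focusing theorem (the expansion need not be sign-definite at $x$) and does not handle singularities or the homology class correctly. To repair the proposal you would need to abandon the pairwise reduction and use $\Gamma_{2\rightarrow\text{all}}$'s connectedness as a single global combinatorial constraint on the dividing-curve geometry, exactly as the paper does.
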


Like in the 2-to-2 case, the proof given in section \ref{sec:GR-proof} will also work for spacetimes that satisfy the quantum maximin formula \cite{quantum-maximin} and the quantum focusing conjecture \cite{QFC}, even if they do not satisfy the null curvature condition.
In the purely classical case, the condition that the entanglement wedge of $\mathcal{V}_1 \cup \dots \cup \mathcal{V}_n$ is connected is equivalent to the statement that the mutual information of any bipartition, $I(\mathcal{V}_{j_1} \cup \dots \cup \mathcal{V}_{j_m} : \mathcal{V}_{j_{m+1}} \cup \dots \cup \mathcal{V}_{j_n})$, is extensive in $\ell_{\text{AdS}}/G_N.$
In the quantum case, the statement that the entanglement wedge is connected can be stronger, since island contributions can cause boundary regions to share extensive mutual information even while their entanglement wedge is disconnected.

Recall that in the $2$-to-$2$ theorem the bulk causal structure suffices to perform an arbitrary unitary, while in the weaker boundary causal structure, entanglement was necessary to reproduce this arbitrary unitary. 
Considering the $n$-to-$n$ theorem and working in analogy, we should guess that the bulk causal structure dictated by the $n$-to-$n$ theorem --- that $\Gamma_{2\rightarrow \text{all}}$ be connected --- suffices to complete any quantum task, and that in the boundary having the pattern of entanglement specified by the theorem is necessary for arbitrary tasks to be reproduced.
Indeed we argue for both of these statements directly in quantum information theory in section \ref{sec:QI-argument}.
That a connected $2$-to-all causal graph suffices for arbitrary tasks we show by constructing an explicit protocol, using tools similar to those in \cite{beigi2011simplified,dolev2019constraining}. 
That the pattern of entanglement specified by the theorem is necessary to complete arbitrary tasks we argue using similar techniques to the ones in \cite{may2019quantum,may2020holographic,may2021holographic}. 

\section{The causal structure of a timelike cylinder}\label{sec:timelike-cylinders}

For the definitions of technical terms that are not given explicitly below, a glossary is provided in appendix \ref{app:glossary}.

A \textit{timelike cylinder} is a two-dimensional spacetime with metric
\begin{equation}
    ds^2 = - dt^2 + d \theta^2,
\end{equation}
with $\theta \in [0, 2 \pi)$ and $t \in (-\infty, \infty).$
Since a global boundary of an asymptotically AdS$_{2+1}$ spacetime has the same causal structure as a timelike cylinder, it will be helpful for us to establish some basic facts about the cylinder.

First, there are only two possible orientations of light rays on the AdS$_{2+1}$ boundary. 
Light rays that, when followed to the future, have increasing angular coordinate will be called ``positively oriented'' and will be labeled with a ``$+$'' superscript. 
Light rays that, when followed to the future, have decreasing angular coordinate will be called ``negatively oriented'' and will be labeled with a ``$-$'' superscript. 

We consider a set of $2n$ points $\{c_1, \dots, c_n, r_1, \dots, r_n\}$ that satisfy the assumptions of theorem \ref{thm:n-to-n-first-statement}.
As in section \ref{sec:overview}, we define the input regions by
\begin{equation}
    \mathcal{V}_j \equiv \hat{J}^+(c_j \rightarrow r_1, \dots, r_n).
\end{equation}
Note that while we are currently working entirely within the cylinder spacetime, we continue to use the notation ``$\hat{J}$'' to avoid confusion when we realize this cylinder as a boundary of another spacetime.
We will also assume that the input points $\{c_1, \dots, c_n\}$ are labeled in order of increasing angular coordinate, so that $\mathcal{V}_j$ and $\mathcal{V}_{j+1}$ are adjacent.
We make no such assumption about the output points $\{r_1, \dots, r_n\}.$

We will show that each $\mathcal{V}_j$ is a causal diamond, i.e., that it is the domain of dependence of a single achronal interval.
The future horizon of each of these causal diamonds will be made up of two null segments, each of which is part of a light ray that intersects one of the points $\{r_1, \dots, r_n\}.$
We will show further that the negatively oriented segment on the future horizon of $\mathcal{V}_j,$ and the positively oriented segment on the future horizon of $\mathcal{V}_{j+1},$ intersect the same output point $r_k.$
In more pedestrian language, this means that the right-endpoint of $\mathcal{V}_j$ and the left-endpoint of $\mathcal{V}_{j+1}$ are ``cut off'' by past-directed light rays leaving the same output point $r_k.$
Proving these statements will require several consecutive lemmas.

\begin{definition}
    The future-directed, positively oriented light ray leaving $c_j$ will be denoted $\gamma_{c_j, +}.$
    The future-directed, negatively oriented light ray leaving $c_j$ will be denoted $\gamma_{c_j, -}.$
    Analogous definitions hold for the past-directed light rays leaving $r_j,$ which will be denoted $\gamma_{r_j, +}$ and $\gamma_{r_j, -}.$
\end{definition}

Because $c_j$ is in $\mathcal{V}_j,$ and $\mathcal{V}_j$ is closed\footnote{The timelike cylinder is globally hyperbolic, so by remark \ref{rem:closed-futures}, futures and pasts of points are closed, and consequently each $\mathcal{V}_j$ is a finite intersection of closed sets.} there must be some final point along $\gamma_{c_j, -}$ that lies in $\mathcal{V}_j,$ after which $\gamma_{c_j, -}$ exits $\mathcal{V}_j.$
We denote this point $x_j.$ The analogous point for $\gamma_{c_j, +}$ is denoted $y_j.$ This is sketched in figure \ref{fig:inputregionstructure}.

\begin{figure}
    \centering
    \begin{tikzpicture}[scale=0.75]
    
    \draw[red] (0,6) -- (6,0) -- (12,6);
    \draw[red] (2,6) -- (8,0);
    \draw[red] (10,6) -- (4,0);
    
    \draw[ForestGreen,very thick] (7,1.1) -- (6,0.1) -- (5,1.1);
    \node[below] at (6,1) {$\sigma_j$};
    
    \draw[thick, black] (0,0) -- (12,0) -- (12,6) -- (0,6) -- (0,0);
    
    \draw plot [mark=*, mark size=2] coordinates{(7,1)};
    \node[right] at (7,1) {$y_j$};
    
    \draw plot [mark=*, mark size=2] coordinates{(7,1)};
    \node[left] at (5,1) {$x_j$};
    
    \draw plot [mark=*, mark size=2] coordinates{(5,1)};
    
    \draw plot [mark=*, mark size=2] coordinates{(6,0)};
    \node[below] at (6,0) {$c_j$};
    
    \draw plot [mark=*, mark size=2] coordinates{(2,0)};
    \node[below] at (2,0) {$c_{j-1}$};
    
    \draw plot [mark=*, mark size=2] coordinates{(10,0)};
    \node[below] at (10,0) {$c_{j+1}$};

    \draw plot [mark=*, mark size=2] coordinates{(2,6)};
    \node[above] at (10,6) {$r_{j_-}$};
    
    \draw plot [mark=*, mark size=2] coordinates{(10,6)};
    \node[above] at (2,6) {$r_{j_+}$};
    
    \node[above left,rotate=45] at (8.7,4.5) {$\gamma_{r_{j_-},+}$};
    \node[above right,rotate=-45] at (3.3,4.5) {$\gamma_{r_{j_+},-}$};
    
    \node[above right,rotate=45] at (8.7,2.5) {$\gamma_{c_{j},+}$};
    \node[above left,rotate=-45] at (3.3,2.5) {$\gamma_{c_{j},-}$};
    
    \end{tikzpicture}
    \caption{Illustration of a number of definitions appearing in the main text. The curve $\sigma_j$ is shown in green.}
    \label{fig:inputregionstructure}
\end{figure}
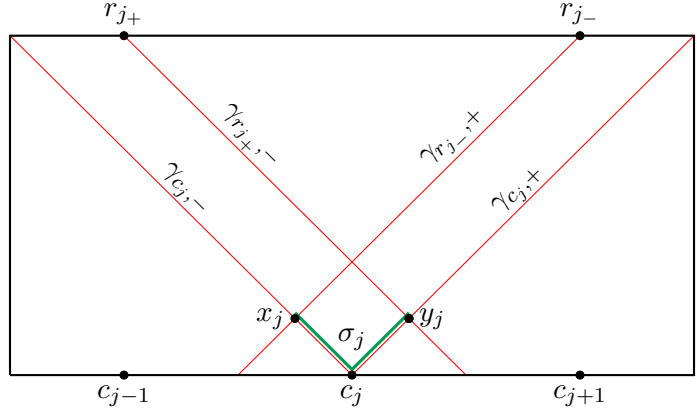

The light ray $\gamma_{c_j, -}$ can only exit $\mathcal{V}_j$ by exiting $\hat{J}^-(r_k)$ for some $k$, which it can only do by crossing the light ray $\gamma_{r_k, +}.$
We will use the label $j_-$ for the index of the light ray $\gamma_{r_{j_-}, +}$ containing $x_j,$ and the label $j_+$ for the index of the light ray $\gamma_{r_{j_+}, -}$ containing $y_j.$
See figure \ref{fig:inputregionstructure}.
For this to be well defined, we must show that the past-directed light rays intersecting $x_j$ and $y_j$ are unique.

\begin{lemma}
    For $k_1 \neq k_2,$ we have
    \begin{equation}
        \gamma_{r_{k_1}, +} \cap \gamma_{r_{k_2}, +}
        = \gamma_{r_{k_1}, -} \cap \gamma_{r_{k_2}, -}
        = \varnothing.
    \end{equation}
\end{lemma}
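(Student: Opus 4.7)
My plan is to reduce the statement to the following elementary fact about the timelike cylinder: two light rays with the same orientation either coincide entirely or are disjoint. This follows from the explicit form of null geodesics on the cylinder: every positively oriented null geodesic can be written as $\theta = t + \theta_0 \pmod{2\pi}$ for some constant $\theta_0$, and two such geodesics with different $\theta_0$ never meet. The analogous statement holds for negatively oriented rays with $\theta = -t + \theta_0 \pmod{2\pi}$. So it suffices to derive a contradiction from the assumption that $\gamma_{r_{k_1},+}$ and $\gamma_{r_{k_2},+}$ (or $\gamma_{r_{k_1},-}$ and $\gamma_{r_{k_2},-}$) coincide as null geodesics.

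Assume $\gamma_{r_{k_1},+}$ and $\gamma_{r_{k_2},+}$ share a point. By the observation above they are the same null geodesic, so $r_{k_1}$ and $r_{k_2}$ lie on a common null line. Since $r_{k_1} \neq r_{k_2}$ as spacetime points, one of them lies strictly to the past of the other along this ray; without loss of generality say $r_{k_1} \in \hat{J}^-(r_{k_2})$. Then I would exhibit $r_{k_1}$ itself as a point in the supposedly empty set $\hat{J}(c_1, \dots, c_n \rightarrow r_{k_1}, r_{k_2})$. Trivially $r_{k_1} \in \hat{J}^-(r_{k_1})$, and by construction $r_{k_1} \in \hat{J}^-(r_{k_2})$. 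To verify $r_{k_1} \in \hat{J}^+(c_j)$ for every $j$, I use the hypothesis that $\mathcal{V}_j$ is nonempty: any $p \in \mathcal{V}_j$ satisfies $c_j \le p$ and $p \le r_\ell$ for all $\ell$, in particular $c_j \le p \le r_{k_1}$, giving $c_j \le r_{k_1}$. This contradicts the assumed vanishing of $\hat{J}(c_1, \dots, c_n \rightarrow r_{k_1}, r_{k_2})$. The argument for the negatively oriented rays is identical after swapping orientations.

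The only mild subtlety I expect is bookkeeping: I should confirm at the outset that the setup forces $r_{k_1} \neq r_{k_2}$ as points in spacetime (otherwise the statement would be vacuous or false). This follows from the theorem's hypotheses themselves, since if $r_{k_1} = r_{k_2}$ then $\hat{J}(c_1,\dots,c_n \rightarrow r_{k_1}, r_{k_2}) = \mathcal{W}_{k_1}$, which must simultaneously be empty (by the scattering hypothesis) and nonempty (by the hypothesis that each $\mathcal{W}_j$ is nonempty). Beyond this, the argument is essentially a one-line consequence of the cylinder's null structure combined with the ``no $n$-to-$2$ boundary scattering'' assumption.
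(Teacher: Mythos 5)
Your proof is correct and takes essentially the same approach as the paper's: both reduce intersection to coincidence of same-oriented null geodesics on the cylinder, observe that coincidence forces $r_{k_1}$ and $r_{k_2}$ to be null-separated, and derive a contradiction with the vanishing of $\hat{J}(c_1, \dots, c_n \rightarrow r_{k_1}, r_{k_2})$. You fill in a couple of details the paper leaves implicit --- exhibiting $r_{k_1}$ itself as the offending point and checking $r_{k_1} \neq r_{k_2}$ --- but the argument is the same.
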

\begin{proof}
    If two light rays with the same orientation intersect, then they coincide everywhere.
    This would imply that the points $r_{k_1}$ and $r_{k_2}$ are null-separated along their shared light ray.
    But this is forbidden by the assumptions
    \begin{align}
        \hat{J}(c_1, \dots, c_n \rightarrow r_{k_1})
            & \neq \varnothing, \\
        \hat{J}(c_1, \dots, c_n \rightarrow r_{k_2}) 
            & \neq \varnothing, \\
        \hat{J}(c_1, \dots, c_n \rightarrow r_{k_1}, r_{k_2})
            & = \varnothing.
    \end{align}
\end{proof}

\begin{lemma} \label{lem:Vjs-spacelike-separated}
    For $j \neq k,$ we have $j_+ \neq k_+$ and $j_- \neq k_-.$
\end{lemma}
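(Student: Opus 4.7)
The plan is to argue by contradiction, using only the hypothesis $\hat{J}(c_j, c_k \rightarrow r_1, \dots, r_n) = \varnothing$ for every pair $j \neq k$. The two assertions $j_+ \neq k_+$ and $j_- \neq k_-$ are mirror images of one another under exchange of the orientation labels, so I will focus on the ``$+$'' case; suppose for contradiction that $j_+ = k_+ = \ell$ for some $j \neq k$.

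By construction, both $y_j$ and $y_k$ lie on the null curve $\gamma_{r_\ell, -}$, and both belong to $\hat{J}^-(r_\ell)$ because $y_j \in \mathcal{V}_j$ and $y_k \in \mathcal{V}_k$. More importantly, the definition $\mathcal{V}_i = \hat{J}(c_i \rightarrow r_1, \dots, r_n)$ yields $y_j, y_k \in \hat{J}^-(r_m)$ for \emph{every} output index $m$, together with $y_j \in \hat{J}^+(c_j)$ and $y_k \in \hat{J}^+(c_k)$. Since $\gamma_{r_\ell, -}$ is a null curve, the points $y_j$ and $y_k$ are totally ordered along it. Let $y$ denote whichever of the two lies to the causal future of the other (taking $y = y_j$ in the degenerate case $y_j = y_k$). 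By transitivity of the causal relation, $y \in \hat{J}^+(c_j) \cap \hat{J}^+(c_k)$, and by the observations above, $y \in \hat{J}^-(r_m)$ for all $m$. Hence $y \in \hat{J}(c_j, c_k \rightarrow r_1, \dots, r_n)$, contradicting the hypothesis.

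The proof of $j_- \neq k_-$ is then obtained by replacing $y_j, y_k$ with $x_j, x_k$ and $\gamma_{r_\ell, -}$ with the corresponding $\gamma_{r_{\ell'}, +}$; the rest of the argument goes through verbatim. I do not anticipate any serious obstacle. The only minor subtlety is the boundary case where the selected point $y$ coincides with an output point such as $r_\ell$ itself, which would occur if $c_j$ and $c_k$ were both null-connected to $r_\ell$ along positively oriented rays meeting there. This case requires no modification, since $r_\ell \in \hat{J}^-(r_\ell)$ trivially and $r_\ell \in \hat{J}^-(r_m)$ for all $m$ then follows from $y \in \mathcal{V}_j$, so the contradiction still lands $r_\ell$ inside the empty set.
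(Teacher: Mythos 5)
Your proof is correct and follows essentially the same argument as the paper's: assume $j_+ = k_+$, note the two truncation points then share the null ray $\gamma_{r_\ell,-}$ and are therefore causally ordered, and observe that the later one lies in $\hat{J}(c_j, c_k \rightarrow r_1, \dots, r_n)$, contradicting the boundary no-scattering hypothesis. (You also silently correct a variable-naming slip in the paper's own proof, which writes $x_j, x_k$ where it means $y_j, y_k$ when treating the $j_+$ case.)
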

\begin{proof}
    We will show $j_+ \neq k_+$; the argument for $j_- \neq k_-$ is identical.
    Suppose, toward contradiction, that we had $j_+ = k_+ \equiv \ell.$
    Then the points $x_j$ and $x_k$ would be null-separated along their shared light ray.
    One of the points $x_j$ or $x_k$ must be in the future of the other; without loss of generality, say $x_j$ is in the future of $x_k.$
    The point $x_j$ is in the future of $c_j$ and the past of all points $\{r_1, \dots, r_n\}$ by definition.
    But since it is in the future of $x_k,$ it is also in the future of $c_k.$
    Therefore, we have
    \begin{equation}
        x_j \in \hat{J}(c_j, c_k \rightarrow r_1, \dots, r_n).
    \end{equation}
    But this set is empty by assumption.
\end{proof}

By a simple counting argument, this lemma implies that the sets $\{j_-\}_{j=1}^{n}$ and $\{j_+\}_{j=1}^n$ contain each number from $1$ to $n$ exactly once.
This fact can be used to show that each $\mathcal{V}_j$ is a causal diamond.
In particular, we will show that $\mathcal{V}_j$ is the domain of dependence of the achronal curve $\sigma_j$ formed by the segment of $\gamma_{c_j, -}$ from $c_j$ to $x_j$ together with the segment of $\gamma_{c_j, +}$ from $c_j$ to $y_j$; see figure \ref{fig:inputregionstructure}.

\begin{lemma}\label{lemma:Visdomainofsigma}
    Each $\mathcal{V}_j$ is the domain of dependence of $\sigma_j.$
\end{lemma}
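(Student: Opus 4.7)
The plan is to prove $\mathcal{V}_j = D(\sigma_j)$ by establishing both inclusions, with $\sigma_j$ playing the role of a past Cauchy surface. I first record two structural facts. The curve $\sigma_j$ is achronal because it consists of two future-directed null segments of opposite angular orientation meeting at $c_j$, so any two distinct points are either null-separated (same segment) or spacelike-separated (different segments). Also $\sigma_j \subseteq \mathcal{V}_j$ by the defining property of $x_j$ and $y_j$ as the last points of $\gamma_{c_j,\pm}$ in $\mathcal{V}_j$. In particular $J^+(\sigma_j) = \hat{J}^+(c_j)$, a fact used repeatedly below.

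For $\mathcal{V}_j \subseteq D^+(\sigma_j)$, take $p \in \mathcal{V}_j$ and any past-inextendible causal curve $\lambda$ through $p$. Since $p \in \hat{J}^+(c_j)$, $\lambda$ must exit $\hat{J}^+(c_j)$ at some point $q \in \gamma_{c_j,+} \cup \gamma_{c_j,-}$. If $q$ were on $\gamma_{c_j,-}$ strictly past $x_j$, the definition of $x_j$ would give $q \notin \hat{J}^-(r_{j_-})$, contradicting $q \in J^-(p) \subseteq J^-(\mathcal{V}_j) \subseteq \hat{J}^-(r_{j_-})$. The symmetric argument for $y_j$ forces $q \in \sigma_j$, so $\lambda$ indeed crosses $\sigma_j$.

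The reverse inclusion $D^+(\sigma_j) \subseteq \mathcal{V}_j$ is the substantive direction. For $p \in D^+(\sigma_j)$ we get $p \in \hat{J}^+(c_j)$ automatically, so only $p \in \hat{J}^-(r_k)$ for each $k$ remains to be checked. The future boundary of $D^+(\sigma_j)$ is traced out by $\gamma_{r_{j_-},+}$ emerging from $x_j$ and $\gamma_{r_{j_+},-}$ emerging from $y_j$, meeting at an apex $a_j$; hence $p \in J^-(a_j) \subseteq \hat{J}^-(r_{j_-}) \cap \hat{J}^-(r_{j_+})$. The remaining constraints $p \in \hat{J}^-(r_k)$ for $k \neq j_\pm$ reduce to $a_j \in \hat{J}^-(r_k)$. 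For the past piece $D^-(\sigma_j)$, any point strictly below $\sigma_j$ admits a future-inextendible causal curve that bypasses the bounded achronal segment by escaping around its endpoints, so $D^-(\sigma_j) \cap \mathcal{V}_j$ collapses to $\sigma_j$ itself and contributes nothing new.

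I expect the main obstacle to be this last step, $a_j \in \hat{J}^-(r_k)$ for $k \neq j_\pm$. My plan is to combine the bijection $j \mapsto j_\pm$ of lemma \ref{lem:Vjs-spacelike-separated} — each $r_k$ uniquely cuts off some $\mathcal{V}_i$ from the left or right — with the no-scattering hypothesis $\hat{J}(c_j, c_k \to r_l, r_m) = \varnothing$, in order to constrain the angular position of each $r_k$ with $k \neq j_\pm$ to lie outside the angular interval $[\theta_{x_j}, \theta_{y_j}]$ spanned by the endpoints of $\sigma_j$. Any $r_k$ sitting inside that interval would force the past light cones from $r_k$ and from $r_{j_\pm}$ to intersect at a point in $\hat{J}^+(c_j, c_i)$ for an appropriate $i$, which the no-scattering hypothesis forbids. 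Placing $r_k$ outside $[\theta_{x_j}, \theta_{y_j}]$ then puts it causally above $a_j$, completing the proof.
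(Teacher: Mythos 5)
Your forward inclusion is sound and is essentially a minor variant of the paper's: both track where a past-inextendible curve from $p\in\mathcal{V}_j$ exits, and use the defining property of $x_j,y_j$ together with $\mathcal{V}_j\subseteq\hat{J}^-(r_{j_\pm})$ to force the exit point onto $\sigma_j$ (you exit through $\hat{J}^+(c_j)$, the paper exits through $\mathcal{V}_j$; same idea). The observation $D^-(\sigma_j)=\sigma_j$ is also fine. But the reverse inclusion has a genuine gap, and it sits exactly where you flagged it.

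Your reduction to ``$a_j\in\hat{J}^-(r_k)$ for all $k$'' via $D^+(\sigma_j)\subseteq J^-(a_j)$ is a legitimate and genuinely different route from the paper's: the paper never constructs the apex, but instead follows the past-directed positively oriented null generator from an arbitrary $p\in D(\sigma_j)$ down to $\sigma_j$, and derives a contradiction from where along that generator $p$ could have fallen outside some $\hat{J}^-(r_k)$ (two cases, one using $D(\sigma_j)$ directly, one using the no-scattering hypothesis). Your approach needs a correct proof that $r_k\in\hat{J}^+(a_j)$ for $k\ne j_\pm$, and the implication you sketch --- ``$\theta_{r_k}\notin[\theta_{x_j},\theta_{y_j}]$, therefore $r_k$ is causally above $a_j$'' --- is false on the cylinder. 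The joint future $\hat{J}^+(x_j)\cap\hat{J}^+(y_j)$ is not $\hat{J}^+(a_j)$; it is $\hat{J}^+(a_j)\cup\hat{J}^+(b_j)$, where $b_j$ is the second focal point where the \emph{outward} null rays from $x_j$ and $y_j$ (the extensions of $\gamma_{c_j,\mp}$ past $x_j,y_j$) re-cross near the antipodal angle. The lens $\hat{J}^+(b_j)\setminus\hat{J}^+(a_j)$ is nonempty, lies entirely at angles \emph{outside} $[\theta_{x_j},\theta_{y_j}]$, and your sketch does nothing to exclude $r_k$ from it; so the angular-exclusion claim, even if established, would not finish the argument. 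There is also an error in the hypothesis you invoke: $\hat{J}(c_j,c_k\to r_l,r_m)=\varnothing$ is \emph{stronger} than the paper's assumptions $\hat{J}(c_j,c_k\to r_1,\dots,r_n)=\varnothing$ and $\hat{J}(c_1,\dots,c_n\to r_l,r_m)=\varnothing$, and in fact the paper's lemma \ref{lemma:boundarycausalstructure} shows $\hat{J}(c_j,c_k\to r_l,r_m)$ is generically nonempty. To salvage the apex route you would need to rule $r_k$ out of the lens around $b_j$ by an argument in the spirit of the paper's case analysis (e.g.\ showing that $r_k$ in that lens forces a null separation between some $\mathcal{V}_j$ and $\mathcal{V}_\ell$, contradicting the actual no-scattering assumptions); as written, the step is not proved.
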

\begin{proof}
    To establish the inclusion $\mathcal{V}_j \subseteq D(\sigma_j)$, let $p$ be a point in $\mathcal{V}_j,$ and $\chi$ a past-directed causal curve passing through $p.$
    This curve must eventually leave $\mathcal{V}_j$, meaning it must pass through either $\gamma_{c_j, -}$ or $\gamma_{c_j, +}.$
    But since it is in $\hat{J}^-(r_1) \cap \dots \cap \hat{J}^-(r_n)$ at the point where it leaves $\mathcal{V}_j,$ it must pass through $\gamma_{c_j, -}$ in the past of $x_j,$ or through $\gamma_{c_j, +}$ in the past of $y_j.$
    So it must pass through $\sigma_j.$
    
    To establish the reverse inclusion $D(\sigma_j) \subseteq \mathcal{V}_j,$ fix a point $p \in D(\sigma_j).$
    By definition, the past-directed, positively-oriented light ray leaving $p$ must eventually pass through $\sigma_j.$
    Let $q$ be the point where this happens.
    See figure \ref{fig:diamondproof}.
    The point $q$ is in $\mathcal{V}_j,$ so the only way $p$ could fail to be in $\mathcal{V}_j$ would be for the positively oriented light ray connecting $q$ to $p$ to pass through one of the light rays $\gamma_{r_k, -}$ before reaching $p.$
    We know from the previous lemma that this must be $\gamma_{r_{\ell_+}, -}$ for some $\ell.$
    
    Suppose this happens, and the light ray it passes through is $\gamma_{r_{j_+}, -}.$
    Then the past-directed, negatively oriented light ray leaving $p$ could not intersect $\sigma_j,$ since it would necessarily intersect $\gamma_{c_j, +}$ in the future of the point $y_j.$
    This would contradict $p \in D(\sigma_j).$
    See figure \ref{fig:diamondproof}.
    If the positively oriented light ray connecting $q$ to $p$ passes through some other light ray $\gamma_{r_{\ell_+}, -},$ then the point of intersection of these two light rays would be in $\mathcal{V}_j$, but null separated from $\mathcal{V}_{\ell},$ which contradicts our assumptions by the same argument as that given in the proof of lemma \ref{lem:Vjs-spacelike-separated}.
    
    We conclude $\mathcal{V}_j = D(\sigma_j),$ as desired.
\end{proof}

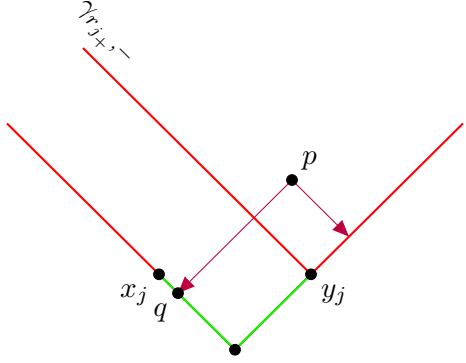
\begin{figure}
    \centering
    \begin{tikzpicture}
    
    \draw[thick,red] (0,0) -- (3,3);
    \draw[thick,red] (0,0) -- (-3,3);
    \draw[thick,red] (-2,4) -- (1,1);
    \node[above left,rotate=-45] at (-1.5,3.5) {$\gamma_{r_{j_+},-}$};
    
    \draw[purple,-triangle 45] (0.75,2.25) -- (-0.75,0.75);
    
    \draw[green, thick] (-1,1) -- (0,0) -- (1,1);
    
    \draw[purple,-triangle 45] (0.75,2.25) -- (1.5,1.5);
    
    \draw plot [mark=*, mark size=2] coordinates{(0,0)};
    
    \draw plot [mark=*, mark size=2] coordinates{(-1,1)};
    \node[below left] at (-1,1) {$x_j$};
    
    \draw plot [mark=*, mark size=2] coordinates{(1,1)};
    \node[below right] at (1,1) {$y_j$};
    
    \node[above right] at (0.75,2.25) {$p$};
    \draw plot [mark=*, mark size=2] coordinates{(0.75,2.25)};
    
    \node[below left] at (-0.75,0.75) {$q$};
    \draw plot [mark=*, mark size=2] coordinates{(-0.75,0.75)};
    
    \end{tikzpicture}
    \caption{Illustration for lemma \ref{lemma:Visdomainofsigma}. The positively oriented past moving light ray from $p$ meets $\sigma_j$ at $q$. If it crosses $\gamma_{r_{j_+},-}$, then the negatively oriented past moving light ray from $p$ meets $\gamma_{c_{j},+}$ in the future of $y_j$, contradicting $p\in D(\sigma_j)$. }
    \label{fig:diamondproof}
\end{figure}

\begin{lemma} \label{lem:adjacent-cutoffs-agree}
    The negatively oriented segment of the future horizon of $\mathcal{V}_j$ originates at the same output point as the positively oriented segment of the future horizon of $\mathcal{V}_{j+1}.$
    
    I.e., $j_+ = (j+1)_-.$
\end{lemma}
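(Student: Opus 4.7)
The plan is to prove the equality $j_+ = (j+1)_-$ by contradiction. Suppose $k := j_+$ and $\ell := (j+1)_-$ are distinct, and I will produce a point $Q$ lying in the set $\hat{J}(c_j, c_{j+1} \to r_1, \dots, r_n)$, which is empty by the assumptions of theorem~\ref{thm:n-to-n-first-statement}.

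By lemma~\ref{lemma:Visdomainofsigma}, $\mathcal{V}_j$ is a causal diamond. Let $q_j$ denote its future-most vertex, realized as the intersection of the future extensions of the two null segments of $\sigma_j$. Then $q_j$ lies on both $\gamma_{r_{j_-},+}$ and $\gamma_{r_{j_+},-} = \gamma_{r_k,-}$, and since $\mathcal{V}_j$ is closed, $q_j \in \mathcal{V}_j \subseteq \hat{J}^-(r_m)$ for every $m \in \{1,\dots, n\}$. Define $q_{j+1}$ analogously; it lies on $\gamma_{r_\ell,+}$ and on $\gamma_{r_{(j+1)_+},-}$, and again $q_{j+1} \in \hat{J}^-(r_m)$ for every $m$.

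The key construction: consider the two future-directed null segments $\alpha = [y_j, q_j] \subset \gamma_{r_k,-}$ and $\beta = [x_{j+1}, q_{j+1}] \subset \gamma_{r_\ell,+}$. I would argue that under the assumption $k \neq \ell$, the oppositely oriented segments $\alpha$ and $\beta$ cross at a point $Q$ on the cylinder. Granted this, the contradiction follows from:
\begin{enumerate}
\item $Q \in \hat{J}^+(c_j) \cap \hat{J}^+(c_{j+1})$, because $Q \in \alpha \subseteq \hat{J}^+(y_j) \subseteq \hat{J}^+(c_j)$, and symmetrically via $\beta$;
\item $Q \in \hat{J}^-(r_m)$ for every $m$, because $Q$ lies in the causal past of $q_j$ along $\gamma_{r_k,-}$, and $q_j \in \hat{J}^-(r_m)$ for every $m$, so the concatenation yields a causal curve $Q \to q_j \to r_m$.
\end{enumerate}
Together these give $Q \in \hat{J}(c_j, c_{j+1} \to r_1, \dots, r_n) = \varnothing$, a contradiction; hence $k = \ell$.

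The main obstacle is establishing that the specified finite segments $\alpha$ and $\beta$ actually intersect on the cylinder when $k \neq \ell$. On the timelike cylinder, oppositely oriented null rays cross at infinitely many points (via wrapping), but only certain crossings fall inside the particular segments $\alpha$ and $\beta$. I would verify this by working in the light-cone coordinates $u = t + \theta$, $v = t - \theta$ introduced in section~\ref{sec:timelike-cylinders}, using the following facts: both $y_j$ and $x_{j+1}$ lie strictly in the past of the meeting point $p = \gamma_{c_j,+} \cap \gamma_{c_{j+1},-}$ on their respective rays (otherwise $p$ itself would lie in the forbidden intersection); $y_j$ and $x_{j+1}$ sit in the angular arc between $c_j$ and $c_{j+1}$; and the minimality conditions defining $k$ and $\ell$ force the future extensions $\alpha$ and $\beta$ to ``fold inward'' toward different output points, so that a crossing in the common interior is geometrically unavoidable. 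The degenerate case $k = \ell$ corresponds exactly to the situation where $\alpha$ and $\beta$ meet only at the common future endpoint $r_k$, which is the intended conclusion of the lemma.
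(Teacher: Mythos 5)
Your proposed construction fails at the key step: the segments $\alpha = [y_j, q_j]$ and $\beta = [x_{j+1}, q_{j+1}]$ cannot intersect, regardless of whether $j_+ = (j+1)_-$. The segment $\alpha$ is a piece of the future horizon of $\mathcal{V}_j$, so $\alpha \subset \mathcal{V}_j$; likewise $\beta \subset \mathcal{V}_{j+1}$. But $\mathcal{V}_j$ and $\mathcal{V}_{j+1}$ are disjoint --- any point of $\mathcal{V}_j \cap \mathcal{V}_{j+1}$ would itself lie in $\hat{J}(c_j, c_{j+1} \to r_1, \dots, r_n) = \varnothing$. So any crossing point $Q$ would immediately contradict the hypotheses of the theorem, independently of the values of $j_+$ and $(j+1)_-$: what you have really observed is that the $\mathcal{V}_i$ are pairwise disjoint, which carries no information about the output-point labels. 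Geometrically, $\alpha$ leaves $y_j$ heading toward \emph{smaller} angle while $\beta$ leaves $x_{j+1}$ (which sits at \emph{larger} angle than $y_j$) heading toward larger angle, so the two future horizon segments diverge rather than converge. Your closing remark that in the case $k=\ell$ the segments ``meet only at the common future endpoint $r_k$'' is also off: $\alpha$ terminates at the diamond tip $q_j$, not at $r_k$, and $q_j \neq q_{j+1}$ for the same disjointness reason.

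The paper's argument sidesteps this by working on a Cauchy slice $\Sigma$ containing all the $x_i, y_i$. The set $\hat{J}^-(r_{j_+}) \cap \Sigma$ is an arc whose two endpoints are where the two past-directed null rays from $r_{j_+}$ meet $\Sigma$; one of these is $y_j$ (by definition of $j_+$), and by the counting argument following lemma \ref{lem:Vjs-spacelike-separated} the other must be some $x_k$. Since every $\mathcal{V}_i \cap \Sigma$ must lie inside this arc, and $x_{j+1}$ is the very next $x$- or $y$-point encountered going in increasing angle from $y_j$, the only consistent choice is $x_k = x_{j+1}$, i.e.\ $(j+1)_- = j_+$. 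If you want to salvage a proof by contradiction, the contradiction should come from observing that if $(j+1)_- \neq j_+$, then $x_{j+1}$ would lie in the complementary arc $\Sigma \setminus \hat{J}^-(r_{j_+})$, contradicting $\mathcal{V}_{j+1} \subseteq \hat{J}^-(r_{j_+})$; no auxiliary intersection point $Q$ is needed.
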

\begin{proof}
    Fix $\mathcal{V}_j.$
    Let $\Sigma$ be a Cauchy surface for the cylinder containing all the points $x_1, \dots x_n$ and $y_1, \dots, y_n.$
    Let $p$ be the point where $\gamma_{r_{j_+}, +}$ intersects $\Sigma.$
    The point where $\gamma_{r_{j_+}, -}$ intersects $\Sigma$ is $y_{j}.$
    The set of all points on $\Sigma$ in the past of $r_{j_+}$ is exactly the set of all points between $p$ and $y_j$ in the direction of increasing angular coordinate.
    See figure \ref{fig:pandSigmaproof}.
    
    Since every $\mathcal{V}_k$ must be in the past of $r_{j_+},$ the point $p$ must lie in between $y_j$ and $x_{j+1}$ in the direction of increasing angular coordinate.
    But by the counting argument following lemma \ref{lem:Vjs-spacelike-separated}, we must have $p = x_k$ for some $k.$
    The only way for these two statements to be consistent is to have $p = x_{j+1},$ which proves the lemma.
\end{proof}
This lemma gives us a natural convention for labeling the output points; we will choose to label them so that $j_+ = j,$ i.e., so that the right-endpoint of $\mathcal{V}_j$ and the left-endpoint of $\mathcal{V}_{j+1}$ are contained in light rays intersecting $r_j.$

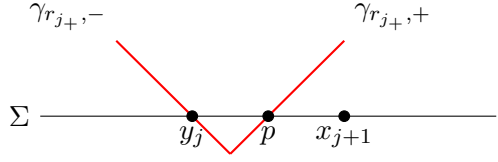
\begin{figure}
    \centering
    \begin{tikzpicture}
    
    \draw[black] (-2,0) -- (4,0);
    \node[left] at (-2,0) {$\Sigma$};
    
    \draw[thick,red] (-1,1) -- (0.5,-0.5);
    \draw[thick,red] (0.5,-0.5) -- (2,1);
    
    \node[above right] at (2,1) {$\gamma_{r_{j_+},+}$};
    \node[above left] at (-1,1) {$\gamma_{r_{j_+},-}$};
    
    \draw plot [mark=*, mark size=2] coordinates{(0,0)};
    \node[below] at (0,0) {$y_j$};
    
    \draw plot [mark=*, mark size=2] coordinates{(2,0)};
    \node[below] at (2,0) {$x_{j+1}$};
    
    \draw plot [mark=*, mark size=2] coordinates{(1,0)};
    \node[below] at (1,0) {$p$};
    
    \end{tikzpicture}
    \caption{Illustration of the proof of lemma \ref{lem:adjacent-cutoffs-agree}. The ray $\gamma_{r_{j_+},-}$ meets the Cauchy surface $\Sigma$ at $y_j$, and we label the point where $\gamma_{r_{j_+},+}$ meets sigma as $p$. Because all $\mathcal{V}_k$ must lie to the past of all $r_k$, each of the $x_i, y_i$ must lie to the past of the light rays $\gamma_{r_{j_+},+}$ and $\gamma_{r_{j_+},-}$, so $p$ must not be to the right of $x_{j+1}$. Because $p$ must be a $y_i$ or a $x_i$ by the counting argument, we must have $p=x_{j+1}$.}
    \label{fig:pandSigmaproof}
\end{figure}

Another fact we will need is that while there are no $2$-to-all scattering regions on the boundary, there are $2$-to-$k$ scattering regions. The following lemmas describe the causal connections of this type.

\begin{lemma}\label{lemma:boundarycausalstructure}
    Let $c_j$ and $c_k$ be two input points. The scattering regions
    \begin{equation}
        \hat{J}(c_j, c_{k} \rightarrow r_1, \dots, r_{j-1}, r_{k}, \dots, r_n)
    \end{equation}
    and
    \begin{equation}
        \hat{J}(c_{j}, c_k \rightarrow r_j, \dots, r_{k-1})
    \end{equation}
    are nonempty.

    Furthermore, they are maximally disjoint, in that any point in $\hat{J}^+(c_j) \cup \hat{J}^+(c_k)$ that can signal a point in $\{r_1, \dots, r_{j-1}, r_k, \dots, r_n\}$ cannot signal any point in $\{r_j, \dots, r_{k-1}\},$ and vice versa.
\end{lemma}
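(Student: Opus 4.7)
The plan is to produce explicit witness points in each scattering region using the light-ray structure already established, and then to derive the disjointness claim from the no-2-to-$n$ scattering hypothesis. Without loss of generality take $j < k$, so that by the angular ordering convention the ``near-side'' outputs $r_j, \ldots, r_{k-1}$ sit between $c_j$ and $c_k$ in the positively-oriented angular direction, and the ``far-side'' outputs $r_1, \ldots, r_{j-1}, r_k, \ldots, r_n$ sit on the complementary arc of the cylinder.

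For the near-side scattering region, let $p_+$ denote the intersection of $\gamma_{c_j, +}$ with $\gamma_{c_k, -}$, which is the earliest point of $\hat{J}^+(c_j) \cap \hat{J}^+(c_k)$ reached from the positively-oriented side. A preliminary observation is that $p_+$ must lie past $y_j$ along $\gamma_{c_j, +}$: otherwise $p_+ \in \mathcal{V}_j$, and since $p_+ \in \hat{J}^+(c_k)$ by construction, we would find $p_+ \in \hat{J}(c_j, c_k \to r_1, \ldots, r_n) = \varnothing$, contradiction. Symmetrically $p_+$ lies past $x_k$ along $\gamma_{c_k, -}$. Together with the rays $\gamma_{r_j, -}$ and $\gamma_{r_{k-1}, +}$, which by Lemma \ref{lem:adjacent-cutoffs-agree} respectively contain $y_j$ and $x_k$, these four null geodesics enclose a diamond-shaped region on the cylinder with $p_+$ on its lower boundary. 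Points slightly above $p_+$ in this diamond lie, by construction, in $\hat{J}^+(c_j) \cap \hat{J}^+(c_k) \cap \hat{J}^-(r_j) \cap \hat{J}^-(r_{k-1})$. An angular interpolation argument, exploiting that each intermediate $r_\ell$ for $j < \ell < k-1$ is at an angular position between those of $r_j$ and $r_{k-1}$ and that $r_\ell \in \hat{J}^+(c_j) \cap \hat{J}^+(c_k)$ follows from the nonemptiness of $\mathcal{V}_j$ and $\mathcal{V}_k$, then shows that the diamond is also contained in $\hat{J}^-(r_\ell)$ for each intermediate $\ell$.

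The far-side scattering is handled by the symmetric construction, replacing $\gamma_{c_j, +}, \gamma_{c_k, -}$ with $\gamma_{c_j, -}, \gamma_{c_k, +}$ (meeting at the corresponding corner $p_-$ on the negative side), and $\gamma_{r_j, -}, \gamma_{r_{k-1}, +}$ with $\gamma_{r_{j-1}, -}, \gamma_{r_k, +}$; the same contradiction argument using the no-2-to-$n$ hypothesis forces $p_-$ past $x_j$ and $y_k$ on the respective input rays.

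For the disjointness claim, suppose for contradiction that a point $p \in \hat{J}^+(c_j) \cap \hat{J}^+(c_k)$ lay in the past of both some near-side $r_\ell$ and some far-side $r_m$. Because $r_\ell$ and $r_m$ lie on opposite angular arcs bounded by $c_j$ and $c_k$, their past light cones cover complementary angular sectors of the cylinder at time $t_p$. Using the cyclic angular structure together with the fact that each intermediate output $r_i$ lies at an angular position interpolating between $r_\ell$ and $r_m$ along one of the two arcs, the simultaneous constraints $p \in \hat{J}^-(r_\ell) \cap \hat{J}^-(r_m)$ force $p \in \hat{J}^-(r_i)$ for every $i$; together with $p \in \hat{J}^+(c_j) \cap \hat{J}^+(c_k)$ this places $p$ in $\hat{J}(c_j, c_k \to r_1, \ldots, r_n) = \varnothing$, contradiction. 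The hardest step is the angular interpolation argument verifying that the diamond is contained in the past of every intermediate near-side $r_\ell$, not just the extremes $r_j$ and $r_{k-1}$; I expect this requires combining the no-2-to-$n$ assumption with careful bookkeeping of the cyclic ordering of the past-light-ray crossings supplied by Lemmas \ref{lem:Vjs-spacelike-separated} and \ref{lem:adjacent-cutoffs-agree}.
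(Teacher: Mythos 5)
Your proposal contains a genuine geometric error in identifying which outputs are ``near'' and which are ``far,'' and this inverts the whole construction. You assert that $r_j, \dots, r_{k-1}$ sit angularly between $c_j$ and $c_k$, but the labeling convention fixed after Lemma~\ref{lem:adjacent-cutoffs-agree} says precisely the opposite: $r_\ell$ is the output whose past light rays cut off $y_\ell$ (the right endpoint of $\mathcal{V}_\ell$) and $x_{\ell+1}$ (the left endpoint of $\mathcal{V}_{\ell+1}$), so $\hat{J}^-(r_\ell) \cap \Sigma$ is the arc \emph{excluding} the intermediary diamond $\mathcal{X}_\ell$ between $\mathcal{V}_\ell$ and $\mathcal{V}_{\ell+1}$. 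Consequently the outputs $r_j, \dots, r_{k-1}$ sit on the arc \emph{opposite} to the one between $c_j$ and $c_k$, and the crossing point $p_+ = \chi$ of $\gamma_{c_j,+}$ and $\gamma_{c_k,-}$ actually witnesses the scattering region $\hat{J}(c_j, c_k \to r_1, \dots, r_{j-1}, r_k, \dots, r_n)$, not the one you assign to it. You can see the contradiction concretely in your own setup: $p_+$ lies \emph{past} $y_j$ along $\gamma_{c_j,+}$, but $y_j$ is by definition the point where $\gamma_{c_j,+}$ \emph{exits} $\hat{J}^-(r_j)$, so $p_+ \notin \hat{J}^-(r_j)$ and your claim that points just above $p_+$ lie in $\hat{J}^-(r_j) \cap \hat{J}^-(r_{k-1})$ fails. (Check against figure~\ref{fig:boundarycausalstructure}: with $j=1, k=2$ there, $\chi = (4,2)$ is in the past of $r_2, r_3$ but not of $r_1$.)

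Beyond the sign error, your central steps are also left incomplete in a way that matters. The ``angular interpolation argument'' showing the diamond lies in the past of every intermediate $r_\ell$ is flagged as unresolved, and the disjointness step relies on the unjustified claim that being in the past of one output from each side forces membership in the past of all outputs (on a cylinder, a late-time point's future wraps around and this implication can fail without further input). The paper's proof avoids both difficulties by a complementarity argument: it first shows every output is \emph{timelike} (not merely causally) related to $\chi$ or $\chi'$ (using that the past null cones of $\chi, \chi'$ overlap the input-region horizons while no $r_\ell$ has that property), then uses Lemma~\ref{lem:causal-timelike-addition} plus the null separation $r_\ell \sim y_\ell, x_{\ell+1}$ to show that the far-side outputs can only lie in $I^+(\chi)$ and the near-side outputs only in $I^+(\chi')$, and finally notes these candidate sets are complementary and must jointly exhaust all outputs. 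That classification gives nonemptiness and disjointness simultaneously, with no need for the interpolation bookkeeping you deferred.
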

\begin{proof}
    See figure \ref{fig:boundarycausalstructure} for an illustration of this proof.
    
    The light rays $\gamma_{c_j, +}$ and $\gamma_{c_{k}, -}$ meet for the first time at a point we call $\chi.$
    The light rays $\gamma_{c_j, -}$ and $\gamma_{c_{k}, +}$ meet for the first time at a point we call $\chi'.$
    For a point to be in the future of both $c_j$ and $c_k,$ it must be in the future of either $\chi$ or $\chi'$, i.e., we have
    \begin{equation}
        \hat{J}^+(\chi) \cup \hat{J}^+(\chi')
            = \hat{J}^+(c_j) \cap \hat{J}^+(c_{k}).
    \end{equation}
    Since all of the output points are in both $\hat{J}^+(c_j)$ and $\hat{J}^+(c_{k}),$ this implies
    \begin{equation} \label{eq:in-chi-futures}
        \{r_1, \dots, r_n\} \subseteq \hat{J}^+(\chi) \cup \hat{J}^+(\chi').
    \end{equation}
    Which output points are in which set?
    To solve this problem, we note first that no output point can be null separated from either $\chi$ or $\chi'.$
    This is because all past-directed null geodesics leaving $\chi$ and $\chi'$ eventually overlap with portions of the past horizon of either $\mathcal{V}_j$ or $\mathcal{V}_{k}$, but the previous lemmas on the geometry of the input regions tell us that no null ray leaving an output point has this property.
    
    Fix an output point $r_k.$ Since it cannot be null separated from $\chi$ or $\chi',$ it must by equation \eqref{eq:in-chi-futures} be in the timelike future of at least one of them.
    But if $r_k$ is in the timelike future of $\chi,$ then by lemma \ref{lem:causal-timelike-addition} it is in the timelike future of all input region endpoints from $y_j$ to $x_{k}.$
    If it is in the timelike future of $\chi',$ then by the same logic it is in the timelike future of all input region endpoints from $y_{k}$ to $x_{j}.$
    But we know from lemma \ref{lem:adjacent-cutoffs-agree} that $r_k$ is null separated, \textit{not} timelike separated, from $y_k$ and $x_{k+1}.$
    This tells us that the only output points which \textit{can} be in the future of $\chi$ are $\{r_1, \dots, r_{j-1}, r_{k}, \dots, r_n\},$ and the only output points which \textit{can} be in the future of $\chi'$ are $\{r_{j}, \dots, r_{k-1}\}.$
    But since these sets are complementary, and since $\hat{J}^+(\chi)$ and $\hat{J}^+(\chi')$ collectively must contain all output points, we can conclude
    \begin{equation}
        \chi \in \hat{J}(c_j, c_k \rightarrow r_1, \dots, r_{j-1}, r_{k}, \dots, r_n)
    \end{equation}
    and
    \begin{equation}
        \chi' \in \hat{J}(c_j, c_k \rightarrow r_j, \dots, r_{k-1}).
    \end{equation}

    This proves the claim that these two scattering are nonempty.
    The second claim follows from the above discussion; any point in $\hat{J}^+(c_j) \cap \hat{J}^+(c_k)$ that can signal a point in $\{r_1, \dots, r_{j-1}, r_k, \dots, r_n\}$ must be in the future of $\chi$ and not $\chi',$ while any point in $\hat{J}^+(c_j) \cap \hat{J}^+(c_k)$ that can signal a point in $\{r_j, \dots, r_{k-1}\}$ must be in the future of $\chi'$ and not $\chi.$
\end{proof}

\begin{figure}
    \centering
    \begin{tikzpicture}[scale=0.75]
    
    \draw[red] (0,6) -- (6,0) -- (12,6);
    
    \draw[red] (2,0) -- (8,6);
    \draw[red] (2,0) -- (0,2);
    \draw[red] (12,2) -- (8,6);
    
    \draw[thick, black] (0,0) -- (12,0) -- (12,6) -- (0,6) -- (0,0);
    
    \draw plot [mark=*, mark size=2] coordinates{(10,4)};
    \node[right] at (4.1,2) {$\chi$};
    
    \draw plot [mark=*, mark size=2] coordinates{(4,2)};
    \node[right] at (10.1,4) {$\chi'$};
    
    \draw plot [mark=*, mark size=2] coordinates{(6,0)};
    \node[below] at (6,0) {$c_2$};
    
    \draw plot [mark=*, mark size=2] coordinates{(2,0)};
    \node[below] at (2,0) {$c_{1}$};
    
    \draw plot [mark=*, mark size=2] coordinates{(10,0)};
    \node[below] at (10,0) {$c_{3}$};
    
    \draw plot [mark=*, mark size=2] coordinates{(6,6)};
    \node[above] at (6,6) {$r_3$};
    
    \draw plot [mark=*, mark size=2] coordinates{(2,6)};
    \node[above] at (10,6) {$r_{1}$};
    
    \draw plot [mark=*, mark size=2] coordinates{(10,6)};
    \node[above] at (2,6) {$r_{2}$};
    
    \node[above right,rotate=45] at (7.7,1.5) {$\gamma_{c_{2},+}$};
    \node[above left,rotate=-45] at (5.7,0.3) {$\gamma_{c_{2},-}$};
    
    \node[above right,rotate=45] at (2.5,0.4) {$\gamma_{c_{1},+}$};
    \node[above left,rotate=-45] at (1.5,0.4) {$\gamma_{c_{1},-}$};
    
    \node[above left,rotate=-45] at (11.75,2.25) {$\gamma_{c_{1},-}$};
    
    \end{tikzpicture}
    \caption{Illustration of lemma \ref{lemma:boundarycausalstructure}, shown for the case $n=3$. }
    \label{fig:boundarycausalstructure}
\end{figure}
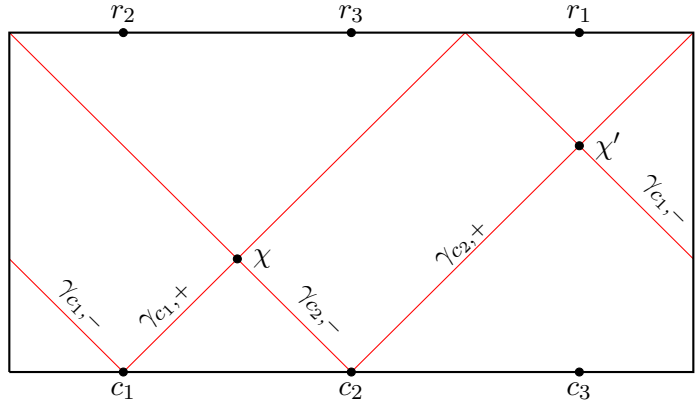

We summarize the facts about boundary causal structure understood in this section in figure \ref{fig:summaryofboundarycausalstructure}. 

\begin{figure}
    \centering
    \begin{tikzpicture}[scale=0.75]
    
    \draw[fill=blue,opacity=0.3] (2,0) -- (3,1) -- (2,2) -- (1,1);
    \draw[fill=blue,opacity=0.3] (6,0) -- (7,1) -- (6,2) -- (5,1);
    \draw[fill=blue,opacity=0.3] (10,0) -- (11,1) -- (10,2) -- (9,1);
    
    \draw[fill=green,opacity=0.3] (4,2) -- (5,3) -- (4,4) -- (3,3) -- (4,2);
    \draw[fill=green,opacity=0.3] (10,4) -- (11,5) -- (10,6) -- (9,5) -- (10,4);
    
    \draw[red] (0,6) -- (6,0) -- (12,6);
    
    \draw[red] (2,0) -- (8,6);
    \draw[red] (2,0) -- (0,2);
    \draw[red] (12,2) -- (8,6);
    \draw[red] (10,0) -- (12,2);
    \draw[red] (10,0) -- (4,6);
    \draw[red] (0,2) -- (4,6);
    
    \draw[red] (2,6) -- (0,4);
    \draw[red] (6,6) -- (0,0);
    \draw[red] (6,6) -- (12,0);
    \draw[red] (2,6) -- (8,0);
    \draw[red] (10,6) -- (4,0);
    \draw[red] (10,6) -- (12,4);
    \draw[red] (0,4) -- (4,0);
    \draw[red] (12,4) -- (8,0);
    
    \draw[thick, black] (0,0) -- (12,0) -- (12,6) -- (0,6) -- (0,0);
    
    \draw plot [mark=*, mark size=2] coordinates{(6,0)};
    \node[below] at (6,0) {$c_2$};
    
    \draw plot [mark=*, mark size=2] coordinates{(2,0)};
    \node[below] at (2,0) {$c_{1}$};
    
    \draw plot [mark=*, mark size=2] coordinates{(10,0)};
    \node[below] at (10,0) {$c_{3}$};
    
    \draw plot [mark=*, mark size=2] coordinates{(6,6)};
    \node[above] at (6,6) {$r_3$};
    
    \draw plot [mark=*, mark size=2] coordinates{(2,6)};
    \node[above] at (10,6) {$r_{1}$};
    
    \draw plot [mark=*, mark size=2] coordinates{(10,6)};
    \node[above] at (2,6) {$r_{2}$};
    
    \end{tikzpicture}
    \caption{Summary of the boundary causal structure, as revealed by the lemmas in this section. Each input region (blue) is the domain of dependence of a single interval (Lemma \ref{lemma:Visdomainofsigma}). The negatively oriented future boundary of $\mathcal{V}_1$ and positively oriented future boundary of $\mathcal{V}_2$ are defined by $r_1$, and similarly for $\mathcal{V}_2$, $\mathcal{V}_3$ and $\mathcal{V}_3,\mathcal{V}_1$ (Lemma \ref{lem:adjacent-cutoffs-agree}). The joint future of $c_1$ and $c_2$ contains a region in the past of $r_2$ and $r_3$ (green), and a disjoint region in the past of $r_1$ (also green) (Lemma \ref{lemma:boundarycausalstructure}).}
    \label{fig:summaryofboundarycausalstructure}
\end{figure}
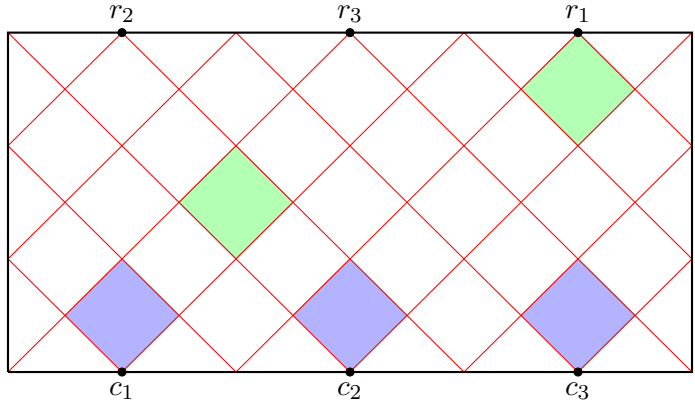

\section{Proving the \texorpdfstring{$n$}{TEXT}-to-\texorpdfstring{$n$}{TEXT} connected wedge theorem}
\label{sec:GR-proof}

We begin by recalling the statement of the $n$-to-$n$ connected wedge theorem given in section \ref{sec:overview}.

\noindent \textbf{Theorem \ref{thm:n-to-n-first-statement}:} \emph{Let $c_1, \dots, c_n, r_1, \dots, r_n$ be points on a global AdS$_{2+1}$ boundary such that the sets $\mathcal{V}_j$ and $\mathcal{W}_j$ are nonempty, while also satisfying
    \begin{equation}
        \hat{J}(c_j, c_k \rightarrow r_1, \dots, r_n) = \hat{J}(c_1, \dots, c_n \rightarrow r_j, r_k) = \varnothing.
    \end{equation}
Let $\mathcal{M}$ be an asymptotically AdS$_{2+1},$ AdS-hyperbolic spacetime satisfying the null curvature condition, with at least one global AdS$_{2+1}$ boundary on which $c_1, \dots, c_n, r_1, \dots, r_n$ are specified.}

\emph{If the $2$-to-all causal graph $\Gamma_{2\rightarrow\text{all}}$ is connected, and if the HRRT surface of $\mathcal{V}_1 \cup \dots \cup \mathcal{V}_n$ can be found using the maximin formula \cite{maximin, maximin2}, then the entanglement wedge of $\mathcal{V}_1 \cup \dots \cup \mathcal{V}_n$ is connected.}

The $2$-to-all causal graph $\Gamma_{2 \rightarrow \text{all}}$ used in the statement of the theorem is defined as the graph whose vertices are $\{1, \dots, n\},$ where there is an edge $j \sim k$ if the bulk region $J(E_{\mathcal{V}_j}, E_{\mathcal{V}_k} \rightarrow E_{\mathcal{W}_1}, \dots, E_{\mathcal{W}_n})$ has nonempty interior.
The proof has two essential tools, summarized below.

\begin{enumerate}
    \item \textit{The maximin formula}: arguments given in \cite{maximin} suggest that in many asymptotically AdS spacetimes, the HRRT surface of a boundary region $\mathcal{R}$ can be found by finding the minimal surface on each ``complete achronal slice,'' then maximizing over all such slices.
    The key consequence of this prescription is that for any boundary region $\mathcal{R}$, there exists a bulk slice $\Sigma$, which is a Cauchy slice for the conformally completed spacetime in the sense of appendix \ref{app:glossary}, on which the HRRT surface $\gamma_\mathcal{R}$ is globally minimal within its homology class.
    The main obstruction to making maximin rigorous is that its proof requires manipulating infinite area surfaces as though their areas were finite.
    Manipulations of this kind are also used in the proof of theorem \ref{thm:n-to-n-first-statement}, and the assumption that these manipulations are legitimate is included in the assumption that ``the HRRT surface of $\mathcal{V}_1 \cup \dots \cup \mathcal{V}_n$ can be found using the maximin formula.''
    
    We will actually make use of the restricted maximin formula \cite{maximin2}, which says that asymptotics of the maximin slice can be specified arbitrarily away from the spatial boundary of $\mathcal{R}$.
    In other words, we can specify an arbitrary Cauchy slice $\sigma$ of the boundary spacetime containing a complete slice of $\mathcal{R}$, and there will exist a Cauchy slice for the conformally completed spacetime whose boundary is $\sigma$ and on which $\gamma_\mathcal{R}$ is globally minimal within its homology class.
    
    \item \textit{The focusing theorem}: the Raychaudhuri equation \cite{raychaudhuri} is a differential equation that controls how nearby geodesics expand, shear, and twist relative to one another.
    An important special case of the Raychaudhuri equation is its restriction to a codimension-$1$ surface generated by null geodesics, with affine parameter $\lambda.$
    In three spacetime dimensions, this restriction is
    \begin{equation}
        \frac{d \theta}{d \lambda}
            = - \theta^2 - \sigma_{ab} \sigma^{ab} - R_{ab} k^a k^b,
    \end{equation}
    where $k^a$ is the affinely parametrized generator of the surface, $R_{ab}$ is the Ricci tensor, $\sigma_{ab}$ is the purely spatial ``shear tensor,'' and $\theta$ is the expansion of the surface.
    The expansion measures the local change in area of cross-sections of the surface in the sense that the difference in area of two cross-sections is given by the integral of $\theta$ over the portion of the surface contained between them --- see e.g. section 3.1 of \cite{may2020holographic} for a review.
    Given the null curvature condition $R_{ab} k^a k^b \geq 0,$ Raychaudhuri's equation implies
    \begin{equation}
        \frac{d \theta}{d \lambda} \leq 0,
    \end{equation}
    which is the focusing theorem.
    
    The essential consequence of the focusing theorem for the present work is that if a geodesically parametrized null surface has a cross-section with nonpositive expansion --- for example, an extremal surface --- then all ``later'' cross-sections have area no larger than the original one, since the expansion is nonpositive in between those slices.
\end{enumerate}

Using these tools, and the following assumption, we will now prove theorem \ref{thm:n-to-n-first-statement}.

\begin{assumption}
    Much of the following proof is about the geometry of two-dimensional null hypersurfaces.
    We will assume, throughout, that the intersection of two such hypersurfaces is a continuous curve, possibly with multiple connected components.
    This is not always true --- null hypersurfaces can intersect at isolated points --- but such situations do not often arise in practice, and treating them carefully adds nothing to the conceptual insight provided by theorem \ref{thm:n-to-n-first-statement}.
    Addressing these issues would amount to checking a large number of edge cases, which we have chosen not to do below.
\end{assumption}

We view this assumption as much less dramatic than the assumptions that go into the maximin formula, so we do not believe it constitutes a loss of rigor.

\subsection{Proof}

\subsubsection*{Setup}

As explained in appendix \ref{app:glossary}, the fact that the spacetime $\mathcal{M}$ is AdS-hyperbolic means that there is a conformal completion $\tilde{\mathcal{M}}$ that is globally hyperbolic with compact Cauchy surfaces.
By the assumptions of theorem \ref{thm:n-to-n-first-statement}, at least one of the boundaries of $\tilde{\mathcal{M}}$ is conformally equivalent to a flat timelike cylinder; this is the boundary on which the points $c_1, \dots, c_n, r_1, \dots, r_n$ are specified.
Because causal structure is a conformal invariant, all of the lemmas about timelike cylinders proved in section \ref{sec:timelike-cylinders} apply to this boundary.

In particular, the regions $\mathcal{V}_1, \dots, \mathcal{V}_n$ are causal diamonds, each being the domain of dependence of a single achronal interval.
The causal complement $(\mathcal{V}_1 \cup \dots \cup \mathcal{V}_n)',$ which is closure of the set of all boundary points spacelike separated from $\mathcal{V}_1 \cup \dots \cup \mathcal{V}_n,$ is also made up of $n$ causal diamonds $\mathcal{X}_1, \dots, \mathcal{X}_n$, each of which is the domain of dependence of an interval that connects two neighboring regions $\mathcal{V}_j$ and $\mathcal{V}_{j+1}$.
See figure \ref{fig:X-regions}.
Each of these $\mathcal{X}_j$ diamonds has a \textit{future horizon}, which is the set of points $p \in \mathcal{X}_j$ for which $I^+(p) \cap \mathcal{X}_j$ is empty; the domain of dependence of a future horizon is the full diamond $\mathcal{X}_j.$ Again see figure \ref{fig:X-regions}.

\begin{figure}
\centering
\begin{tikzpicture}

\draw[thick] (0,0) -- (12,0) -- (12,6) -- (0,6) -- (0,0);

\draw (0,3) -- (1.33,4.33) -- (2.66,3) -- (1.33,1.66) -- (0,3);
\draw (2.66,3) -- (3.33,3.66) -- (4,3) -- (3.33,2.33) -- (2.66,3);
\node at (1.33,3) {$\mathcal{V}_1$};
\node at (3.33,3) {$\mathcal{X}_1$};

\draw[thick,green] (2.66,3.05) -- (3.33,3.66+0.05) -- (4,3.05);
\node[above] at (3.33,3.66) {$H^+(\mathcal{X}_1)$};

\draw (4,3) -- (5.33,4.33) -- (6.66,3) -- (5.33,1.66) -- (4,3);
\draw (6.66,3) -- (7.33,3.66) -- (8,3) -- (7.33,2.33) -- (6.66,3);
\node at (5.33,3) {$\mathcal{V}_2$};
\node at (7.33,3) {$\mathcal{X}_2$};

\draw (8,3) -- (9.33,4.33) -- (10.66,3) -- (9.33,1.66) -- (8,3);
\draw (10.66,3) -- (11.33,3.66) -- (12,3) -- (11.33,2.33) -- (10.66,3);
\node at (9.33,3) {$\mathcal{V}_3$};
\node at (11.33,3) {$\mathcal{X}_3$};

\end{tikzpicture}
\caption{On the boundary cylinder, the causal complement of $\mathcal{V}_1 \cup \mathcal{V}_2 \cup \mathcal{V}_3$ is made up of three causal diamonds $\mathcal{X}_1, \mathcal{X}_2,$ and $\mathcal{X}_3.$ The future horizon of $\mathcal{X}_1$ is marked in green and denoted $H^+(\mathcal{X}_1).$}
\label{fig:X-regions}
\end{figure}
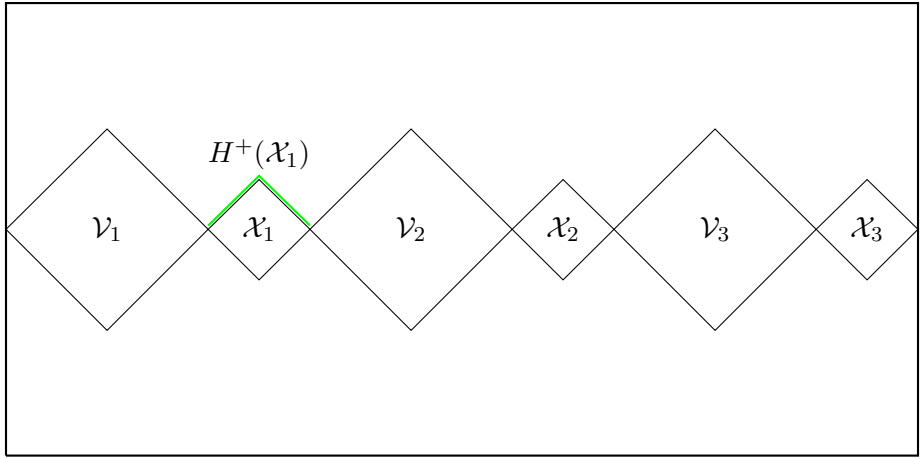

If we pick arbitrary slices of the regions $\mathcal{V}_1, \dots, \mathcal{V}_n,$ and glue them to the future horizons of each $X_j,$ we obtain a Cauchy slice for the boundary spacetime.
This is a legal setting for the restricted maximin formula \cite{maximin2}, so we are free to choose a bulk maximin slice for $\mathcal{V}_1 \cup \dots \cup \mathcal{V}_n$ that intersects each $X_j$ on its future horizon.

Let $\Sigma$ be a such a maximin slice, considered as a Cauchy surface for the conformally completed spacetime $\tilde{M}$.
Denote by ${E}_{\mathcal{V}}$ the entanglement wedge of $\mathcal{V}\equiv \mathcal{V}_1 \cup \dots \cup \mathcal{V}_n$; we remind the reader (cf. definition \ref{def:HRRT}) that $E_{\mathcal{V}}$ is the domain of dependence of the region of $\Sigma$ that is bounded by $\gamma_{\mathcal{V}}$ --- the HRRT surface of $\mathcal{V}=\mathcal{V}_1 \cup \dots \cup \mathcal{V}_n$ --- and $\Sigma \cap (\mathcal{V}_1 \cup \dots \cup \mathcal{V}_n)$.
As per the convention chosen in appendix \ref{app:glossary}, we will think of $E_{\mathcal{V}}$ as a closed set in the full, conformally completed spacetime $\tilde{\mathcal{M}},$ so that it includes points on the spacetime boundary.

We assume, toward contradiction, that $E_{\mathcal{V}}$ is disconnected.
If this is the case, then $E_{\mathcal{V}} \cap \Sigma$ must be disconnected.\footnote{If a surface $\sigma$ is connected, then its domain of dependence $D(\sigma)$ is connected, since every point in $D(\sigma)$ is connected to $\sigma$ by causal curves. Since $E_{\mathcal{V}}$ is the domain of dependence of $E_{\mathcal{V}} \cap \Sigma,$ we cannot have $E_{\mathcal{V}}$ disconnected but $E_{\mathcal{V}} \cap \Sigma$ connected.}
We give two examples of what this might look like for $n=3$ in figure \ref{fig:disconnected-EW}.
Under the assumption that $E_{\mathcal{V}} \cap \Sigma$ is disconnected, we will construct a non-$\gamma_{\mathcal{V}}$ surface on $\Sigma,$ which is homologous to $\gamma_{\mathcal{V}},$ and which has area strictly less than that of $\gamma_{\mathcal{V}}.$
This contradicts the maximin formula, and proves the theorem.

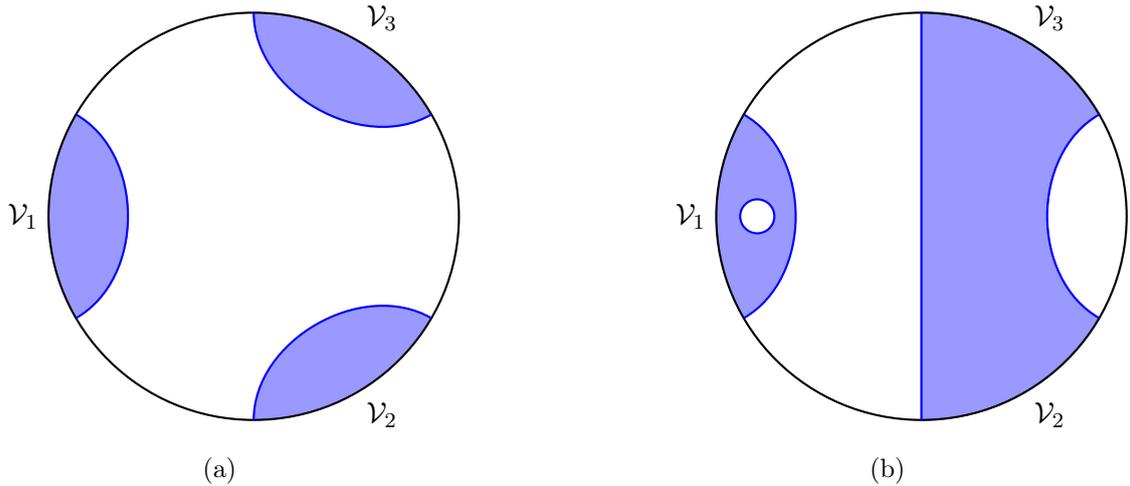
\begin{figure}
    \centering
    \subfloat[\label{fig:disconnected-EW-left}]{
    \begin{tikzpicture}[scale=0.9]
    
    \draw[domain=-120:-60,variable=\x,smooth, fill=blue,opacity=0.4] plot ({3*sin(\x)}, {3*cos(\x)}) -- (-3*0.866,3*0.5) to [out=-30,in=30] (-3*0.866,-3*0.5);
    
    \begin{scope}[rotate=120]
    \draw[domain=-120:-60,variable=\x,smooth, fill=blue,opacity=0.4] plot ({3*sin(\x)}, {3*cos(\x)}) -- (-3*0.866,3*0.5) to [out=-30,in=30] (-3*0.866,-3*0.5);
    \end{scope}
    
    \begin{scope}[rotate=240]
    \draw[domain=-120:-60,variable=\x,smooth, fill=blue,opacity=0.4] plot ({3*sin(\x)}, {3*cos(\x)}) -- (-3*0.866,3*0.5) to [out=-30,in=30] (-3*0.866,-3*0.5);
    \end{scope}
    
    \draw[thick] (0,0) circle (3);
    
    \draw[thick,blue] (-3*0.866,3*0.5) to [out=-30,in=30] (-3*0.866,-3*0.5);
    \draw[thick,blue] (0,3) to [out=-90,in=-150] (3*0.866,3*0.5);
    \draw[thick,blue] (0,-3) to [out=90,in=150] (3*0.866,-3*0.5);
    
    \node[left] at (-3,0) {$\mathcal{V}_1$};
    \node[above right] at (3*0.5,3*0.866) {$\mathcal{V}_3$};
    \node[below right] at (3*0.5,-3*0.866) {$\mathcal{V}_2$};
    
    \end{tikzpicture}
    }
    \hfill
    \subfloat[\label{fig:disconnected-EW-right}]{
    \begin{tikzpicture}[scale=0.9]
    
    \draw[domain=-120:-60,variable=\x,smooth, fill=blue,opacity=0.4] plot ({3*sin(\x)}, {3*cos(\x)}) -- (-3*0.866,3*0.5) to [out=-30,in=30] (-3*0.866,-3*0.5);
    \draw[thick,blue,fill=white] (-2.4,0) circle (0.25);
    
    \draw[domain=0:60,variable=\x,thick,blue,fill=blue,opacity=0.4] plot ({3*sin(\x+120)}, {3*cos(\x+120)}) -- (0,3) plot ({3*sin(\x)}, {3*cos(\x)}) -- (3*0.866,3*0.5) to [out=-150,in=150] (3*0.866,-3*0.5);
    
    \draw[thick] (0,0) circle (3);
    
    \draw[thick,blue] (-3*0.866,3*0.5) to [out=-30,in=30] (-3*0.866,-3*0.5);
    \draw[thick,blue] (0,3) -- (0,-3);
    \draw[thick,blue] (3*0.866,3*0.5) to [out=-150,in=150] (3*0.866,-3*0.5);
    
    \node[left] at (-3,0) {$\mathcal{V}_1$};
    \node[above right] at (3*0.5,3*0.866) {$\mathcal{V}_3$};
    \node[below right] at (3*0.5,-3*0.866) {$\mathcal{V}_2$};
    
    \end{tikzpicture}
    }
    \caption{Two examples of disconnected entanglement wedges $E_{\mathcal{V}}$ for the case $n=3$, restricted to the maximin slice $\Sigma.$ In (a) the entanglement wedge has three components. In (b) it has two components; there is one component connecting $\mathcal{V}_2$ to $\mathcal{V}_3,$ and the component containing $\mathcal{V}_1$ includes a black hole horizon.}
    \label{fig:disconnected-EW}
\end{figure}

The first step is to construct the \textit{focusing lightsheet}, given by\footnote{Note that throughout this paper $\bar{A}$ denotes the closure of a set $A$.}
\begin{equation}
    \mathscr{F} \equiv \bar{\del J^+(E_{\mathcal{V}} \cap \Sigma) - (E_{\mathcal{V}} \cap \Sigma)}.
\end{equation}
The symbol $\del J^+(E_{\mathcal{V}} \cap \Sigma)$ denotes the boundary of the future of $E_{\mathcal{V}} \cap \Sigma.$
As explained in lemma \ref{lem:lightsheet-generators}, every point in this set is either in $E_{\mathcal{V}} \cap \Sigma,$ or lies on a null geodesic with past endpoint on $\partial_0 (E_{\mathcal{V}} \cap \Sigma).$\footnote{This result uses the global hyperbolicity of $\tilde{\mathcal{M}}$ and the compactness of $\Sigma$, which implies via lemma \ref{lem:lightsheet-boundary} the identity $\del J^+(E_{\mathcal{V}} \cap \Sigma) = J^+(E_{\mathcal{V}} \cap \Sigma) - I^+(E_{\mathcal{V}} \cap \Sigma).$}
The set $\mathscr{F}$ is defined by throwing away all the points that actually lie in $E_{\mathcal{V}} \cap \Sigma,$ then adding back in the edge $\partial_0 (E_{\mathcal{V}} \cap \Sigma),$ which is just the HRRT surface $\gamma_{\mathcal{V}}.$
A sample focusing lightsheet is sketched in figure \ref{fig:sample-lightsheet}.
Crucially, every point on $\mathscr{F}$ lies on a null geodesic leaving the HRRT surface $\gamma_{\mathcal{V}}.$
Furthermore, because null geodesics starting at $\gamma_{\mathcal{V}}$ leave $\partial J^+(E_{\mathcal{V}} \cap \Sigma)$ if their expansion diverges to $-\infty$ (cf. theorem 7.27 of \cite{penrose1972techniques}), the expansion at each point of $\mathscr{F}$ is nonpositive under any choice of affine parameter.
So any complete cross-section of $\mathscr{F}$ has area less than or equal to the area of $\gamma_{\mathcal{V}}.$

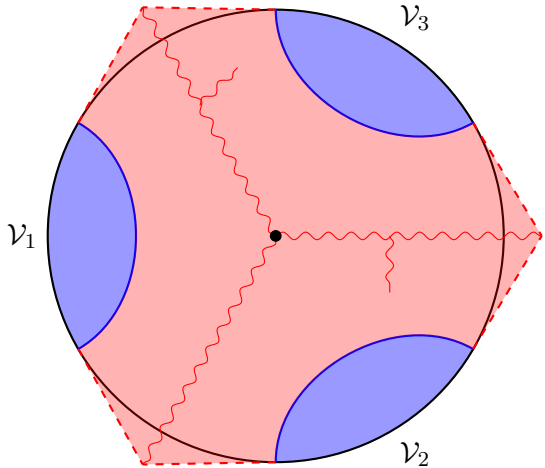
\begin{figure}
\centering
\begin{tikzpicture}

    \draw[domain=-120:-60,variable=\x,smooth, fill=blue,opacity=0.4] plot ({3*sin(\x)}, {3*cos(\x)}) -- (-3*0.866,3*0.5) to [out=-30,in=30] (-3*0.866,-3*0.5);
    
    \begin{scope}[rotate=120]
    \draw[domain=-120:-60,variable=\x,smooth, fill=blue,opacity=0.4] plot ({3*sin(\x)}, {3*cos(\x)}) -- (-3*0.866,3*0.5) to [out=-30,in=30] (-3*0.866,-3*0.5);
    \end{scope}
    
    \begin{scope}[rotate=240]
    \draw[domain=-120:-60,variable=\x,smooth, fill=blue,opacity=0.4] plot ({3*sin(\x)}, {3*cos(\x)}) -- (-3*0.866,3*0.5) to [out=-30,in=30] (-3*0.866,-3*0.5);
    \end{scope}
    
    \draw[thick] (0,0) circle (3);
    
    \draw[thick,blue] (-3*0.866,3*0.5) to [out=-30,in=30] (-3*0.866,-3*0.5);
    \draw[thick,blue] (0,3) to [out=-90,in=-150] (3*0.866,3*0.5);
    \draw[thick,blue] (0,-3) to [out=90,in=150] (3*0.866,-3*0.5);
    
    \node[left] at (-3,0) {$\mathcal{V}_1$};
    \node[above right] at (3*0.5,3*0.866) {$\mathcal{V}_3$};
    \node[below right] at (3*0.5,-3*0.866) {$\mathcal{V}_2$};
    
    \draw[red,thick,dashed] ({3*cos(30)},{3*sin(30)}) -- (3.5,0);
    \draw[red,thick,dashed] ({3*cos(30)},{3*sin(-30)}) -- (3.5,0);
    \draw[red,decorate, decoration={snake, segment length=3mm, amplitude=0.5mm}] (0,0) -- (3.5,0);
    
    \draw[red,thick,dashed] ({3*cos(150)},{3*sin(150)}) -- ({3.5*cos(120)},{3.5*sin(120)});
    \draw[red,thick,dashed] ({3*cos(90)},{3*sin(90)}) -- ({3.5*cos(120)},{3.5*sin(120)});
    \draw[red,decorate, decoration={snake, segment length=3mm, amplitude=0.5mm}] (0,0) -- ({3.5*cos(120)},{3.5*sin(120)});
    
    \draw[red,thick,dashed] ({3*cos(-150)},{3*sin(-150)}) -- ({3.5*cos(-120)},{3.5*sin(-120)});
    \draw[red,thick,dashed] ({3*cos(-90)},{3*sin(-90)}) -- ({3.5*cos(-120)},{3.5*sin(-120)});
    \draw[red,decorate, decoration={snake, segment length=3mm, amplitude=0.5mm}] (0,0) -- ({3.5*cos(-120)},{3.5*sin(-120)});
    
    \draw[red,decorate, decoration={snake, segment length=3mm, amplitude=0.5mm}] (1.5,0) -- ({1.5},{-0.75});
    
    \draw[red,decorate, decoration={snake, segment length=3mm, amplitude=0.5mm}] ({2*cos(120)},{2*sin(120)}) -- ({2*cos(120)+0.5},{2*sin(120)+0.5});
    
    \fill[red,opacity=0.3] (0,0) -- (3.5,0) -- ({3*cos(30)},{3*sin(-30)}) -- ({3*cos(-30)},{3*sin(-30)}) to [out=150,in=90] (0,{3*sin(-90)}) -- ({3.5*cos(-120)},{3.5*sin(-120)});
    
    \begin{scope}[rotate=120]
    \fill[red,opacity=0.3] (0,0) -- (3.5,0) -- ({3*cos(30)},{3*sin(-30)}) -- ({3*cos(-30)},{3*sin(-30)}) to [out=150,in=90] (0,{3*sin(-90)}) -- ({3.5*cos(-120)},{3.5*sin(-120)});
    \end{scope}
    
    \begin{scope}[rotate=-120]
    \fill[red,opacity=0.3] (0,0) -- (3.5,0) -- ({3*cos(30)},{3*sin(-30)}) -- ({3*cos(-30)},{3*sin(-30)}) to [out=150,in=90] (0,{3*sin(-90)}) -- ({3.5*cos(-120)},{3.5*sin(-120)});
    \end{scope}

    \draw plot [mark=*, mark size=2] coordinates{(0,0)};
    
\end{tikzpicture}
\caption{A focusing lightsheet $\mathscr{F}$ for the extremal surface configuration sketched in figure \ref{fig:disconnected-EW}a. This figure is drawn so that the ``future'' direction is towards the viewer. The maximin slice sketched in figure \ref{fig:disconnected-EW} is in the background, and the red shaded region is $\mathscr{F}.$ The dashed red lines are future-directed light rays on the spacetime boundary that show where $\mathscr{F}$ intersects the boundary. Wavy lines are caustics where distinct generators of $\mathscr{F}$ intersect one another.}
\label{fig:sample-lightsheet}
\end{figure}

We have a particular cross-section in mind, which we will specify by determining, for each generator of $\mathscr{F}$, a point where that generator should be truncated.
If the choice of truncation point is made continuously, then the union of all truncation points is a cross-section of $\mathscr{F}$, and has area less than or equal to the area of $\gamma_{\mathcal{V}}$.
The next few subsections are dedicated to constructing this cross-section.

\subsubsection*{Dividing curves}

To choose our truncation points, it will be helpful to introduce the \textit{past lightsheet} leaving an output wedge $E_{\mathcal{W}_j}$, defined by
\begin{equation}
    \mathscr{P}_j = \bar{\partial J^-(E_{\mathcal{W}_j} \cap \chi_j) - (E_{\mathcal{W}_j} \cap \chi_j)},
\end{equation}
where $\chi_j$ is any Cauchy surface of $\tilde{\mathcal{M}}$ containing the spatial boundary of $E_{\mathcal{W}_j}$.
AdS-hyperbolicity of $\mathcal{M}$ implies $\del J^-(E_{\mathcal{W}_j} \cap \chi_j) = J^-(E_{\mathcal{W}_j} \cap \chi_j) - I^-(E_{\mathcal{W}_j} \cap \chi_j)$ via lemma \ref{lem:lightsheet-boundary}.
Each output wedge $E_{\mathcal{W}_j}$ limits on the boundary to an output region $\mathcal{W}_{j}$, and we know from lemma \ref{lem:adjacent-cutoffs-agree} that the right- and left-generators of $\del \hat{J}^{-}(\mathcal{W}_{j})$ intersect the right- and left-endpoints of two adjacent input regions.
As in section \ref{sec:timelike-cylinders}, we choose to label the output regions so that the right generator of $\del \hat{J}^{-}(\mathcal{W}_{j})$ intersects the right endpoint of $\mathcal{V}_{j},$ and the left generator of $\del \hat{J}^-(\mathcal{W}_{j})$ intersects the left endpoint of $\mathcal{V}_{j+1}$.

We will now, for each $j$, give an algorithm for defining a special bulk curve $\alpha_j$, called the \textbf{dividing curve}, that starts at the right endpoint of $\mathcal{V}_{j}$ and ends at the left endpoint of $\mathcal{V}_{j+1}.$
If the component of $\gamma_{\mathcal{V}}$ containing the right endpoint of $\mathcal{V}_{j}$ has, for its other endpoint, the left endpoint of $\mathcal{V}_{j+1},$ then we will define the dividing curve $\alpha_j$ to be this component of $\gamma_{\mathcal{V}}.$
In any other case, we know that the other endpoint of this component of $\gamma_{\mathcal{V}}$ lies in $\hat{I}^-(\mathcal{W}_{j})$, and therefore in $I^-(E_{\mathcal{W}_j}).$
Causal consistency of entanglement wedge reconstruction, which is known to follow either from the null curvature condition \cite{maximin, headrick2014causality} or from the generalized second law \cite{engelhardt2015quantum}, implies that the right endpoint of $\mathcal{V}_{j}$ is not in $I^-(E_{\mathcal{W}_j}),$ since it is not in $\hat{I}^-(\mathcal{W}_{j}).$
So if we start at the right endpoint of $\mathcal{V}_{j}$ and follow $\gamma_{\mathcal{V}}$ into the bulk, we will eventually enter $I^-(E_{\mathcal{W}_j})$ just after crossing through $\mathscr{P}_j \cap \mathscr{F}.$
See figure \ref{fig:dividing-curve-start}a.
The first portion of the dividing curve $\alpha_j$ will be the segment of $\gamma_{\mathcal{V}}$ that lies between this intersection and our starting point, which was the right endpoint of $\mathcal{V}_{j}.$

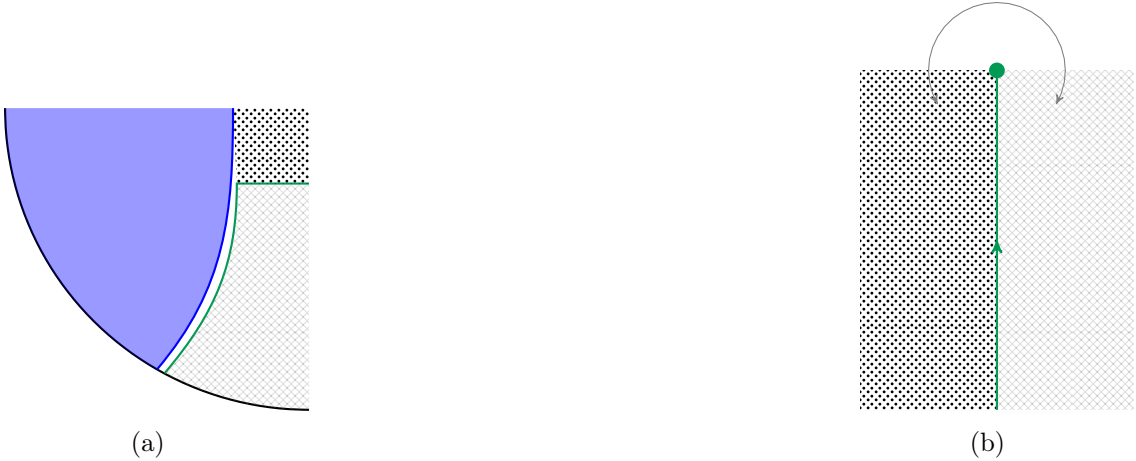
\begin{figure}
    \centering
    \subfloat[\label{fig:withoutmatter}]{
    \begin{tikzpicture}[scale=1]
    
    \draw[thick,domain=-180:-90,variable=\x,smooth] plot ({4*sin(\x)}, {4*cos(\x)});
    
    \draw[blue,thick] ({4*sin(-150)},{4*cos(-150)}) to [out=50,in=-90] (-1,0);
    \draw[ForestGreen,thick] ({4*sin(-150)+0.1},{4*cos(-150)-0.05}) to [out=50,in=-90] (-1+0.05,-1) -- (0,-1);
    
    \fill[pattern={crosshatch},pattern color=gray, opacity=0.5,domain=-180:-150,variable=\x,smooth] ({4*sin(-150)+0.1},{4*cos(-150)-0.05}) to [out=50,in=-90] (-1+0.05,-1) -- (0,-1) -- ({4*sin(-180)}, {4*cos(-180)}) plot ({4*sin(\x)}, {4*cos(\x)});
    
    \fill[pattern={crosshatch dots},pattern color=black] (-1+0.05,-1) -- (0,-1) -- (0,0) -- (-1,0);
    
    \fill[blue,opacity=0.4,domain=-90:-150,variable=\x,smooth] ({4*sin(-150)},{4*cos(-150)}) to [out=50,in=-90] (-1,0) -- (-4,0) plot ({4*sin(\x)}, {4*cos(\x)});
    
    \end{tikzpicture}
    }
    \hfill
    \subfloat[\label{fig:withmatter}]{
    \begin{tikzpicture}[scale=0.9]

    \fill[pattern={crosshatch dots},pattern color=black] (0,0) rectangle +(-2,5);
    \fill[pattern={crosshatch},pattern color=gray, opacity=0.5] (0,0) rectangle +(2,5);
    
    \draw[gray,domain=-30:210,<->] plot ({cos(\x)},{sin(\x)+5});
    
    \draw[thick,ForestGreen, mid arrow] (0,0) -- (0,5);
    \draw[thick, fill=green, ForestGreen] (0,5) circle (0.1);
    
    \end{tikzpicture}
    }
    \caption{(a) The first portion of the dividing curve $\alpha_j$ starts at the right endpoint of $\mathcal{V}_{j},$ and follows the HRRT surface $\gamma_{\mathcal{V}}$ until it crosses into $I^-(E_{\mathcal{W}_j}).$ At this point, the dividing curve starts to follow $\mathscr{P}_j \cap \mathscr{F}.$ Points with dot-like shading are in $I^-(E_{\mathcal{W}_j}),$ and points with cross-like shading are in $J^-(E_{\mathcal{W}_j})^c.$ (b) If the dividing curve terminated anywhere in the interior of $\mathscr{F},$ then we could draw a curve from points in $I^-(E_{\mathcal{W}_j})$ to points in $J^-(E_{\mathcal{W}_j})^c$ without ever crossing $\mathscr{P}_j.$}
    \label{fig:dividing-curve-start}
\end{figure}

At this intersection, we turn into the interior of $\mathscr{F}$ by following the curve $\mathscr{P}_j \cap \mathscr{F}.$
This curve has the property that all points of $\mathscr{F}$ sufficiently close to the right-hand side are outside of $J^-(E_{\mathcal{W}_j}),$ and all points of $\mathscr{F}$ sufficiently close to the left-hand side are inside of $I^-(E_{\mathcal{W}_j}).$
See figure \ref{fig:dividing-curve-start}a.
This curve cannot terminate anywhere in the interior of $\mathscr{F},$ because then we would be able to draw paths that connect points in the interior of $J^-(E_{\mathcal{W}_j})$ to points outside of $J^-(E_{\mathcal{W}_j})$ without ever passing through $\del J^-(E_{\mathcal{W}_j})$, as in figure \ref{fig:dividing-curve-start}b.
This curve also cannot ever reach a spacetime singularity, since that would require the singularity to be in the future of $\gamma_{\mathcal{V}}$ but the past of $E_{\mathcal{W}_j},$ and AdS-hyperbolicity forbids singularities that are not strictly in the future or strictly in the past; see lemma \ref{lem:future-past-singularities}.
So the component of $\mathscr{P}_j \cap \mathscr{F}$ that we are following must end on the boundary of $\mathscr{F},$ either on a component of $\gamma_{\mathcal{V}}$ or on the spacetime boundary.
However, because we know that $J^-(E_{\mathcal{W}_j})$ contains $\hat{J}^-(\mathcal{W}_{j}),$ and $\hat{J}^-(\mathcal{W}_{j})$ intersects $\mathscr{F}$ only at the right-endpoint of $\mathcal{V}_{j}$ and the left-endpoint of $\mathcal{V}_{j+1}$, these are the only places where $\mathscr{P}_j \cap \mathscr{F}$ could intersect the boundary; since these points are in $\gamma_{\mathcal{V}}$, we can say without loss of generality that the curve we have been following terminates on a component of $\gamma_{\mathcal{V}}.$

Once our curve has hit the boundary of $\mathscr{F}$ by colliding with a component of $\gamma_{\mathcal{V}},$ one of three things can happen: (a) we have reached the left-endpoint of $\mathcal{V}_{j+1}$, (b) there is a unique direction along $\gamma_{\mathcal{V}}$ whose points are not contained in $I^-(E_{\mathcal{W}_j}),$ or (c) $\mathscr{P}_j$ is tangent to $\gamma_{\mathcal{V}}$ at this point, and ``bounces off''.
These are sketched in figure \ref{fig:dividing-curve-collisions}.
In case (c), we just keep following $\mathscr{P}_j$; in case (b), we go along $\gamma_{\mathcal{V}}$ in the direction that is not contained in $I^-(E_{\mathcal{W}_j})$; in case (a), we stop.

\begin{figure}
    \centering
    \subfloat[\label{fig:dividingcurvemeetspartialV}]{
    \begin{tikzpicture}
    
    \draw[thick,domain=-15:-90] plot ({4*cos(\x)},{4*sin(\x)});
    \draw[thick,blue] ({4*cos(-60)},{4*sin(-60)}) to [out=110,in=-120] ({2.2},{-1});
    \fill[blue,opacity=0.4,domain=-15:-60] ({4*cos(-60)},{4*sin(-60)}) to [out=110,in=-120] ({2.2},{-1}) -- ({4*cos(-15)},{4*sin(-15)}) plot ({4*cos(\x)},{4*sin(\x)});
    
    \draw[ForestGreen,thick] ({4*cos(-60)},{4*sin(-60)}) -- (1,-1);
    
    \fill[pattern={crosshatch},pattern color=gray, opacity=0.5,domain=-90:-60,variable=\x,smooth] ({4*cos(-60)},{4*sin(-60)}) -- (1,-1) -- (0,-1) -- ({4*cos(-90)},{4*sin(-90)}) plot ({4*cos(\x)},{4*sin(\x)});
    
    \fill[pattern={crosshatch dots},pattern color=black] ({4*cos(-60)},{4*sin(-60)}) -- (1,-1) -- ({2.2},{-1}) to [out=-120,in=110] ({2},{-3.464});
    
    \node[below right]  at ({4*cos(-30)},{4*sin(-30)}) {$\mathcal{V}_{j+1}$};
    
    \end{tikzpicture}
    }
    \hfill
    \subfloat[\label{fig:dividingcurvemeetsgammaV}]{
    \begin{tikzpicture}[scale=0.8]
    
    \fill[blue,opacity=0.4,domain=180:90] plot ({4*cos(\x)},{4*sin(\x)}) -- (-1,1);
    \draw[thick,blue,domain=180:90] plot ({4*cos(\x)},{4*sin(\x)});
    
    \draw[ForestGreen,domain=135:180,thick] ({4*cos(135)},{4*sin(135)}) -- ({4.05*cos(135)-1},{4.05*sin(135)+1}) plot ({4.05*cos(\x)},{4.05*sin(\x)});
    
    \fill[pattern={crosshatch},pattern color=gray, opacity=0.5,domain=180:135,variable=\x,smooth] plot ({4.05*cos(\x)},{4.05*sin(\x)}) -- ({4.05*cos(135)-1},{4.05*sin(135)+1}) -- ({4.05*cos(135)-2},{4.05*sin(135)});
    
    \fill[pattern={crosshatch dots},pattern color=black, opacity=0.5,domain=135:90,variable=\x,smooth] plot ({4.05*cos(\x)},{4.05*sin(\x)}) -- ({4.05*cos(135)},{4.05*sin(135)+2}) -- ({4.05*cos(135)-1},{4.05*sin(135)+1});
    
    \end{tikzpicture}
    }
    \hfill
    \subfloat[\label{fig:dividingcurvegrazesgammaV}]{
    \begin{tikzpicture}[scale=0.8]
    
    \fill[blue,opacity=0.4,domain=180:90] plot ({4*cos(\x)},{4*sin(\x)}) -- (-1,1);
    \draw[thick,blue,domain=180:90] plot ({4*cos(\x)},{4*sin(\x)});
    
    \draw[ForestGreen,thick] ({-2.82+1},{4*sin(135)+2}) to [out=-135,in=45] ({-2.82},{4*sin(135)}) to [out=-135,in=45] ({-2.82-2},{4*sin(135)-1});
    
    \fill[pattern={crosshatch dots},pattern color=black, opacity=0.5,domain=135:90,variable=\x,smooth] plot ({4*cos(\x)},{4*sin(\x)}) to [out=45,in=135] ({-2.82+1},{4*sin(135)+2});
    \fill[pattern={crosshatch dots},pattern color=black, opacity=0.5,domain=135:180,variable=\x,smooth] plot ({4*cos(\x)},{4*sin(\x)}) to [out=-135,in=45] ({-2.82-2},{4*sin(135)-1});
    
    \fill[pattern={crosshatch},pattern color=gray, opacity=0.5,domain=180:135,variable=\x,smooth] ({-2.82+1},{4*sin(135)+2}) to [out=-135,in=45] ({-2.82},{4*sin(135)}) to [out=-135,in=45] ({-2.82-2},{4*sin(135)-1}) -- (-3.5,3.5);
    
    \end{tikzpicture}
    }
    \caption{The three ways described in the text that a dividing curve can meet a component of $\gamma_{\mathcal{V}}.$ In (a), it meets $\gamma_{\mathcal{V}}$ at the left endpoint of $\mathcal{V}_{j}$ and terminates. In (b) there is a direction along $\gamma_{\mathcal{V}}$ that lies outside $I^-(E_{\mathcal{W}_j}),$ and the dividing curve continues in that direction. In (c) the dividing curve bounces off of $\gamma_{\mathcal{V}}$ and continues in the interior of $\mathscr{F}.$}
    \label{fig:dividing-curve-collisions}
\end{figure}
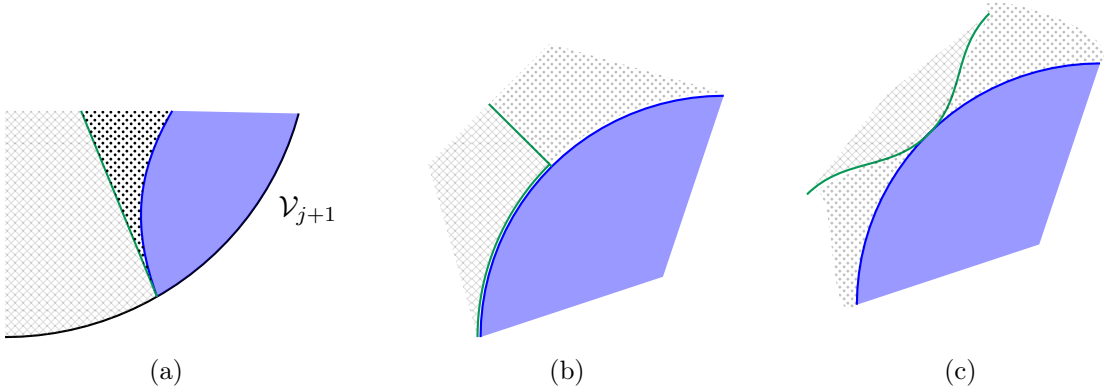

If case (c) of the preceding paragraph was triggered, then we know by the same logic given above that our curve must eventually collide with a component of $\gamma_{\mathcal{V}}$; we therefore repeat the algorithm of the preceding paragraph.
If case (b) was triggered, then we are again in the situation we started with: we are following a component of $\gamma_{\mathcal{V}}$ that does not lie in $I^-(E_{\mathcal{W}_j}),$ and we must eventually either enter $I^-(E_{\mathcal{W}_j})$ or reach the left endpoint of $\mathcal{V}_{j+1}.$
If we enter $I^-(E_{\mathcal{W}_j}),$ we must do so by crossing $\mathscr{P}_j,$ and so we can turn toward the interior of $\mathscr{F}$ along $\mathscr{P}_j$ and repeat the rules described above.

The end result of applying this algorithm will be a curve in $\mathscr{F}$ that connects the right endpoint of $\mathcal{V}_{j}$ to the left endpoint of $\mathcal{V}_{j+1}$ such that every point is either in $\gamma_{\mathcal{V}} \cap I^-(E_{\mathcal{W}_j})^c,$ or in $\del J^-(E_{\mathcal{W}_j}) \cap \gamma_{\mathcal{V}}^c.$
This is the dividing curve $\alpha_j.$
Points in $\mathscr{F}$ that lie just to the left of this curve are in $I^-(E_{\mathcal{W}_j}),$ and points in $\mathscr{F}$ that lie just to the right are in $J^-(E_{\mathcal{W}_j})^c.$

The $n$ dividing curves $\alpha_1, \dots, \alpha_n$ will make up part of the cross-section of $\mathscr{F}$ we are trying to construct.
We would like to prove a topological property of these curves, which is that they do not intersect.
This is where we will use the condition that the $2$-to-all causal graph $\Gamma_{2\rightarrow\text{all}}$ is connected.
Since this is the essential piece of the proof, we present it in the following subsection.

\subsubsection*{Causal connections and lightsheet geometry}

We will now show that whenever $J(E_{\mathcal{V}_j}, E_{\mathcal{V}_k} \rightarrow E_{\mathcal{W}_1}, \dots, E_{E_{\mathcal{W}_n}})$ has nonempty interior, there exists an open family of curves in $\mathscr{F} \cup (E_{\mathcal{V}} \cap \Sigma)$ such that (i) each curve connects the component of $E_{\mathcal{V}}$ containing $\mathcal{V}_{j}$ to the component of $E_{\mathcal{V}}$ containing $\mathcal{V}_k,$ and (ii) each curve lies entirely in the past of all output wedges $E_{\mathcal{W}_j}$. 

For example, in the case $n=3$, the existence of a nonempty interior for $J(E_{\mathcal{V}_1}, E_{\mathcal{V}_3} \rightarrow E_{\mathcal{W}_1}, E_{\mathcal{W}_2}, E_{\mathcal{W}_3)}$ implies either an open family of curves connecting $E_{\mathcal{V}_1}$ to $E_{\mathcal{V}_3}$ across $\mathscr{F}$, or connecting $E_{\mathcal{V}_1}$ to $E_{\mathcal{V}_3}$ by way of $E_{\mathcal{V}_2}$, or both; this is sketched in figure \ref{fig:open-curve-families}.

\begin{figure}
    \centering
    \subfloat[\label{fig:directopencurvefamily}]{
    \begin{tikzpicture}

    \draw[domain=-120:-60,variable=\x,smooth, fill=blue,opacity=0.4] plot ({3*sin(\x)}, {3*cos(\x)}) -- (-3*0.866,3*0.5) to [out=-30,in=30] (-3*0.866,-3*0.5);
    
    \begin{scope}[rotate=120]
    \draw[domain=-120:-60,variable=\x,smooth, fill=blue,opacity=0.4] plot ({3*sin(\x)}, {3*cos(\x)}) -- (-3*0.866,3*0.5) to [out=-30,in=30] (-3*0.866,-3*0.5);
    \end{scope}
    
    \begin{scope}[rotate=240]
    \draw[domain=-120:-60,variable=\x,smooth, fill=blue,opacity=0.4] plot ({3*sin(\x)}, {3*cos(\x)}) -- (-3*0.866,3*0.5) to [out=-30,in=30] (-3*0.866,-3*0.5);
    \end{scope}
    
    \draw[thick] (0,0) circle (3);
    
    \draw[thick,blue] (-3*0.866,3*0.5) to [out=-30,in=30] (-3*0.866,-3*0.5);
    \draw[thick,blue] (0,3) to [out=-90,in=-150] (3*0.866,3*0.5);
    \draw[thick,blue] (0,-3) to [out=90,in=150] (3*0.866,-3*0.5);
    
    \node[left] at (-3,0) {$\mathcal{V}_1$};
    \node[above right] at (3*0.5,3*0.866) {$\mathcal{V}_3$};
    \node[below right] at ({3*cos(-70)},{3*sin(-70)}) {$\mathcal{V}_2$};
    
    \draw[red,thick,dashed] ({3*cos(30)},{3*sin(30)}) -- (3.5,0);
    \draw[red,thick,dashed] ({3*cos(30)},{3*sin(-30)}) -- (3.5,0);
    \draw[red,decorate, decoration={snake, segment length=3mm, amplitude=0.5mm}] (0,0) -- (3.5,0);
    
    \draw[red,thick,dashed] ({3*cos(150)},{3*sin(150)}) -- ({3.5*cos(120)},{3.5*sin(120)});
    \draw[red,thick,dashed] ({3*cos(90)},{3*sin(90)}) -- ({3.5*cos(120)},{3.5*sin(120)});
    \draw[red,decorate, decoration={snake, segment length=3mm, amplitude=0.5mm}] (0,0) -- ({3.5*cos(120)},{3.5*sin(120)});
    
    \draw[red,thick,dashed] ({3*cos(-150)},{3*sin(-150)}) -- ({3.5*cos(-120)},{3.5*sin(-120)});
    \draw[red,thick,dashed] ({3*cos(-90)},{3*sin(-90)}) -- ({3.5*cos(-120)},{3.5*sin(-120)});
    \draw[red,decorate, decoration={snake, segment length=3mm, amplitude=0.5mm}] (0,0) -- ({3.5*cos(-120)},{3.5*sin(-120)});
    
    \fill[red,opacity=0.3] (0,0) -- (3.5,0) -- ({3*cos(30)},{3*sin(-30)}) -- ({3*cos(-30)},{3*sin(-30)}) to [out=150,in=90] (0,{3*sin(-90)}) -- ({3.5*cos(-120)},{3.5*sin(-120)});
    
    \begin{scope}[rotate=120]
    \fill[red,opacity=0.3] (0,0) -- (3.5,0) -- ({3*cos(30)},{3*sin(-30)}) -- ({3*cos(-30)},{3*sin(-30)}) to [out=150,in=90] (0,{3*sin(-90)}) -- ({3.5*cos(-120)},{3.5*sin(-120)});
    \end{scope}
    
    \begin{scope}[rotate=-120]
    \fill[red,opacity=0.3] (0,0) -- (3.5,0) -- ({3*cos(30)},{3*sin(-30)}) -- ({3*cos(-30)},{3*sin(-30)}) to [out=150,in=90] (0,{3*sin(-90)}) -- ({3.5*cos(-120)},{3.5*sin(-120)});
    \end{scope}

    \draw plot [mark=*, mark size=2] coordinates{(0,0)};
    
    \draw[red,decorate, decoration={snake, segment length=3mm, amplitude=0.5mm}] ({2*cos(120)},{2*sin(120)}) -- ({2*cos(120)+0.5},{2*sin(120)+0.5});
    \draw[red,decorate, decoration={snake, segment length=3mm, amplitude=0.5mm}] (1.5,0) -- ({1.5},{-0.75});
    
    \foreach \i in {1,...,30}
    {
    \draw[magenta] ({-3.8+2*cos(\i)},{2*sin(\i)}) -- ({1.8*cos(-\i-100)+1.21*1.5},{1.8*sin(-\i-100)+1.21*2.598});
    }
    
\end{tikzpicture}
    }
    \hfill
    \subfloat[\label{fig:indirectopencruvefamily}]{
    \begin{tikzpicture}

    \draw[domain=-120:-60,variable=\x,smooth, fill=blue,opacity=0.4] plot ({3*sin(\x)}, {3*cos(\x)}) -- (-3*0.866,3*0.5) to [out=-30,in=30] (-3*0.866,-3*0.5);
    
    \begin{scope}[rotate=120]
    \draw[domain=-120:-60,variable=\x,smooth, fill=blue,opacity=0.4] plot ({3*sin(\x)}, {3*cos(\x)}) -- (-3*0.866,3*0.5) to [out=-30,in=30] (-3*0.866,-3*0.5);
    \end{scope}
    
    \begin{scope}[rotate=240]
    \draw[domain=-120:-60,variable=\x,smooth, fill=blue,opacity=0.4] plot ({3*sin(\x)}, {3*cos(\x)}) -- (-3*0.866,3*0.5) to [out=-30,in=30] (-3*0.866,-3*0.5);
    \end{scope}
    
    \draw[thick] (0,0) circle (3);
    
    \draw[thick,blue] (-3*0.866,3*0.5) to [out=-30,in=30] (-3*0.866,-3*0.5);
    \draw[thick,blue] (0,3) to [out=-90,in=-150] (3*0.866,3*0.5);
    \draw[thick,blue] (0,-3) to [out=90,in=150] (3*0.866,-3*0.5);
    
    \node[left] at (-3,0) {$\mathcal{V}_1$};
    \node[above right] at (3*0.5,3*0.866) {$\mathcal{V}_3$};
    \node[below right] at ({3*cos(-70)},{3*sin(-70)}) {$\mathcal{V}_2$};
    
    \draw[red,thick,dashed] ({3*cos(30)},{3*sin(30)}) -- (3.5,0);
    \draw[red,thick,dashed] ({3*cos(30)},{3*sin(-30)}) -- (3.5,0);
    \draw[red,decorate, decoration={snake, segment length=3mm, amplitude=0.5mm}] (0,0) -- (3.5,0);
    
    \draw[red,thick,dashed] ({3*cos(150)},{3*sin(150)}) -- ({3.5*cos(120)},{3.5*sin(120)});
    \draw[red,thick,dashed] ({3*cos(90)},{3*sin(90)}) -- ({3.5*cos(120)},{3.5*sin(120)});
    \draw[red,decorate, decoration={snake, segment length=3mm, amplitude=0.5mm}] (0,0) -- ({3.5*cos(120)},{3.5*sin(120)});
    
    \draw[red,thick,dashed] ({3*cos(-150)},{3*sin(-150)}) -- ({3.5*cos(-120)},{3.5*sin(-120)});
    \draw[red,thick,dashed] ({3*cos(-90)},{3*sin(-90)}) -- ({3.5*cos(-120)},{3.5*sin(-120)});
    \draw[red,decorate, decoration={snake, segment length=3mm, amplitude=0.5mm}] (0,0) -- ({3.5*cos(-120)},{3.5*sin(-120)});
    
    \fill[red,opacity=0.3] (0,0) -- (3.5,0) -- ({3*cos(30)},{3*sin(-30)}) -- ({3*cos(-30)},{3*sin(-30)}) to [out=150,in=90] (0,{3*sin(-90)}) -- ({3.5*cos(-120)},{3.5*sin(-120)});
    
    \begin{scope}[rotate=120]
    \fill[red,opacity=0.3] (0,0) -- (3.5,0) -- ({3*cos(30)},{3*sin(-30)}) -- ({3*cos(-30)},{3*sin(-30)}) to [out=150,in=90] (0,{3*sin(-90)}) -- ({3.5*cos(-120)},{3.5*sin(-120)});
    \end{scope}
    
    \begin{scope}[rotate=-120]
    \fill[red,opacity=0.3] (0,0) -- (3.5,0) -- ({3*cos(30)},{3*sin(-30)}) -- ({3*cos(-30)},{3*sin(-30)}) to [out=150,in=90] (0,{3*sin(-90)}) -- ({3.5*cos(-120)},{3.5*sin(-120)});
    \end{scope}

    \draw plot [mark=*, mark size=2] coordinates{(0,0)};
    
     \draw[red,decorate, decoration={snake, segment length=3mm, amplitude=0.5mm}] ({2*cos(120)},{2*sin(120)}) -- ({2*cos(120)+0.5},{2*sin(120)+0.5});
    \draw[red,decorate, decoration={snake, segment length=3mm, amplitude=0.5mm}] (1.5,0) -- ({1.5},{-0.75});
    
    \foreach \i in {1,...,30}
    {
    \draw[magenta] plot [smooth] coordinates {({-3.81+2*cos(\i)},{2*sin(\i)}) ({-\i/60+2.9*cos(-60)},{\i/50+2.9*sin(-60)}) ({1.8*cos(-\i-100)+1.21*1.5},{1.8*sin(-\i-100)+1.21*2.598})};
    }
    
\end{tikzpicture}
    }
    \caption{As explained in the text, a nonempty interior for $J(E_{\mathcal{V}_1}, E_{\mathcal{V}_3} \rightarrow E_{\mathcal{W}_1}, E_{\mathcal{W}_2}, E_{\mathcal{W}_3)}$ implies that there is an open family of curves contained in $J^-(E_{\mathcal{W}_1}) \cap J^-(E_{\mathcal{W}_2}) \cap J^-(E_{\mathcal{W}_3}) \cap (\mathscr{F} \cup (E_{\mathcal{V}} \cap \Sigma))$ that connects $E_{\mathcal{V}_1}$ to $E_{\mathcal{V}_3}$. The curves may be contained entirely in $\mathscr{F},$ as in (a), or they may pass through $E_{\mathcal{V}_2},$ as in (b).}
    \label{fig:open-curve-families}
\end{figure}
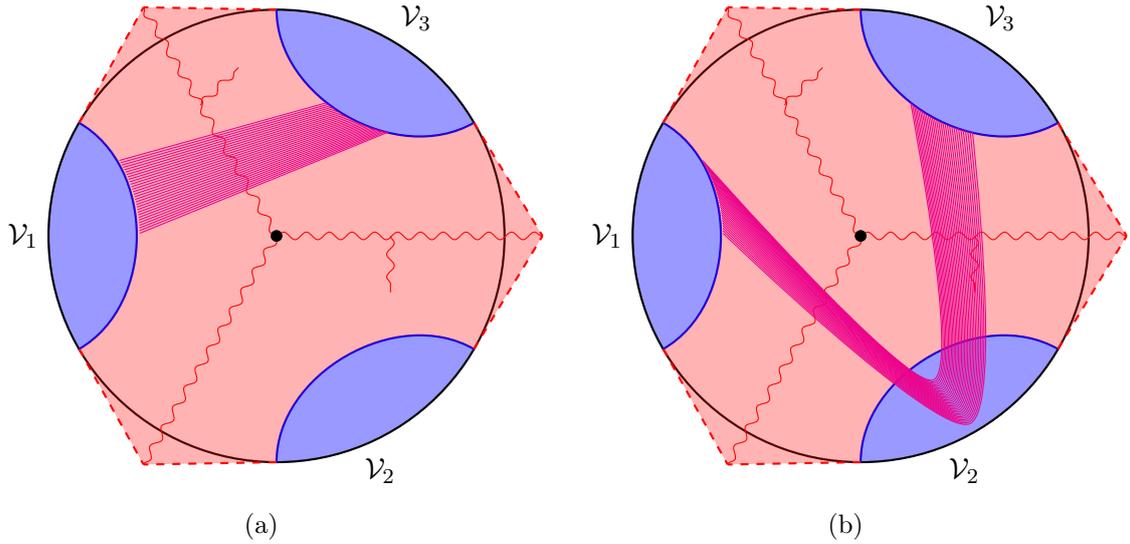

We can construct this open family of curves explicitly.
Let $A$ and $B$ be the components of $E_{\mathcal{V}}$ whose boundaries contain $\mathcal{V}_{j}$ and $\mathcal{V}_k$, respectively.
By entanglement wedge nesting \cite{maximin}, we have $A \supseteq {E_{\mathcal{V}_j}}$ and $B \supseteq {E_{\mathcal{V}_k}}.$
So a nonempty interior for $J(E_{\mathcal{V}_j}, E_{\mathcal{V}_k} \rightarrow E_{\mathcal{W}_1}, \dots, {E_{\mathcal{W}_n}})$ implies a nonempty interior for $J(A,B \rightarrow E_{\mathcal{W}_1}, \dots, E_{\mathcal{W}_n}).$

Let $p$ be a point in $J(A,B \rightarrow E_{\mathcal{W}_1}, \dots, E_{\mathcal{W}_n}).$
If we follow any inextendible timelike curve from $p$ into the past, we will eventually hit either $\del J^+(A \cap \Sigma)$ or $\del J^+(A \cap \Sigma).$
Without loss of generality, assume we hit $\del J^+(A \cap \Sigma).$ The point of intersection lies on some unique generator of $\del J^+(A \cap \Sigma),$ and we can follow that generator into its past until we hit $\del J^+(B \cap \Sigma).$
This gives us a point $q$ that lies on the ``seam'' $\del J^+(A \cap \Sigma) \cap \del J^+(B \cap \Sigma)$, and that is in the past of $p$; since $p$ was in the past of all output wedges $E_{\mathcal{W}_j},$ this property also holds for $q.$

The point $q$ lies on a generator of $\del J^+(A \cap \Sigma)$ and on a generator of $\del J^+(B \cap \Sigma).$
By the same arguments as given in lemma \ref{lem:lightsheet-generators}, the past endpoints of these generators must lie on the spatial boundaries of the corresponding sets, i.e., must lie on the HRRT surface $\gamma_{\mathcal{V}}.$
Every point between $\gamma_{\mathcal{V}}$ and $q$ on one of these generators is in the past of $q,$ and therefore in the past of all output wedges $E_{\mathcal{W}_j}.$
The union of all these points is a curve connecting $A$ to $B,$ contained entirely in the past of all output wedges, for which the points of the curve near $A$ and $B$ lie in $\mathscr{F}.$
However, we are not guaranteed that the entire curve lies in $\mathscr{F}$; for example, the point $q$ need not be in $\mathscr{F}$, as it could be in the timelike future of another entanglement wedge component $C.$
The two possibilities --- $q \in \mathscr{F}$ or $q \in I^+(C \cap \Sigma)$ --- are sketched in figure \ref{fig:possible-q-locations}.\footnote{These are the only two possibilities, thanks to $\mathscr{F} \subseteq J^+(E_{\mathcal{V}} \cap \Sigma) - I^+(E_{\mathcal{V}} \cap \Sigma).$ See lemma \ref{lem:lightsheet-boundary}.}

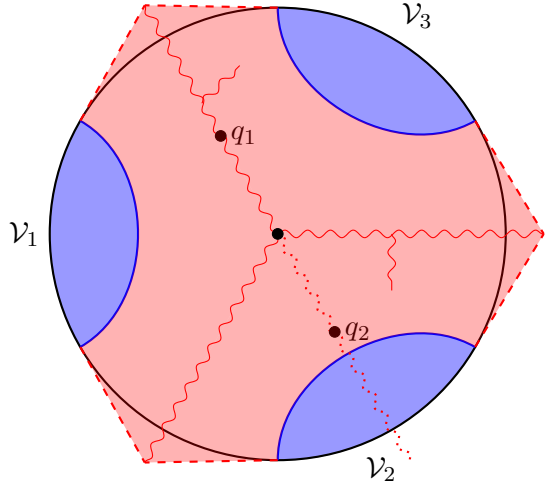
\begin{figure}
    \centering
    \begin{tikzpicture}

    \draw[domain=-120:-60,variable=\x,smooth, fill=blue,opacity=0.4] plot ({3*sin(\x)}, {3*cos(\x)}) -- (-3*0.866,3*0.5) to [out=-30,in=30] (-3*0.866,-3*0.5);
    
    \begin{scope}[rotate=120]
    \draw[domain=-120:-60,variable=\x,smooth, fill=blue,opacity=0.4] plot ({3*sin(\x)}, {3*cos(\x)}) -- (-3*0.866,3*0.5) to [out=-30,in=30] (-3*0.866,-3*0.5);
    \end{scope}
    
    \begin{scope}[rotate=240]
    \draw[domain=-120:-60,variable=\x,smooth, fill=blue,opacity=0.4] plot ({3*sin(\x)}, {3*cos(\x)}) -- (-3*0.866,3*0.5) to [out=-30,in=30] (-3*0.866,-3*0.5);
    \end{scope}
    
    \draw[thick] (0,0) circle (3);
    
    \draw[thick,blue] (-3*0.866,3*0.5) to [out=-30,in=30] (-3*0.866,-3*0.5);
    \draw[thick,blue] (0,3) to [out=-90,in=-150] (3*0.866,3*0.5);
    \draw[thick,blue] (0,-3) to [out=90,in=150] (3*0.866,-3*0.5);
    
    \node[left] at (-3,0) {$\mathcal{V}_1$};
    \node[above right] at (3*0.5,3*0.866) {$\mathcal{V}_3$};
    \node[below right] at ({3*cos(-70)},{3*sin(-70)}) {$\mathcal{V}_2$};
    
    \draw[red,thick,dashed] ({3*cos(30)},{3*sin(30)}) -- (3.5,0);
    \draw[red,thick,dashed] ({3*cos(30)},{3*sin(-30)}) -- (3.5,0);
    \draw[red,decorate, decoration={snake, segment length=3mm, amplitude=0.5mm}] (0,0) -- (3.5,0);
    
    \draw[red,thick,dashed] ({3*cos(150)},{3*sin(150)}) -- ({3.5*cos(120)},{3.5*sin(120)});
    \draw[red,thick,dashed] ({3*cos(90)},{3*sin(90)}) -- ({3.5*cos(120)},{3.5*sin(120)});
    \draw[red,decorate, decoration={snake, segment length=3mm, amplitude=0.5mm}] (0,0) -- ({3.5*cos(120)},{3.5*sin(120)});
    
    \draw[red,thick,dashed] ({3*cos(-150)},{3*sin(-150)}) -- ({3.5*cos(-120)},{3.5*sin(-120)});
    \draw[red,thick,dashed] ({3*cos(-90)},{3*sin(-90)}) -- ({3.5*cos(-120)},{3.5*sin(-120)});
    \draw[red,decorate, decoration={snake, segment length=3mm, amplitude=0.5mm}] (0,0) -- ({3.5*cos(-120)},{3.5*sin(-120)});
    
    \draw[red,decorate, decoration={snake, segment length=3mm, amplitude=0.5mm}] (1.5,0) -- ({1.5},{-0.75});
    
    \draw[red,decorate, decoration={snake, segment length=3mm, amplitude=0.5mm}] ({2*cos(120)},{2*sin(120)}) -- ({2*cos(120)+0.5},{2*sin(120)+0.5});
    
    \draw[thick,dotted,red,decorate, decoration={snake, segment length=3mm, amplitude=0.5mm}] (0,0) -- ({3.5*cos(-60)},{3.5*sin(-60)});
    
    \draw plot [mark=*, mark size=2] coordinates{({1.5*cos(-60)},{1.5*sin(-60)})};
    \node[right] at ({1.5*cos(-60)},{1.5*sin(-60)}) {$q_2$};
    
    \draw plot [mark=*, mark size=2] coordinates{({1.5*cos(120)},{1.5*sin(120)})};
    \node[right] at ({1.5*cos(120)},{1.5*sin(120)}) {$q_1$};
    
    \fill[red,opacity=0.3] (0,0) -- (3.5,0) -- ({3*cos(30)},{3*sin(-30)}) -- ({3*cos(-30)},{3*sin(-30)}) to [out=150,in=90] (0,{3*sin(-90)}) -- ({3.5*cos(-120)},{3.5*sin(-120)});
    
    \begin{scope}[rotate=120]
    \fill[red,opacity=0.3] (0,0) -- (3.5,0) -- ({3*cos(30)},{3*sin(-30)}) -- ({3*cos(-30)},{3*sin(-30)}) to [out=150,in=90] (0,{3*sin(-90)}) -- ({3.5*cos(-120)},{3.5*sin(-120)});
    \end{scope}
    
    \begin{scope}[rotate=-120]
    \fill[red,opacity=0.3] (0,0) -- (3.5,0) -- ({3*cos(30)},{3*sin(-30)}) -- ({3*cos(-30)},{3*sin(-30)}) to [out=150,in=90] (0,{3*sin(-90)}) -- ({3.5*cos(-120)},{3.5*sin(-120)});
    \end{scope}

    \draw plot [mark=*, mark size=2] coordinates{(0,0)};
    
\end{tikzpicture}
    \caption{A point in $\del J^+(E_{\mathcal{V}_1} \cap \Sigma) \cap \del J^+(E_{\mathcal{V}_2} \cap \Sigma)$ can either lie on $\mathscr{F}$ --- like the point $q_1$ in this figure --- or in the timelike future of $\mathcal{V}_3 \cap \Sigma$ --- like the point $q_2$ in this figure.}
    \label{fig:possible-q-locations}
\end{figure}

If $q$ is in $\mathscr{F}$, then the union of the two generators lying between $q$ and $\gamma_{\mathcal{V}}$ forms a curve with the properties we want --- it connects $A$ to $B$ through $\mathscr{F}$ and lies entirely in the past of all output wedges.
If $q$ is not in $\mathscr{F}$, then we deform the union of the two generators into the past along any family of timelike geodesics.
Each of these timelike geodesics, being past-inextendible, eventually hits the surface $\del J^+(E_{\mathcal{V}} \cap \Sigma) = (E_{\mathcal{V}} \cap \Sigma) \cup \mathscr{F}$.\footnote{The curves start inside $J^+(E_{\mathcal{V}} \cap \Sigma),$ but this set has finite past extent, so any past-inextendible timelike curve must eventually leave it.}
So the curve can be deformed into the past so that it lies entirely in $\mathscr{F} \cup (E_{\mathcal{V}} \cap \Sigma)$; since the deformation was past-directed, the resulting curve also lies entirely in $J^-(E_{\mathcal{W}_1}) \cap \dots \cap J^-(E_{\mathcal{W}_n}).$

The summary so far: for each $p \in J(A, B \rightarrow E_{\mathcal{W}_1}, \dots, E_{\mathcal{W}_n}),$ we have constructed a curve from $A$ to $B$ that lies entirely in $\mathscr{F} \cup (E_{\mathcal{V}} \cap \Sigma)$ and in $J^-(E_{\mathcal{W}_1}) \cap \dots \cap J^-(E_{\mathcal{W}_n}).$
Since $J(A, B \rightarrow E_{\mathcal{W}_1}, \dots, E_{\mathcal{W}_n})$ has nonempty interior, there exists an open family of such curves.

We will apply this fact several times in completing the proof; as a first corollary, we will show that $\Gamma_{2\rightarrow \text{all}}$ connected implies that no two dividing curves (cf. the previous subsection) intersect.
Recall that the points just to the right of a dividing curve $\alpha_j$ lie in $J^-(E_{\mathcal{W}_j})^c$; so by deforming $\alpha_j$ slightly to its right, we can construct a curve from the right endpoint of $\mathcal{V}_{j}$ to the left endpoint of $\mathcal{V}_{j+1}$ that lies entirely in $J^-(E_{\mathcal{W}_j})^c,$ except for at its endpoints.
See figure \ref{fig:blocking-curves}a.

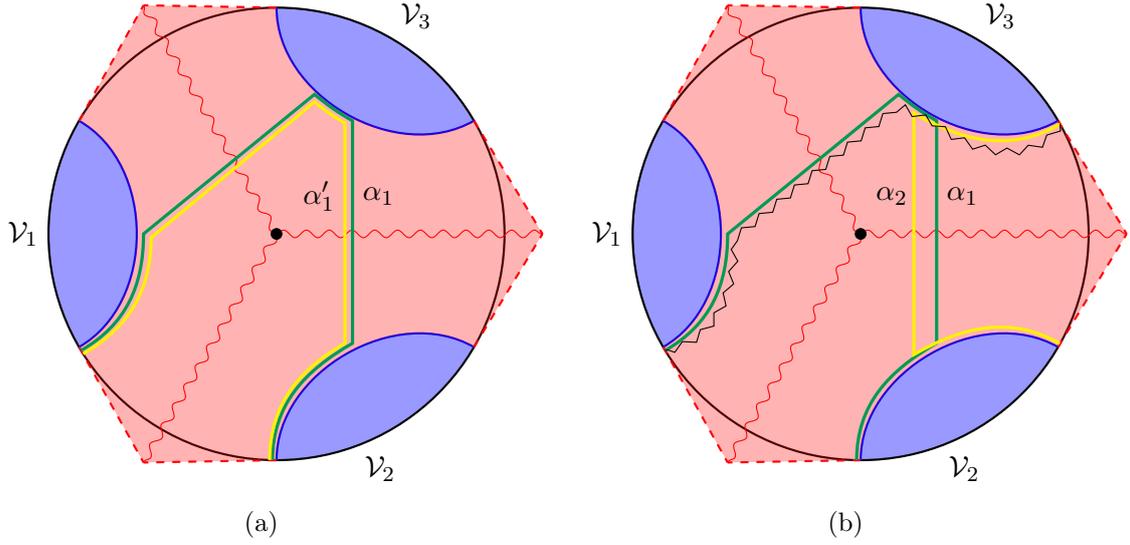
\begin{figure}
    \centering
    \subfloat[\label{fig:dividingcurveanddeformation}]{
    \begin{tikzpicture}

    \draw[domain=-120:-60,variable=\x,smooth, fill=blue,opacity=0.4] plot ({3*sin(\x)}, {3*cos(\x)}) -- (-3*0.866,3*0.5) to [out=-30,in=30] (-3*0.866,-3*0.5);
    
    \begin{scope}[rotate=120]
    \draw[domain=-120:-60,variable=\x,smooth, fill=blue,opacity=0.4] plot ({3*sin(\x)}, {3*cos(\x)}) -- (-3*0.866,3*0.5) to [out=-30,in=30] (-3*0.866,-3*0.5);
    \end{scope}
    
    \begin{scope}[rotate=240]
    \draw[domain=-120:-60,variable=\x,smooth, fill=blue,opacity=0.4] plot ({3*sin(\x)}, {3*cos(\x)}) -- (-3*0.866,3*0.5) to [out=-30,in=30] (-3*0.866,-3*0.5);
    \end{scope}
    
    \draw[thick] (0,0) circle (3);
    
    \draw[thick,blue] (-3*0.866,3*0.5) to [out=-30,in=30] (-3*0.866,-3*0.5);
    \draw[thick,blue] (0,3) to [out=-90,in=-150] (3*0.866,3*0.5);
    \draw[thick,blue] (0,-3) to [out=90,in=150] (3*0.866,-3*0.5);
    
    \node[left] at (-3,0) {$\mathcal{V}_1$};
    \node[above right] at (3*0.5,3*0.866) {$\mathcal{V}_3$};
    \node[below right] at ({3*cos(-70)},{3*sin(-70)}) {$\mathcal{V}_2$};
    
    \draw[red,thick,dashed] ({3*cos(30)},{3*sin(30)}) -- (3.5,0);
    \draw[red,thick,dashed] ({3*cos(30)},{3*sin(-30)}) -- (3.5,0);
    \draw[red,decorate, decoration={snake, segment length=3mm, amplitude=0.5mm}] (0,0) -- (3.5,0);
    
    \draw[red,thick,dashed] ({3*cos(150)},{3*sin(150)}) -- ({3.5*cos(120)},{3.5*sin(120)});
    \draw[red,thick,dashed] ({3*cos(90)},{3*sin(90)}) -- ({3.5*cos(120)},{3.5*sin(120)});
    \draw[red,decorate, decoration={snake, segment length=3mm, amplitude=0.5mm}] (0,0) -- ({3.5*cos(120)},{3.5*sin(120)});
    
    \draw[red,thick,dashed] ({3*cos(-150)},{3*sin(-150)}) -- ({3.5*cos(-120)},{3.5*sin(-120)});
    \draw[red,thick,dashed] ({3*cos(-90)},{3*sin(-90)}) -- ({3.5*cos(-120)},{3.5*sin(-120)});
    \draw[red,decorate, decoration={snake, segment length=3mm, amplitude=0.5mm}] (0,0) -- ({3.5*cos(-120)},{3.5*sin(-120)});
    
    \fill[red,opacity=0.3] (0,0) -- (3.5,0) -- ({3*cos(30)},{3*sin(-30)}) -- ({3*cos(-30)},{3*sin(-30)}) to [out=150,in=90] (0,{3*sin(-90)}) -- ({3.5*cos(-120)},{3.5*sin(-120)});
    
    \begin{scope}[rotate=120]
    \fill[red,opacity=0.3] (0,0) -- (3.5,0) -- ({3*cos(30)},{3*sin(-30)}) -- ({3*cos(-30)},{3*sin(-30)}) to [out=150,in=90] (0,{3*sin(-90)}) -- ({3.5*cos(-120)},{3.5*sin(-120)});
    \end{scope}
    
    \begin{scope}[rotate=-120]
    \fill[red,opacity=0.3] (0,0) -- (3.5,0) -- ({3*cos(30)},{3*sin(-30)}) -- ({3*cos(-30)},{3*sin(-30)}) to [out=150,in=90] (0,{3*sin(-90)}) -- ({3.5*cos(-120)},{3.5*sin(-120)});
    \end{scope}

    \draw plot [mark=*, mark size=2] coordinates{(0,0)};
    
    \draw[ForestGreen,very thick] ({3*cos(-149)},{3*sin(-149)}) to [out=30,in=-90] (-1.75,0) -- (0.5,1.85) to [out=-40,in=150] (1,1.5) -- (1,-1.45) to [out=-150,in=90] ({-0.052},{-3*sin(91)}); 
    
    \draw[yellow,very thick] ({3*cos(-148)},{3*sin(-148)}) to [out=30,in=-90] (-1.65,-0.05) -- (0.5,1.75) to [out=-40,in=150] (0.9,1.47) -- (0.9,-1.45) to [out=-150,in=90] ({-0.09},{-3*sin(92)}); 
    
    \node[right] at (1,0.5) {$\alpha_1$};
    
    \node[left] at (0.9,0.5) {$\alpha_1'$};

\end{tikzpicture}
    }
    \hfill
    \subfloat[\label{fig:overlappingdividingcurves}]{
    \begin{tikzpicture}

    \draw[domain=-120:-60,variable=\x,smooth, fill=blue,opacity=0.4] plot ({3*sin(\x)}, {3*cos(\x)}) -- (-3*0.866,3*0.5) to [out=-30,in=30] (-3*0.866,-3*0.5);
    
    \begin{scope}[rotate=120]
    \draw[domain=-120:-60,variable=\x,smooth, fill=blue,opacity=0.4] plot ({3*sin(\x)}, {3*cos(\x)}) -- (-3*0.866,3*0.5) to [out=-30,in=30] (-3*0.866,-3*0.5);
    \end{scope}
    
    \begin{scope}[rotate=240]
    \draw[domain=-120:-60,variable=\x,smooth, fill=blue,opacity=0.4] plot ({3*sin(\x)}, {3*cos(\x)}) -- (-3*0.866,3*0.5) to [out=-30,in=30] (-3*0.866,-3*0.5);
    \end{scope}
    
    \draw[thick] (0,0) circle (3);
    
    \draw[thick,blue] (-3*0.866,3*0.5) to [out=-30,in=30] (-3*0.866,-3*0.5);
    \draw[thick,blue] (0,3) to [out=-90,in=-150] (3*0.866,3*0.5);
    \draw[thick,blue] (0,-3) to [out=90,in=150] (3*0.866,-3*0.5);
    
    \node[left] at (-3,0) {$\mathcal{V}_1$};
    \node[above right] at (3*0.5,3*0.866) {$\mathcal{V}_3$};
    \node[below right] at ({3*cos(-70)},{3*sin(-70)}) {$\mathcal{V}_2$};
    
    \draw[red,thick,dashed] ({3*cos(30)},{3*sin(30)}) -- (3.5,0);
    \draw[red,thick,dashed] ({3*cos(30)},{3*sin(-30)}) -- (3.5,0);
    \draw[red,decorate, decoration={snake, segment length=3mm, amplitude=0.5mm}] (0,0) -- (3.5,0);
    
    \draw[red,thick,dashed] ({3*cos(150)},{3*sin(150)}) -- ({3.5*cos(120)},{3.5*sin(120)});
    \draw[red,thick,dashed] ({3*cos(90)},{3*sin(90)}) -- ({3.5*cos(120)},{3.5*sin(120)});
    \draw[red,decorate, decoration={snake, segment length=3mm, amplitude=0.5mm}] (0,0) -- ({3.5*cos(120)},{3.5*sin(120)});
    
    \draw[red,thick,dashed] ({3*cos(-150)},{3*sin(-150)}) -- ({3.5*cos(-120)},{3.5*sin(-120)});
    \draw[red,thick,dashed] ({3*cos(-90)},{3*sin(-90)}) -- ({3.5*cos(-120)},{3.5*sin(-120)});
    \draw[red,decorate, decoration={snake, segment length=3mm, amplitude=0.5mm}] (0,0) -- ({3.5*cos(-120)},{3.5*sin(-120)});
    
    \fill[red,opacity=0.3] (0,0) -- (3.5,0) -- ({3*cos(30)},{3*sin(-30)}) -- ({3*cos(-30)},{3*sin(-30)}) to [out=150,in=90] (0,{3*sin(-90)}) -- ({3.5*cos(-120)},{3.5*sin(-120)});
    
    \begin{scope}[rotate=120]
    \fill[red,opacity=0.3] (0,0) -- (3.5,0) -- ({3*cos(30)},{3*sin(-30)}) -- ({3*cos(-30)},{3*sin(-30)}) to [out=150,in=90] (0,{3*sin(-90)}) -- ({3.5*cos(-120)},{3.5*sin(-120)});
    \end{scope}
    
    \begin{scope}[rotate=-120]
    \fill[red,opacity=0.3] (0,0) -- (3.5,0) -- ({3*cos(30)},{3*sin(-30)}) -- ({3*cos(-30)},{3*sin(-30)}) to [out=150,in=90] (0,{3*sin(-90)}) -- ({3.5*cos(-120)},{3.5*sin(-120)});
    \end{scope}

    \draw plot [mark=*, mark size=2] coordinates{(0,0)};
    
    \draw[ForestGreen,very thick] ({3*cos(-149)},{3*sin(-149)}) to [out=30,in=-90] (-1.75,0) -- (0.5,1.85) to [out=-40,in=150] (1,1.5) -- (1,-1.45) to [out=-150,in=90] ({-0.052},{-3*sin(91)}); 
    
    \draw[yellow, very thick] ({3*cos(29)},{3*sin(29)}) to [out=-150,in=-30] (0.7,1.6) --(0.7,-1.6) to [out=30,in=150] ({2.62},{3*sin(-29)});
    
    \draw[decorate,decoration={zigzag, segment length=3mm, amplitude=0.5mm}] ({3*cos(29)},{3*sin(29)}) to [out=-120,in=-40] (1,1.45) -- (0.5, 1.7) -- (-1.65,-0.05) to [out=-90,in=10] ({-2.57},{3*sin(-149)});
    
    \node[right] at (1,0.5) {$\alpha_1$};
    \node[left] at (0.75,0.5) {$\alpha_2$};
    
\end{tikzpicture}
    }
    
    \caption{(a) A dividing curve $\alpha_1$ (green), together with a small deformation $\alpha_1'$ (yellow) that lies entirely in $J^-(E_{\mathcal{W}_j})^c$ except at its endpoints. (b) If $\alpha_1$ (green) and $\alpha_2$ (yellow) intersect, then there is a curve (black zigzag) from the right endpoint of $\mathcal{V}_1$ to the left endpoint of $\mathcal{V}_3$ that lies outside of $J^-(E_{\mathcal{W}_1}) \cap J^-(E_{\mathcal{W}_2})$ except at a finite set of points. This is called a ``blocking curve'' in the text.}
    \label{fig:blocking-curves}
\end{figure}

If dividing curves $\alpha_j$ and $\alpha_k$ intersect, then we can construct a new curve that starts at the right endpoint of $\mathcal{V}_{j},$ follows $\alpha_j$ until the intersection, then follows $\alpha_k$ to the left endpoint of $\mathcal{V}_{k+1}.$
By deforming this curve ``to the right'' away from the intersection point, we can construct a curve from the right endpoint of $\mathcal{V}_{j}$ to the left endpoint of $\mathcal{V}_{k+1}$ that is nowhere in $J^{-}(E_{\mathcal{W}_j}) \cap J^-(E_{\mathcal{W}_k})$ except at its endpoints and at the single point where the two dividing curves intersect.
Call this the \textit{blocking curve}; see figure \ref{fig:blocking-curves}b.
But the blocking curve lies on $\mathscr{F}$ and completely separates all boundary regions from $\mathcal{V}_{k+1}$ to $\mathcal{V}_{j}$ from the complementary set of regions between $\mathcal{V}_{j+1}$ and $\mathcal{V}_{k}.$
Any curve in $(\mathscr{F} \cup (E_{\mathcal{V}} \cap \Sigma)) \cap J^-(E_{\mathcal{W}_1}) \cap \dots \cap J^-(E_{\mathcal{W}_n})$ connecting entanglement wedge components with boundaries on two different sides of the blocking curve would have to pass through the single point in the blocking curve that lies in $J^-(E_{\mathcal{W}_1}) \cap \dots \cap J^-(E_{\mathcal{W}_n})$; but by the lemma we have just proved, and connectedness of $\Gamma_{2 \rightarrow \text{all}}$, there must be some pair of entanglement wedge components on either side of the blocking curve that can be connected by an open family of curves in $(\mathscr{F} \cup (E_{\mathcal{V}} \cap \Sigma)) \cap J^-(E_{\mathcal{W}_1}) \cap \dots \cap J^-(E_{\mathcal{W}_n}).$
This is a contradiction, so the dividing curves $\alpha_j$ and $\alpha_k$ cannot intersect.

We will also use later the fact that $\alpha_j$ and $\alpha_k$ cannot ever be null-separated along a single generator of $\mathscr{F}$.
The same logic applied above forbids this: assume without loss of generality that on the generator they share, $\alpha_j$ is in the past of $\alpha_k$; then the segment of the generator between $\alpha_j$ and $\alpha_k$ lies nowhere in $J^-(E_{\mathcal{W}_j}),$ and can be combined with slightly-deformed versions of $\alpha_j$ and $\alpha_k$ to form a complete cut of $\mathscr{F}$ for which only a finite set of points is in the past of all output wedges.

\subsubsection*{Singularities}
\label{sec:singularities}

The dividing curves, about which we have now proved the useful topological property that they do not intersect, are not the only pieces of the past lightsheets $\mathscr{P}_j$ that we need to make up a good cross-section of $\mathscr{F}.$
We will also need to add in some closed curves in $\mathscr{P}_j \cap \mathscr{F}$ to wrap any singularities on $\mathscr{F}$. 

Any singularity on $\mathscr{F}$ is in the future of $\gamma_{\mathcal{V}}$, so it cannot be in the past of any output wedge $E_{\mathcal{W}_j}$; as shown in lemma \ref{lem:future-past-singularities}, AdS-hyperbolicity means that every singularity must either be a future singularity or a past singularity.
So if a generator of $\mathscr{F}$ hits a spacetime singularity, it must first pass through each past lightsheet $\mathscr{P}_j.$
So for any given singularity, and for each $j$, the singularity is separated from $\gamma_{\mathcal{V}}$ either by the dividing curve $\alpha_j$ or by a closed loop in $\mathscr{P}_j.$

We will assume that there are a finite number of singularities on $\mathscr{F}$.
A setup with two singularities is sketched for $n=3$ in figure \ref{fig:two-singularities}.
The left singularity is separated from $\gamma_{\mathcal{V}}$ by one dividing curve and two closed curves, while the right singularity is separated from $\gamma_{\mathcal{V}}$ by three closed curves.

\begin{figure}
    \centering
    \begin{tikzpicture}

    \draw[domain=-120:-60,variable=\x,smooth, fill=blue,opacity=0.4] plot ({3*sin(\x)}, {3*cos(\x)}) -- (-3*0.866,3*0.5) to [out=-30,in=30] (-3*0.866,-3*0.5);
    
    \begin{scope}[rotate=120]
    \draw[domain=-120:-60,variable=\x,smooth, fill=blue,opacity=0.4] plot ({3*sin(\x)}, {3*cos(\x)}) -- (-3*0.866,3*0.5) to [out=-30,in=30] (-3*0.866,-3*0.5);
    \end{scope}
    
    \begin{scope}[rotate=240]
    \draw[domain=-120:-60,variable=\x,smooth, fill=blue,opacity=0.4] plot ({3*sin(\x)}, {3*cos(\x)}) -- (-3*0.866,3*0.5) to [out=-30,in=30] (-3*0.866,-3*0.5);
    \end{scope}
    
    \draw[thick] (0,0) circle (3);
    
    \draw[thick,blue] (-3*0.866,3*0.5) to [out=-30,in=30] (-3*0.866,-3*0.5);
    \draw[thick,blue] (0,3) to [out=-90,in=-150] (3*0.866,3*0.5);
    \draw[thick,blue] (0,-3) to [out=90,in=150] (3*0.866,-3*0.5);
    
    \node[left] at (-3,0) {$\mathcal{V}_1$};
    \node[above right] at (3*0.5,3*0.866) {$\mathcal{V}_3$};
    \node[below right] at ({3*cos(-70)},{3*sin(-70)}) {$\mathcal{V}_2$};
    
    \draw[red,thick,dashed] ({3*cos(30)},{3*sin(30)}) -- (3.5,0);
    \draw[red,thick,dashed] ({3*cos(30)},{3*sin(-30)}) -- (3.5,0);
    \draw[red,decorate, decoration={snake, segment length=3mm, amplitude=0.5mm}] (3.5,0) -- (2,0);
    
    \draw[red,thick,dashed] ({3*cos(150)},{3*sin(150)}) -- ({3.5*cos(120)},{3.5*sin(120)});
    \draw[red,thick,dashed] ({3*cos(90)},{3*sin(90)}) -- ({3.5*cos(120)},{3.5*sin(120)});
    \draw[red,decorate, decoration={snake, segment length=3mm, amplitude=0.5mm}] ({2*cos(120)},{2*sin(120)}) -- ({3.5*cos(120)},{3.5*sin(120)});
    
    \draw[red,thick,dashed] ({3*cos(-150)},{3*sin(-150)}) -- ({3.5*cos(-120)},{3.5*sin(-120)});
    \draw[red,thick,dashed] ({3*cos(-90)},{3*sin(-90)}) -- ({3.5*cos(-120)},{3.5*sin(-120)});
    \draw[red,decorate, decoration={snake, segment length=3mm, amplitude=0.5mm}] ({1.4*cos(148)},{1.4*sin(148)}) -- ({3.5*cos(-120)},{3.5*sin(-120)});
    
    \fill[red,opacity=0.3] (0,0) -- (3.5,0) -- ({3*cos(30)},{3*sin(-30)}) -- ({3*cos(-30)},{3*sin(-30)}) to [out=150,in=90] (0,{3*sin(-90)}) -- ({3.5*cos(-120)},{3.5*sin(-120)});
    
    \begin{scope}[rotate=120]
    \fill[red,opacity=0.3] (0,0) -- (3.5,0) -- ({3*cos(30)},{3*sin(-30)}) -- ({3*cos(-30)},{3*sin(-30)}) to [out=150,in=90] (0,{3*sin(-90)}) -- ({3.5*cos(-120)},{3.5*sin(-120)});
    \end{scope}
    
    \begin{scope}[rotate=-120]
    \fill[red,opacity=0.3] (0,0) -- (3.5,0) -- ({3*cos(30)},{3*sin(-30)}) -- ({3*cos(-30)},{3*sin(-30)}) to [out=150,in=90] (0,{3*sin(-90)}) -- ({3.5*cos(-120)},{3.5*sin(-120)});
    \end{scope}
    
    \draw[red,decorate, decoration={snake, segment length=3mm, amplitude=0.5mm}] ({1*cos(45)},{1*sin(45)}) -- ({1.2*cos(90)},{1.2*sin(90)});
    
    \draw[red,decorate, decoration={snake, segment length=3mm, amplitude=0.5mm}] (2,0) to [out=135,in=0] ({0.707},{1*sin(45)});
    
    \draw[red,decorate, decoration={snake, segment length=3mm, amplitude=0.5mm}] ({1*cos(45)},{1*sin(45)}) to [out=-90,in=180] (2,0);
    
    \fill[black,decorate, decoration={snake, segment length=3mm, amplitude=0.5mm},opacity=0.6] ({0.707},{1*sin(45)}) to [out=-90,in=180] (2,0); 
    
    \fill[black,decorate, decoration={snake, segment length=3mm, amplitude=0.5mm},opacity=0.6] (2,0) to [out=135,in=0] ({0.707},{1*sin(45)});
    
    \draw[black,opacity=0.2] (2,0) -- ({0.707},{1*sin(45)});
    
    \fill[decorate, decoration={snake, segment length=3mm, amplitude=0.5mm},opacity=0.6] ({1.2*cos(90)},{1.2*sin(90)}) -- (({1.35*cos(150)},{1.35*sin(150)}) -- ({2*cos(120)},{2*sin(120)}) -- ({1.2*cos(90)},{1.2*sin(90)});
    
    \draw[red,decorate, decoration={snake, segment length=3mm, amplitude=0.5mm}] ({1.2*cos(90)},{1.2*sin(90)}) -- (({1.35*cos(150)},{1.35*sin(150)}) -- ({2*cos(120)},{2*sin(120)}) -- ({1.2*cos(90)},{1.2*sin(90)});
    
    \draw plot [mark=*, mark size=2] coordinates{({1.2*cos(90)},{1.2*sin(90)})};
    
    \draw plot [mark=*, mark size=2] coordinates{({1.35*cos(150)},{1.35*sin(150)})};
    
    \draw plot [mark=*, mark size=2] coordinates{({1*cos(45)},{1*sin(45)})};

    \draw plot [mark=*, mark size=2] coordinates{({2*cos(119)},{2*sin(121)+0.05})};
    
    \draw plot [mark=*, mark size=2] coordinates{(2,0)};
    
    \begin{scope}[rotate=60]
    \draw[domain=-120:-60,variable=\x,smooth,magenta,thick] plot ({3*sin(\x)}, {3*cos(\x)}) -- (-3*0.866,3*0.5) to [out=-30,in=30] (-3*0.866,-3*0.5);
    \end{scope}
    
   \draw[magenta,thick] plot [smooth cycle] coordinates {({2.3*cos(120)},{2.3*sin(120)}) ({1.5*cos(160)},{1.5*sin(160)}) (0.5,1.2)};
    
    \draw[magenta,thick] plot [smooth cycle] coordinates {(0.4,0.8) (1,1) (2.4,0) (1,-0.3)};
    
    \draw[purple,thick] ({2.598 },{3*sin(30)}) to [out=-110,in=110] ({2.598},{3*sin(-30)});
    
    \draw[purple,thick] plot [smooth cycle,tension=0.5] coordinates {(0.5,0.8) (1.2,1.2) (2.2,0) (1,-0.5)};
    
    \draw[ForestGreen,thick] plot [smooth cycle,tension=0.5] coordinates {(0.6,0.7) (1.5,1.1) (2.3,0) (1,-0.8)};
    
    \draw[ForestGreen,thick] plot [smooth,tension=0.5] coordinates {(0,3) (-0.1,2) (0.3,1) (-1,0.3) (-2,1.4) (-2.59,1.5)};
    
    \draw[purple,thick] plot [smooth cycle,tension=0.5] coordinates {({2.5*cos(120)},{2.3*sin(120)}) ({1.2*cos(160)},{1.2*sin(160)}) (0.4,1.5)};
    
\end{tikzpicture}
    \caption{An example of a focusing lightsheet in a spacetime that contains two singularities. The singularities are regions of the bulk where the lightsheet ends. The curves where the three past lightsheets $\mathscr{P}_1, \mathscr{P}_2, \mathscr{P}_3$ intersect $\mathscr{F}$ are drawn in three distinct colors. Each of these has an open component (the dividing curve), and lightsheets $\mathscr{P}_1$ and $\mathscr{P}_2$ have two closed components, while lightsheet $\mathscr{P}_3$ has only a single closed component. The left singularity is separated from $\gamma_{\mathcal{V}}$ by the dividing curve $\alpha_3$, while the right singularity is only separated from $\gamma_{\mathcal{V}}$ by closed curves.}
    \label{fig:two-singularities}
\end{figure}
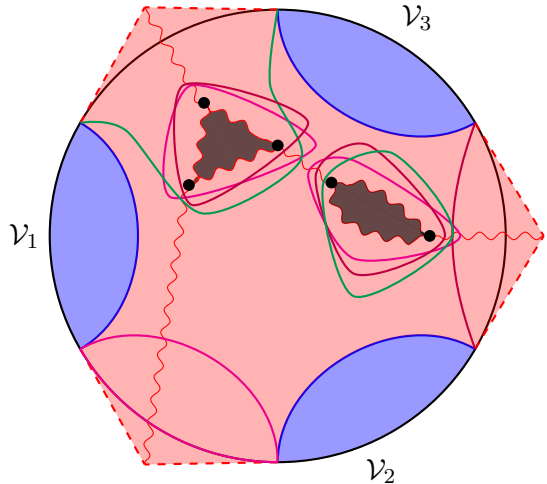

We will ignore any singularity that is separated from $\gamma_{\mathcal{V}}$ by a dividing curve; when we construct the ``good cross-section'' of $\mathscr{F}$ in the next subsection, we will see that these singularities are automatically taken care of by the portions of the good cross-section that lie in dividing curves.

Pick a singularity such that for all $j,$ the singularity is separated from $\gamma_{\mathcal{V}}$ by a closed curve.
These $n$ different closed curves may intersect each other in complicated ways; see again figure \ref{fig:two-singularities}.
We claim, however, that there can be at most one dividing curve $\alpha_j$ that intersects any of the closed curves.
If there were two distinct dividing curves that intersected any of the closed curves, then as in the previous subsection, we would be able to construct a cut across $\mathscr{F}$ for which only a finite set of points is in $J^-(E_{\mathcal{W}_1}) \cap \dots \cap J^-(E_{\mathcal{W}_n}),$ which contradicts the main lemma of the previous subsection.
The construction of such a curve is shown in figure \ref{fig:no-singularity-intersections}.

\begin{figure}
    \centering
    
    \begin{tikzpicture}
    \begin{scope}[rotate=90]
    \draw[thick] (0,0) circle (3);
     
    \begin{scope}[rotate=0]
    \draw[domain=-45:45,variable=\x,smooth, fill=blue,opacity=0.4] plot ({3*cos(\x)}, {3*sin(\x)}) to [out=-135,in=135] (2.12,-2.12);
    \draw[thick,blue] (2.12,2.12) to [out=-135,in=135] (2.12,-2.12);
    \end{scope}
    
    \begin{scope}[rotate=180]
    \draw[domain=-45:45,variable=\x,smooth, fill=blue,opacity=0.4] plot ({3*cos(\x)}, {3*sin(\x)}) to [out=-135,in=135] (2.12,-2.12);
    \draw[thick,blue] (2.12,2.12) to [out=-135,in=135] (2.12,-2.12);
    \end{scope}
    
    \node[below] at (-3,0) {$\mathcal{V}_1$};
    \node[above] at (3,0) {$\mathcal{V}_2$};
    
    \draw[red,thick,dashed] (-2.12,2.12) -- (0,4) -- (2.12,2.12);
    \draw[red,thick,dashed] (-2.12,-2.12) -- (0,-4) -- (2.12,-2.12);
    
    \fill[red,opacity=0.3,domain=135:45,variable=\x,smooth] plot ({3*cos(\x)}, {3*sin(\x)}) -- (2.12,2.12) to [out=-135,in=135] (2.12,-2.12);
    \begin{scope}[rotate=180]
    \fill[red,opacity=0.3,domain=135:45,variable=\x,smooth] plot ({3*cos(\x)}, {3*sin(\x)}) -- (2.12,2.12) to [out=-135,in=135] (2.12,-2.12);
    \end{scope}
    
    \draw[red,decorate, decoration={snake, segment length=3mm, amplitude=0.5mm}] (0,4) -- (0,1);
    \draw[red,decorate, decoration={snake, segment length=3mm, amplitude=0.5mm}] (0,-4) -- (0,-1);
    
    \draw[red,decorate, decoration={snake, segment length=3mm, amplitude=0.5mm}] (0,-1) to [out=45,in=-45] (0,1) to [out=-135,in=135] (0,-1);
    
    \fill[black,decorate, decoration={snake, segment length=3mm, amplitude=0.5mm},opacity=0.6] (0,-1) to [out=45,in=-45] (0,1) to [out=-135,in=135] (0,-1);
    
    \draw plot [mark=*, mark size=2] coordinates{(0,1)};
    \draw plot [mark=*, mark size=2] coordinates{(0,-1)};
    
    \draw[ForestGreen,thick] (-2.12,-2.12) to [out=25,in=155] (2.12,-2.12);
    \draw[purple,thick] (-2.12,2.12) to [out=-25,in=-155] (2.12,2.12);
    
    \draw[ForestGreen,thick] plot [smooth cycle,tension=1] coordinates {(0,2.25) (0.75,0.5) (0,-1.4) (-0.75,0.5)};
    
    \begin{scope}[shift={(0,-1)}]
    \draw[purple,thick] plot [smooth cycle,tension=1] coordinates {(0,2.25) (0.75,0.5) (0,-1.4) (-0.75,0.5)};
    \end{scope}
    
    \draw[decorate, decoration={zigzag, segment length=2mm, amplitude=0.3mm}] (-2,2.2) to [out=-25,in=170] (-0.4,1.7) to [out=-100,in=100] (-0.6,0) to [out=-100,in=100]  (-0.4,-1.7) to [out=-170,in=25] (-2,-2.2);
    
    \end{scope}
    
    \end{tikzpicture}
    \caption{This figure illustrates a contradiction that would arise if more than one dividing curve intersected the closed curves wrapping a singularity. The black zigzag curve in this figure lies in $J^-(E_{\mathcal{W}_1}) \cap J^-(E_{\mathcal{W}_2})$ only at a single set of points, which contradicts the lemma that there exists an open family of curves on $\mathscr{F}$ connecting $E_{\mathcal{V}_1}$ to $E_{\mathcal{V}_2}$.}
    \label{fig:no-singularity-intersections}
\end{figure}
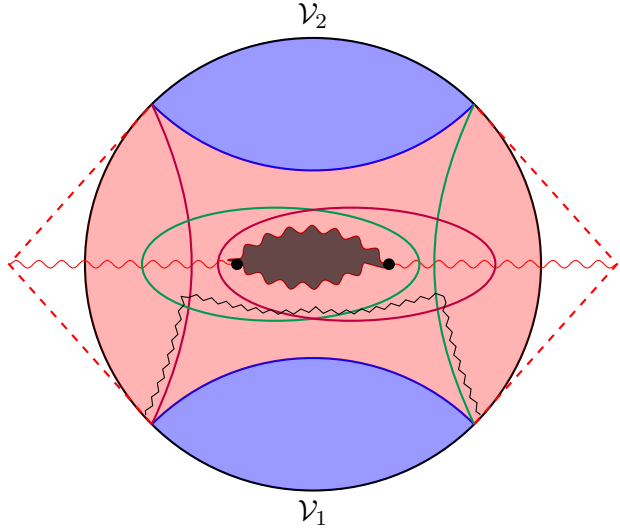

Let $j$ be the index such that $\alpha_j$ is the only dividing curve that intersects any of the closed curves wrapping our chosen singularity.
(If there are no such curves, pick any $j$.)
Since $\mathscr{P}_j \cap \mathscr{F}$ does not self-intersect, the closed curve wrapping the singularity in $\mathscr{P}_j \cap \mathscr{F}$ does not intersect any dividing curve.

We now have a set of $n$ dividing curves $\{\alpha_1, \dots, \alpha_n\},$ and a closed curve in some $\mathscr{P}_j \cap \mathscr{F}$ wrapping our chosen singularity, with the property that none of these curves intersect one another.
If there remains a singularity on $\mathscr{F}$ that is not separated from $\gamma_{\mathcal{V}}$ by a dividing curve or by the singularity-wrapping curve we have already chosen, we can proceed as above to find a closed curve in some $\mathscr{P}_k$ that separates the singularity from $\gamma_{\mathcal{V}}$ and that does not intersect $\{\alpha_1, \dots, \alpha_n\}$ or the singularity-wrapping curve in $\mathscr{P}_j.$

Proceeding by induction, we can add to $\{\alpha_1, \dots, \alpha_n\}$ a set of closed curves on $\mathscr{F}$ such that each curve is in some $\mathscr{P}_j,$ such that no two curves intersect, and such that every singularity on $\mathscr{F}$ is separated from $\gamma_{\mathcal{V}}$ by one of the curves in our set.
We will use the symbol $\mathscr{C}$ to denote the union of all these curves.

\subsubsection*{A good cross-section}

We are now ready to construct a good cross-section of $\mathscr{F}.$
For each generator of $\mathscr{F},$ we truncate it as soon as it hits a point in the set $\mathscr{C}$ constructed in the previous subsection.
Some generators of $\mathscr{F}$ will never hit $\mathscr{C}$; these generators hit a caustic first and leave $\mathscr{F}$.
(The only other possibility would be for the generators to hit a singularity, but we know by construction that every generator that would hit a singularity must hit a point in $\mathscr{C}$ first.)
The generators that hit caustics before hitting $\mathscr{C}$ are truncated at their caustics.
This gives a prescription for where to truncate every single generator of $\mathscr{F},$ with the truncation set consisting of points in $\mathscr{C}$, and some set of caustics on $\mathscr{F}.$

We claim that the full set $\mathscr{C}$ is included in the truncation set, \textit{except} possibly for some null segments of $\mathscr{C}$ that can be added to the truncation set without incurring any increase in area.
This follows immediately from the statement given above that no two dividing curves can be null separated along a single generator, together with an analogous argument applied to the singularity-wrapping curves constructed in the previous subsection.
So the only possible redundancies, where multiple points in $\mathscr{C}$ set share a single generator, happen on null segments of a single curve.

We also claim that the caustic set is nontrivial whenever $E_{\mathcal{V}}$ is disconnected; this follows from the fact that the curves making up $\mathscr{C}$ do not intersect, so they \textit{cannot} by themselves form a complete cross-section of $\mathscr{F}$; there will always be space between components of $\mathscr{C}$ that needs to be filled with caustics. See figure \ref{fig:needed-caustics}.

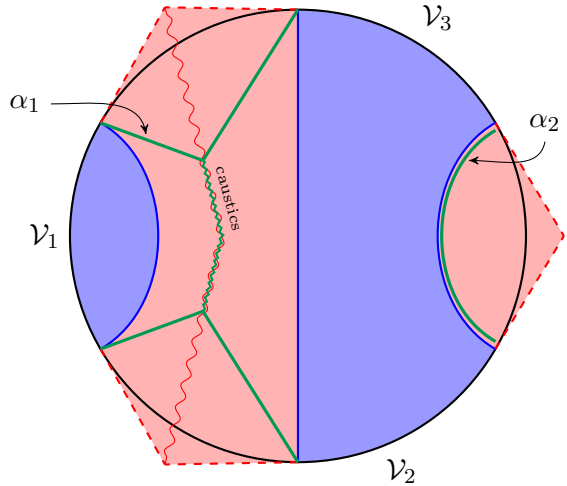
\begin{figure}
    \centering
    \begin{tikzpicture}

    \draw[domain=-120:-60,variable=\x,smooth, fill=blue,opacity=0.4] plot ({3*sin(\x)}, {3*cos(\x)}) -- (-3*0.866,3*0.5) to [out=-30,in=30] (-3*0.866,-3*0.5);
    
    \fill[red,opacity=0.3] ({2.598},{1.5}) -- (3.5,0) -- ({2.598},{-1.5}) to [out=150,in=-150] ({2.598},{1.5});
    
    \fill[blue,opacity=0.4,domain=30:-30] (0,3) plot ({3*cos(\x+60)},{3*sin(\x+60)}) to [out=-150,in=150] ({2.598},{-1.5}) plot ({3*cos(\x-60)},{3*sin(\x-60)}) -- (0,3);
    
    \fill[red,opacity=0.3,domain=30:-30] ({-2.598},{1.5}) -- ({3.5*cos(120)},{3.5*sin(120)}) -- (0,3) -- (0,-3) -- ({3.5*cos(-120)},{3.5*sin(-120)}) -- ({-2.598},{-1.5}) to [out=30,in=-30] ({-2.598},{1.5});
    
    \draw[thick] (0,0) circle (3);
    
    \draw[thick,blue] (-3*0.866,3*0.5) to [out=-30,in=30] (-3*0.866,-3*0.5);
    \draw[thick,blue] (0,3) to (0,-3);
    \draw[thick,blue] (2.598,1.5) to [out=-150,in=150] (2.598,-1.5);
    
    \draw[very thick,ForestGreen] (2.6,1.4) to [out=-150,in=150] (2.6,-1.4);
    
    \node[left] at (-3,0) {$\mathcal{V}_1$};
    \node[above right] at (3*0.5,3*0.866) {$\mathcal{V}_3$};
    \node[below right] at ({3*cos(-70)},{3*sin(-70)}) {$\mathcal{V}_2$};
    
    \draw[red,thick,dashed] ({3*cos(30)},{3*sin(30)}) -- (3.5,0);
    \draw[red,thick,dashed] ({3*cos(30)},{3*sin(-30)}) -- (3.5,0);
    
    \draw[red,thick,dashed] ({3*cos(150)},{3*sin(150)}) -- ({3.5*cos(120)},{3.5*sin(120)});
    \draw[red,thick,dashed] ({3*cos(90)},{3*sin(90)}) -- ({3.5*cos(120)},{3.5*sin(120)});
    \draw[red,decorate, decoration={snake, segment length=3mm, amplitude=0.5mm}] (-1,0) -- ({3.5*cos(120)},{3.5*sin(120)});
    
    \draw[red,thick,dashed] ({3*cos(-150)},{3*sin(-150)}) -- ({3.5*cos(-120)},{3.5*sin(-120)});
    \draw[red,thick,dashed] ({3*cos(-90)},{3*sin(-90)}) -- ({3.5*cos(-120)},{3.5*sin(-120)});
    \draw[red,decorate, decoration={snake, segment length=3mm, amplitude=0.5mm}] (-1,0) -- ({3.5*cos(-120)},{3.5*sin(-120)});
    
    \draw[very thick,ForestGreen] (-2.598,1.5) -- (-1.25,1) -- (0,3);
    \draw[very thick,ForestGreen] (-2.598,-1.5) -- (-1.25,-1) -- (0,-3);
    \draw[thick,ForestGreen,decorate, decoration={zigzag, segment length=1mm, amplitude=0.2mm}] (-1.25,1) -- (-1,0) -- (-1.25,-1);
    
    \node[rotate=-75] at (-0.9,0.5) {\tiny{caustics}};
    
    \draw[->] (3.25,1.25) to [out=-100,in=0] (2.25,1);
    \node[above] at (3.25,1.25) {$\alpha_2$};
    
    \draw[->] (-3.25,1.75) to [out=0,in=100] (-2,1.35);
    \node[left] at (-3.25,1.75) {$\alpha_1$};
    
\end{tikzpicture}    
    \caption{If the entanglement wedge is not fully connected, then the fact that the dividing curves do not intersect means any complete cross-section of $\mathscr{F}$ will need to include caustics.}
    \label{fig:needed-caustics}
\end{figure}

From these considerations, we conclude that $\mathscr{C}$, supplemented with a nontrivial caustic set and up to null segment redundancies, is a complete cross-section of $\mathscr{F}.$
It follows from the focusing theorem that $\mathscr{C}$ has area strictly less than $\gamma_{\mathcal{V}}.$

Furthermore, we can conclude that $\mathscr{C}$ is homologous to $\gamma_{\mathcal{V}}$ through $\mathscr{F}.$
The curve $\mathscr{C}$ is constructed by deforming $\gamma_{\mathcal{V}}$ continuously along $\mathscr{F},$ and deleting points on caustics.
The deleted points do not ruin the homology condition, since caustic points occur where multiple generators of $\mathscr{F}$ collide --- since these points are approached by our deformation on two sides, they are homologically trivial.

\subsubsection*{Completing the proof}

We have constructed a curve $\mathscr{C}$ on $\mathscr{F}$ that is homologous to $\gamma_{\mathcal{V}}$ with strictly smaller area.
Furthermore, each component of $\mathscr{C}$ is either a closed curve, or has endpoints on the maximin slice $\Sigma$, and lies entirely in a single past lightsheet $\mathscr{P}_j.$
We will now use $\mathscr{C}$ to construct a curve in $\Sigma$ homologous to $\gamma_{\mathcal{V}}$ that has strictly less area.

Let $\mathcal{R}$ be the homology region on $\mathscr{F}$ between $\mathscr{C}$ and $\gamma_{\mathcal{V}}$.
Consider the curve
\begin{equation}
    \tilde{\gamma} = \del J^-(\mathcal{R}) \cap \Sigma - (\gamma_{\mathcal{V}} - \mathscr{C}).
\end{equation}
The past lightsheet $\del J^-(\mathcal{R})$ coincides, near a point in $\mathscr{C}$, with the past lightsheet $\mathscr{P}_j$ containing that point.
So by the focusing theorem applied to $\mathscr{P}_j,$ the expansion of $\del J^-(\mathcal{R}),$ as measured ``toward the past,'' is initially negative.
The focusing theorem then implies that $\tilde{\gamma}$ has area no greater than the area of $\mathscr{C},$ and thus has area strictly less than $\gamma_{\mathcal{V}}.$

The final step to complete the proof is to show that $\tilde{\gamma}$ is homologous to $\gamma_{\mathcal{V}}$.
But $\mathscr{C}$ is homologous to $\gamma_{\mathcal{V}},$ and the only way the homology condition could be ``lost'' during focusing would be for some generators of $\del J^-(\mathcal{R})$ to hit the spacetime boundary before hitting $\Sigma$; in this case, $\tilde{\gamma}$ would be homologous to the union of $\gamma_{\mathcal{V}}$ and some set of boundary points.

But this is forbidden, since $\mathscr{C}$ is a subset of $\mathscr{F},$ and no point of $\mathscr{F}$ is in the timelike future of any point in $E_{\mathcal{V}}.$
By lemma \ref{lem:causal-timelike-addition}, this implies that no point of $J^-(\mathcal{R})$ is in the timelike future of any point in $E_{\mathcal{V}}.$
In particular, no point in $J^-(\mathcal{R})$ is in the timelike future of any point in $\mathcal{V}_{j}$ for any $j.$
So the futuremost boundary points that could possibly be in $J^-(\mathcal{R})$ are the futuremost boundary points that are not in the timelike future of any $\mathcal{V}_{j}.$ These are the horizons of the intermediary regions $X_j$ --- which are part of $\Sigma$ due to our application of the restricted maximin formula --- and the past horizons of the input regions $\mathcal{V}_{j},$ which are either in $\Sigma$ or in its past depending on our choice.

\section{Quantum tasks and holography}\label{sec:tasks-section}

In this section we discuss the ``quantum tasks'' framework for studying relativistic quantum information processing, and its application to holography.
Much of the material in this section has appeared before in \cite{may2019quantum, may2020holographic, may2021holographic}, but the discussion has been adapted in light of insights from the present paper.
The framework introduced in this section will be applied in section \ref{sec:QI-argument} to study information-theoretic aspects of the $n$-to-$n$ connected wedge theorem proved in section \ref{sec:GR-proof}.

The main goal of this section is to argue that AdS/CFT provides a way to map bulk tasks to boundary tasks, and that this map can be framed as a map from bulk causal networks with no entanglement to boundary causal networks with the CFT state as a resource.
This allows us to use ideas from AdS/CFT to study which quantum operations can be completed on which causal networks, and which entanglement resources are needed to make this happen.

\subsection{Information processing in causal networks and quantum tasks}
\label{sec:tasks-background}

Much of quantum information theory is concerned with how information processing is constrained by the laws of quantum mechanics.
Relativistic quantum information theory arises when one adds the additional constraints of special relativity, and concerns how information can be processed in spacetime.
To formulate this theory, it is useful to use an operational language by introducing two parties named Alice and Bob.
Alice and Bob are each agencies, with many agents. 
Each agent may move through spacetime and act according to the ordinary rules of quantum mechanics and relativity. 
We consider scenarios where Alice and Bob together carry out a \emph{relativistic quantum task} \cite{kent2012quantum}.\footnote{We also point out that a quantum task can viewed as a generalization of a \emph{quantum game}, see e.g. \cite{cooney2015rank}. A particular quantum task was studied from the perspective of quantum games in \cite{junge2022geometry}.}
Our definition roughly follows \cite{may2021holographic}, though we have developed the definition given there to suit the needs of the present paper. 
\begin{definition} \label{def:quantum-task}
A \textbf{relativistic quantum task} is defined by a tuple
\begin{equation}
    \mathbf{T}=\{\mathcal{M}, \Sigma_0, \Sigma_1, \ket{\Psi}_0, \{C_j\}_{j=1}^{m}, \{R_j\}_{j=1}^{n}, A, B, \{W_s\}, \{\ket{\Phi_s}_{A W_s}\}, \{\Lambda_{B W_s}\}\},
\end{equation} where:
\begin{itemize}
    \item $\mathcal{M}$ is a globally hyperbolic spacetime.
    \item $\Sigma_0$ and $\Sigma_1$ are Cauchy slices of $\mathcal{M}$, with $\Sigma_1$ in the future of $\Sigma_0.$
    \item $\ket{\Psi}_{0}$ is a distributed quantum resource state on $\Sigma_0$, for example the state of any quantum fields present in the spacetime.
    \item $\{C_j\}_{j=1}^{m}$ is a set of disjoint \textit{input regions} on $\Sigma_0,$ and $\{R_j\}_{j=1}^{n}$ is a set of disjoint \textit{output regions} on $\Sigma_1.$
    These regions have the property that either $C_j \subseteq J^-(R_k)$ or $C_j \subseteq J^-(R_k)^c$, and $R_{k} \subseteq J^+(C_j)$ or $R_{k} \subseteq J^+(C_j)^{c}$.\footnote{This requirement is added so that we never have an input location $C_j$ such that part of that region can signal $R_k$ and part cannot. This ensures the input systems causal relationship to the output location is not dependent on how it is recorded into $C_j$.}
    \item $A=A_1 A_2 \dots A_m$ and $B = B_1 B_2 \dots B_n$ are \textit{input and output quantum systems}, i.e., $m$-partite and $n$-partite Hilbert spaces.
    \item $\{W_s\}$ is a set of reference systems that can be held by Bob, indexed by some label $s.$
    \{$\ket{\Phi_s}_{A W_s}$\} is a set of quantum states that can be distributed across $A$ and $W_s.$
    \item For each $s,$ $\{\Lambda_{B W_s}, 1 - \Lambda_{B W_s}\}$ is a POVM on the output systems $B$ and the reference system $W_s.$
\end{itemize}
An \textbf{instantiation} of the task $\mathbf{T}$ is an event where Bob picks a value of $s$ randomly, using a distribution known to him and Alice, and prepares systems $A_1 \dots A_m W_s$ in the state $\ket{\Phi_s}_{A W_s}.$
Alice is not told $s$.
Bob gives each system $A_j$ to Alice in the spacetime region $C_j$, and keeps system $W_s$ for himself. 
Alice returns some state on the quantum systems $B_1 \dots B_n$ at the output locations $R_1$ through $R_n.$
Bob measures the POVM $\{\Lambda_{B W_s}, 1-\Lambda_{B W_s}\}$ and declares the task successful if he finds $\Lambda_{B W_s}$. 
\end{definition}
Intuitively, Alice is given an unknown state distributed across certain spacetime locations, and is tasked with returning another state to another set of spacetime locations.
Bob then performs a measurement to determine whether she has performed his task successfully.

We believe (though have not proved) that all the causal data that determines whether a particular task can be completed is contained in an associated causal network, constructed as follows.
The light cones $\{\del J^-(R_j), \del J^+(C_k)\},$ restricted to $\Sigma_0,$ divide $\Sigma_0$ into multiple components.
Some of these components are the input regions $C_k,$ while some are not.
For each component, we create an input vertex in our causal network.
We create an output vertex for each output region $R_j.$
We then consider the spacetime
\begin{equation}
    \mathcal{M}_{\mathbf{T}} = J^+(\Sigma_0) \cap (J^-(R_1) \cup \dots \cup J^-(R_n)).
\end{equation}
The light cones $\{\del J^-(R_j), \del J^+(C_k)\}$ divide $\mathcal{M}_{\mathbf{T}}$ into spacetime components.
For each such component, we add an intermediate vertex to our network.
For any two vertices in our network, corresponding to spacetime sets $S_1$ and $S_2$, we add a directed edge from $S_1$ to $S_2$ if there exists points $p_1\in S_1$, $p_2\in S_2$ with $p_2\in J^+(p_1)$, i.e. if it is possible to signal from somewhere in $S_1$ to somewhere in $S_2$.\footnote{In fact, because our regions are defined as the components of $\mathcal{M}_\mathbf{T}$, this implies $S_2 \subseteq J^+(S_1) \quad \text{and} \quad S_1 \subseteq J^-(S_2)$, i.e. that any point in $S_1$ can signal some point in $S_2$.}

The causal network thus produced can be simplified.
If an edge connects two vertices that are also connected by a directed path that passes through other vertices, that edge can be removed, because sending a quantum system directly from $S_1$ to $S_2$ will never let you perform a task that can't also be accomplished by sending that system from $S_1$ to $S_3$ to $S_2.$
Furthermore, if a vertex has only one outgoing edge or only one ingoing edge, it can be fused with its unique outgoing or ingoing neighbor.
Applying these two simplifying procedures recursively yields a maximally simple causal network that contains all the causal constraints we think are relevant to $\mathbf{T}.$
\begin{definition}
    The causal network associated to a quantum task $\mathbf{T},$ constructed via the procedure we have just outlined, is denoted $\Gamma_{\mathbf{T}}.$
\end{definition}
\noindent The causal network $\Gamma_{\mathbf{T}}$ should be thought of as a coarse-graining of the geometry $\mathcal{M}$ that preserves the causal features of $\mathcal{M}$ relevant for completing an instance of $\mathbf{T}.$

A useful quantity that characterizes a quantum task is its success probability, $p_{\text{suc}}(\mathbf{T})$.
This is the probability that the outputs pass Bob's POVM test, optimized over choices of Alice's strategy and averaged over the distribution over $s$.
The success probability of a quantum task is sensitive to Bob's distribution of POVMs, the causal network $\Gamma_\mathbf{T}$, and the resource state $\ket{\Psi}_{0}$. 
Two example networks are shown in figure \ref{fig:2-2networksrepeated}. 
In the network on the left, the inputs and outputs can be made to be related by an arbitrary quantum operation, simply by bringing the inputs together, doing the needed operation, and sending the outputs to their respective locations. 
In the network on the right, the causal structure is now weaker, but the same operations can be performed with arbitrarily high $p_{\text{suc}}$ if a sufficient number of EPR pairs are shared between the two input locations \cite{buhrman2014position,beigi2011simplified}. 

\begin{figure}
    \centering
    \subfloat[]{
    \begin{tikzpicture}[scale=1]
    
    \begin{scope}[shift={(0,0)}]
    \draw[mid arrow, red] (0,0) -- (2,2);
    \draw[mid arrow, red] (0,0) -- (-2,2);
    \draw[mid arrow, red] (-2,-2) -- (0,0);
    \draw[mid arrow, red] (2,-2) -- (0,0);
    
    \draw plot [mark=*, mark size=2] coordinates{(0,0)};
    \draw plot [mark=*, mark size=2] coordinates{(-2,-2)};
    \draw plot [mark=*, mark size=2] coordinates{(2,2)};
    \draw plot [mark=*, mark size=2] coordinates{(-2,2)};
    \draw plot [mark=*, mark size=2] coordinates{(2,-2)};
    \end{scope}
    
    \end{tikzpicture}
    }
    \hfill
    \subfloat[\label{fig:2-2repeatednlqc}]{
    \begin{tikzpicture}[scale=1]
    
    \draw[mid arrow, red] (-2,-2) -- (0,0);
    \draw[red, mid arrow] (0,0) -- (2,2);
    \draw[mid arrow, red] (2,-2) -- (0.1,-0.1);
    \draw[mid arrow, red] (-0.1,0.1) -- (-2,2);
    \draw[mid arrow, red] (-2,-2) -- (-2,2);
    \draw[mid arrow, red] (2,-2) -- (2,2);
    
    \draw plot [mark=*, mark size=2] coordinates{(-2,-2)};
    \draw plot [mark=*, mark size=2] coordinates{(2,2)};
    \draw plot [mark=*, mark size=2] coordinates{(-2,2)};
    \draw plot [mark=*, mark size=2] coordinates{(2,-2)};
    
    \end{tikzpicture}
    }
    \caption{Repeated from figure \ref{fig:causalnetworks}. Two possible causal networks, both with two input and two output locations. a) Arbitrary tasks can be completed by bringing the input systems together at the central vertex. b) Arbitrary tasks can be completed on this network, but only when sufficient entanglement is shared between the input locations.}
    \label{fig:2-2networksrepeated}
\end{figure}
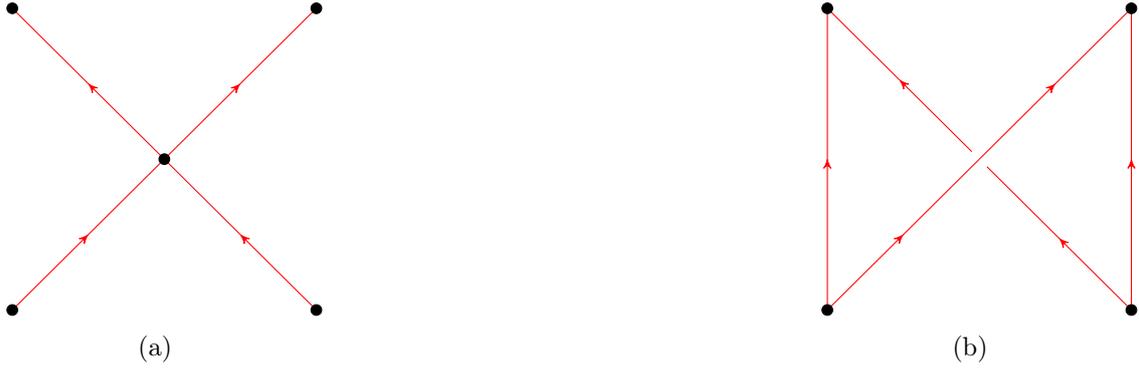

Quantum tasks can be defined directly on causal networks, rather than in spacetime, in an obvious way.
One useful theorem is that, in the presence of unrestricted preshared entanglement, the success probability of a task is completely determined by the set of paths connecting input and output vertices, without any data about the intermediate vertices.
This is the content of a theorem suggested in \cite{buhrman2014position}, and proven in \cite{dolev2019constraining}. 
\begin{theorem}[Dolev] \label{thm:dolev}
    Consider a quantum task
    \begin{equation}
    \mathbf{T}=\{\Gamma, \ket{\Psi}_0, A, B, \{W_s\}, \{\ket{\Phi_s}_{A W_s}\}, \{\Lambda_{B W_s}\}\}.
    \end{equation}
    Let $\tilde{\Gamma}$ be the causal network whose input and output vertices are the input and output vertices of $\Gamma$, which has no intermediary vertices, and for which there is an edge from an input vertex to an output vertex if and only if there is a directed path connecting the corresponding vertices in $\Gamma.$
    
    Then there exists a one parameter family of resource states $\ket{\Psi}_n$ such that the tasks
    \begin{equation}
    \mathbf{T}_n=\{\tilde{\Gamma}, \ket{\Psi_n}_0, A, B, \{W_s\}, \{\ket{\Phi_s}_{A W_s}\}, \{\Lambda_{B W_s}\}\}.
    \end{equation}
    satisfy
    \begin{equation}
        p_{\text{suc}}(T)=\lim_{n\rightarrow \infty} p_{\text{suc}}(T_n).
    \end{equation} 
\end{theorem}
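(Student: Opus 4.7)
The plan is to construct the family of resource states $\ket{\Psi}_n$ by augmenting $\ket{\Psi}_0$ with ancillary Bell pairs that enable \emph{instantaneous non-local quantum computation} (INQC), so that any strategy on $\Gamma$ can be simulated on $\tilde{\Gamma}$ with asymptotically vanishing error.

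First, I would fix a strategy for $\mathbf{T}$ on $\Gamma$ achieving success probability within $\epsilon/2$ of $p_{\mathrm{suc}}(\mathbf{T})$, consisting of channels $\mathbfcal{N}_v$ at each vertex with quantum systems flowing along $\Gamma$'s edges. By construction $\tilde{\Gamma}$ contains a direct edge $C_j \to R_k$ exactly when $\Gamma$ contains a directed path $C_j \to \cdots \to R_k$, so the two networks agree on input-output signalling; the only thing ``lost'' in passing to $\tilde{\Gamma}$ is the ability to act non-trivially at intermediate vertices.

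Next, I would invoke port-based teleportation (PBT): with $n$ preshared EPR pairs, a $d$-dimensional qudit can be transmitted so as to appear in one of $n$ ports on the receiver's side, with simulation error $O(d^2/n)$ and, crucially, \emph{no correction unitary}. Because no correction is needed, the receiver may process the teleported state before learning the port label, which allows teleportations to be composed without later operations having to wait on earlier ones. Using this, any channel $\mathbfcal{N}_v$ at an intermediate vertex of $\Gamma$ can be ``pushed'' onto the senders: each incoming system is port-teleported forward, $\mathbfcal{N}_v$ is applied to every port in parallel, and the port labels are forwarded as classical data along the outgoing edges. Iterating this replaces every intermediate quantum operation in $\Gamma$ by preshared entanglement plus classical communication, yielding a protocol on $\tilde{\Gamma}$ that uses only resources compatible with its direct input-to-output edges. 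This is the standard INQC construction of \cite{beigi2011simplified, buhrman2014position}.

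Taking $\ket{\Psi}_n$ to be $\ket{\Psi}_0$ tensored with the EPR pairs demanded by the PBT gadgets at parameter $n$, the simulation error decays as $n \to \infty$. The principal obstacle is controlling the compounding of PBT errors across nested teleportations: the same qudit is typically teleported several times in sequence, and the effective teleported Hilbert space dimension grows when many channels are composed. I would handle this by first approximating the optimal $\Gamma$-strategy by a circuit of bounded size acting on systems of bounded dimension --- possible up to an additional $\epsilon/2$ loss by continuity of $p_{\mathrm{suc}}$ in the channels --- and then choosing $n$ large enough that the total accumulated error is below $\epsilon/2$. This gives $p_{\mathrm{suc}}(\mathbf{T}_n) \ge p_{\mathrm{suc}}(\mathbf{T}) - \epsilon$ for all sufficiently large $n$. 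For the reverse inequality, I would arrange $\ket{\Psi}_n$ so that its ancillary EPR pairs are supported only between vertices whose interaction the PBT gadgets actually exercise; any $\tilde{\Gamma}$-strategy can then be simulated on $\Gamma$ with $\ket{\Psi}_0$ by routing the PBT classical messages along the corresponding $\Gamma$-paths and discarding unused ancillas, giving $p_{\mathrm{suc}}(\mathbf{T}) \ge p_{\mathrm{suc}}(\mathbf{T}_n)$. Sending $\epsilon \to 0$ along a subsequence yields the claimed limit.
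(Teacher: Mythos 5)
This theorem is not proved in the paper; it is cited from Dolev \cite{dolev2019constraining}, so there is no ``paper's own proof'' to compare against line-by-line. The machinery you invoke --- Bell-basis teleportation, port-based teleportation, and the INQC constructions of Beigi--K\"onig and Buhrman et al.\ --- is exactly what the paper attributes to the cited proof, and a protocol with the same overall structure as yours (chain of teleport$^*$ in one direction, chain of port-teleport$^*$ undoing Pauli corrections on the way back, then acting on every port) appears in the paper's own Appendix B. Your forward direction is on the right track, and you are right to flag the compounding of port-teleportation errors under nesting: since $\Gamma$ is a fixed finite network, the nesting depth and the per-level growth of the teleported Hilbert space dimension are bounded, so an increase in $n$ does control the total error, as you say.

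The gap is in your reverse direction. You assert that any $\tilde{\Gamma}$-strategy with resource $\ket{\Psi}_n$ ``can then be simulated on $\Gamma$ with $\ket{\Psi}_0$ by routing the PBT classical messages along the corresponding $\Gamma$-paths and discarding unused ancillas.'' That would show only that the honest INQC strategy does not overshoot; it does not bound $p_{\mathrm{suc}}(\mathbf{T}_n)$, which is a supremum over \emph{all} strategies compatible with the resource $\ket{\Psi}_n$ on $\tilde{\Gamma}$. The ancillary EPR pairs in $\ket{\Psi}_n$ are shared between input vertices of $\tilde{\Gamma}$, and input vertices have no incoming edges in $\Gamma$, so there is no way to manufacture those pairs from $\ket{\Psi}_0$ during a $\Gamma$-protocol. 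A dishonest Alice on $\tilde{\Gamma}$ need not run your simulation: she can use the extra entanglement to exceed the unentangled value of a nonlocal-game-type POVM built into $\{\Lambda_{BW_s}\}$, which could push $\limsup_n p_{\mathrm{suc}}(\mathbf{T}_n)$ above $p_{\mathrm{suc}}(\mathbf{T})$. The paper only needs the $\geq$ direction for its application (in the proof of Theorem \ref{thm:2-to-all-sufficiency}, $p_{\mathrm{suc}}(\mathbf{T})=1$ and the $\leq$ direction is automatic), but your proposal claims to prove the equality in general, and the argument that the particular family $\ket{\Psi}_n$ you construct never overshoots is currently missing. One plausible fix is to restrict the extra EPR pairs so that they are only placed between input vertices that share a common descendant reaching all outputs in $\Gamma$, and then argue that such pairs confer no power on $\tilde{\Gamma}$ beyond what $\Gamma$ already has, but this needs to be argued carefully and is not obvious.
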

In the proof of this theorem, the state $\Psi_n$ chosen includes a number of distributed EPR pairs that grows with the parameter $n$. 
We use this theorem in proving theorem \ref{thm:2-to-all-sufficiency} below. 

\subsection{Holographic quantum tasks}\label{sec:tasks-holography}

The AdS/CFT correspondence asserts the equivalence of two theories, a bulk gravitational theory and a boundary conformal field theory. 
Because the success probability of a quantum task is sensitive to the initial state and to causal structure, quantum tasks can serve as interesting tools in understanding this correspondence. 
In particular, we can map quantum tasks in the bulk theory to quantum tasks in the boundary theory as follows.

Consider a quantum task in the bulk theory,
\begin{equation}
    \mathbf{T} = \{\mathcal{M}, \Sigma_0, \Sigma_1, \ket{\Psi}_0, \{C_j\}_{j=1}^{m}, \{R_j\}_{j=1}^{n}, A, B, \{W_s\}, \{\ket{\Phi_s}_{A W_s}\}, \{\Lambda_{B W_s}\}\}
\end{equation}
where the spacetime $\mathcal{M}$ together with the quantum field theory state $\ket{\Psi}_0$ is a semiclassical state of the AdS/CFT correspondence.\footnote{Note that while our formal definition of quantum tasks assumed that the spacetime was globally hyperbolic, there is no obstruction to defining quantum tasks on AdS-hyperbolic spacetimes so long as ``Cauchy slice'' is replaced with ``Cauchy slice for the conformal completion $\tilde{\mathcal{M}}.$''}
Because the boundary completely describes the bulk, any process which completes the task $\mathbf{T}$ completes some corresponding task in the boundary theory.
To identify a boundary task corresponding to $\mathbf{T},$ pick any boundary regions $\{\mathcal{C}_j, \mathcal{R}_j\}$ such that $C_j$ is contained in the entanglement wedge of $\mathcal{C}_j,$ and $R_j$ is contained in the entanglement wedge of $\mathcal{R}_j$; in the notation of other sections, this statement is $C_j \subseteq E_{\mathcal{C}_j}, R_j \subseteq E_{\mathcal{R}_j}.$
We now map the elements of the tuple defining $\mathbf{T}$ to boundary quantities as follows.
\begin{itemize}
    \item The bulk geometry $\mathcal{M}$ is replaced with its conformal boundary, $\partial \mathcal{M}$.
    \item The slices $\Sigma_0$ and $\Sigma_1$ are replaced by their restrictions to the conformal boundary, $\hat{\Sigma}_0$ and $\hat{\Sigma}_1.$
    \item The state $\ket{\Psi}_{0}$ describing the bulk fields is replaced with the state of the CFT on $\hat{\Sigma}_0$, which we denote by $\ket{\Psi}_{\text{CFT}}$.
    \item The input and output regions $C_j$ and $R_j$ are replaced by $\mathcal{C}_j$ and $\mathcal{R}_j.$
    \item The input systems $A$ and output systems $B$ are unchanged, as are $W_s,$ $\ket{\Phi_s}_{A W_s},$ and $\Lambda_{B W_s}.$
\end{itemize}

The reasoning behind this identification comes from entanglement wedge reconstruction \cite{CKNR, jafferis2016relative, dong2016reconstruction, cotler2019entanglement}.
Any systems recorded into $C_j$ in the bulk are recorded into $\mathcal{C}_j$ in the boundary; similarly, any systems recorded into $R_j$ in the bulk are recorded into $\mathcal{R}_j$ in the boundary.
Indeed, if the bulk protocol completes the task with some probability, it will also be true that the corresponding boundary protocol completes it with no smaller probability.
This leads to the inequality
\begin{align}\label{eq:psucinequality}
    p_{\text{suc}}(\mathbf{T}) \leq p_{\text{suc}}(\mathbf{\hat{T}}).
\end{align}
This is an inequality, and not an equality, because we defined our bulk task as something that takes place within a single semiclassical background; this means that $\mathbf{T}$ is defined within the strict large-N limit of the AdS/CFT correspondence.
The task $\mathbf{\hat{T}},$ on the other hand, is sensitive to finite-$N$ physics, like backreaction on spacetime.
As a result, while every bulk strategy can be performed in the boundary, not every boundary strategy can be performed in the bulk; this observation results in inequality \eqref{eq:psucinequality}.

Inequality \eqref{eq:psucinequality} is the starting point for several existing papers \cite{may2019quantum,may2020holographic,may2021holographic,may2021bulk,may2021quantum}, which use it to understand the relationship between bulk causal structure and boundary correlation. 
The basic strategy is to pick a task for which bulk causal structure allows $p_{\text{suc}}(\mathbf{T})$ to be large, use \eqref{eq:psucinequality} to lower bound $p_{\text{suc}}(\mathbf{\hat{T}})$, and then argue that the boundary needs some particular entanglement feature to allow for this large success probability.
We will now observe that a stronger inequality than \eqref{eq:psucinequality} holds; this stronger inequality will be used to argue for the $n$-to-$n$ connected wedge theorem in subsection \ref{sec:final-tasks-argument}.

Inequality \eqref{eq:psucinequality} follows from the fact that a strategy used to complete the task $\mathbf{T}$ also completes $\hat{\mathbf{T}}.$
We observed above that these are not the only strategies that can be used to complete $\hat{\mathbf{T}}$.
We now further remark that they are not the only \textit{semiclassical} strategies that can be used to complete $\hat{\mathbf{T}}.$
This is because, given access to a boundary region $\mathcal{C}_j$, a boundary agent can exploit entanglement wedge reconstruction to change the bulk state $\ket{\Psi}_0$ however they like within the entanglement wedges $E_{\mathcal{C}_j}$.
This action is a legal part of a boundary strategy for $\hat{\mathbf{T}},$ but it changes the bulk quantum task $\mathbf{T}$ to a different task $\mathbf{T}_{\Psi_0'}.$
Since any strategy for completing $\mathbf{T}_{\Psi_0'}$ can also be used to complete $\hat{\mathbf{T}},$ we may conclude
\begin{equation} \label{eq:psucnewinequality}
    p_{\text{suc}}(\mathbf{\hat{T}}) \geq \sup_{\Psi_0'} p_{\text{suc}}(\mathbf{T}_{\Psi_0'}),
\end{equation}
where $\ket{\Psi}_0'$ is any state of the bulk quantum fields that agrees with $\ket{\Psi}_0$ outside of $\cup_j E_{\mathcal{C}_j}.$

Throughout this paper, we have also tried to emphasize that equation \eqref{eq:psucinequality} can be used as a tool for understanding relativistic quantum information theory.
AdS/CFT gives us, for any bulk task, a way to map the causal networks $\Gamma_{\mathbf{T}}$ to a different causal network $\Gamma_{\mathbf{\hat{T}}}.$
Often, as in section \ref{sec:GR-proof}, we can use the HRRT formula to associate an entanglement structure for a resource state that should be assigned to $\Gamma_{\mathbf{\hat{T}}}.$
So theorems like theorem \ref{thm:n-to-n-first-statement} hand us a pair of causal networks --- one with no pre-shared entanglement\footnote{Really, the bulk state has an amount of entanglement that is $O(1)$ in the holographic parameter, but this is insignificant compared to the boundary's extensive entanglement.} and one with a specific pattern of entanglement --- whose success probability we know to be related, and motivates us to study these network pairs purely in quantum information terms.

The next section is dedicated to undertaking that study for the specific network pairs produced by theorem \ref{thm:n-to-n-first-statement}.

\section{Quantum information and the connected wedge theorem}
\label{sec:QI-argument}

In this section, we use the quantum tasks framework introduced in the previous section to give a boundary argument for the $n$-to-$n$ connected wedge theorem.
In section \ref{sec:arbitrarytasks} (with details in appendix \ref{sec:bulkprotocolappendix}), we show that a connected $2$-to-all causal graph suffices to perform an arbitrary quantum task in the bulk, so long as the system size does not scale with the holographic parameter.
In section \ref{sec:B84taskinbulk}, we choose a particular task (the $\textbf{B}_{84}^{n, \delta}$ task) that can be performed in any bulk spacetime with a connected $2$-to-all causal graph even when the system size scales with the holographic parameter.
In section \ref{sec:final-tasks-argument}, we argue that implementing this task on the boundary requires the pattern of entanglement implied by the $n$-to-$n$ connected wedge theorem.

\subsection{Arbitrary tasks on a connected 2-to-all causal graph}\label{sec:arbitrarytasks}

The input to the $n$-to-$n$ connected wedge theorem, theorem \ref{thm:n-to-n-first-statement}, is connectedness of the 2-to-all causal graph $\Gamma_{2 \rightarrow \text{all}}.$
The output of the theorem is a particular large-N correlation structure for the boundary state.
As emphasized in the previous section, we can use quantum tasks reasoning to argue a ``causality $\Rightarrow$ correlations'' implication whenever the causal structure can be used to complete an appropriate task with high $p_{\text{suc}}.$
This motivates us to study which tasks can be completed with high success probability on any causal network whose 2-to-all causal graph is connected, even with no resource state.
In fact, we will show now that \textit{any} task of the type ``apply a quantum chanel to the input state'' can be completed on such a network with arbitrarily high success probability.
In the following two subsections, we discuss a specific task that can be implemented efficiently on such networks, and use this to provide a quantum information argument for theorem \ref{thm:n-to-n-first-statement}.

\begin{theorem} \label{thm:2-to-all-sufficiency}
    Let $\Gamma$ be a causal network with $n$ input vertices and $n$ output vertices, for which the associated 2-to-all causal graph $\Gamma_{2 \rightarrow \text{all}}$ is connected.
    If quantum systems $A_1, \dots, A_n$ are distributed among the input vertices, and $B_1, \dots, B_n$ are requested at the output vertices, then any channel $\mathbfcal{N}_{A_1....A_n\rightarrow B_1...B_n}$ can be implemented on $\Gamma$ with arbitrarily high $p_{\text{suc}}$, without the use of pre-shared entanglement.
\end{theorem}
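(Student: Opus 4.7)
My plan is to construct an explicit protocol, similar in spirit to those of \cite{beigi2011simplified, dolev2019constraining}, that exploits the 2-to-all scattering structure of $\Gamma$ to implement $\mathbfcal{N}$ to arbitrary fidelity. The starting point is two geometric consequences of $\Gamma_{2\to\text{all}}$ being connected, obtained once a spanning tree $T$ of $\Gamma_{2\to\text{all}}$ is fixed. First, every input vertex $V_j$ is an endpoint of some edge of $T$, so it lies in the causal past of the associated ``2-to-all vertex'' $R_{j,k}$, which itself lies in the causal past of every output vertex; hence every input can signal every output through some $R_{j,k}$. Second, whenever an input vertex $V_\ell$ is shared between two tree-adjacent edges $(\ell, k_1), (\ell, k_2)$, it lies in the common causal past of $R_{\ell, k_1}$ and $R_{\ell, k_2}$, so entanglement generated at $V_\ell$ can be distributed to both adjacent 2-to-all vertices.

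The protocol then proceeds in four steps. In the routing step, each input system $A_j$ is sent forward to a designated adjacent 2-to-all vertex of $T$. In the entanglement generation step, at each input vertex $V_\ell$ many EPR pairs are prepared locally and one share of each is sent to each adjacent 2-to-all vertex; entanglement swapping, performed locally at each 2-to-all vertex, then upgrades these edge-local resources into an arbitrarily large multipartite entanglement resource shared across all 2-to-all vertices of $T$. In the computation step, a distributed NLQC / port-based teleportation primitive in the style of \cite{beigi2011simplified} and underlying Theorem \ref{thm:dolev} is applied: each 2-to-all vertex performs a joint measurement on its local inputs and its share of the entanglement resource and broadcasts the classical outcomes to every output vertex, which is possible because each 2-to-all vertex is in the causal past of every $W_m$. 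In the delivery step, each $W_m$ uses the classical data it has received, together with entanglement previously routed to it from the tree, to assemble $B_m$. Scaling the amount of entanglement per edge of $T$ drives the fidelity of the overall implementation of $\mathbfcal{N}$ to $1$, which is exactly the claim of the theorem.

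The main obstacle will be the distributed NLQC step. Standard port-based teleportation and Beigi-K\"onig style constructions assume a single entanglement resource preshared among a fixed set of parties who then receive local inputs, whereas in our setting the resource is assembled on the fly along edges of $T$ from EPR pairs generated at input vertices, and all classical and quantum communication must respect the forward causal ordering through the 2-to-all vertices. The hard part will be composing these pieces so that every input is correctly incorporated, the outputs are properly decomposed across the distinct $W_m$, and the total error can still be driven to zero by increasing the entanglement per edge; the detailed construction and its error analysis are the content of appendix \ref{sec:bulkprotocolappendix}.
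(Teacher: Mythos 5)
Your proposal is essentially the same argument as the paper's, differing mainly in graph-theoretic packaging: you work with a spanning tree $T$ of $\Gamma_{2\to\text{all}}$, while the paper passes to the line graph $L(\Gamma_{2\to\text{all}})$, notes its connectedness, and takes a walk in it that visits every vertex. In both cases the essential observation is that an input vertex incident to two edges of $\Gamma_{2\to\text{all}}$ lies in the common causal past of the two associated $2$-to-all scattering regions and so can seed an EPR pair between them; connectedness then yields a chain (or, in your version, a tree) of EPR pairs linking all scattering regions, at which point the Dolev / Beigi--K\"onig chained-teleportation primitive finishes the job. Two small points worth tightening: (i) the entanglement-swapping step at the $2$-to-all vertices is superfluous --- an input vertex $V_\ell$ can directly prepare an EPR pair and ship each half to a distinct adjacent scattering region, so no in-place swapping or classical correction at the intermediary layer is needed; and (ii) the chained teleportation protocol in appendix \ref{sec:bulkprotocolappendix} assumes a \emph{linear} ordering of the scattering regions, so your spanning tree $T$ should be replaced by a walk that visits every vertex (a depth-first traversal of $T$ suffices), which is exactly what the paper's line-graph walk furnishes automatically.
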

\begin{remark}
    Informally, theorem \ref{thm:2-to-all-sufficiency} shows that an $m\rightarrow n$ causal connection does not allow more tasks to be completed than an appropriate set of $2\rightarrow n$ causal connections, even without pre-shared entanglement. 
\end{remark}
\begin{proof}
    We begin constructing a new causal network $\tilde{\Gamma}$.
    It will have one input vertex for each 2-to-all intermediary vertex in $\Gamma.$
    Its output vertices will be the output vertices of $\Gamma.$
    It has no intermediary vertices.
    Its edge structure will be very simple: there is one edge mapping each input to each output.

    Since $\Gamma_{2 \rightarrow \text{all}}$ is connected, every input system $A_j$ can be sent to \textit{some} $2$-to-all causal vertex in $\Gamma.$
    By picking an arbitrary $2$-to-all causal vertex for each input system and sending it there, we can map any task on $\Gamma$ to a task on $\tilde{\Gamma}.$
    If we were to add an all-to-all intermediary vertex to $\tilde{\Gamma},$ then we could certainly use this enhanced network to apply an arbitrary quantum channel $\mathbfcal{N}_{A_1 \dots A_n \rightarrow B_1 \dots B_n}.$
    But theorem \ref{thm:dolev} tells us that any task that can be done on the network ``$\tilde{\Gamma}$ plus an all-to-all intermediary vertex,'' can be performed on $\tilde{\Gamma}$ directly with sufficient preshared entanglement.

    The key observation is that agents stationed at the input vertices of $\Gamma$ can create entangled states and distribute them among any intermediary vertices of $\Gamma$ that they are able to signal.
    In particular, this allows them to distribute entanglement among certain pairs of input vertices of $\tilde{\Gamma}.$
    A convenient tool for understanding which pairs is the \textit{line graph} $L(\Gamma_{2 \rightarrow \text{all}})$: the graph whose vertices are the edges of $\Gamma_{2 \rightarrow \text{all}},$ and where two vertices in $L(\Gamma_{2 \rightarrow \text{all}})$ share an edge if the corresponding edges of $\Gamma_{2 \rightarrow \text{all}}$ share a vertex.
    The vertices of $L(\Gamma_{2 \rightarrow \text{all}})$ are the same as the input vertices of $\tilde{\Gamma},$ and agents stationed at the input vertices of $\Gamma$ can distribute arbitrary entangled states along the edges of $L(\Gamma_{2 \rightarrow \text{all}}).$
    
    A standard theorem in graph theory says that if $\Gamma_{2 \rightarrow \text{all}}$ is connected, then $L(\Gamma_{2 \rightarrow \text{all}})$ is connected.
    Connectedness of $L(\Gamma_{2 \rightarrow \text{all}})$ allows us to draw a path in $L(\Gamma_{2 \rightarrow \text{all}})$ that touches every vertex (possibly multiple times).
    This means agents starting at input vertices in $\Gamma$ can distribute, among the input vertices of $\tilde{\Gamma}$, an arbitrary number of EPR pairs that trace out a path that includes every input vertex of $\tilde{\Gamma}.$
    But in the proof of theorem \ref{thm:dolev} presented in \cite{dolev2019constraining}, this pattern of entanglement suffices to execute the protocol given there to replace intermediary vertices.

    This establishes that connectedness of $\Gamma_{2 \rightarrow \text{all}}$ allows agents in $\Gamma$ to provide the network $\tilde{\Gamma}$ with a resource state that is sufficient to perform the channel $\mathbfcal{N}_{A_1 \dots A_n \rightarrow B_1 \dots B_n}$ with arbitrarily high success probability.
\end{proof}

The last step in the proof, observing that the pattern of entanglement required by \cite{dolev2019constraining} is a chain of EPR pairs that touch every input vertex, may not be obvious upon consulting that reference.
For this reason, we provide a detailed explanation of the protocol in appendix \ref{sec:bulkprotocolappendix} that emphasizes this aspect.

\subsection{A special task on a 2-to-all causal graph}\label{sec:B84taskinbulk}

In the previous subsection, we showed that any quantum task can be completed with arbitrarily high success probability on a causal network with a connected 2-to-all causal graph.
However, this isn't the same as being able to perform the quantum task with arbitrarily high success probability in a bulk spacetime with the corresponding graph.
This is because the protocol outlined in the previous subsection requires generating a massive number of EPR pairs --- in fact, a number that is at least exponential in the size of the input systems $A_1 \dots A_n.$
When we argue for the $n$-to-$n$ connected wedge theorem using quantum tasks in subsection \ref{sec:final-tasks-argument}, we will want to take the size of the input systems to scale with the holographic parameter $\ell_{AdS}/G_N.$
Creating a number of particles that scales exponentially with this parameter would cause significant backreaction on spacetime, which would prevent us from thinking of the task as something that happens in a fixed background.\footnote{See \cite{may2022complexity} for a recent investigation of related issues.}
To get around this, we will now introduce a specific task that can be completed efficiently (i.e., using only linear-in-system-size entanglement) in any causal network with a connected 2-to-all causal graph.

Our task will be the same one used in \cite{may2019quantum, may2020holographic, may2021holographic}: the $\textbf{B}_{84}$ task discussed in \cite{kent2011quantum,buhrman2014position,lau2011insecurity}.
To define this task, we specify two input locations $c_{1}, c_{2}$ and two output locations $r_{1}, r_{2}$.
At $c_{1}$ a quantum system $Q$ in state $\mathbf{H}^q\ket{b}$ is given, where $b,q\in \{0,1\}$ and $H$ is the Hadamard operator.
At $c_{2}$ the classical bit $q$ is given.
The goal of the task is to deliver the bit $b$ to both $r_{1}$ and $r_{2}$.
If this is done correctly, we declare the task to be completed successfully. 

We also define the $n$-fold parallel repetition of the $\textbf{B}_{84}$ task, which we label $\textbf{B}_{84}^{n,\delta }$.
This repeated task consists of $n$ instances of the  $\textbf{B}_{84}$ task defined on the same input and output locations.
Each instance has independent and randomly chosen values of $q$ and $b$, so that the inputs are now $\{ \mathbf{H}^{q_j}\ket{b_j}\}$ at $c_{1}$ and $\{q_j\}$ at $c_{2}.$
We define the task to be completed successfully if a fraction $1-\delta$ of the output $b_j$ bits are returned correctly. 

Now, suppose we have a causal network $\Gamma$ with the same number of input and output vertices, whose $2$-to-all causal graph $\Gamma_{2\rightarrow \text{all}}$ is connected.
Let $c_1$ and $c_2$ be two of the input vertices of $\Gamma,$ and $r_1$ and $r_2$ be two of the output vertices of $\Gamma.$
We now provide a protocol for completing the $\textbf{B}_{84}^{n,\delta}$ task with high probability\footnote{By ``high probability,'' we mean a probability that converges exponentially to $1$ in the limit $n \rightarrow \infty.$} with no pre-shared entanglement, and needing only to generate a linear-in-$n$ amount of entanglement as part of the protocol.
The protocol uses the teleportation primitive discussed in appendix \ref{sec:NLQCtools}.

\begin{protocol} \label{protocol:B84inbulk} (\textbf{Bulk $\textbf{B}_{84}^{n,\delta}$})
Because the $2$-to-all causal graph is connected, there is a set of edges in $\Gamma_{2 \rightarrow \text{all}}$ that make up a path from $c_{1}$ to $c_{2}$.
Let the total number of vertices in this path be $m,$ relabel $c_2 \mapsto c_m,$ and label the vertices in the path in order as $c_1, \dots, c_m.$
For convenience, we will also relabel the output point $r_2$ to $r_m.$
    
Now:
    \begin{itemize}
        \item At the input points:
        \begin{enumerate}
            \item From $c_1$, send the inputs $\{\mathbf{H}^{q_j}\ket{b_j}_{Q_j}\}$ to the $2$-to-all vertex represented by the edge between $c_1$ and $c_2.$
            \item From $c_m$, send the inputs $\{q_j\}$ to the $2$-to-all vertex represented by the edge between $c_{m-1}$ and $c_m.$
            \item At each $c_k$ with $1<k<m$, prepare $n$ EPR pairs $\ket{\Phi^+}_{A_k B_k}^{\otimes n}$.
            Send system $A_k^{\otimes n}$ to the $2$-to-all vertex represented by the edge between $c_{k-1}$ and $c_k.$
            Send $B_k^{\otimes n}$ to the $2$-to-all vertex represented by the edge between $c_k$ and $c_{k+1}.$
        \end{enumerate}
        \item In the scattering regions:
        \begin{enumerate}
            \item At the vertex represented by the edge between $c_1$ and $c_2$, for each $j \in \{1, \dots, n\},$ measure the system\footnote{$A_{2, j}$ represents the $j$-th term in the tensor product $A_{2}^{\otimes n}.$} $Q_j A_{2, j}$ in the Bell basis.
            Send the measurement outcome to $r_1$ and $r_m$.
            \item At the vertices represented by edges between $c_{k}$ and $c_{k+1}$ for $1 < k < m-1$, for each $j \in \{1, \dots, n\},$ measure the system $B_{k, j} A_{k+1, j}$ in the Bell basis.
            Send the measurement outcome to $r_1$ and $r_m$.
            \item At the vertex represented by the edge between $c_{m-1}$ and $c_{m},$ for each $j \in \{1, \dots, n\},$ measure the system $B_{m-1, j}$ in the $\{\ket{0},\ket{1} \}$ basis if $q_j=0$, and in the Hadamard basis $\{\ket{+},\ket{-} \}$ if $q_j=1$.
            Send the measurement outcome to $r_1$ and $r_m$.
        \end{enumerate}
        \item At the output points:
        \begin{enumerate}
            \item Determine each $b_j$ from the measurement outcomes as described below, and return $b_j$. 
        \end{enumerate}
    \end{itemize}
    
    For a state on the system $S T_1 T_2,$ where $T_1 T_2$ is an EPR pair, the effect of performing a Bell measurement on $S T_1$ is to apply (up to phase) a Pauli operation on $S$ and recording the resulting state into $T_2$, with the specific Pauli operation determined by the classical measurement outcome.
    So performing a chain of Bell measurements on the systems $Q_j A_{2,j} B_{2, j} \dots A_{m-1, j} B_{m-1, j}$ is to apply a chain of up-to-phase Pauli operations on $Q_j \rightarrow B_{m-1, j}.$
    Since the composition of Pauli operations is Pauli, the result of these Bell measurements is some up-to-phase Pauli operation $Q_j \rightarrow B_{m-1, j},$ with the specific Pauli known as a function of all the measurement outcomes.
    So after all the Bell measurements have been performed, the final state on system $B_{m-1, j}$ is phase-related to a state
    \begin{equation}
        \mathbf{P}_{j} \mathbf{H}^{q_j} \ket{b_j}_{B_{m-1, j}},
    \end{equation}
    where the Pauli operator $P_j$ is known at both $r_1$ and $r_m,$ since these vertices have access to the outcomes of all Bell measurements.
    
    Assume $q_j = 0.$ Then $\mathbf{P}_j \mathbf{H}^{q_j} \ket{b_j}$ is a state in the computational basis, with the relationship to $b_j$ known as a function of the Pauli operator $\mathbf{P}_j.$
    So when this system is measured in the computational basis, we will obtain a bit whose relationship to $b_j$ is known in terms of $\mathbf{P}_j.$
    Since this bit and $\mathbf{P}_j$ are both known at $r_1$ and $r_m,$ the data at each of these points is sufficient to reconstruct $b_j.$
    
    Given $q_j = 1,$ the state $\mathbf{P}_j \mathbf{H}^{q_j} \ket{b_j}$ is a state in the Hadamard basis whose relationship to $b_j$ is known as a function of $\mathbf{P}_j.$
    So when this system is measured in the Hadamard basis, we will obtain a bit whose relationship to $b_j$ is known in terms of $\mathbf{P}_j.$
    Since this bit and $\mathbf{P}_j$ are known at $r_1$ and $r_m,$ the data at each of these points is sufficient to reconstruct $b_j.$
\end{protocol}
The total number of EPR pairs required by this protocol is $n (m-1).$

If the protocol is performed noiselessly, then it completes the task $\mathbf{B}_{84}^{n, \delta}$ for $\delta = 0$ with perfect success probability.
More physically, we can allow for some probability of error in each instance of the $\textbf{B}_{84}^{\, n,\delta }$ task.
Suppose $p_{\text{suc}}(\textbf{B}_{84})=1-\epsilon$.
Let $X_j$ be the random variable that takes value $1$ if the $j$-th parallel instance of the task succeeds, and $0$ if the $j$-th parallel instance fails. Then we have
\begin{equation}
    p_{\text{suc}}(\textbf{B}_{84}^{n, \delta})
        = 1 - \text{Prob}\left[\sum_j X_j < (1-\delta) n \right]
        \geq 1 - \text{Prob}\left[\sum_j X_j \leq (1-\delta) n\right].
\end{equation}
This can be rewritten in terms of the random variables $(1 - X_j)$ as
\begin{equation}
    p_{\text{suc}}(\textbf{B}_{84}^{n, \delta})
        \geq 1 - \text{Prob}\left[\sum_j (1 - X_j) \geq n \delta\right].
\end{equation}
But since the expected value of $(1 - X_j)$ is $\epsilon,$ theorem 1 of \cite{hoeffding1994probability} gives
\begin{equation}
    \text{Prob}\left[\sum_j (1 - X_j) \geq n \delta \right]
        \leq e^{-2 n (\delta - \epsilon)^2}
\end{equation}
whenever we have $\delta > \epsilon$, and hence
\begin{equation} \label{eq:final-nfold-lower-bound}
    p_{\text{suc}}(\textbf{B}_{84}^{n, \delta})
        \geq 1 - e^{-2 n (\delta - \epsilon)^2}.
\end{equation}
So for any probability of failure $\epsilon,$ we can choose a value of $\delta$ for which the success probability of the $\textbf{B}_{84}^{\, n,\delta }$ task converges exponentially to $1$ for large $n$.
    
\subsection{Tasks argument for the \texorpdfstring{$n$}{TEXT}-to-\texorpdfstring{$n$}{TEXT} connected wedge theorem}\label{sec:final-tasks-argument}

An important feature of the $\textbf{B}_{84}^{n, \delta}$ task is that for the causal network shown in figure \ref{fig:2-2repeatednlqc}, one can prove that a resource state with large mutual information is needed to complete it with high probability.
This analysis was undertaken in \cite{may2019quantum} based on results from \cite{tomamichel2013monogamy}, and some aspects of it were improved in \cite{may2020holographic, may2021holographic}.
We briefly recount the essential lemmas, adapted to the conventions of the present paper.

Let $\rho_{LR}$ be the resource state shared by the two input vertices of the causal network shown in figure \ref{fig:2-2repeatednlqc}.
The following lemma shows that $p_{\text{suc}}(\textbf{B}_{84}^{n, \delta})$ is small in the case $I(L:R)=0.$
\begin{lemma}\label{lemma:psucupperbound}
    Consider the $\mathbf{B}_{84}^{\, n, \delta}$ task in the causal network of figure \ref{fig:2-2repeatednlqc}, with $\delta \leq 1/2$, and where the input vertices share a state $\rho_{LR}$ with $I(L:R)_{\rho_{LR}}=0$.
    Then any strategy for completing the task has 
    \begin{align}
        p_{\text{suc}}(\mathbf{B}_{84}^{n,\delta}) \leq \left( 2^{h(\delta)}\cos^2(\pi/8) \right)^n = \left( 2^{h(\delta)}\beta \right)^n
    \end{align}
    where $h(\delta)$ is the binary entropy function $h(\delta)\equiv -\delta \log_2\delta -(1-\delta)\log_2(1-\delta)$ and the second equality defines the constant $\beta$. 
\end{lemma}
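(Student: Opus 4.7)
The plan is to prove the lemma in three conceptual steps, exploiting that $I(L:R)_{\rho_{LR}}=0$ is equivalent to $\rho_{LR}=\rho_L\otimes\rho_R$, so the two input parties hold an uncorrelated product state and share no resource they can use to coordinate a joint strategy. First, I would establish the single-shot inequality $p_{\text{suc}}(\mathbf{B}_{84})\leq \cos^2(\pi/8)$ by recognizing the $\mathbf{B}_{84}$ task on the causal network of figure \ref{fig:2-2repeatednlqc} as an instance of the monogamy-of-entanglement game studied in \cite{tomamichel2013monogamy}. With no entanglement between the holders of $q$ and of $\mathbf{H}^q\ket{b}$, the two output parties cannot both recover $b$ with probability exceeding the maximum eigenvalue $\tfrac{1}{2}+\tfrac{1}{2\sqrt{2}}$ of the corresponding POVM average, obtained by a standard semidefinite-programming bound.

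Second, I would prove a subset-indexed parallel-repetition statement: for every subset $S\subseteq\{1,\ldots,n\}$ and every (possibly joint) strategy,
\begin{equation}
\mathrm{Prob}[\text{every }j\in S\text{ is correct}]\leq \cos^{2|S|}(\pi/8).
\end{equation}
This is \emph{not} a generic parallel-repetition theorem (those often degrade); instead it uses the special tensor-product structure of the dual SDP witness for the monogamy game, which multiplies cleanly under tensor products. The key observation is that restricting a strategy for $\mathbf{B}_{84}^{n}$ to the coordinates in $S$ and ignoring the outputs outside $S$ yields a valid strategy for $\mathbf{B}_{84}^{|S|}$, to which the tensor-product SDP bound applies.

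Third, I would assemble the tail bound via a union over error patterns. The event ``at least $(1-\delta)n$ of the $n$ instances succeed'' implies the existence of at least one subset $S$ with $|S|=\lceil(1-\delta)n\rceil$ on which every round is correct, so by step two,
\begin{equation}
p_{\text{suc}}(\mathbf{B}_{84}^{n,\delta})\leq \binom{n}{\lceil(1-\delta)n\rceil}\cos^{2\lceil(1-\delta)n\rceil}(\pi/8)\leq 2^{nh(\delta)}\cos^{2(1-\delta)n}(\pi/8),
\end{equation}
where the second inequality uses the standard entropy bound $\binom{n}{\lceil\delta n\rceil}\leq 2^{nh(\delta)}$ valid for $\delta\leq 1/2$. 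Collapsing $(1-\delta)n$ to $n$ in the cosine exponent (which only loosens the bound, since $\cos^2(\pi/8)<1$) delivers the stated form $(2^{h(\delta)}\cos^2(\pi/8))^n$.

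The main obstacle will be step two: producing a subset-uniform parallel-repetition bound without invoking a generic (and lossy) product theorem. The right approach is to work directly at the level of the SDP dual, constructing a dual witness on the $|S|$-fold Hilbert space as the tensor product of $|S|$ copies of the single-round witness and verifying feasibility for the restricted game; this side-steps any product-of-probabilities assumption that would fail in the quantum setting. Steps one and three are, by comparison, fairly mechanical.
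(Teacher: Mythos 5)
Your overall approach matches the paper's in spirit: the paper's proof of this lemma is a one-line citation to section 3.2 of \cite{tomamichel2013monogamy}, which is exactly the monogamy-game parallel-repetition machinery you invoke, and your step one (the single-round bound $\cos^2(\pi/8)$) and step two (the fixed-subset SDP-dual bound $\Pr[\text{all of }S\text{ correct}]\leq\cos^{2|S|}(\pi/8)$) are both sound. However, step three contains a genuine error that prevents you from reaching the lemma as stated.

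After the union bound over subsets of size $\lceil(1-\delta)n\rceil$ and the estimate $\binom{n}{\lceil(1-\delta)n\rceil}\leq 2^{nh(\delta)}$ (valid for $\delta\leq 1/2$), what you have actually proved is
\begin{equation}
p_{\text{suc}}(\mathbf{B}_{84}^{n,\delta}) \;\leq\; 2^{nh(\delta)}\,\cos^{2\lceil(1-\delta)n\rceil}(\pi/8)\;\leq\; \left(2^{h(\delta)}\cos^{2(1-\delta)}(\pi/8)\right)^{n},
\end{equation}
with the cosine raised to roughly the power $(1-\delta)n$, not $n$. You then propose to ``collapse $(1-\delta)n$ to $n$ in the cosine exponent,'' asserting this ``only loosens the bound, since $\cos^2(\pi/8)<1$.'' This is exactly backwards: \emph{because} $\cos^2(\pi/8)<1$, raising it to the larger power $n$ makes the quantity \emph{smaller}, i.e.\ you would be tightening the upper bound to a value you have not established. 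Concretely, $\cos^{2(1-\delta)n}(\pi/8) > \cos^{2n}(\pi/8)$ for $\delta>0$, so your chain of inequalities stops one step short of the claimed $(2^{h(\delta)}\beta)^n$.

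The gap is not cosmetic. A subset union bound inherently ``forgets'' the $\delta n$ rounds you discard, so it cannot produce a $\cos^{2n}(\pi/8)$ factor. The reference obtains the full exponent $n$ by bounding the operator norm of the $n$-fold \emph{threshold} game operator directly with their overlap lemma applied to the non-tensor-product projectors $\bar\Pi^{\bar\theta}_{\geq(1-\delta)n}$, rather than decomposing into fixed-subset events and union-bounding. Your weaker bound would actually still suffice for the downstream purposes of this section (extensive mutual information for $\delta$ below some positive threshold), but with a slightly different critical $\delta_*$, and it does not prove the lemma as written. To fix the proof you would need to either invoke the threshold theorem of \cite{tomamichel2013monogamy} directly, or redo the operator-norm estimate on the threshold projectors rather than on per-subset ones.
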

\begin{proof}
    This follows from section 3.2 of \cite{tomamichel2013monogamy}.
\end{proof}

Notice that for the success probability to go to zero at large $n$, we need $\delta$ to satisfy
\begin{align}
    h(\delta) < -\log_2\beta.
\end{align}
Numerically, this means $\delta\lesssim 0.037.$
We will call this critical value $\delta_*.$

\begin{lemma}[Mutual information lower bound] \label{lemma:MIlowerbound}
    Consider the causal structure shown in figure \ref{fig:2-2repeatednlqc}, with resource system $\rho_{LR}$. 
    Fix $\epsilon, \delta$ with $\epsilon < \delta < \delta_*$ and $2^{h(\delta)} \beta > e^{-2 (\delta - \epsilon)^2}.$
    Then completing the $\mathbf{B}_{84}^{n,\delta}$ task with probability $1-e^{- 2 n(\delta - \epsilon)^2}$ requires
    \begin{align}
        \frac{1}{2}I(L:R)_\rho \geq - n \log_2(2^{h(\delta)} \beta) - 1 + O\left( (2^{h(\delta)} \beta)^n, e^{-2 n (\delta - \epsilon)^2} (2^{h(\delta)} \beta)^{-n} \right).
    \end{align}
\end{lemma}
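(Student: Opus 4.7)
The plan is to reduce to Lemma~\ref{lemma:psucupperbound} via an operational comparison of $\rho_{LR}$ with the product state $\rho_L \otimes \rho_R$. Fix any strategy for $\mathbf{B}_{84}^{n,\delta}$ that uses $\rho_{LR}$ and achieves $p_{\text{suc}}(\rho_{LR}) \geq 1 - e^{-2n(\delta-\epsilon)^2}$. Every ingredient of the protocol other than the initial resource state --- the local ancillas, the teleportation-like measurements at the scattering vertex, and the decoding at the output vertices --- can be absorbed into a single two-outcome POVM acting on $\rho_{LR}$, so that $p_{\text{suc}}$ is an affine function of the initial resource state. Running exactly this strategy with $\rho_{LR}$ replaced by $\rho_L \otimes \rho_R$ is a legal product-resource strategy, and Lemma~\ref{lemma:psucupperbound} bounds its success probability by $(2^{h(\delta)}\beta)^n$. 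The remaining task is to promote this comparison into a logarithmic lower bound on $I(L:R)$.

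The naive move --- bounding $|p_{\text{suc}}(\rho_{LR}) - p_{\text{suc}}(\rho_L\otimes\rho_R)|$ by $\|\rho_{LR}-\rho_L\otimes\rho_R\|_1$ and then invoking Pinsker's inequality --- produces only a square-root dependence on $I(L:R)$, which is too weak. Instead I will invoke a multiplicative comparison of the form
\begin{equation}
p_{\text{suc}}(\rho_{LR}) \;\leq\; 2^{\,I(L:R)/2\,+\,1}\, p_{\text{suc}}(\rho_L\otimes\rho_R) \;+\; \varepsilon ,
\end{equation}
obtained from (a smoothed version of) the quantum substate theorem of Jain--Radhakrishnan--Sen: for any smoothing parameter $\varepsilon$ there exists $\tilde\rho_{LR}$ within trace distance $\varepsilon$ of $\rho_{LR}$ for which one has the operator inequality $\tilde\rho_{LR} \leq 2^{I(L:R)/2 + O(1)}\,\rho_L\otimes\rho_R$; affineness of $p_{\text{suc}}$ in the resource state then transports this operator bound to a bound on success probabilities. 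The additive constant in the exponent generates the ``$-1$'' appearing in the lemma, while the smoothing parameter $\varepsilon$ accounts for one of the error terms.

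To finish, I take $\log_2$ of the displayed inequality, substitute the hypothesized lower bound on $p_{\text{suc}}(\rho_{LR})$ and the Lemma~\ref{lemma:psucupperbound} upper bound on $p_{\text{suc}}(\rho_L\otimes\rho_R)$, and expand $\log_2(1-x)=-x/\ln 2 + O(x^2)$ for small $x$ to obtain
\begin{equation}
\tfrac{1}{2}I(L:R) \;\geq\; -n\log_2(2^{h(\delta)}\beta) - 1 + O\!\left((2^{h(\delta)}\beta)^n,\; e^{-2n(\delta-\epsilon)^2}(2^{h(\delta)}\beta)^{-n}\right),
\end{equation}
where $\varepsilon$ is tuned to be of order $(2^{h(\delta)}\beta)^n$ so as to match the leading behavior on the right-hand side. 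The hypothesis $2^{h(\delta)}\beta > e^{-2(\delta-\epsilon)^2}$ (together with $\delta<\delta_*$, which ensures $2^{h(\delta)}\beta<1$) is precisely what keeps both error terms subleading relative to the dominant $-n\log_2(2^{h(\delta)}\beta)$ contribution. The main obstacle is to select the quantitative form of the substate theorem --- or, equivalently, a hypothesis-testing relative-entropy inequality --- whose smoothing and additive constants collapse exactly into the stated $-1$ and $O(\cdot)$ corrections; once that choice is made and $\varepsilon$ is optimized, the rest of the argument is a direct algebraic manipulation.
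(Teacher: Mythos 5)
Your proposal is structurally different from the paper's proof, and the step you rely on does not hold. You rightly observe that $p_{\text{suc}}$ becomes affine in the resource state once a strategy is fixed, and you rightly note that ``trace distance plus Pinsker'' would give only a square-root dependence on $I(L:R)$. But the multiplicative comparison you then invoke --- a smoothed substate of the form $\tilde\rho_{LR} \le 2^{I(L:R)/2 + O(1)}\rho_L\otimes\rho_R$ --- is not what the quantum substate theorem provides. Every known form of that theorem (Jain--Radhakrishnan--Sen and its sharpenings) has an exponent that scales like $D(\rho\|\sigma)/\varepsilon$ or $(D(\rho\|\sigma)+1)/(1-\varepsilon)$, where $\varepsilon$ is the smoothing; there is no $\varepsilon$-independent coefficient of $1/2$ on $D$, and unsmoothed $D_{\max}$ already satisfies $D_{\max}\ge D$, never $D/2$. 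Your plan to tune $\varepsilon\sim(2^{h(\delta)}\beta)^n$ is therefore self-defeating: as $\varepsilon\to 0$ the substate exponent diverges and the operator inequality carries no information. Keeping $\varepsilon$ constant avoids the blow-up, but then the additive error term is $\Theta(1)$ and you only recover $I(L:R)=\Omega(n)$ with a strictly worse prefactor than the lemma states.

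The paper does not invoke the substate theorem at all. After establishing $|p_{\text{suc}}(\mathbf{T}_{\rho_{LR}}) - p_{\text{suc}}(\mathbf{T}_{\rho_L\otimes\rho_R})| \le \tfrac{1}{2}\|\rho_{LR}-\rho_L\otimes\rho_R\|_1$ (the same operational step you take), it routes the trace distance through the Fuchs--van de Graaf inequality $\|\rho-\sigma\|_1\le 2\sqrt{1-F(\rho,\sigma)^2}$ and then uses monotonicity of the sandwiched R\'{e}nyi relative entropy to get $-2\log_2 F(\rho,\sigma)\le D(\rho\|\sigma)$. Unlike Pinsker, this chain diverges as the trace distance approaches $2$: if $F\to 0$ then $-\log F\to\infty$. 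So the trace-distance route is not ``too weak'' after all --- what is too weak is only the specific comparison $\|\rho-\sigma\|_1\lesssim\sqrt{D}$, which is not the one the paper uses. The resulting bound $I(L:R)\ge -\log_2\bigl(1-|\Delta p_{\text{suc}}|^2\bigr)$ then expands to the lemma's form directly. If you insist on a multiplicative comparison, the tool with the right quantitative behaviour is the hypothesis-testing relative entropy $D_H^{\varepsilon}$, whose one-shot upper bound $D_H^{\varepsilon}(\rho\|\sigma)\le (D(\rho\|\sigma)+1)/(1-\varepsilon)$ has benign smoothing dependence; applied to the optimal POVM for $\rho_{LR}$, it reproduces the paper's bound without the substate theorem's pathology.
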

\begin{proof}
    See appendix \ref{app:lower-bound-proof}.
\end{proof}

Using these two features, we can now argue for the information-theoretic version of the $n$-to-$n$ connected wedge theorem:
\begin{theorem}[Information-theoretic $n$-to-$n$ connected wedge theorem] \label{thm:n-to-n-QI}
    Let $c_1, \dots, c_n$ and $r_1, \dots, r_n$ be points on a global AdS$_{2+1}$ boundary such that
    the sets $\mathcal{V}_j$ and $\mathcal{W}_j$ are nonempty, while also satisfying
    \begin{equation}
        \hat{J}(c_j, c_k \rightarrow r_1, \dots, r_n) = \hat{J}(c_1, \dots, c_n \rightarrow r_j, r_k) = \varnothing.
    \end{equation}
    Let $\mathcal{M}$ be an asymptotically AdS$_{2+1},$ AdS-hyperbolic spacetime satisfying the null curvature condition, with at least one global AdS$_{2+1}$ boundary on which $c_1, \dots, c_n, r_1, \dots, r_n$ are specified.
    Let $E_{\mathcal{V}_j}$ and $E_{\mathcal{W}_j}$ be the entanglement wedges of $\mathcal{V}_j$ and $\mathcal{W}_j.$ 
    
    If the $2$-to-all causal graph $\Gamma_{2\rightarrow\text{all}}$ is connected, then for any bipartition of the input regions $\mathcal{V}_1, \dots, \mathcal{V}_n,$ we have
    \begin{equation}
        I(\mathcal{V}_{j_1} \dots \mathcal{V}_{j_m} : \mathcal{V}_{j_{m+1}} \dots \mathcal{V}_{j_n}) = \Omega(\ell_{AdS}/G_N).
    \end{equation}
\end{theorem}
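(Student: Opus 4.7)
The plan is to build the QI argument out of three ingredients already developed in the paper: the bulk protocol of section \ref{sec:B84taskinbulk}, the bulk-to-boundary task inequality \eqref{eq:psucinequality}, and the mutual information lower bound of lemma \ref{lemma:MIlowerbound}. Fix a bipartition $L\sqcup R$ of the input indices $\{1,\dots,n\}$. Because $\Gamma_{2\rightarrow\text{all}}$ is connected, at least one of its edges crosses this bipartition, producing indices $a\in L$ and $b\in R$ for which $J(E_{\mathcal{V}_a},E_{\mathcal{V}_b}\rightarrow E_{\mathcal{W}_1},\dots,E_{\mathcal{W}_n})$ has nonempty interior. Using lemma \ref{lemma:boundarycausalstructure}, the output points split into two sets, one in the causal future of the scattering vertex $\chi$ and the other in the future of $\chi'$, each nonempty; pick $r_{k_1}$ from the first and $r_{k_2}$ from the second. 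The ``maximally disjoint'' property of $\chi,\chi'$ in lemma \ref{lemma:boundarycausalstructure} then guarantees $\hat{J}(c_a,c_b\rightarrow r_{k_1},r_{k_2})=\varnothing$, so $\mathcal{V}_a$ and $\mathcal{V}_b$ are spacelike-separated boundary regions with no joint boundary past of $\{r_{k_1},r_{k_2}\}$.

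Instantiate the $\mathbf{B}_{84}^{n,\delta}$ task with inputs at $c_a,c_b$ and outputs at $r_{k_1},r_{k_2}$, choosing $\delta$ strictly below the threshold $\delta_*$ of lemma \ref{lemma:psucupperbound} and taking $n=\Theta(\ell_{AdS}/G_N)$. Connectedness of $\Gamma_{2\rightarrow\text{all}}$ supplies a path between $c_a$ and $c_b$, so protocol \ref{protocol:B84inbulk} completes the task in the bulk with success probability at least $1-e^{-2n(\delta-\epsilon)^2}$ using only $O(n)$ locally-created EPR pairs; this resource cost scales linearly in $\ell_{AdS}/G_N$, well inside the semiclassical regime where backreaction can be neglected. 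Inequality \eqref{eq:psucinequality} then transports this success probability to the corresponding boundary task, whose inputs lie in $\mathcal{V}_a,\mathcal{V}_b$ and whose outputs are demanded near $r_{k_1},r_{k_2}$.

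By construction, the only boundary causal resource available for transferring $\mathbf{B}_{84}$-relevant information between $\mathcal{V}_a$ and $\mathcal{V}_b$ is the shared CFT state, so the effective network is exactly that of figure \ref{fig:2-2repeatednlqc} with resource $\rho_{\mathcal{V}_a\mathcal{V}_b}$ obtained by restricting $\ket{\Psi}_{\text{CFT}}$ to these two regions. Applying lemma \ref{lemma:MIlowerbound} yields $I(\mathcal{V}_a:\mathcal{V}_b)=\Omega(n)=\Omega(\ell_{AdS}/G_N)$. Monotonicity of mutual information under restriction of subsystems then gives
\begin{equation}
    I(\mathcal{V}_{j_1}\cdots\mathcal{V}_{j_m}:\mathcal{V}_{j_{m+1}}\cdots\mathcal{V}_{j_n})\ \geq\ I(\mathcal{V}_a:\mathcal{V}_b)\ =\ \Omega(\ell_{AdS}/G_N),
\end{equation}
which is the desired conclusion.

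The main obstacle is justifying that the effective boundary network really does collapse to the $2$-to-$2$ network for this task, so that lemma \ref{lemma:MIlowerbound} applies. The boundary state is global and other input regions $\mathcal{V}_j$ with $j\neq a,b$ have their own boundary agents who can route quantum information into the past of $r_{k_1}$ or $r_{k_2}$. One must argue that any such contribution is $\mathbf{B}_{84}$-input-independent (since the bits $b_j,q_j$ are localized at $c_a,c_b$) and therefore reducible, without loss of generality, to an input-independent resource state delivered to the output agents; such a resource cannot carry $b_j$-dependent information to $r_{k_1}$ or $r_{k_2}$, so the task-success argument of lemma \ref{lemma:MIlowerbound} is unaffected. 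Making this reduction rigorous — essentially, showing that the ``excess'' boundary causal structure cannot be exploited to lower the bipartite mutual information cost — is the technically delicate step of the proof.
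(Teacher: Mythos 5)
Your route proves too much. The intermediate claim $I(\mathcal{V}_a : \mathcal{V}_b) = \Omega(\ell_{\text{AdS}}/G_N)$ for a single cross-bipartition pair is \emph{false}, and the paper says so explicitly: one can have all bipartitions with extensive mutual information while adjacent pairs $\mathcal{V}_m : \mathcal{V}_{m+1}$ have only $O(1)$ mutual information, and e.g.\ $I(\mathcal{V}_1:\mathcal{V}_2)=\Theta(1)$ is allowed. So any argument that arrives at your intermediate step must contain an error. The error is the reduction to figure~\ref{fig:2-2repeatednlqc} with resource $\rho_{\mathcal{V}_a\mathcal{V}_b}$. The other input regions $\mathcal{V}_j$, $j\neq a,b$, belong to the boundary agent's team and \emph{can} signal the intermediary vertices; crucially, they can carry the very entanglement with $\mathcal{V}_a$ or $\mathcal{V}_b$ that the monogamy argument of Lemma~\ref{lemma:MIlowerbound} would otherwise force to be stored directly between $\mathcal{V}_a$ and $\mathcal{V}_b$. ``Input-independence'' of the bits $b_j,q_j$ does not neutralize these regions: they are not useful as classical side-information, they are useful as entangled ancillas, which is exactly what monogamy cares about. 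Your final monotonicity step is arithmetically sound, but the bound you feed into it does not hold.

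The paper avoids this by \emph{fusing} the input vertices on each side of the bipartition before invoking Lemma~\ref{lemma:MIlowerbound}: the two super-vertices then inherit all resources and causal connections of their constituents, so the lemma bounds $I(\mathcal{V}_1\dots\mathcal{V}_m : \mathcal{V}_{m+1}\dots\mathcal{V}_n)$ directly, never the sub-pair. Lemma~\ref{lemma:boundarycausalstructure} is what certifies that the fused network has the shape of figure~\ref{fig:2-2repeatednlqc}, and this only goes through for \emph{contiguous} bipartitions in the angular ordering; the extension to arbitrary bipartitions is supplied by a separate holographic observation (extensive mutual information across all contiguous bipartitions forces a connected entanglement wedge in AdS$_3$/CFT$_2$, which then gives extensive mutual information across every bipartition), not by the tasks argument itself. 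Two further points you elide: the paper uses the strengthened inequality \eqref{eq:psucnewinequality} rather than \eqref{eq:psucinequality}, because the bulk state $\ket{\Psi}_0$ must be modified within $\cup_j E_{\mathcal{V}_j}$ to supply the EPR pairs that Protocol~\ref{protocol:B84inbulk} assumes; and the step you call ``technically delicate'' — discarding the $\mathcal{X}_j$ regions — is not merely delicate, it is left as an explicit unproved assumption, which is why the paper presents this as an \emph{argument} rather than a proof.
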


\begin{remark}
    The information-theoretic argument will only show extensive mutual information for \textit{contiguous} bipartitions respecting the angular ordering, i.e., bipartitions of the form $(\mathcal{V}_{j} \dots \mathcal{V}_{j+m-1} : \mathcal{V}_{j+m} \dots \mathcal{V}_{j - 1}).$
    In classical states of AdS$_3$/CFT$_2$, however, having $\Omega(\ell_{AdS}/G_N)$ mutual information for any contiguous bipartition is sufficient to show that the entanglement wedge of $\mathcal{V}_1 \cup \dots \cup \mathcal{V}_n$ is connected, which then implies that the mutual information across any bipartition is $\Omega(\ell_{AdS}/G_N).$ 
\end{remark}

\begin{argument}
Consider a contiguous bipartition of the input regions of the form appearing in the preceding remark.
For simplicity, we will relabel the input regions so that the bipartition can be written as $(\mathcal{V}_1 \dots \mathcal{V}_{m} : \mathcal{V}_{m+1} \dots \mathcal{V}_n).$
We will relabel the output regions accordingly, so that as in section \ref{sec:timelike-cylinders} the right-endpoint of $\mathcal{V}_j$ and the left-endpoint of $\mathcal{V}_{j+1}$ are cut off by light rays intersecting $r_j.$

Consider a $\textbf{B}_{84}$ task in the bulk whose two input points are in the entanglement wedges $E_{\mathcal{V}_m}$ and $E_{\mathcal{V}_{m+1}},$ and whose output points are in the entanglement wedges $\mathcal{W}_{1}$ and $\mathcal{W}_{m}.$
We choose these input points such that each one is in the past of one of the $2$-to-all regions regions needed to apply protocol \ref{protocol:B84inbulk}.
From inequality \eqref{eq:psucnewinequality}, we know that the success probability of $\textbf{B}_{84}$ in the boundary is lower bounded by the success probability of $\textbf{B}_{84}$ in any bulk spacetime that differs from $\mathcal{M}$ only in the state of bulk quantum fields within $\mathcal{V}_1 \cup \dots \cup \mathcal{V}_n.$
So we can modify the state of the quantum fields to produce EPR pairs within each $\mathcal{V}_j$ that can then be distributed among the $2$-to-all scattering regions as in protocol \ref{protocol:B84inbulk}.

In this modified state, we will assume that protocol \ref{protocol:B84inbulk} for a single instance of the $\textbf{B}_{84}$ task has failure probability $\epsilon < \delta_* \approx 3.7\%.$
We then choose some $\delta$ satisfying $\epsilon < \delta < \delta_*$ and $2^{h(\delta)} \beta > e^{-2 (\delta - \epsilon)^2}.$
Let $M$ be any number that is subleading in the holographic parameter $\ell_{\text{AdS}}/G_N,$ i.e. $M = o(\ell_\text{AdS}/G_N),$ so that $M$ particles of finite energy can be pass through AdS-scale regions of the bulk without causing backreaction.
Then by equation \eqref{eq:final-nfold-lower-bound}, there is a repetition task $\textbf{B}_{84}^{M, \delta}$ with the same input and output points as our $\textbf{B}_{84}$ task, which succeeds with probability at least $1 - e^{-2 M (\delta - \epsilon)^2}.$
Under the prescription of section \ref{sec:tasks-holography}, this task can be mapped to a boundary task $\hat{\textbf{B}}_{84}^{M, \delta}$ whose input and output locations are in $\mathcal{V}_m, \mathcal{V}_{m+1}, \mathcal{W}_1,$ and $\mathcal{W}_m.$
The causal network associated with this structure has two output vertices, labeled by $\mathcal{W}_1$ and $\mathcal{W}_{m},$ and four input vertices, labeled by $\mathcal{V}_{m},$ $\mathcal{V}_{m+1},$ and the two intermediary regions between them.
Because the input and output regions are causal diamonds, we have $J^+(\mathcal{V}_{j}) = J^+(c_j),$ and $J^-(\mathcal{W}_j) = J^-(r_j).$
Lemma \ref{lemma:boundarycausalstructure} therefore tells us that the causal network has an intermediary vertex of the form $\mathcal{V}_m, \mathcal{V}_{m+1} \rightarrow \mathcal{W}_0,$ and an intermediary vertex of the form $\mathcal{V}_m, \mathcal{V}_{m+1} \rightarrow \mathcal{W}_m,$ but no $2$-to-$2$ vertex of the form $\mathcal{V}_m, \mathcal{V}_{m+1} \rightarrow \mathcal{W}_0, \mathcal{W}_m.$
If the only vertices in the causal network were the input regions, the output regions, and these intermediary vertices, then the causal network would be of the form in figure \ref{fig:2-2repeatednlqc}.
This would allow us to conclude, via lemma \ref{lemma:MIlowerbound}, that the mutual information was lower bounded by some constant times $M$ for sufficiently large $M$.
Since $M$ could be anything subleading in $\ell_{\text{AdS}}/G_N,$ this would allow us to conclude $I(\mathcal{V}_m : \mathcal{V}_{m+1}) = \Omega(\ell_{\text{AdS}}/G_N).$

However, one can construct examples of the $n$-to-$n$ connected wedge theorem where all bipartitions have extensive mutual information, while adjacent pairs $\mathcal{V}_m : \mathcal{V}_{m+1}$ have only $O(1)$ mutual information.
The flaw in the logic given above is that the causal network has more vertices than just the input and output regions; in particular, the input vertices in the causal network corresponding to intermediary regions between $\mathcal{V}_m$ and $\mathcal{V}_{m+1}$ cannot be completely ignored.
We do not have a rigorous or complete understanding of how these intermediary regions contribute to the strategies that can be used to complete $\mathbf{\hat{B}}_{84}^{M, \delta}$ in the boundary theory.\footnote{This gap in our knowledge is why this is an ``argument'' rather than a ``proof.''. See also \cite{dolev2022holography} who discuss the role of these intermediate regions.}
In fact in completely general states these regions can be used to remove the need for bipartition entanglement in performing the $\hat{\mathbf{B}}_{84}$ task\footnote{See for example appendix B of \cite{may2020holographic} or \cite{dolev2022holography}.}
Nonetheless it is clear that these intermediate regions are distinguished in an important way from the input regions --- any states appearing there or operations performed are independent of the inputs to the task. 
We will \textit{assume} that $p_{\text{suc}}(\hat{\textbf{B}}_{84}^{M, \delta})$ is unaffected by replacing the causal network for input and output regions $\mathcal{V}_m, \mathcal{V}_{m+1}, \mathcal{W}_1, \mathcal{W}_m$ with the causal network for input and output regions $\mathcal{V}_1, \dots, \mathcal{V}_n, \mathcal{W}_1, \dots, \mathcal{W}_n$, and then deleting from this network all input vertices representing regions $\mathcal{X}_1$ through $\mathcal{X}_n.$
As we will see, this assumption reproduces the connected wedge theorem for all $n$. 
We discuss issues around removing or better justifying this assumption in the discussion.

The final essential observation for our argument is that the success probability of a task does not decrease upon fusing any two input vertices of the network.
If all of the entanglement resources and causal connections available to vertex $q_1$ are combined with all the entanglement resources and causal connections available to vertex $q_2,$ the resulting causal network permits all strategies permitted by the original one.
So we will fuse the input vertices $\mathcal{V}_1, \dots, \mathcal{V}_m$ and the input vertices $\mathcal{V}_{m+1}, \dots, \mathcal{V}_n.$
The resulting causal network has only two input vertices, one of which has access to the CFT state on $\mathcal{V}_1 \cup \dots \cup \mathcal{V}_m,$ and one of which has access to the CFT state on $\mathcal{V}_{m+1} \cup \dots \cup \mathcal{V}_n.$
Because this network was obtained by fusion, the success probability of $\hat{\textbf{B}}_{84}^{M, \delta}$ still exceeds $1 - e^{-2 M (\delta - \epsilon)^2}.$

Clearly both input vertices in this network are causally connected to both output vertices.
But we can apply lemma \ref{lemma:boundarycausalstructure} to show that there is no $2$-to-$2$ causal connection in the resulting network.
For any $\mathcal{V}_j, \mathcal{V}_k$, lemma \ref{lemma:boundarycausalstructure} says that any point that can be signalled by both $\mathcal{V}_j$ and $\mathcal{V}_k$ can either signal the set $\{\mathcal{W}_1, \dots, \mathcal{W}_{j-1}, \mathcal{W}_k, \dots, \mathcal{W}_n\}$ or the set $\{\mathcal{W}_j, \dots, \mathcal{W}_{k-1}\},$ but not both.
So after fusion, we end up with a causal network of the form in figure \ref{fig:2-2repeatednlqc}.
Lemma \ref{lemma:MIlowerbound} then tells us that $I(\mathcal{V}_1 \dots \mathcal{V}_m : \mathcal{V}_{m+1} \dots \mathcal{V}_{n})$ is lower-bounded for large $M$ by a linear function of $M$, and since $M$ was any number subleading in $\ell_{\text{AdS}}/G_N,$ we may conclude
\begin{equation}
    I(\mathcal{V}_1 \dots \mathcal{V}_m : \mathcal{V}_{m+1} \dots \mathcal{V}_{n}) = \Omega(\ell_{\text{AdS}}/G_N).
\end{equation}
\end{argument}

\section{Holographic scattering from quantum error-correction}\label{sec:boundaryprotocols}

The connected wedge theorem reveals that a particular pattern of entanglement exists in the boundary given a class of bulk causal structures. 
We showed in section \ref{sec:QI-argument} that the specified bulk causal structure allows arbitrary quantum tasks to be completed in the boundary, and argued that the associated boundary entanglement is necessary for the boundary to reproduce this bulk physics. 
A remaining problem is to demonstrate that the given pattern of boundary entanglement suffices to reproduce the bulk computation.
In this section, we begin addressing this question by constructing a boundary protocol to implement arbitrary Clifford unitary operators among $n=3$ parties.
This allows in particular a version of the $\mathbf{B}_{84}$ task to be completed in the boundary. 

\subsection{Protocol overview}

Recall that in the case of two input and two output points, the connected wedge theorem gives
\begin{align} \label{eq:n2-pattern}
    I(\mathcal{V}_1:\mathcal{V}_2) = \Omega(\ell_{\text{AdS}}/G_N).
\end{align}
This suggests modeling the boundary resource system with EPR pairs shared between these two regions. 
One can use these EPR pairs to complete an arbitrary quantum task in the boundary by exploiting a combination of Bell basis teleportation and port-teleportation \cite{buhrman2014position,beigi2011simplified}.\footnote{Note that the number of EPR pairs required to complete an arbitrary task using the best known protocols is much higher than the number required to model equation \eqref{eq:n2-pattern}; see \cite{may2020holographic} for a discussion. Here, we think coarsely about patterns of entanglement, not specific quantities.}
Naively generalizing these protocols to $n\geq 3$ leads to protocols that require entanglement between pairs of input regions \cite{dolev2019constraining}, whereas the connected wedge theorem only guarantees entanglement across bipartitions of inputs regions.
More concretely, for $n=3$ input and output points, the connected wedge theorem implies
\begin{align} \label{eq:n3-pattern}
    I(\mathcal{V}_1:\mathcal{V}_2\mathcal{V}_3), I(\mathcal{V}_2:\mathcal{V}_1\mathcal{V}_3),I(\mathcal{V}_3:\mathcal{V}_2\mathcal{V}_1)=\Omega(\ell_{AdS}/G_N),
\end{align}
but allows for e.g. $I(\mathcal{V}_1:\mathcal{V}_2)=\Theta(1)$.
In this section we develop a new protocol which uses only the pattern of entanglement guaranteed by equation \eqref{eq:n3-pattern}.

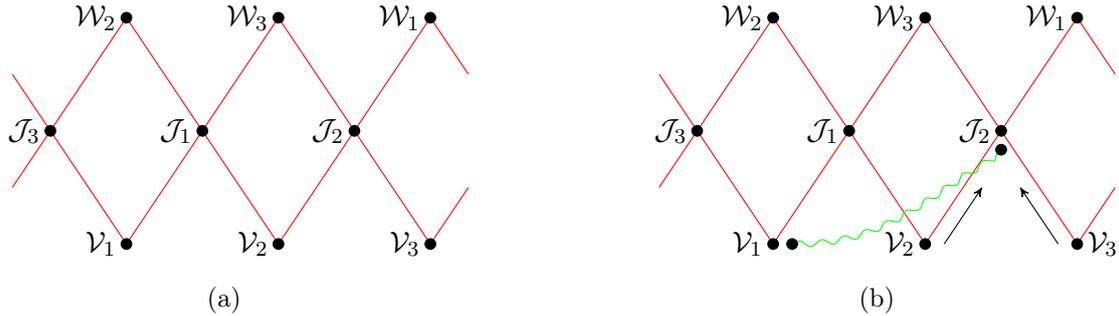
\begin{figure}
\centering
\subfloat[\label{fig:3to3structure}]{
\begin{tikzpicture}[scale=1]

\draw[red] (-1,1.5) -- (0,3);
\draw[red] (1,1.5) -- (0,3);
\draw[red] (1,1.5) -- (2,3);
\draw[red] (-1,1.5) -- (-2,3);
\draw[red] (-3,1.5) -- (-2,3);

\draw[red] (-2,0) -- (-1,1.5);
\draw[red] (-2,0) -- (-3,1.5);
\draw[red] (0,0) -- (-1,1.5);
\draw[red] (0,0) -- (1,1.5);
\draw[red] (2,0) -- (1,1.5);

\draw[red] (2,0) -- (2.5,0.75);
\draw[red] (2,3) -- (2.5,2.25);

\draw[red] (-3,1.5) -- (-3.5,2.25);
\draw[red] (-3,1.5) -- (-3.5,0.75);

\draw plot [mark=*, mark size=2] coordinates{(0,0)};
\node[left] at (0,0) {$\mathcal{V}_2$};

\draw plot [mark=*, mark size=2] coordinates{(-2,0)};
\node[left] at (-2,0) {$\mathcal{V}_1$};

\draw plot [mark=*, mark size=2] coordinates{(2,0)};
\node[left] at (2,0) {$\mathcal{V}_3$};

\draw plot [mark=*, mark size=2] coordinates{(-1,1.5)};
\node[left] at (-1,1.5) {$\mathcal{J}_1$};

\draw plot [mark=*, mark size=2] coordinates{(1,1.5)};
\node[left] at (1,1.5) {$\mathcal{J}_2$};

\draw plot [mark=*, mark size=2] coordinates{(-3,1.5)};
\node[left] at (-3,1.5) {$\mathcal{J}_3$};

\draw plot [mark=*, mark size=2] coordinates{(0,3)};
\node[left] at (0,3) {$\mathcal{W}_3$};

\draw plot [mark=*, mark size=2] coordinates{(-2,3)};
\node[left] at (-2,3) {$\mathcal{W}_2$};

\draw plot [mark=*, mark size=2] coordinates{(2,3)};
\node[left] at (2,3) {$\mathcal{W}_1$};

\end{tikzpicture} 
}
\hfill
\subfloat[\label{fig:3to3structurewithEPRpair}]{
\begin{tikzpicture}

\draw[red] (-1,1.5) -- (0,3);
\draw[red] (1,1.5) -- (0,3);
\draw[red] (1,1.5) -- (2,3);
\draw[red] (-1,1.5) -- (-2,3);
\draw[red] (-3,1.5) -- (-2,3);

\draw[red] (-2,0) -- (-1,1.5);
\draw[red] (-2,0) -- (-3,1.5);
\draw[red] (0,0) -- (-1,1.5);
\draw[red] (0,0) -- (1,1.5);
\draw[red] (2,0) -- (1,1.5);

\draw[red] (2,0) -- (2.5,0.75);
\draw[red] (2,3) -- (2.5,2.25);

\draw[red] (-3,1.5) -- (-3.5,2.25);
\draw[red] (-3,1.5) -- (-3.5,0.75);

\draw plot [mark=*, mark size=2] coordinates{(0,0)};
\node[left] at (0,0) {$\mathcal{V}_2$};

\draw plot [mark=*, mark size=2] coordinates{(-2,0)};
\node[left] at (-2,0) {$\mathcal{V}_1$};

\draw plot [mark=*, mark size=2] coordinates{(2,0)};
\node[right] at (2,0) {$\mathcal{V}_3$};

\draw plot [mark=*, mark size=2] coordinates{(-1,1.5)};
\node[left] at (-1,1.5) {$\mathcal{J}_1$};

\draw plot [mark=*, mark size=2] coordinates{(1,1.5)};
\node[left] at (1,1.5) {$\mathcal{J}_2$};

\draw plot [mark=*, mark size=2] coordinates{(-3,1.5)};
\node[left] at (-3,1.5) {$\mathcal{J}_3$};

\draw plot [mark=*, mark size=2] coordinates{(0,3)};
\node[left] at (0,3) {$\mathcal{W}_3$};

\draw plot [mark=*, mark size=2] coordinates{(-2,3)};
\node[left] at (-2,3) {$\mathcal{W}_2$};

\draw plot [mark=*, mark size=2] coordinates{(2,3)};
\node[left] at (2,3) {$\mathcal{W}_1$};

\draw[green,decorate,decoration={snake, segment length=3mm, amplitude=0.4mm}] (-1.75,0) to [out=0,in=-150] (1,1.25);
\draw plot [mark=*, mark size=2] coordinates{(-1.75,0)};
\draw plot [mark=*, mark size=2] coordinates{(1,1.25)};

\draw[->] (0.25,0) -- (0.75,0.75);
\draw[->] (1.75,0) -- (1.25,0.75);

\end{tikzpicture} 
}
\caption{(a) The boundary causal structure for $n=3$. Signals from any two input regions meet at intermediate regions and scatter to two output regions. (b) We exploit a resource state which has the property that systems from $\mathcal{V}_2$ and $\mathcal{V}_3$ can be collected to prepare an EPR pair with $\mathcal{V}_1$. This is consistent with the condition that $I(\mathcal{V}_1:\mathcal{V}_2\mathcal{V}_3)$ is large, given by the connected wedge theorem. 
}
\end{figure}

Begin by recalling the boundary causal structure for $n=3$, shown in figure \ref{fig:3to3structure}. 
Note that we do not include the $\mathcal{X}_j$ regions in this causal structure. 
A complete description of the boundary should include these, and the existence of the bulk only implies arbitrary computations can be carried out in the causal structure that includes the $\mathcal{X}_j$ regions. 
Here we design protocols for the simpler causal structure without them; we believe that studying the simpler structure already captures interesting aspects of boundary dynamics. 
As well, we believe a generalization of our protocol allows arbitrary tasks to be completed in the structure without the $\mathcal{X}_j$ regions, with only the given entanglement.

There are three layers to the boundary causal network of figure \ref{fig:3to3structure}: the input, intermediate, and output layers. 
Current understanding of the AdS/CFT correspondence shows that degrees of freedom located in the bulk are holographically encoded into boundary degrees of freedom in a quantum error correcting code (QECC) \cite{almheiri2015bulk,dong2016reconstruction,harlow2017ryu,akers2019large,hayden2019learning,akers2021leading,akers2022quantum}. 
Given this, and the structure of our network, we consider protocols of the following form. 
\begin{enumerate}[(1)]
\item Encoding: The time-evolution and signalling from the input regions $\mathcal{V}_{j}$ to intermediate regions $\mathcal{J}_{j}$ will encode initial states into the codeword subspace of some QECC. 
\item Logical operation: The interactions on the intermediate layer will then occur inside the codeword subspace at intermediate regions $\mathcal{J}_{j}$ by implementing logical operations.
\item Decoding: Codewords of the QECC are decoded by the time-evolution and signalling to the output locations $\mathcal{W}_{j}$, completing the holographic scattering task.
\end{enumerate}
Importantly, the causal structure of the boundary network is such that the encoding step in these protocols is not possible without pre-distributed entanglement among the input regions. 
The connected wedge theorem guarantees exactly the pattern of entanglement needed to perform this encoding step.  

In the encoding step, we hope to encode input systems from $\mathcal{V}_{1},\mathcal{V}_{2},\mathcal{V}_{3}$ into a QECC at intermediate regions $\mathcal{J}_{1},\mathcal{J}_{2},\mathcal{J}_{3}$, defined by
\begin{align}
    \mathcal{J}_{1} &\equiv \hat{J}(\mathcal{V}_1,\mathcal{V}_2\rightarrow \mathcal{W}_2,\mathcal{W}_3),\\
    \mathcal{J}_{2} &\equiv \hat{J}(\mathcal{V}_2,\mathcal{V}_3\rightarrow \mathcal{W}_3,\mathcal{W}_1),\\
    \mathcal{J}_{3} &\equiv \hat{J}(\mathcal{V}_3,\mathcal{V}_1\rightarrow \mathcal{W}_1,\mathcal{W}_2).
\end{align}
Focusing on the input system $A_1$ at $\mathcal{V}_1$, notice that $\mathcal{V}_1$ can directly signal only to $\mathcal{J}_{1}$ and $\mathcal{J}_{3}$ but not $\mathcal{J}_2$.
However, we wish to encode $A_1$ into three shares, with one share sent to each of $\mathcal{J}_1$, $\mathcal{J}_2$ and $\mathcal{J}_3$.
Similar issues exist for the remaining two input locations. 
In order to achieve a non-local encoding over $\mathcal{J}_{1},\mathcal{J}_{2},\mathcal{J}_{3}$, we will need to utilize pre-existing entanglement. 
Observe that $\mathcal{V}_1$ is entangled with $\mathcal{V}_2 \cup \mathcal{V}_3$, and both $\mathcal{V}_2$ and $\mathcal{V}_3$ are in the past of $\mathcal{J}_2$. This suggests the following strategy.
Send resource systems from $\mathcal{V}_2$ and $\mathcal{V}_3$ into $\mathcal{J}_2$ which together suffice to produce an EPR pair between $\mathcal{V}_1$ and $\mathcal{J}_2$. 
Then use this entanglement to produce an error correcting code with shares distributed among $\mathcal{J}_1,\mathcal{J}_2, \mathcal{J}_3$, even though $\mathcal{J}_2$ is outside the causal future of $\mathcal{V}_1$.
See figure \ref{fig:3to3structurewithEPRpair}. 

In order for the first step in this protocol to work --- producing an EPR pair between $\mathcal{V}_1$ and $\mathcal{J}_2$ --- we must assume that the resource state shared by $\mathcal{V}_1, \mathcal{V}_2,$ and $\mathcal{V}_3$ can be converted to an EPR pair by acting locally on systems $\mathcal{V}_1$ and $\mathcal{V}_2 \cup \mathcal{V}_3.$
This is not guaranteed by the condition $I(\mathcal{V}_1 : \mathcal{V}_2 \mathcal{V}_3) = \Omega(\ell_{\text{AdS}}/G_N).$
However, it seems reasonable to expect that this can be done in holography --- pure-gravity entanglement is highly compressible \cite{bao2019beyond}, so any two regions that share mutual information at order $\Omega(\ell_{\text{AdS}}/G_N)$ should be well approximated by an appropriate number of EPR pairs.
A simple example of a three party state that can be used as an appropriate resource for our protocol is
\begin{align}
    \Phi = \frac{1}{4}\sum_{k} (\mathbf{I}\otimes \mathbf{P}_k)\Psi^+_{AB}(\mathbf{I}\otimes \mathbf{P}_k^{\dagger}) \otimes \ketbra{k}{k}_C,
\end{align}
where $\mathbf{P}_k$ is an appropriately generalized Pauli operator and $\Psi^+$ is an EPR pair. Notice all the mutual informations between pairs of subsystems are zero, but $B$ and $C$ collected together suffice to produce an EPR pair shared with $A$. 

The second part of the protocol requires using an EPR pair shared between $\mathcal{V}_1$ and $\mathcal{J}_2$ to encode the input systems into a QECC.
To understand how to do this, we recall the notion of an entanglement-assisted quantum error correcting code (EAQECC). 
Typically, we consider encoding unitaries that act on all subsystems. 
For instance, in a code storing one qudit into three, an encoding map has the general form
\begin{align}
    \ket{x}_L \rightarrow \ket{x}_P = \mathbf{U}_{P_1P_2P_3}\ket{x,0,0}_{P_1P_2P_3}.
\end{align}
However, by replacing the ancilla state $\ket{0,0}_{P_2 P_3}$ with an entangled state, we can have an encoding unitary that acts more locally. 
In particular, consider the three qudit code, which stores one $d$ dimensional qudit into three for $d\geq 3$. 
The code words are of the form
\begin{align}
|x\rangle \quad \rightarrow \quad |\widetilde{x}\rangle = \frac{1}{\sqrt{p}}\sum_{y=0}^{p-1}|y, y+x,y+2x\rangle.
\end{align}
As we discuss in appendix \ref{sec:CodeConstructions}, these code words can be prepared by acting on only two subsystems and sharing an EPR state. 
In particular there exist unitaries $\mathbf{U}_{P_1P_2}$, $\mathbf{U}_{P_2P_3}$, $\mathbf{U}_{P_1P_3}$ such that
\begin{align}
|\widetilde{x}\rangle = \mathbf{U}_{P_1P_2}|x\rangle_{P_1} \otimes |\text{EPR}\rangle_{P_2P_3}= \mathbf{U}_{P_2P_3}|x\rangle_{P_2} \otimes |\text{EPR}\rangle_{P_3P_1} = 
 \mathbf{U}_{P_3P_1}|x\rangle_{P_3} \otimes |\text{EPR}\rangle_{P_1P_2}.\nonumber
\end{align}
Codes prepared in this way are entanglement-assisted quantum codes.\footnote{Note that EAQECC's have appeared elsewhere in holography, perhaps most famously in the Hayden-Preskill thought experiment \cite{hayden2007black}.} See figure \ref{fig:EAQECCencoding}. 

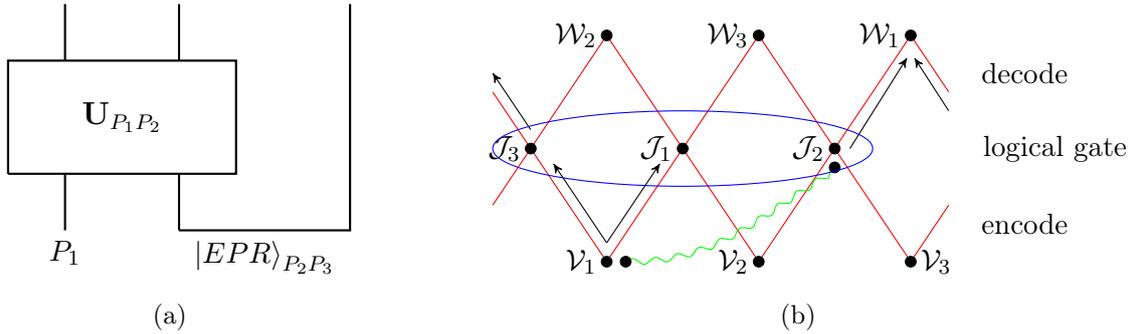
\begin{figure}
\centering
\subfloat[\label{fig:EAQECCencoding}]{
\begin{tikzpicture}[scale=0.75]

\draw[thick] (0,0) -- (4,0) -- (4,2) -- (0,2) -- (0,0);
\node at (2,1) {\large{$\mathbf{U}_{P_1P_2}$}};

\draw[thick] (1,-1) -- (1,0);
\node[below] at (1,-1) {$P_1$};

\draw[thick] (3,-1) -- (3,0);
\draw[thick] (3,-1) -- (6,-1) -- (6,3);
\node[below] at (4.5,-1) {$\ket{EPR}_{P_2P_3}$};

\draw[thick] (1,2) -- (1,3);
\draw[thick] (3,2) -- (3,3);

\end{tikzpicture}
}
\hfill
\subfloat[\label{fig:EAQECsharedistribution}]{
\begin{tikzpicture}

\draw[red] (-1,1.5) -- (0,3);
\draw[red] (1,1.5) -- (0,3);
\draw[red] (1,1.5) -- (2,3);
\draw[red] (-1,1.5) -- (-2,3);
\draw[red] (-3,1.5) -- (-2,3);

\draw[red] (-2,0) -- (-1,1.5);
\draw[red] (-2,0) -- (-3,1.5);
\draw[red] (0,0) -- (-1,1.5);
\draw[red] (0,0) -- (1,1.5);
\draw[red] (2,0) -- (1,1.5);

\draw[red] (2,0) -- (2.5,0.75);
\draw[red] (2,3) -- (2.5,2.25);

\draw[red] (-3,1.5) -- (-3.5,2.25);
\draw[red] (-3,1.5) -- (-3.5,0.75);

\draw plot [mark=*, mark size=2] coordinates{(0,0)};
\node[left] at (0,0) {$\mathcal{V}_2$};

\draw plot [mark=*, mark size=2] coordinates{(-2,0)};
\node[left] at (-2,0) {$\mathcal{V}_1$};

\draw plot [mark=*, mark size=2] coordinates{(2,0)};
\node[right] at (2,0) {$\mathcal{V}_3$};

\draw plot [mark=*, mark size=2] coordinates{(-1,1.5)};
\node[left] at (-1,1.5) {$\mathcal{J}_1$};

\draw plot [mark=*, mark size=2] coordinates{(1,1.5)};
\node[left] at (1,1.5) {$\mathcal{J}_2$};

\draw plot [mark=*, mark size=2] coordinates{(-3,1.5)};
\node[left] at (-3,1.5) {$\mathcal{J}_3$};

\draw plot [mark=*, mark size=2] coordinates{(0,3)};
\node[left] at (0,3) {$\mathcal{W}_3$};

\draw plot [mark=*, mark size=2] coordinates{(-2,3)};
\node[left] at (-2,3) {$\mathcal{W}_2$};

\draw plot [mark=*, mark size=2] coordinates{(2,3)};
\node[left] at (2,3) {$\mathcal{W}_1$};

\draw[green,decorate,decoration={snake, segment length=3mm, amplitude=0.4mm}] (-1.75,0) to [out=0,in=-150] (1,1.25);
\draw plot [mark=*, mark size=2] coordinates{(-1.75,0)};
\draw plot [mark=*, mark size=2] coordinates{(1,1.25)};

\node at (3.5,0.5) {encode};
\node at (3.9,1.5) {logical gate};
\node at (3.5,2.5) {decode};

\draw[->] (-2,0.25) -- (-1.3,1.3);
\draw[->] (-2,0.25) -- (-2.7,1.3);

\draw[->] (-3,1.75) -- (-3.5,2.5);
\draw[->] (2.5,2) -- (2.05,2.7);
\draw[->] (1.2,1.5) -- (1.95,2.7);

\draw[blue] (-1,1.5) ellipse (2.5 and 0.5);

\end{tikzpicture} 
}
\caption{(a) Structure of an encoding in an EAQECC (b) Encoding of $A_1$, which begins at $\mathcal{V}_1$, into $\mathcal{J}_{1},\mathcal{J}_{2},\mathcal{J}_{3}$ by using the EPR pair between $\mathcal{V}_{1}$ and $\mathcal{J}_{2}$, followed by recovery at $\mathcal{W}_1$.
}
\label{fig_EAQECC_12}
\end{figure}

Returning to the description of our protocol, label the input system at $\mathcal{V}_i$ to be $A_i$.
The input systems will be in some joint state $\ket{\Psi}_{RA_1A_2A_3}$, with $R$ a reference system. 
At each $\mathcal{V}_i$, encode $A_i$ into a three qudit code, using the entanglement between $\mathcal{V}_i$ and $\mathcal{J}_{i+1}$. 
Then, the two shares produced locally at $\mathcal{V}_i$ are sent to $\mathcal{J}_{i-1}$ and $\mathcal{J}_i$ so that the three subsystems of the code are each brought to one of the three scattering regions. 
See figure \ref{fig:EAQECsharedistribution}. 

Using this procedure, the state held at the intermediate regions will consist of three copies of the three qudit code, each holding one of the $A_i$ in its logical Hilbert space. 
At this stage, we are able to apply unitary operators of the form
\begin{align}\label{eq:transverseunitary}
    \mathbf{V}_{\mathcal{J}_1\mathcal{J}_2\mathcal{J}_3} = \mathbf{V}_{\mathcal{J}_1}\otimes \mathbf{V}_{\mathcal{J}_2}\otimes \mathbf{V}_{\mathcal{J}_3}.
\end{align}
Here $\mathbf{V}_{\mathcal{J}_i}$ acts jointly on any subsystems in the $\mathcal{J}_i$ region, which here will be three shares, one from each of the three codes. 
Operators of this form, which act as a tensor product across different shares within each code, are known as transverse operations.\footnote{A similar notion of transversal operations, which involves acting across copies of a code, will appear in \cite{dolev2022holography} in a similar context.}
As we discuss in appendix \ref{sec:CodeConstructions}, at least the Clifford operations can be implemented transversally in the three qudit code. 

With the $A_i$ systems transformed by the chosen Clifford unitary, our remaining goal is to produce the $A_i$ system at output location $\mathcal{W}_i$. 
Notice that location $\mathcal{W}_i$ has $\mathcal{J}_{i+1}$ and $\mathcal{J}_{i-1}$ in its past. 
These two shares can be brought to $\mathcal{W}_i$, and used to recover $A_i$, completing the protocol. 

\subsection{Protocol for variant \texorpdfstring{$\mathbf{B}_{84}$}{TEXT}}\label{sec:boundaryB84}

To ensure that our protocol above is non-trivial, we should check that the entanglement it is using is actually necessary. 
To do this, we briefly point out that a variant on the $\mathbf{B}_{84}$ task, which provably requires bipartition entanglement according to the argument given in section \ref{sec:final-tasks-argument}, can be completed using the protocol given.

Our variant $\mathbf{B}_{84}$ task has inputs
\begin{align}
    \mathbf{H}^q \ket{b}_{A_1},\ket{q}_{A_2},\ket{q}_{A_3}
\end{align}
where each $A_i$ is a qutrit, $b,q\in \{0,1,2\}$, and $\mathbf{H}$ is the Hadamard operator for qutrits, 
\begin{align}
\mathbf{H} = \frac{1}{\sqrt{3}}\begin{bmatrix}
1& 1& 1  \\
1& \omega& \omega^2  \\ 
1& \omega^2& \omega 
\end{bmatrix}.
\end{align}
To understand how to complete this task in the boundary, notice that since $q$ is available at $\mathcal{V}_2$ and $\mathcal{V}_3$, it can be sent to all of the intermediate regions $\mathcal{J}_1,\mathcal{J}_2,\mathcal{J}_3$. 
At the intermediate points then, we know $q$, and so know the identity of the logical operation $\mathbf{H}^q$ that needs to be implemented. 
Since $\mathbf{H}^q$ is Clifford, it can be implemented by acting transversally with an operation of the form in equation \eqref{eq:transverseunitary}.
The logical state is now 
\begin{align}
    \ket{\tilde{b},\tilde{q},\tilde{q}}_{A_1A_2A_3}.
\end{align}
Next, apply $\mathbf{X}^{3-q}$ to the (logical) $A_2$ and $A_3$ systems, resetting them to zero. 
Further, apply two $\mathbf{CNOT}$ gates with $A_1$ as the control and $A_2$ and $A_3$ as targets, producing
\begin{align}
     \ket{\tilde{b},\tilde{b},\tilde{b}}
\end{align}
Both the $\mathbf{H}^q$ and $\mathbf{CNOT}$ operations are Clifford, so can be carried out transversally.
Finally, two shares of the code storing $A_1$ can be brought to $\mathcal{W}_1$, and similarly for $A_2$ and $A_3$, allowing the value of $b$ to be decoded at each output location. 

Notice that the task defined here differs from the $\mathbf{B}_{84}$ task defined in section \ref{sec:B84taskinbulk} in two ways. 
First, we have moved from qubits to qutrits. 
This is convenient because the three qudit code is defined only for $d\geq 3$, but isn't essential --- we could instead for instance use the 7 qubit code \cite{Steane} to complete the task for qubits, since it has transverse Clifford operations.  
Second, we have the basis information $q$ input at two input locations, whereas the earlier task it was given at just one. 
This is an important change because if $q$ is given at just one location, we cannot complete this task using a Clifford unitary. 
Instead, we would need to implement a controlled-$\mathbf{H}$ operation, which is non-Clifford. 

After making these changes, the proof in section \ref{sec:final-tasks-argument} that bipartition entanglement is necessary still applies. 
To adapt the proof to qutrits, one can exploit the more general bounds on unentangled success probabilities given in \cite{tomamichel2013monogamy} (we used a special case for qubits to prove lemma \ref{lemma:psucupperbound}). 
To adapt to the case where $q$ is revealed at two input locations, notice that in the proof of theorem \ref{thm:n-to-n-QI} we fused all input locations together aside from the one holding the input quantum system. 
Running that argument for this task, the fused region would obtain multiple copies of $q$, but this doesn't change the success probability. 

\section{Discussion}

We conclude by discussing some supplementary results about the conclusions in the main text, and open questions.

\vspace{0.2cm}
\noindent \textbf{Scattering regions are inside the entanglement wedge}
\vspace{0.2cm}

In the setting of the $2$-to-$2$ connected wedge theorem, it was observed in \cite{may2020holographic,may2021holographic} that the scattering region $J(E_{\mathcal{V}_1},E_{\mathcal{V}_2}\rightarrow E_{\mathcal{W}_1},E_{\mathcal{W}_2})$ is inside of $E_{\mathcal{V}_1\cup \mathcal{V}_2}$. 
This observation generalizes to the setting of the $n$-to-$n$ connected wedge theorem. 
In particular, when the connected wedge theorem holds, each of the $J(E_{\mathcal{V}_j},E_{\mathcal{V}_k}\rightarrow E_{\mathcal{W}_1},\dots,E_{\mathcal{W}_n})$ are inside of $E_{\mathcal{V}}$. 
We show this below. 

Denote by $\mathcal{X}_j$ the causal diamonds in the spacelike complement of $\mathcal{V} \equiv \mathcal{V}_1 \cup \dots \cup \mathcal{V}_n.$
Specifically, order them so that $\mathcal{X}_j$ is immediately to the right of $\mathcal{V}_j,$ as in figure \ref{fig:X-regions}.
Define the boundary regions $\mathcal{X}=\cup_j \mathcal{X}_j$, and $\mathcal{D}_j = D(\mathcal{V} \cup (\mathcal{X} - \mathcal{X}_j))$. 
Notice that since $\mathcal{W}_j \subseteq \mathcal{D}_j$ by lemma \ref{lem:adjacent-cutoffs-agree}, we have $E_{\mathcal{W}_j} \subseteq E_{\mathcal{D}_j}$ by entanglement wedge nesting \cite{maximin}.
This implies $J^-(E_{\mathcal{W}_j})\subseteq J^-(E_{\mathcal{D}_j})$, and therefore
\begin{align}\label{eq:futureboundarycontained}
    \bigcap_{j} J^-(E_{\mathcal{W}_j}) \subseteq \bigcap_j J^-(E_{\mathcal{D}_j}).
\end{align}
Assuming $E_{\mathcal{V}}$ is connected, its HRRT surface is $\gamma_{\mathcal{V}}=\cup_j\gamma_{\mathcal{X}_j}$.
Since $\gamma_{\mathcal{D}_j}=\gamma_{\mathcal{X}_j}$, the future boundary of $\cap_j J^-(E_{\mathcal{D}_j})$ is also the future boundary of $E_{\mathcal{V}}$. 
So equation \eqref{eq:futureboundarycontained} implies that $\cap_{j} J^-(E_{\mathcal{W}_j})$ is to the past of the future boundary of $E_{\mathcal{V}}$. 
Since $J(E_{\mathcal{V}_j},E_{\mathcal{V}_k} \rightarrow E_{\mathcal{W}_1},\dots,E_{\mathcal{W}_n})\subseteq \cap_{\ell} J^-(E_{\mathcal{W}_\ell})$, the scattering region is also to the past of the future boundary of $E_{\mathcal{V}}$.

It remains to show that the scattering region is to the future of the past boundary of ${E}_\mathcal{V}$.
This is immediate by the inclusion $E_{\mathcal{V}_j} \subseteq E_{\mathcal{V}}$, which is an example of entanglement wedge nesting.

\vspace{0.2cm}
\noindent \textbf{Why no \texorpdfstring{$n$}{TEXT}-to-\texorpdfstring{$m$}{TEXT} theorem for \texorpdfstring{$n\neq m$}{TEXT}?}
\vspace{0.2cm}

A natural extension of the $n$-to-$n$ connected wedge theorem would be to consider cases with $n$ input locations and $m$ output locations, where $n\neq m$. 
While we believe some entanglement constraints should be associated with these settings, their form must be different than the most obvious generalization of the connected wedge theorem. 
In particular, defining input regions according to
\begin{align}
    \mathcal{V}_j = J(c_j\rightarrow r_1,\dots,r_m),
\end{align}
we can find counterexamples to the claim that an all-to-all bulk scattering vertex implies the entanglement wedge of $\mathcal{V}=\cup_j\mathcal{V}_j$ is connected.
See figure \ref{fig:ntomcounterexample}. 

Furthermore, the proof techniques used in section \ref{sec:GR-proof} do not seem to generalize to situations with $n \neq m.$
In the proof of that section, each boundary-anchored component of the final ``contradiction surface'' was obtained by focusing along a single past lightsheet $\mathscr{P}_j.$
For this to work, it is essential that the number of boundary-anchored components in the HRRT surface of a connected entanglement wedge is the same as the number of past lightsheets.
Otherwise, it would be necessary to focus a single boundary-anchored component of the contradiction surface along multiple past lightsheets simultaneously, which is not guaranteed to be area-nonincreasing.

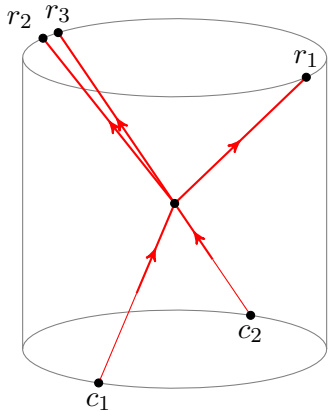
\begin{figure}
    \centering
    \tdplotsetmaincoords{15}{0}
    \begin{tikzpicture}[scale=1.0,tdplot_main_coords]
    \tdplotsetrotatedcoords{0}{30}{0}
    \draw[gray] (-2,1,0) -- (-2,5,0);
    \draw[gray] (2,1,0) -- (2,5,0);
    
    \begin{scope}[tdplot_rotated_coords]
    
    \begin{scope}[canvas is xz plane at y=1]
    \draw[gray] (0,0) circle[radius=2] ;
    \end{scope}
    
    \begin{scope}[canvas is xz plane at y=5]
    \draw[gray] (0,0) circle[radius=2] ;
    \end{scope}
    
    \draw[red] (0,1,-2) -- (0,2,-1);
    \draw[red] (0,1,2) -- (0,2,1);
    
    \draw[thick, red,mid arrow] (0,3,0) -- (2,5,0);
    \draw[thick, red,mid arrow] (0,3,0) -- (-2,5,0);
    \draw[thick, red,mid arrow] (0,3,0) -- (-2*0.985,5,2*0.174);
    
    \draw[thick,red,mid arrow] (0,2,-1) -- (0,3,0);
    \draw[thick,red,mid arrow] (0,2,1) -- (0,3,0);
    
    \draw plot [mark=*, mark size=1.5] coordinates{(2,5,0)};
    \node[above] at (2,5,0) {$r_1$};
    \draw plot [mark=*, mark size=1.5] coordinates{(-2,5,0)};
    \node[above left] at (-2,5,0) {$r_2$};
    \draw plot [mark=*, mark size=1.5] coordinates{(-2*0.985,5,2*0.174)};
    \node[above] at (-2*0.985,5,2*0.174) {$r_3$};
    
    \draw plot [mark=*, mark size=1.5] coordinates{(0,1,-2)};
    \node[below] at (0,1,-2) {$c_1$};
    \draw plot [mark=*, mark size=1.5] coordinates{(0,1,2)};
    \node[below] at (0,1,2) {$c_2$};
    \draw plot [mark=*, mark size=1.5] coordinates{(0,3,0)};
    
    \end{scope}
    \end{tikzpicture}
    \caption{Counterexample to the naive generalization of the connected wedge theorem with $2$ input and $3$ output points. Input points $c_1$ and $c_2$ are place antipodally at time $t=0$, and output points $r_1$ and $r_2$ are at time $\pi$, again antipodal but rotated a quarter turn. These four points define input regions that are exactly on the transition between having connected and disconnected entanglement wedges. Adding an additional output point $r_3$ makes one of the input regions smaller, so that the wedge are disconnected, despite there being a $2$-to-$3$ vertex in the bulk.}
    \label{fig:ntomcounterexample}
\end{figure}

\vspace{0.2cm}
\noindent \textbf{Connected wedge theorems in higher dimensions}
\vspace{0.2cm}

In \cite{may2020holographic, may2021holographic}, it was claimed that the proof of the $2$-to-$2$ connected wedge theorem holds in arbitrary spacetime dimension.
This is not true.
While the theorem itself \textit{may} be true in higher spacetime dimensions, the proof techniques used in those references and in section \ref{sec:GR-proof} do not work above three bulk dimensions.
The reason is the same as in the previous subsection: in three bulk dimensions, when $E_{\mathcal{V}} = E_{\mathcal{V}_1 \cup \dots \cup \mathcal{V}_n}$ is connected, then its HRRT surface contains $n$ boundary-anchored components, which is the same as the number of past lightsheets.
In higher dimensions, the HRRT surface of a connected entanglement wedge generally only has one boundary-anchored component.

The quantum information argument given in section \ref{sec:final-tasks-argument} may seem to be dimension-agnostic, but the key assumptions made in that argument are less easily justified above three bulk dimensions.
For the argument to work, it was necessary to assume that the $\mathcal{X}$ region, which is the spacelike complement of the input regions, is not a useful resource for the $\hat{\mathbf{B}}_{84}^{n, \delta}$ task.
In three bulk dimensions, the $\mathcal{X}$ region splits into $n$ components, which we have been calling $\mathcal{X}_j$.
Each of these components lies in between two neighboring input regions, providing a ``buffer'' between them.
Above three bulk dimensions, the structure of $\mathcal{X}$ is quite different; it does not split up into components, and instead functions as a ``bath'' in which the input regions are immersed.
This causes the entanglement between $\mathcal{X}$ and the input regions to be more complicated; it may be the case that neglecting $\mathcal{X}$ is justified in three bulk dimensions, but not in higher dimensions.

It would be interesting to know whether the connected wedge theorem is true above three dimensions.
If an explicit counterexample were found, it could be used to draw implications about how holographic CFT states in two dimensions differ, as information-theoretic resources, from their higher dimensional counterparts.

\vspace{0.2cm}
\noindent \textbf{Towards a loophole-free quantum information proof}
\vspace{0.2cm}

As emphasized in the previous subsection, a drawback of our quantum tasks argument for the $n$-to-$n$ connected wedge theorem in section \ref{sec:final-tasks-argument} is that we have ignored the $\mathcal{X}_j$ regions in treating the boundary causal network. 
We believe that the argument fails if these are included and the states held there are not constrained. 
It would be interesting to understand a minimal set of assumptions needed for the theorem to follow even with the $\mathcal{X}_j$ regions included, and if these assumptions hold for holographic states. 

\vspace{0.2cm}
\noindent \textbf{Arbitrary quantum tasks with biparition entanglement}
\vspace{0.2cm}

In section \ref{sec:boundaryprotocols}, we explained how to complete certain tasks in the boundary causal structure, using only the pattern of entanglement given by the connected wedge theorem. 
Note, however, that we made two omissions.
First, we didn't make use of the $\mathcal{X}_j$ regions, which allow additional quantum systems to enter the scattering regions, aside from those present in the input regions. 
Considering this complete description of the boundary causal network, the AdS/CFT correspondence implies arbitrary quantum tasks can be completed. 
Second, we did not give a protocol for arbitrary quantum tasks, only a protocol for completing Clifford unitaries. 
It would be interesting to understand if the $\mathcal{X}_j$ regions need to be included to allow arbitrary quantum tasks to be completed, or to design protocols on the weaker causal structure that complete arbitrary quantum tasks.

\vspace{0.2cm}
\noindent \textbf{Signatures for higher degree scattering regions}
\vspace{0.2cm}

In this paper, we have seen that an appropriate set of $2\rightarrow n$ scattering regions can replace, for the purposes of completing quantum tasks, any $n\rightarrow n$ scattering region. 
Towards the general understanding of how bulk causal structure is recorded into boundary entanglement, this leaves an open puzzle: what is the boundary signature of a $n\rightarrow n$ scattering region, with $n>2$?
Such regions do have a signature in terms of Lorentzian CFT correlation functions, but we might also expect them to have consequences for boundary correlations.

One interesting possibility is the following. 
The protocol given in section \ref{sec:arbitrarytasks} requires distributing large numbers of EPR pairs: for computations involving $\Theta(n)$ qubits, the protocol distributes $\Omega(2^n)$ EPR pairs. 
Unless there are more efficient protocols, this would limit the size of computations to $n=o(\log(\ell_{\text{AdS}}/G_N))$ qubits. 
In comparison, given a $n$-to-$n$ scattering region we expect $o(\ell_{\text{AdS}}/G_N)$ qubits can be brought into the region, and that at least low complexity operations can be performed on them. 
Thus, $n$-to-$n$ regions may allow more computations to happen in the bulk in the setting where we do not ignore backreaction. 
The distinction between performing computations of size $\log(\ell_{\text{AdS}}/G_N)$ and $\ell_{\text{AdS}}/G_N$ would then need to appear in the boundary. 
This could take the form of some additional type of resource shared among the $n$ input regions, for instance multipartite entanglement of a form not distinguished from bipartite entanglement by the mutual information. 

\vspace{0.2cm}
\noindent \textbf{Position-verification cheating protocols in higher dimensions}
\vspace{0.2cm}

In the cryptographic task of position-verification \cite{chandran2009position,kent2011quantum,kent2006tagging, malaney2010location}, an honest player performs a quantum task by acting locally within some scattering region(s), while a dishonest player is forced to perform the same task by exploiting entanglement and using some weaker causal network. 
The standard scenario considers position verification in $1+1$ dimensions, and the relevant networks are the two that appear in the connected wedge theorem for $n=2$. 
Considering position-verification in higher dimensions, more involved networks become relevant. 
The pairs of networks appearing in the connected wedge theorem for $n>2$ may in some cases be the relevant pair in this setting. 
It would be interesting to understand this better, and explore the consequences of our boundary protocol and bulk picture for the security of position-verification schemes. 

\vspace{0.2cm}
\noindent \textbf{EAQECC in non-local quantum computation}
\vspace{0.2cm}

The boundary protocol discussed in section \ref{sec:boundaryprotocols} introduced a technique for performing quantum tasks on causal networks: given system $A$ at spacetime point $p$, we used an EAQECC to split $A$ into subsystems, some of which are outside the future of $p$. 
This is also possible using teleportation, but only at the expense of introducing Pauli corruptions of the distant share, or hiding the share in an unknown port. 
This technique may be of use throughout the understanding of quantum information processing in spacetime and causal networks.

In holography, even for $n=2$ the boundary protocol is most naturally modeled as an EAQECC: the particles fall into the bulk is the encoding step, then logical operations are performed on the encoded systems in tensor product form, then the systems are decoded, pulling them out from the bulk. 
Given the tension between the efficiency, in terms of entanglement use, in AdS/CFT vs in existing non-local computation protocols in quantum information \cite{may2022complexity}, it is natural to explore EAQECC based non-local computation protocols in more detail. 
A simple starting point may be to adapt the code based strategies of \cite{cree2022code} to use entanglement-assisted codes. 

\acknowledgments{
We thank Netta Engelhardt, Matt Headrick, Kfir Dolev, and Eshan Kemp for helpful discussions. 
Significant portions of this work were completed while AM and JS were visiting the Perimeter Institute.
AM is supported by the Simons Foundation It from Qubit collaboration, a PDF fellowship provided by Canada’s National Science and Engineering Research council, and by Q-FARM.
JS is supported by AFOSR award FA9550-19-1-0369, CIFAR, DOE award DE-SC0019380 and the Simons Foundation.
Research at Perimeter Institute is supported in part by the Government of Canada through the Department of Innovation, Science and Economic Development Canada and by the Province of Ontario through the Ministry of Colleges and Universities.
}

\appendix

\section{Travel guide: general relativity}\label{sec:GR}
\label{app:GR-travel-guide}

The proof of the $n$-to-$n$ connected wedge theorem given in section \ref{sec:GR-proof} requires some technical background that we will not assume is familiar to all readers. Stating the theorem precisely requires defining terms like ``AdS-hyperbolic,'' and proving the theorem requires applying lemmas from the theory of causal structure. To avoid burdening the main text with technical digressions and specifications, we provide this appendix to (i) give precise definitions of many technical terms used in the main text, and (ii) state (with sources) the essential lemmas that we must use.

\subsection{Glossary}\label{sec:glossary}
\label{app:glossary}

\begin{definition}
    A spacetime $(\mathcal{M}, g)$ satisfies the \textbf{null curvature condition} if for every null vector $k^a$ we have
    \begin{equation}
        R_{ab} k^a k^b \geq 0.
    \end{equation}
\end{definition}

\begin{definition}
    A spacetime is said to have a \textbf{singularity} at the end of the curve $\gamma$ if $\gamma$ is an affinely parameterized, inextendible geodesic with no future endpoint and finite future extent in affine parameter.
\end{definition}

\begin{definition}
    Given a spacetime $(\mathcal{M}, g)$ and a set $S \subseteq \mathcal{M},$ the \textbf{domain of dependence} $D(S)$ is the set of all points $p \in \mathcal{M}$ for which every inextendible causal curve passing through $p$ intersects $S.$
\end{definition}

\begin{definition}
	A spacetime set $S$ is \textbf{achronal} if there are no future-directed timelike curves connecting distinct points of $S$, and \textbf{acausal} if there are no future-directed causal curves connecting distinct points of $S.$
\end{definition} 

\begin{definition}
    A spacetime is \textbf{globally hyperbolic} if there exists a closed, achronal set $S$ with $\mathcal{M} = D(S).$ The set $S$ is then said to be a \textbf{Cauchy slice} of $\mathcal{M}$ or a \textbf{Cauchy surface} for $\mathcal{M}.$
\end{definition}

\begin{definition} 
	Given a spacetime $(\mathcal{M}, g)$ and a set $S \subseteq \mathcal{M},$ the symbol $J^+(S)$ ($J^-(S)$) denotes the \textbf{causal future (past)} of $S,$ i.e., the set of points that can be reached from $S$ by future-directed (past-directed) causal curves. The sets $I^+(S)$ and $I^-(S)$ denote the \textbf{chronological future and past} of $S,$ and are defined equivalently but with timelike curves instead of causal curves.
\end{definition}

\begin{definition}
	The \textbf{edge} or \textbf{spatial boundary} of a closed spacetime set $S$, denoted $\partial_0 S,$ is the set of all points $p \in S$ for which any neighborhood of $p$ contains points $x \in I^-(p), y \in I^+(p)$ and a timelike curve from $x$ to $y$ that does not intersect $S.$
\end{definition}

\begin{definition}
	A $(d+1)$-dimensional spacetime $(\mathcal{M}, g)$ is said to have \textbf{conformal completion} $(\tilde{\mathcal{M}}, \tilde{g})$ if $\tilde{\mathcal{M}}$ is a smooth manifold-with-boundary and there is a smooth embedding $\mathcal{M} \to \tilde{\mathcal{M}}$ for which (i) the image of $\mathcal{M}$ is the interior of $\tilde{\mathcal{M}}$, and (ii) the metric $\tilde{g}$ restricted to the interior of $\tilde{\mathcal{M}}$ is related to the push-forward of $g$ by a Weyl factor.
\end{definition}

\begin{remark}
	From now on, whenever we talk about a spacetime $\mathcal{M},$ we will assume we have also specified a conformal completion $\tilde{\mathcal{M}}.$ Note that \textit{causal sets $J^{\pm}$ and $I^{\pm}$ will always be taken to refer to a conformally completed spacetime, and will include points on the conformal boundary.}
\end{remark}

\begin{definition}
	Given a spacetime with a conformal completion $\tilde{\mathcal{M}}$ and a set $S \subseteq \mathcal{M},$ the symbol $\hat{J}^+(S)$ ($\hat{J}^-(S)$) denotes the the set of points that can be reached from $S$ by future-directed (past-directed) causal curves that lie entirely in the spacetime boundary. The sets $\hat{I}^+(S)$ and $\hat{I}^-(S)$ are defined equivalently but with timelike curves instead of causal curves.
\end{definition}

\begin{definition}
	A spacetime $\mathcal{M}$ will be said to be \textbf{AdS-hyperbolic} if its conformal compactification $\tilde{\mathcal{M}}$ is globally hyperbolic with compact Cauchy surface $\Sigma.$ By abuse of terminology, the restriction of $\Sigma$ to $\mathcal{M}$ is sometimes called a Cauchy surface for $\mathcal{M},$ even though it does not satisfy $D(\Sigma|_{\mathcal{M}}) = \mathcal{M}.$
\end{definition}

\begin{definition}
	A component $\mathcal{B} \subseteq \partial \tilde{\mathcal{M}}$ is said to be \textbf{asymptotically anti-de Sitter} if the spacetime $(\mathcal{B}, \tilde{g}|_{T \mathcal{B}})$ is globally hyperbolic and the AdS equation $R_{ab} = - d\, g_{ab}$ is satisfied in a neighborhood of $\mathcal{B}$ at order $1/z^2$ in Fefferman-Graham coordinates. (This condition is independent of the particular choice of Fefferman-Graham coordinates; see section 2 of \cite{sorce2019cutoffs}. Note that in that paper, spacetime dimension was taken to be $d$, not $d+1.$)
\end{definition}

\begin{definition}
	A component $\mathcal{B} \subseteq \partial \tilde{\mathcal{M}}$ is said to be \textbf{asymptotically globally anti-de Sitter} if it is asymptotically locally anti-de Sitter and is Weyl-equivalent to a flat timelike cylinder.
\end{definition}

\begin{definition}
	A \textbf{boundary region} $\mathcal{R}$ is the domain of dependence within the spacetime $\partial \tilde{\mathcal{M}}$ of an embedded, closed, spacelike, acausal, codimension-one submanifold or submanifold-with-boundary of $\partial \tilde{\mathcal{M}}.$ Any such submanifold is called a \textbf{slice} of $\mathcal{R}.$
\end{definition}

\begin{definition} \label{def:HRRT}
	Given a boundary region $\mathcal{R}$ on an asymptotically anti-de Sitter boundary of a spacetime $\mathcal{M},$ the \textbf{HRRT surface} of $\mathcal{R}$ is the smallest-area, codimension-2 bulk surface such that:
    \begin{enumerate}[(i)]
        \item The surface is locally extremal.
        \item There exists a Cauchy surface $\Sigma$ for $\tilde{M},$ containing the spatial boundary $\del_0 \mathcal{R},$ on which the HRRT surface is homologous to $\Sigma \cap \mathcal{R}.$
    \end{enumerate}

    The \textbf{entanglement wedge} of $\mathcal{R}$, denoted $\text{EW}(\mathcal{R}),$ is the domain of dependence within $\tilde{\mathcal{M}}$ of the homology region. Equivalently, it is the closure of the set of all points spacelike between the HRRT surface and $\mathcal{R}.$
\end{definition}

\subsection{Handbook of causal structure lemmas}
\label{app:causal-lemmas}

\begin{lemma} \label{lem:causal-timelike-addition}
    Given three spacetime points $p, q, r$ with $r \in I^+(q)$ and $q \in J^+(p),$ we have $r \in I^+(p).$

    Given three spacetime points $p, q, r$ with $r \in J^+(q)$ and $q \in I^+(p),$ we have $r \in I^+(p).$
\end{lemma}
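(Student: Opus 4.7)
This is the classical ``push-up'' lemma from Lorentzian causal theory (see e.g.\ Proposition 2.19 of \cite{penrose1972techniques}, or equivalently Proposition 4.5.1 of Hawking--Ellis). The plan is to reduce the global statement to a local computation in a convex normal neighborhood of $q$, where the causal structure is governed by the exponential map and is therefore equivalent (at the linearized level) to that of Minkowski space. In Minkowski space it is immediate that a timelike displacement concatenated with a causal displacement is again timelike.

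For the first statement, I would fix a causal curve $\alpha$ from $p$ to $q$ and a timelike curve $\beta : [0,1] \to \mathcal{M}$ with $\beta(0) = q$ and $\beta(1) = r$. Choose a convex normal neighborhood $U$ of $q$, and pick a point $q_1 \in \alpha \cap U$ strictly in the causal past of $q$, together with the point $q_2 = \beta(\epsilon)$ for some small $\epsilon > 0$, so that $q_2 \in I^+(q)$ and $q_2 \in U$. Inside $U$, the relation ``$x \in I^+(y)$'' is controlled by whether $\exp_y^{-1}(x)$ lies in the open future timelike cone of the tangent space; the standard local computation (see e.g.\ Lemma 14.1 of O'Neill's \emph{Semi-Riemannian Geometry}, or the discussion around equation (4.5.2) of Hawking--Ellis) then shows $q_2 \in I^+(q_1)$. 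Concatenating the segment of $\alpha$ from $p$ to $q_1$ with a timelike curve from $q_1$ to $q_2$ in $U$ and the segment of $\beta$ from $q_2$ to $r$ produces a piecewise-smooth causal curve from $p$ to $r$ whose middle segment is timelike, and any such curve can be deformed to a smooth timelike curve from $p$ to $r$ (corner-rounding in a convex neighborhood), giving $r \in I^+(p)$.

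The second statement follows by the time-dual argument, which structurally is identical: pick a timelike curve from $p$ to an interior point $q_0 \in I^+(p) \cap U$ near $q$, and a causal curve from $q$ to $r$; the same local push-up inside $U$ upgrades the junction at $q$ to a timelike connection from $q_0$ to a point just after $q$ along the causal curve, producing a timelike curve from $p$ to $r$. The main (mild) subtlety is the verification of the local push-up in $U$, but this is a routine exercise in exponential coordinates and is standard. Since both statements appear verbatim as textbook lemmas, in the main text it would suffice to cite a reference such as \cite{penrose1972techniques}.
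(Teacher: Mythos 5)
Your proof sketch is correct and your closing remark is exactly what the paper does: the paper's entire proof is a one-line citation to Penrose. The only discrepancy is the proposition number — the paper cites Proposition 2.18 of \cite{penrose1972techniques}, not 2.19.
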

\begin{proof}
    This is proposition 2.18 of \cite{penrose1972techniques}.
\end{proof}

\begin{lemma} \label{lem:lightsheet-generators}
    For any closed spacetime set $S$, and any point
    \begin{equation}
        q \in J^+(S) - I^+(S),
    \end{equation}
    there exists a point $p \in S$ and a future-directed null geodesic from $p$ to $q$ that lies entirely in $J^+(S) - I^+(S).$ Furthermore, if $q$ is not in $S$, then $p$ is in $\partial_0 S.$ Analogous statements hold for $J^-(S) - I^-(S).$
\end{lemma}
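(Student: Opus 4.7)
The plan is to treat this as a version of the standard structure theorem for achronal boundaries, adapted to our AdS-hyperbolic setting. If $q \in S$, the trivial choice $p = q$ with a zero-length geodesic suffices, so henceforth assume $q \notin S$. The strategy has three ingredients: (i) produce a causal curve from $S$ to $q$ using $q \in J^+(S)$; (ii) use $q \notin I^+(S)$ to upgrade this to a null geodesic; and (iii) verify that the geodesic lies in $J^+(S) - I^+(S)$ and has past endpoint on $\partial_0 S$.

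For (i) and (ii), I would work in the conformally completed spacetime $\tilde{\mathcal{M}}$, which is globally hyperbolic with compact Cauchy surfaces; the closed set $S \subseteq \mathcal{M}$ remains closed in $\tilde{\mathcal{M}}$. By global hyperbolicity, the space of future-directed causal curves from $S$ to $q$ is compact in the $C^0$ topology, so we may select a curve $\gamma$ with past endpoint $p \in S$ that is a $C^0$ limit of causal curves. If $\gamma$ contained any timelike segment, concatenation with the remaining causal segments together with lemma \ref{lem:causal-timelike-addition} would give $q \in I^+(p) \subseteq I^+(S)$, contradicting $q \notin I^+(S)$. Hence $\gamma$ is everywhere null. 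If $\gamma$ were a non-geodesic null curve, the standard deformation argument (e.g.\ proposition 4.5.10 of Hawking--Ellis) would again supply a timelike curve from $p$ to $q$, contradicting $q \notin I^+(S)$. Therefore $\gamma$ is a future-directed null geodesic from $p \in S$ to $q$.

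For (iii), any intermediate point $r$ on $\gamma$ lies in $J^+(S)$ trivially. If instead $r \in I^+(S)$, then combining with $q \in J^+(r)$ via lemma \ref{lem:causal-timelike-addition} yields $q \in I^+(S)$, a contradiction. So $\gamma \subseteq J^+(S) - I^+(S)$. For the edge condition, suppose toward contradiction that $p \notin \partial_0 S$. Unwinding the definition of $\partial_0$, the failure of the edge condition at $p \in S$ means there is a neighborhood $U$ of $p$ such that every timelike curve through $p$ in $U$ intersects $S$ on both sides of $p$; in particular we can pick $p'' \in S \cap I^-(p)$ distinct from $p$. Applying lemma \ref{lem:causal-timelike-addition} to $p'' \in I^-(p)$ and $q \in J^+(p)$ yields $q \in I^+(p'') \subseteq I^+(S)$, again a contradiction, so $p \in \partial_0 S$.

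The main obstacle is step (ii): rigorously promoting a causal curve into a limiting null geodesic requires both the compactness of the space of causal curves in $\tilde{\mathcal{M}}$ and some care about limit curves that may run along the asymptotic boundary, where the Weyl factor degenerates. Both are available thanks to our assumption that $\tilde{\mathcal{M}}$ is globally hyperbolic with compact Cauchy surfaces, and the essentials of the compactness argument are already contained in theorem 7.27 of Penrose's \emph{Techniques of Differential Topology in Relativity}, cited earlier in the main text. The analogous statement for $J^-(S) - I^-(S)$ follows by time reversal.
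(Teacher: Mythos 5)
Your argument for parts (i) and (ii) is fine (though a bit heavier than needed: once you have \emph{any} causal curve from $S$ to $q$, whose existence is immediate from $q\in J^+(S)$, the causal-timelike addition lemma already rules out timelike segments and the standard deformation argument rules out non-geodesic null curves, so the limit-curve/compactness machinery is unnecessary). The intermediate-point claim in (iii) also matches the paper exactly. The genuine gap is in your edge argument.

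You infer from $p\notin\partial_0 S$ that ``we can pick $p''\in S\cap I^-(p)$ distinct from $p$,'' but this does not follow from the negation of the edge condition, and it is in fact false in the situations the paper cares about. Take $S$ to be a closed spacelike disk in a $t=\text{const.}$ slice and $p$ its center: then $p\notin\partial_0 S$, yet $S\cap I^-(p)=\varnothing$ because $S$ is achronal. The negation of the edge condition only tells you that timelike curves from $I^-(p)\cap U$ to $I^+(p)\cap U$ must meet $S$ \emph{somewhere}; it does not locate the meeting point in $I^-(p)$, and for a spacelike $S$ it never can.

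The correct deformation runs \emph{along the geodesic}, not into the past of $p$. Assume without loss of generality that $p$ is the futuremost point of $\gamma\cap S$. Pick $r\in\gamma$ slightly to the future of $p$ inside the neighborhood $U$ furnished by $p\notin\partial_0 S$; since $r\notin S$ and $S$ is closed, there is a neighborhood $W\ni r$ disjoint from $S$. Choose $x\in I^-(p)\cap U$ (so $x\in I^-(r)$ by lemma~\ref{lem:causal-timelike-addition}) and $y\in I^+(r)\cap U\cap W$ (so $y\in I^+(p)$, again by lemma~\ref{lem:causal-timelike-addition}), and take a timelike curve $\lambda$ from $x$ through $r$ to $y$ whose segment beyond $r$ stays in $W$. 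The edge negation forces $\lambda\cap S\ne\varnothing$, and the choice of $W$ forces that intersection to lie strictly before $r$, i.e.\ $\exists\, p'\in S\cap I^-(r)$. Then $q\in J^+(r)\subseteq I^+(p')\subseteq I^+(S)$ — a contradiction, which is exactly the one-line ``deformation'' the paper invokes.
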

\begin{proof}
    The existence of a point $p \in S$ and a future-directed null geodesic $p$ to $q$ is given as proposition 2.20 of \cite{penrose1972techniques}, and the statement that the full geodesic segment from $p$ to $q$ must lie in $J^+(S) - I^+(S)$ follows from lemma \ref{lem:causal-timelike-addition}. The statement that $q \notin S$ implies $p \in \partial_0 S$ follows from the definition of the edge $\partial_0 S$: if $p$ were not in $\partial_0 S,$ then it could be deformed to a nearby point $p' \in S$ with $q \in I^+(p').$
\end{proof}

\begin{lemma} \label{lem:point-futures-compact}
	In a globally hyperbolic spacetime, for any points $p$ and $q$, the set $J^+(p) \cap J^-(q)$ is compact.
\end{lemma}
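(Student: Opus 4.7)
The plan is to establish compactness via sequential compactness, using the classical limit curve construction. First, I would note that in a globally hyperbolic spacetime the sets $J^+(p)$ and $J^-(q)$ are each closed---a standard consequence of strong causality, which itself follows from global hyperbolicity (see e.g.\ \cite{penrose1972techniques}). Hence $J^+(p) \cap J^-(q)$ is closed, and since the spacetime manifold is Hausdorff, it suffices to establish sequential compactness.

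Next, given a sequence $\{x_n\} \subseteq J^+(p) \cap J^-(q)$, for each $n$ I would choose a future-directed causal curve $\gamma_n$ from $p$ to $q$ passing through $x_n$, obtained by concatenating a causal curve from $p$ to $x_n$ with one from $x_n$ to $q$. The key input is the \emph{limit curve lemma}: in a globally hyperbolic spacetime, the space of future-directed causal curves from $p$ to $q$, parametrized by arc length with respect to an auxiliary complete Riemannian metric on $\mathcal{M}$, is compact in the $C^0$ topology. Applying this (or Arzel\`a--Ascoli directly to the reparametrized curves), I would extract a subsequence $\gamma_{n_k}$ converging uniformly to a future-directed causal curve $\gamma$ from $p$ to $q$.

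Finally, I would harvest a convergent subsequence of the $x_n$ themselves. Writing $x_{n_k} = \gamma_{n_k}(s_{n_k})$ with $s_{n_k}$ in a common compact interval (the arc-length parametrizations have uniformly bounded length, since all curves terminate at $q$ with a bound depending only on $p$ and $q$), compactness of the interval yields a further subsequence $s_{n_k} \to s_\infty$. Uniform convergence of $\gamma_{n_k}$ to $\gamma$ then gives $x_{n_k} \to \gamma(s_\infty)$, and this limit point lies on the causal curve $\gamma$, hence in $J^+(p) \cap J^-(q)$.

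The main obstacle is the limit curve lemma itself, which requires the auxiliary Riemannian metric construction together with a careful application of strong causality to prevent the curves from developing pathologies or ``escaping to infinity'' in the limit. Since causal curves are generally only continuous, not smooth, convergence must be controlled in the $C^0$ topology, and one must verify that the $C^0$ limit of a sequence of causal curves is itself causal---this is where the closedness of the causal relation, ultimately stemming from strong causality, enters decisively. Uniformly bounding the arc lengths of the $\gamma_n$ uses global hyperbolicity somewhat more directly, for instance through the existence of a Cauchy time function that is monotone along causal curves and bounded on $J^+(p) \cap J^-(q)$.
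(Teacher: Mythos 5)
The paper disposes of this lemma by citing Theorem 8.3.10 of Wald's \emph{General Relativity}, so there is no in-text proof to compare against; you are filling in what the citation elides. Your overall strategy --- closedness plus a limit-curve / Arzel\`a--Ascoli argument --- is indeed the standard route, and the outline is correct. The genuine gap is in the two claims you invoke to make the limit-curve step go through, both of which quietly contain the theorem. First, you appeal to the assertion that the space of causal curves $C(p,q)$ is compact in the $C^0$ topology. Under the Hawking--Ellis definition of global hyperbolicity this is essentially built in, but the paper (and Wald) adopt the Cauchy-surface definition, and deducing $C^0$-compactness of $C(p,q)$ from that is \emph{equivalent in difficulty} to compactness of $J^+(p)\cap J^-(q)$; citing it is nearly circular. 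Second, your stated justification for the uniform arc-length bound --- that a Cauchy time function is monotone along causal curves and bounded on $J^+(p)\cap J^-(q)$ --- does not actually deliver the bound. Bounding $t$ between $t(p)$ and $t(q)$ confines the curves to a slab $t^{-1}([t(p),t(q)])$, but if the Cauchy surfaces are noncompact that slab is noncompact, and nothing so far prevents a causal curve from wandering arbitrarily far in the spatial directions (imagine light cones opening up toward spatial infinity) and accumulating unbounded auxiliary arc length. Ruling this out is precisely the content of the theorem, and it requires an argument you have not supplied --- typically a contradiction using the limit-curve lemma together with the strong-causality ban on ``imprisoned'' causal curves, or the full machinery of the Geroch splitting.

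It is worth noting that for the paper's actual purposes the lemma is only ever applied to $\tilde{\mathcal{M}}$, which is assumed globally hyperbolic with \emph{compact} Cauchy surfaces. In that setting the proof trivializes: Geroch's splitting gives $\tilde{\mathcal{M}} \cong \mathbb{R}\times\Sigma$ with $\Sigma$ compact, so the slab $t^{-1}([t(p),t(q)]) \cong [t(p),t(q)]\times\Sigma$ is compact, $J^+(p)\cap J^-(q)$ is a closed subset of it, and closed subsets of compact sets are compact. You could avoid the gap entirely by restricting to that case.
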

\begin{proof}
	This is proved as theorem 8.3.10 in \cite{waldbook}.
\end{proof}

\begin{lemma} \label{lem:lightsheet-boundary}
	In a globally hyperbolic spacetime, for any compact set $S$, we have
	\begin{equation}
		J^{\pm}(S) - I^{\pm}(S) = \del J^{\pm}(S).
	\end{equation}
\end{lemma}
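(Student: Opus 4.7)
The plan is to prove the $+$ case; the $-$ case then follows by time-reversal symmetry. The key identity I will establish is that, under the hypotheses, the interior of $J^+(S)$ coincides with $I^+(S)$ and $J^+(S)$ is closed, from which the lemma follows by writing $\partial J^+(S) = \overline{J^+(S)} - \operatorname{int}(J^+(S)) = J^+(S) - I^+(S)$.

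First I would establish closedness of $J^+(S)$. Since $S$ is compact and the spacetime is globally hyperbolic, this is a standard consequence of the limit curve lemma: if $p_n \to p$ with each $p_n \in J^+(S)$, pick $s_n \in S$ with a future-directed causal curve $\gamma_n$ from $s_n$ to $p_n$, pass to a subsequence with $s_n \to s \in S$ by compactness of $S$, and extract a limit causal curve from $s$ to $p$ using global hyperbolicity. This places $p$ in $J^+(S)$. (Alternatively, this fact is catalogued as a standard result, e.g., in Wald or Penrose's \emph{Techniques of Differential Topology in Relativity}, and can be quoted directly alongside lemma \ref{lem:point-futures-compact}.)

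Next I would identify the interior. The inclusion $I^+(S) \subseteq \operatorname{int}(J^+(S))$ is automatic because $I^+(S)$ is always open and obviously contained in $J^+(S)$. For the reverse inclusion, suppose $p \in \operatorname{int}(J^+(S))$ and take an open neighborhood $U$ of $p$ contained in $J^+(S)$. Choose any $q \in U \cap I^-(p)$; such a $q$ exists because timelike curves through $p$ enter every neighborhood. Then $q \in J^+(S)$, so $q \in J^+(s)$ for some $s \in S$, and lemma \ref{lem:causal-timelike-addition} gives $p \in I^+(s) \subseteq I^+(S)$. Hence $\operatorname{int}(J^+(S)) \subseteq I^+(S)$, and equality follows.

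Combining these, $\partial J^+(S) = J^+(S) - \operatorname{int}(J^+(S)) = J^+(S) - I^+(S)$, as desired. The main technical obstacle is closedness of $J^+(S)$, since it depends on the limit curve machinery of global hyperbolicity and on the compactness of $S$ (without compactness, $J^+(S)$ can fail to be closed even in Minkowski space, e.g.\ if $S$ is an open half-line along a null direction). Everything else is a direct topological consequence of lemma \ref{lem:causal-timelike-addition} together with the openness of chronological futures.
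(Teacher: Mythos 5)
Your proof is correct. The paper does not actually give an argument here; it cites the standard reference (the discussion following Theorem~8.3.11 in Wald's textbook), and your write-up is a faithful reconstruction of that standard argument. The structure you use is exactly the right one: the only place compactness of $S$ and global hyperbolicity are needed is to establish closedness of $J^+(S)$ (via the limit-curve machinery), while the identification $\operatorname{int}\bigl(J^+(S)\bigr) = I^+(S)$ is a purely local fact that holds in any spacetime and follows from lemma~\ref{lem:causal-timelike-addition} together with the openness of chronological futures (lemma~\ref{lem:open-futures}). Your observation that closedness can fail without compactness (a null half-line in Minkowski space) correctly pinpoints where the hypothesis is used.
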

\begin{proof}
	See the discussion following theorem 8.3.11 in \cite{waldbook}.
\end{proof}

\begin{remark} \label{rem:closed-futures}
    Lemma \ref{lem:lightsheet-boundary} implies that in a globally hyperbolic spacetime, the causal future and past of a compact set are topologically closed.
\end{remark}

\begin{lemma} \label{lem:open-futures}
    For any spacetime set $S$, the sets $I^+(S)$ and $I^-(S)$ are topologically open.
\end{lemma}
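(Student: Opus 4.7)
The plan is to prove this using the standard local-geometry argument: show that $I^+(S)$ is a union of sets of the form $I^+(p)$ for $p \in S$, so it suffices to prove $I^+(p)$ is open for each single point $p$, after which openness of $I^+(S)$ follows because arbitrary unions of open sets are open. The dual argument will handle $I^-(S)$ by time reversal, so I would only need to write one direction carefully.

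To show $I^+(p)$ is open, I would fix $q \in I^+(p)$ and produce an open neighborhood of $q$ contained in $I^+(p)$. By definition there is a future-directed timelike curve $\gamma$ from $p$ to $q$. The key trick is not to deform $q$ directly from $p$ (which could be arbitrarily far away), but to pick a point $q_0$ on $\gamma$ strictly between $p$ and $q$, so that $q \in I^+(q_0)$ and $q_0 \in I^+(p)$. Then I would choose a convex normal neighborhood $U$ of $q_0$ small enough to contain $q$. Inside $U$, the causal character of the displacement from $q_0$ to any other point is determined by the Lorentzian norm of its preimage under $\exp_{q_0}$, and the set of tangent vectors at $q_0$ whose exponential image is timelike-future of $q_0$ is an open cone in $T_{q_0}\mathcal{M}$. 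Pulling this back gives an open neighborhood $V$ of $q$ in $U$ such that every point of $V$ lies in $I^+(q_0)$.

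Finally, I would assemble the full timelike curve: for any $q' \in V$, concatenate the segment of $\gamma$ from $p$ to $q_0$ with the timelike geodesic from $q_0$ to $q'$ inside $U$; since the first segment is timelike and the second is future-directed timelike, the concatenation is a future-directed timelike curve from $p$ to $q'$ (using lemma \ref{lem:causal-timelike-addition} to guarantee the composite is timelike even at the junction point $q_0$). Hence $V \subseteq I^+(p) \subseteq I^+(S)$, establishing openness at $q$.

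There is no real obstacle here — this is a classical result (see, e.g., Proposition 2.17 of \cite{penrose1972techniques} or Theorem 8.1.2 of \cite{waldbook}) — and the only mild subtlety is remembering to shift from $p$ to an interior point $q_0$ of the curve so that the local convex-neighborhood argument applies, rather than trying to use a normal neighborhood of $p$ itself, which need not contain $q$.
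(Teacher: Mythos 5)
The paper does not actually prove this lemma — it simply cites Corollary 2.9 of Penrose's monograph, which is precisely where the textbook argument you sketch lives. So there is nothing to compare against at the level of method; yours is the classical proof and it is essentially correct.

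One small point of care in the write-up: you say you would ``pick a point $q_0$ on $\gamma$ strictly between $p$ and $q$, \emph{then} choose a convex normal neighborhood $U$ of $q_0$ small enough to contain $q$.'' The size of a convex normal neighborhood of $q_0$ is dictated by the geometry near $q_0$, not by you, so an arbitrary $q_0$ between $p$ and $q$ need not have a normal neighborhood containing $q$. The clean ordering is to first fix a convex normal neighborhood $U$ of $q$, then use continuity of $\gamma$ to pick $q_0 \in \gamma \cap U$ strictly prior to $q$; equivalently, choose $q_0$ sufficiently close to $q$ along $\gamma$. With that adjustment the rest of the argument — the open timelike cone in $T_{q_0}\mathcal{M}$, pulling it back through $\exp_{q_0}$, and concatenating using lemma \ref{lem:causal-timelike-addition} — goes through as you describe.
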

\begin{proof}
    See corollary 2.9 of \cite{penrose1972techniques}.
\end{proof}

\begin{lemma} \label{lem:future-past-singularities}
    In a globally hyperbolic spacetime, all singularities are either ``future type'' or ``past type.'' I.e., there exists no singularity that is in the future of a spacetime point $p$ and the past of another spacetime point $q.$
\end{lemma}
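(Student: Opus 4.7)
Proof proposal: The strategy is to argue by contradiction. Suppose, toward contradiction, that some singularity of $\mathcal{M}$ lies both in the future of a point $p$ and the past of a point $q$. Unpacking the definition of singularity, this means there exists an affinely parameterized, inextendible geodesic $\gamma : [0, \lambda_{\max}) \to \mathcal{M}$ with $\lambda_{\max} < \infty$ and with no future endpoint, such that for all sufficiently large $\lambda < \lambda_{\max}$ the point $\gamma(\lambda)$ lies in $J^+(p) \cap J^-(q)$. (The precise formulation of ``in the future of $p$'' and ``in the past of $q$'' for a singularity is that the defining geodesic enters and remains in $J^+(p) \cap J^-(q)$; this is the sense in which the lemma is used in the main text, where one talks about a singularity on the focusing lightsheet being in the future of $\gamma_{\mathcal{V}}$ and the past of some output wedge.)

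The key input from global hyperbolicity is Lemma \ref{lem:point-futures-compact}, which tells us that $K \equiv J^+(p) \cap J^-(q)$ is compact. So the tail of $\gamma$ is eventually trapped in the compact set $K$. The next step is to invoke the standard fact from differential geometry that an affinely parameterized geodesic of bounded affine parameter, all of whose points eventually lie in a compact set, must extend to an endpoint at $\lambda = \lambda_{\max}$. The quickest way to see this is to pick any sequence $\lambda_n \nearrow \lambda_{\max}$; by compactness of $K$ a subsequence of $(\gamma(\lambda_n), \dot\gamma(\lambda_n))$ converges in the (compact) unit tangent bundle of a compact neighborhood in $K$, to some $(x, v)$. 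The exponential map at $x$ gives a geodesic extending $\gamma$ through $(x,v)$; by continuous dependence of geodesics on initial data together with the bound on affine parameter, this extension in fact attaches $x$ as a future endpoint of $\gamma$.

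This contradicts the assumption that $\gamma$ is inextendible, so the hypothesized singularity cannot exist. I would anticipate the only subtle step being the geodesic-extension argument, which requires careful handling of the fact that $K$ is a compact set in $\tilde{\mathcal{M}}$ rather than in $\mathcal{M}$, so in principle the limiting tangent vector could escape to the conformal boundary. This is ruled out by our assumption that the defining geodesic lies in the physical spacetime $\mathcal{M}$ together with $K \subseteq \mathcal{M}$ (since $p, q \in \mathcal{M}$ and the causal sets are taken in $\tilde{\mathcal{M}}$, one verifies that $J^+(p) \cap J^-(q)$ cannot meet $\partial \tilde{\mathcal{M}}$ except possibly in measure-zero ways that do not affect the convergence argument). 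Alternatively, one can invoke the ``no imprisonment'' property of strongly causal spacetimes (which is implied by global hyperbolicity): an inextendible causal geodesic of finite affine length cannot be imprisoned in a compact subset, which gives the required contradiction directly.
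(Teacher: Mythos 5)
Your argument follows the paper's proof exactly: invoke Lemma \ref{lem:point-futures-compact} to get compactness of $J^+(p)\cap J^-(q)$, then conclude that an inextendible geodesic of finite affine length trapped there must have a future endpoint, contradicting the definition of a singularity. The paper states the geodesic-extension step in a single line; you fill in the standard unit-tangent-bundle / no-imprisonment justification and flag the (legitimate, though glossed over in the paper) subtlety that compactness holds in $\tilde{\mathcal{M}}$ rather than in $\mathcal{M}$ itself, so the substance is the same.
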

\begin{proof}
    By lemma \ref{lem:point-futures-compact}, $J^+(p) \cap J^-(q)$ is compact. So any affinely parametrized geodesic in $J^+(p) \cap J^-(q)$ with finite future affine extent must have a future endpoint. This implies there can be no singularities in this set.
\end{proof}

\section{Arbitrary tasks in the bulk causal structure}\label{sec:bulkprotocolappendix}

\subsection{Tools for performing quantum tasks}\label{sec:NLQCtools}

There are two standard tools used in building protocols to complete quantum tasks: Bell-basis teleportation \cite{bennett1993teleporting} and port-teleportation \cite{ishizaka2008asymptotic, beigi2011simplified}. 
We briefly review both of these below. 
Because our protocols involve repeated applications of these tools it is convenient to introduce some special notation, which we adapt from \cite{dolev2019constraining}. 

In Bell-basis teleportation (usually just called quantum teleportation), a qudit $\ket{\psi}_A$ held at location $p$ is transmitted to a distant location $p'$ by use of a shared maximally entangled state and classical communication. 
In more detail, the maximally entangled state
\begin{equation}
    \ket{\Phi_1}_{EE'} = \frac{1}{\sqrt{d}} \sum_{j=1}^{d} \ket{j j}_{E E'}
\end{equation}
is distributed, with $E$ at location $p$ and $E'$ at $p'$. 
A Bell basis $\{\ket{\Phi_j}\}_{j=1}^{d^2}$ is chosen --- i.e., an orthogonal basis of maximally entangled states --- and $AE$ is measured in this basis, giving measurement outcome $x\in\{1, \dots, d^2\}$ at $p$.
The maximally entangled state $\ket{\Phi_x}_{EE'}$ is related to $\ket{\Phi_1}_{EE'}$ by some unitary $\mathbf{U_x}$ satisfying
\begin{equation}
    \ket{\Phi_x}_{EE'} = (\mathbf{U_x} \otimes I) \ket{\Phi_1}_{EE'} = (I \otimes \mathbf{U_x}^T) \ket{\Phi_1}_{EE'}
\end{equation}
and
\begin{equation}
    \tr(\mathbf{U_x}^{\dagger} \mathbf{U_{x'}}) = d\, \delta_{x x'}.
\end{equation}
For the standard Bell basis on qubits, the operators $\{\mathbf{U_x}\}$ are phase-equivalent to the Pauli operators $\{I, X, Y, Z\}.$
After the Bell measurement is performed and outcome $x$ is registered, the state on $E'$ is
\begin{align}
    \mathbf{U_x}^{\dagger} \ket{\psi}_{E'}.
\end{align}
By sending $x$ from $p$ to $p'$ the unitary $\mathbf{U_x}$ can be applied and the state $\ket{\psi}$ can be reproduced on the $E'$ system.

The term ``teleportation'' often refers to the full process of performing a Bell measurement and applying the unitary $\mathbf{U_x}.$
In the context of quantum tasks, however, it is often helpful to discuss these steps separately.
We will say $A$ is teleported$^*$ from $p$ to $p'$ if the measurement has been performed but $x$ has not yet been transmitted. 
When $A$ has been teleported$^*$ from $p$ to $p'$ using entanglement on $EE'$, we will relabel the system $E'$ to $A[x]$.

In port-teleportation, a qudit in state $\ket{\psi}_A$ held at location $p$ is transmitted to a distant location $p'$ by use of a set of $N$ shared maximally entangled states, $\otimes_{j=1}^N \ket{\Phi}_{E_iE_i'}$.
To do this, a measurement on the $AE_1....E_N$ system\footnote{The details of this measurement are described in \cite{ishizaka2008asymptotic}.} is performed, giving measurement outcome $j\in\{1,...,N\}$. 
The post-measurement state is then within $\epsilon = 4 d^2/\sqrt{N}$ one-norm distance of the state
\begin{align}
    \ketbra{\psi}{\psi}_{E'_j} \otimes \rho_{E'_1 \dots E'_{j-1} E'_{j+1} \dots E'_N}
\end{align}
for some state $\rho$ on the non-$E'_j$ registers.
By sending the classical data $j$ from $p$ to $p'$, the non-$E'_j$ systems can be discarded, and the state $\ket{\psi}$ is approximately recovered on register $E'_j,$ with the quality of the approximation improving as $N$ is increased.
As with Bell-basis teleportation, we designate performing only the measurement step of port-teleportation as port-teleport$^*$.
Further, when $A$ is port-teleported$^*$ from $p$ to $p'$ we relabel the $E_1'...E_N'$ system as $(A)^j$. 
Notice that if $j$ is known we can recover $A$ from $(A)^{j}$ by tracing out all but the $E'_j$ subsystem and relabeling it as $A$. 

During port-teleportation, it is possible to apply arbitrary operations on the teleported system \emph{before} finding out the measurement outcome $j$. 
In particular, to apply a channel $\mathbfcal{N}_{A}$ to the port-teleported$^*$ system we need only apply $\otimes_{j=1}^N \mathbfcal{N}_{E_j'}$. 
This produces a state close in one-norm distance to
\begin{align}
    \mathbfcal{N}(\ketbra{\psi}{\psi}_{E'_j})\otimes \rho'_{E'_1 \dots E'_{j-1} E'_{j+1} \dots E'_N}
\end{align}
for some new state $\rho'$. 
When all but the $E'_j$ system is traced out, we are left with an approximation of the state $\mathbfcal{N}(\ketbra{\psi}{\psi})$. 

We will consider scenarios involving multiple port and Bell-basis teleportations. 
The notation used to describe this is as follows.
Suppose $A_1$ is teleported$^*$ from $p_1$ to $p_2$, then $A_1A_2$ is teleported$^*$ from $p_2$ to $p_3$. 
After the first teleportation$^*$ we will say the system $A_1[x_1]$ is at $p_2$, and after the second teleportation we will say the system $(A_1[x_1] A_2)[x_2]$ is at $p_3$.
If system $(A_1[x_1] A_2)[x_2]$ is now port-teleported to $p_4$, we will say the system $((A_1[x_1] A_2)[x_2])^{(j_3)}$ is at $p_4$. 

As an example to illustrate the use of port and Bell basis teleportation in completing quantum tasks, consider a task defined on the causal network of figure \ref{fig:2-2repeatednlqc}. 
This network has two input and two output points, and no intermediary vertices. 
Consider a task that requires implementing some channel $\mathbfcal{N}_{A_1A_2\rightarrow B_1B_2}$. 
Given unlimited pre-shared entanglement, we can complete the task with arbitrary accuracy as follows. 
\begin{protocol}\textbf{(Beigi-Konig \cite{beigi2011simplified})}
\begin{itemize}
    \item \textbf{At the input points $p_1$ and $p_2$:}
    \begin{itemize}
        \item Teleport$^*$ system $A_1$ from $p_1$ to $p_2$, producing system $A_1[x_1]A_2$ at $p_2$. 
        \item Port-teleport$^*$ system $A_1[x_1]A_2$ from $p_2$ to $p_1$, producing system $(A_1[x_1]A_2)^{j_2}$ at $p_1$. 
        \item At $p_1$, act on every port of $(A_1[x_1]A_2)^{j_2}$ with $P_{x_1} \otimes I$, producing $(A_1A_2)^{j_2}$. 
        \item At $p_1$, act on every port of $(A_1A_2)^{(j_2)}$ with the needed channel $\mathbfcal{N}_{A_1A_2\rightarrow B_1B_2}$, producing $(B_1B_2)^{(j_2)}$.
        \item Send the measurement outcome $j_2$ from $p_2$ to both output points $q_1$ and $q_2.$
        \item Send the $(B_1)^{(j_2)}$ system to $q_1$, and the $B_2^{(j_2)}$ system to $q_{2}$. 
    \end{itemize}
    \item \textbf{At the output points $q_1$ and $q_2$:}
    \begin{itemize}
        \item At $q_1$, use $j_2$ to recover $B_1$ from $B_1^{(j_2)}$.
        \item At $q_2$, use $j_2$ to recover $B_2$ from $B_2^{(j_2)}$.
    \end{itemize}
\end{itemize}
\end{protocol}
The protocol we give in the next section is a generalization of this one.

\subsection{Bulk protocol for arbitrary tasks}

In the proof of theorem \ref{thm:2-to-all-sufficiency} presented in section \ref{sec:arbitrarytasks}, we claimed that in a causal network where every input vertex can signal every output vertex, an arbitrary quantum channel can be implemented so long as there is a way of ordering the input vertices so that every adjacent pair shares an arbitrarily large maximally entangled state.
We describe that protocol explicitly here, using the techniques developed in \cite{beigi2011simplified,dolev2019constraining}. 

Let $\tilde{\Gamma}$ be such a causal structure, let $\{c_1, \dots, c_m\}$ be an ordering of the input vertices such that every adjacent pair shares an arbitrarily large maximally entangled state, and let $\{r_1, \dots, r_n\}$ be the output vertices.
Note that the chain of entanglement may pass through a given input vertex multiple times: i.e., we may have $c_j = c_k$ for $j \neq k.$
The following protocol implements an arbitrary channel from the input points to the output points.

\begin{protocol}(\textbf{Arbitrary bulk tasks})
    \begin{itemize}
        \item \textbf{At the input locations:}
        \begin{enumerate}
            \item Teleport$^*$ all quantum systems at $c_1$ to $c_2.$ Teleport$^*$ all quantum systems at $c_2$ to $c_3.$ Continue on in this manner until all systems have been teleport$^*$-ed to $c_m.$ At the end of this process, all quantum systems are present at $c_m,$ but encoded by a chain of unitaries $\mathbf{U}^{\dagger}_{x_{m-1}} \dots \mathbf{U}^{\dagger}_{x_1}.$
            \item Port-teleport$^*$ all systems from $x_m$ to $x_{m-1}$. Act with $\mathbf{U}_{x_{m-1}}$ on all ports to undo last unitary in the encoding chain. Repeat this process until all input systems are present at $c_1,$ in their original state, on an unknown port..
            \item Perform the desired channel on all ports.
            \item From $c_1$, split all ports into the subsystems corresponding to each output vertex where that subsystem is supposed to end up.
            \item From each $c_2$ through $c_m$, send the measurement outcomes from the port-teleportation$^*$ to all output locations.
        \end{enumerate}
        \item \textbf{At the output locations:}
        \begin{enumerate}
            \item At each output location, use the port-teleportation$^*$ measurement outcomes to determine which port contains the correct output system, then return that system.
        \end{enumerate}
    \end{itemize}
\end{protocol}

\section{Mutual information in \texorpdfstring{$\hat{\mathbf{B}}^{n, \delta}_{84}$}{TEXT}} \label{app:lower-bound-proof}

This appendix provides a proof of lemma \ref{lemma:MIlowerbound} from the main text.

\vspace{0.2cm}
\noindent \textbf{Lemma \ref{lemma:MIlowerbound}.} (Mutual information lower bound) \emph{   
Consider the causal structure shown in figure \ref{fig:2-2repeatednlqc}, with resource system $\rho_{LR}$. 
    Fix $\epsilon, \delta$ with $\epsilon < \delta < \delta_*$ and $2^{h(\delta)} \beta > e^{-2 (\delta - \epsilon)^2}.$
    Then completing the $\mathbf{B}_{84}^{n,\delta}$ task with probability $1-e^{- 2 n(\delta - \epsilon)^2}$ requires
    \begin{align}
        \frac{1}{2}I(L:R)_\rho \geq - n \log_2(2^{h(\delta)} \beta) - 1 + O\left( (2^{h(\delta)} \beta)^n, e^{-2 n (\delta - \epsilon)^2} (2^{h(\delta)} \beta)^{-n} \right).
    \end{align}
}
\vspace{0.2cm}
\begin{proof}
    The one-norm distance between two states $\rho$ and $\sigma$ is defined by
    \begin{equation}
        \lVert \rho - \sigma \rVert|_{1} = \tr(|\rho - \sigma|).
    \end{equation}
    If you are given either state $\rho$ or $\sigma$ with probability $1/2,$ and are tasked with guessing which state you have been given, then your maximum probability of success optimized over all strategies is \cite{helstrom1969quantum,wilde2013quantum}
    \begin{equation} \label{eq:trace-distance-formula}
        p_{\text{dist}}(\rho, \sigma)
            = \frac{1}{2} + \frac{1}{4} \lVert \rho - \sigma \rVert_{1}.
    \end{equation}
    Given a pair of quantum tasks $\mathbf{T}_{\rho}$ and $\mathbf{T}_{\sigma},$ with the only difference being that $\mathbf{T}_{\rho}$ has resource state $\rho$ while $\mathbf{T}_{\sigma}$ has resource state $\sigma,$ we can devise a strategy for distinguishing $\rho$ and $\sigma$ as follows.
    We pick an optimal strategy for completing the task $\mathbf{T}_{\rho},$ and perform that strategy.
    If our we succeed at our task, we guess that we were given the state $\rho$.
    If we fail, we guess that we were given the state $\sigma.$
    The probability of successfully distinguishing the states using this strategy is
    \begin{equation} \label{eq:task-distinguishing-probability}
        \frac{1}{2} \text{Prob}(\text{success} | \rho) + \frac{1}{2} (1 - \text{Prob}(\text{success} | \sigma)) \leq p_{\text{dist}}(\rho, \sigma).
    \end{equation}
    Since the strategy we choose for completing the task is optimal for $\mathbf{T}_{\rho}$ but potentially suboptimal for $\mathbf{T}_{\sigma},$ we have
    \begin{align}
        \text{Prob}(\text{success} | \rho)
            & = p_{\text{suc}}(\mathbf{T}_{\rho}), \\
        \text{Prob}(\text{success} | \sigma)
            & \leq p_{\text{suc}}(\mathbf{T}_{\sigma}).
    \end{align}
    Combining these statements with inequality \eqref{eq:task-distinguishing-probability} and equation \eqref{eq:trace-distance-formula} gives
    \begin{equation}
         p_{\text{suc}}(\mathbf{T}_{\rho}) - p_{\text{suc}}(\mathbf{T}_{\sigma}) \leq \frac{1}{2} \lVert \rho - \sigma \rVert_{1}.
    \end{equation}
    Making the substitution ``$\rho \leftrightarrow \sigma$'' everywhere in the above discussion gives an analogous inequality that, combined with this one, becomes
    \begin{equation} \label{eq:pdistandTD}
         |p_{\text{suc}}(\mathbf{T}_{\rho}) - p_{\text{suc}}(\mathbf{T}_{\sigma})| \leq \frac{1}{2} \lVert \rho - \sigma \rVert_{1}.
    \end{equation}

    Trace distance can be related to mutual information as follows. First, recall that the mutual information is a relative entropy:
    \begin{equation}
        I(L:R)_{\rho_{LR}} = D(\rho_{L R} \Vert \rho_L \otimes \rho_R).
    \end{equation}
    Relative entropy is related to the fidelity $F(\rho, \sigma) = \tr \sqrt{\sigma^{1/2} \rho \sigma^{1/2}}$ by \cite{fuchs1999cryptographic}
    \begin{equation}
        \lVert \rho - \sigma \rVert_{1} \leq 2 \sqrt{1 - F(\rho, \sigma)^2}.
    \end{equation}
    Fidelity is related to relative entropy by
    \begin{equation}
        - 2 \log_2 F(\rho, \sigma) \leq D(\rho \Vert \sigma),
    \end{equation}
    which can be seen from the fact that both sides of this equation are sandwiched R\'{e}nyi relative entropies \cite{muller2013quantum, wilde2014strong} --- the left is $\alpha=1/2$ and the right is $\alpha=1$ --- and these are monotonically increasing in $\alpha.$
    Combining these inequalities, and choosing $\rho = \rho_{LR}$ and $\sigma = \rho_L \otimes \rho_R,$ we have
    \begin{equation}
        I(L:R)_{\rho_{LR}} \geq - \log_{2} (1 - \lVert \rho_{LR} - \rho_{L} \otimes \rho_{R} \rVert_{1}^2 /4).
    \end{equation}
    Using equation \eqref{eq:pdistandTD} to express this in terms of success probabilities, we have
    \begin{equation} \label{eq:Ilowerboundunfinished}
        I(L:R)_{\rho_{LR}} \geq - \log_{2} (1 - |p_{\text{suc}}(\mathbf{T}_{\rho_{LR}}) - p_{\text{suc}}(\mathbf{T}_{\rho_{L} \otimes \rho_R})|^2).
    \end{equation}
    
    Now, recall our assumption
    \begin{align}
        p_{\text{suc}}(\mathbf{T}_{\rho_{LR}}) \geq 1 - e^{-2 n (\delta - \epsilon)^2},
    \end{align}
    and that from lemma \ref{lemma:psucupperbound} we have
    \begin{align}
        p_{\text{suc}}(\mathbf{T}_{\rho_{L} \otimes \rho_R}) \leq (2^{h(\delta)}\beta)^n.
    \end{align}
    Inserting these into \ref{eq:Ilowerboundunfinished} and expanding at large $n$ using the assumption $2^{h(\delta)} \beta > e^{-2 (\delta - \epsilon)^2}$, we obtain the desired bound.
\end{proof}

\section{Code constructions}\label{sec:CodeConstructions}

\subsection{Three-qudit code}

A concrete choice of QECC which suffices for our protocol in section \ref{sec:boundaryprotocols} is the three qudit code~\cite{PhysRevLett.83.648}, which we review here. 
The Hilbert space for each qudit is spanned by $|0\rangle,\cdots, |p-1\rangle$ where $p$ is an odd prime integer. 
The encoding of a single qudit into three qudits is given by
\begin{align}
|x\rangle \quad \rightarrow \quad |\widetilde{x}\rangle = \frac{1}{\sqrt{p}}\sum_{y=0}^{p-1}|y, y+x,y+2x\rangle,
\end{align}
where sums in kets are computed modulo $p$. As we will see below, in this code the encoded state can be reconstructed by having access to any two qudits (i.e., losing one qudit is fine).

The coding properties can be studied by finding stabilizer generators and logical operators. Generalized Pauli operators for qudits are given by
\begin{align}
\mathbf{X} \equiv \sum_{j} |j+1\rangle \langle j |, \qquad \mathbf{Z} \equiv \sum_{j} \omega^j |j\rangle\langle j |,
\end{align}
where $\omega = \exp(2 \pi i/p)$. They obey the commutation relation
\begin{align}
 \mathbf{Z}\mathbf{X} = \omega \mathbf{X}\mathbf{Z}.
\end{align}
Stabilizer generators of the code are 
\begin{align}
\mathbf{S}_{X} = \mathbf X \otimes \mathbf X\otimes \mathbf X, \qquad \mathbf{S}_{Z} = \mathbf{Z} \otimes (\mathbf{Z}^{\dagger})^2 \otimes \mathbf{Z}. \label{eq:Pauli-logical}
\end{align}
One can verify $[\mathbf S_{X},\mathbf S_{Z}]=0$. 
The codeword states $|\widetilde{x}\rangle$ are $+1$ eigenstates of stabilizer generators.
Logical Pauli operators are  
\begin{align}
\mathbf{\widetilde{X}} = \mathbf I \otimes \mathbf X \otimes \mathbf{X}^2 \qquad \mathbf{\widetilde{Z}} = \mathbf{I} \otimes \mathbf{Z}^{\dagger} \otimes \mathbf{Z}
\end{align}
which act like Pauli operators for codeword states $|\widetilde{x}\rangle$. 
The above expressions of logical operators are supported on qudit-$2$ and qudit-$3$. 
In fact, any logical operator can be reconstructed on two any two of the three qudits.
Equivalently, the logical quantum state can be recovered from any two of the three qudits. 

\subsection{Three-qudit code as EAQECC}

In conventional quantum error-correcting codes, quantum information is encoded by some appropriate encoding unitary. 
For the case of the aforementioned three qudit code, the encoding unitary operator will have the form
\begin{align}
|\widetilde{x}\rangle = \mathbf{U}_{123}|x,0,0\rangle ,
\end{align}
where the initial state is prepared in qudit-$1$ while qudit-$2$ and qudit-$3$ are prepared in the state $|0\rangle$. 
The downside of this encoding is that one needs to implement a non-local unitary $\mathbf{U}_{123}$ which involves all three qudits. 

Entanglement assisted quantum error correcting codes (EAQECC) achieve the same encoding with a more local unitary operator by using pre-existing entanglement. 
For the case of the aforementioned qudit code, one can perform the encoding in the form
\begin{align} \label{eq:local-encoder}
|\widetilde{x}\rangle = (\mathbf{U}_{12}\otimes \mathbf{I}_{3})|x\rangle_{1} \otimes |\text{EPR}\rangle_{23}
\end{align}
where the qudit-$2$ and qudit-$3$ are initially in the EPR pair. 
Here the encoding is possible with some two-qudit unitary $\mathbf{U}_{12}$. 
See figure~\ref{fig_EAQECC_12}.

The concrete form of $\mathbf{U}_{12}$ can be obtained by rewriting $|\widetilde{x}\rangle$ as follows:
\begin{align}
|\widetilde{x}\rangle = \frac{1}{\sqrt{p}}\sum_{y}|y,y+x,y+2x\rangle 
= \frac{1}{\sqrt{p}}\sum_{y'}|y'-2x,y'-x,y'\rangle
\end{align}
by setting $y'=y+2x$. The initial state on the right-hand side of equation \eqref{eq:local-encoder} is given by $\frac{1}{\sqrt{p}}\sum_{y} |x,y,y\rangle$, so we can complete our encoding by applying 
\begin{align}
\mathbf{U}_{12}|x,y\rangle = |y-2x,y-x\rangle.
\end{align}
With similar analyses, one can find $\mathbf{U}_{23}$ and $\mathbf{U}_{31}$ such that 
\begin{align}
|\widetilde{x}\rangle = \mathbf{U}_{23}|x\rangle_{2} \otimes |\text{EPR}\rangle_{31} = 
 \mathbf{U}_{31}|x\rangle_{3} \otimes |\text{EPR}\rangle_{12}.
\end{align}

\subsection{Transversal logical gates}

A unitary operator which transforms encoded states inside the codeword subspace is called a logical operator. 
Our goal in section \ref{sec:boundaryprotocols} is to implement a desired unitary logical operation $\mathbf{V}$ at intermediate points $\mathcal{J}_{j}$'s. 
However, not all kinds of unitary operators can be applied in this way.
Since the intermediate points $\mathcal{J}_{1},\mathcal{J}_{2},\mathcal{J}_{3}$ are space-like separated, one can only implement a \emph{transversal logical gate} with a tensor product structure across the intermediate regions, as expressed in equation \eqref{eq:transverseunitary}.
The classification of transversal logical gates is an important subject in the study of QECCs since transversal operations do not spread local errors to other subsystems, and hence are desirable for fault-tolerant quantum computation. 

Concretely, and following what appears in the protocol, suppose we have three copies of the three qudit code, which store $A_i$ into shares $M_1^iM_2^iM_3^i$. 
Then, shares $M_{j}^1$, $M_j^2$, $M_j^3$ are brought to region $\mathcal{J}_j$. 
We are able then to apply unitary operations of the form
\begin{align}
    \mathbf{V} = \mathbf{V}_{M_{1}^1M_1^2M_1^3}^1 \otimes \mathbf{V}_{M_{2}^1M_2^2M_2^3}^2 \otimes \mathbf{V}_{M_{3}^1M_3^2M_3^3}^3.
\end{align}
It turns out that that the three qudit code allows a large group of logical gates, called the Clifford unitary group, to be implemented in this transverse form. 

Clifford operators are unitary operators which transform Pauli operators to Pauli operators under conjugation. 
The Clifford operations on qudits form a group, generated by Pauli $\mathbf{X}$, Pauli $\mathbf{Z}$, and the gates
\begin{align}
    \mathbf{CNOT}_{ab}\ket{i}_{a} \ket{j}_b &= \ket{i}_a\ket{i+j \,\, \text{mod} \,\, p}_{b}, \\
    \mathbf{H} \ket{i} &= \sum_{0 \leq m\leq p-1} \omega^{m i} \ket{m}, \\
    \mathbf{S} \ket{i} &= \omega^{i(i+1)/2} \ket{i}.
\end{align}
One can easily find transverse implementations of each of these operators. 

\subsection{Transversality beyond Clifford gates}

It is interesting to ask what type of code would be needed to extend the protocol given in section \ref{sec:boundaryprotocols} beyond Clifford unitaries.
Using the existing construction, we can actually implement certain non-Clifford unitaries by embedding a small Hilbert space into a larger one.\footnote{We thank Kfir Dolev and Sam Cree, for many related discussions, and who also proposed this idea independently.}
Consider embedding into a $q$-dimensional qudit with $q >p$ by simple identification of $|0\rangle, |1\rangle, \cdots, |p-1\rangle$ to corresponding states of a $q$-dimensional qudit. 
One can then implement Clifford gates for the $q$-dimensional qudits on systems of $p$-dimensional qubits.

We will focus on a diagonal phase gate of the form
\begin{align}
\mathbf V|x_{1},x_{2}\rangle = e^{i \theta  x_{1}x_{2}} |x_{1},x_{2}\rangle, \qquad \theta = \frac{2\pi}{q},
\end{align}
where $x_{1}x_{2}$ represents a product.
One can verify that such a quadratic phase gate is a Clifford operation for $q$-dimensional qudits. Note that diagonal phase gates commute with each other. 
As such, for a system of three $p$-dimensional qudits, the following gates can be transversally implemented:
\begin{align}
\mathbf V|x_{1},x_{2}, x_{3}\rangle = e^{i f(x_{1},x_{2},x_{3})} |x_{1},x_{2}, x_{3}\rangle
\end{align}
where
\begin{align}
f(x_{1},x_{2},x_{3})= \theta( n_{12} x_{1}x_{2} + n_{23} x_{2}x_{3} + n_{31}x_{3} x_{1}) \label{eq:quadratic-phase}
\end{align}
with $n_{12},n_{23},n_{31}$ being integers. By embedding the system into higher-dimensional Hilbert spaces, one can perform rotations with finer angles, inversely proportional to $q$.
In this protocol finer angle rotations require more EPR pairs. 
The required resource entanglement, measured in the unit of qubit EPR pairs, scales as $\sim \log (\frac{1}{\theta})$. 

More generally, we might be interested in exploiting codes that allow arbitrary unitaries to be implemented transversally. 
Such codes are constrained by the Eastin-Knill theorem \cite{eastin2009restrictions,zeng2011transversality,chen2008subsystem}, which states that transversal logical operators must form a discrete group if the QECC is built out of finite-dimensional qudits and has an ability to correct an erasure of at least one qudit. 
In other words, there is no finite-dimensional QECC which has a continuous group of transversal unitary logical gates.\footnote{In fact, there are further restrictions for stabilizer codes \cite{PhysRevA.91.012305}. 
Suppose a stabilizer code is supported on three subsystems $A,B,C$ and is protected from an erasure of one subsystem. 
Then non-Clifford logical gate cannot be written in a transversal form over $A,B,C$.}

However, the Eastin-Knill theorem can be avoided by allowing the code to only approximately recover from errors.  
For large numbers of subsystems, or large qudit dimensions, the error can be made arbitrarily small while allowing any unitary to be implemented transversally. 
Unfortunately, the qudit dimension or number of subsystems must grow quickly as the error is decreased \cite{faist2020continuous,kubica2021using,tajima2021universal,zhou2021new,liu2021approximate,liu2021quantum}, though it is not known if the entanglement used to prepare such a code as an EAQECC needs to grow quickly.  
Aside from having a fixed code which supports arbitrary transverse unitaries, we could also instead consider a code adapted to whatever particular unitary is required for the given task. 
It would be interesting to understand constraints on the existence of such adapted codes, or to construct them.

\bibliographystyle{JHEP}
\bibliography{biblio.bib}

\providecommand{\href}[2]{#2}\begingroup\raggedright\begin{thebibliography}{10}

\bibitem{gao2000theorems}
S.~Gao and R.~M. Wald, {\it Theorems on gravitational time delay and related
  issues},  {\em Classical and Quantum Gravity} {\bf 17} (2000), no.~24 4999,
  [\href{http://arxiv.org/abs/gr-qc/0007021}{{\tt gr-qc/0007021}}].

\bibitem{may2019quantum}
A.~May, {\it Quantum tasks in holography},  {\em Journal of High Energy
  Physics} {\bf 2019} (2019), no.~233
  [\href{http://arxiv.org/abs/1902.06845}{{\tt arXiv:1902.06845}}].

\bibitem{may2020holographic}
A.~May, G.~Penington, and J.~Sorce, {\it {Holographic scattering requires a
  connected entanglement wedge}},  {\em JHEP} {\bf 08} (2020) 132,
  [\href{http://arxiv.org/abs/1912.05649}{{\tt arXiv:1912.05649}}].

\bibitem{may2021holographic}
A.~May, {\it {Holographic quantum tasks with input and output regions}},  {\em
  JHEP} {\bf 08} (2021) 055, [\href{http://arxiv.org/abs/2101.08855}{{\tt
  arXiv:2101.08855}}].

\bibitem{kent2012quantum}
A.~Kent, {\it {Quantum Tasks in Minkowski Space}},  {\em Class. Quant. Grav.}
  {\bf 29} (2012) 224013, [\href{http://arxiv.org/abs/1204.4022}{{\tt
  arXiv:1204.4022}}].

\bibitem{hayden2016summoning}
P.~Hayden and A.~May, {\it {Summoning Information in Spacetime, or Where and
  When Can a Qubit Be?}},  {\em J. Phys. A} {\bf 49} (2016), no.~17 175304,
  [\href{http://arxiv.org/abs/1210.0913}{{\tt arXiv:1210.0913}}].

\bibitem{hayden2019localizing}
P.~Hayden and A.~May, {\it Localizing and excluding quantum information; or,
  how to share a quantum secret in spacetime},  {\em Quantum} {\bf 3} (2019)
  196, [\href{http://arxiv.org/abs/1806.04154}{{\tt arXiv:1806.04154}}].

\bibitem{dolev2021distributing}
K.~Dolev, A.~May, and K.~Wan, {\it {Distributing bipartite quantum systems
  under timing constraints}},  {\em J. Phys. A} {\bf 54} (2021), no.~14 145301,
  [\href{http://arxiv.org/abs/2011.00936}{{\tt arXiv:2011.00936}}].

\bibitem{dolev2019constraining}
K.~Dolev, {\it Constraining the doability of relativistic quantum tasks},
  \href{http://arxiv.org/abs/1909.05403}{{\tt arXiv:1909.05403}}.

\bibitem{chandran2009position}
N.~Chandran, V.~Goyal, R.~Moriarty, and R.~Ostrovsky, {\it Position based
  cryptography},  in {\em Annual International Cryptology Conference},
  pp.~391--407, Springer, 2009.
\newblock \href{http://arxiv.org/abs/1009.2490}{{\tt arXiv:1009.2490}}.

\bibitem{kent2011quantum}
A.~Kent, W.~J. Munro, and T.~P. Spiller, {\it Quantum tagging: {A}uthenticating
  location via quantum information and relativistic signaling constraints},
  {\em Physical Review A} {\bf 84} (2011), no.~1 012326,
  [\href{http://arxiv.org/abs/1008.2147}{{\tt arXiv:1008.2147}}].

\bibitem{kent2006tagging}
A.~P. Kent, W.~J. Munro, T.~P. Spiller, and R.~G. Beausoleil, {\it Tagging
  systems},  July~11, 2006.
\newblock US Patent 7,075,438.

\bibitem{malaney2010location}
R.~A. Malaney, {\it Location-dependent communications using quantum
  entanglement},  {\em Physical Review A} {\bf 81} (2010), no.~4 042319,
  [\href{http://arxiv.org/abs/1003.0949}{{\tt arXiv:1003.0949}}].

\bibitem{leung2010quantum}
D.~Leung, J.~Oppenheim, and A.~Winter, {\it Quantum network communication—the
  butterfly and beyond},  {\em IEEE Transactions on Information Theory} {\bf
  56} (2010), no.~7 3478--3490,
  [\href{http://arxiv.org/abs/quant-ph/0608223}{{\tt quant-ph/0608223}}].

\bibitem{chaves2021causal}
R.~Chaves, G.~Moreno, E.~Polino, D.~Poderini, I.~Agresti, A.~Suprano, M.~R.
  Barros, G.~Carvacho, E.~Wolfe, A.~Canabarro, et~al., {\it Causal networks and
  freedom of choice in {B}ell’s theorem},  {\em PRX Quantum} {\bf 2} (2021),
  no.~4 040323, [\href{http://arxiv.org/abs/2105.05721}{{\tt
  arXiv:2105.05721}}].

\bibitem{ryu2006holographic}
S.~Ryu and T.~Takayanagi, {\it Holographic derivation of entanglement entropy
  from the anti--de sitter space/conformal field theory correspondence},  {\em
  Physical review letters} {\bf 96} (2006), no.~18 181602,
  [\href{http://arxiv.org/abs/hep-th/0603001v1}{{\tt hep-th/0603001v1}}].

\bibitem{ryu2006aspects}
S.~Ryu and T.~Takayanagi, {\it Aspects of holographic entanglement entropy},
  {\em Journal of High Energy Physics} {\bf 2006} (2006), no.~08 045,
  [\href{http://arxiv.org/abs/hep-th/0605073}{{\tt hep-th/0605073}}].

\bibitem{headrick2007holographic}
M.~Headrick and T.~Takayanagi, {\it Holographic proof of the strong
  subadditivity of entanglement entropy},  {\em Physical Review D} {\bf 76}
  (2007), no.~10 106013, [\href{http://arxiv.org/abs/0704.3719}{{\tt
  arXiv:0704.3719}}].

\bibitem{hubeny2007covariant}
V.~E. Hubeny, M.~Rangamani, and T.~Takayanagi, {\it A covariant holographic
  entanglement entropy proposal},  {\em Journal of High Energy Physics} {\bf
  2007} (2007), no.~07 062, [\href{http://arxiv.org/abs/0705.0016}{{\tt
  arXiv:0705.0016}}].

\bibitem{lewkowycz2013generalized}
A.~Lewkowycz and J.~Maldacena, {\it Generalized gravitational entropy},  {\em
  Journal of High Energy Physics} {\bf 2013} (2013) 90,
  [\href{http://arxiv.org/abs/1304.4926}{{\tt arXiv:1304.4926}}].

\bibitem{faulkner2013quantum}
T.~Faulkner, A.~Lewkowycz, and J.~Maldacena, {\it Quantum corrections to
  holographic entanglement entropy},  {\em Journal of High Energy Physics} {\bf
  2013} (2013), no.~74 [\href{http://arxiv.org/abs/1307.2892}{{\tt
  arXiv:1307.2892}}].

\bibitem{dong2016deriving}
X.~Dong, A.~Lewkowycz, and M.~Rangamani, {\it Deriving covariant holographic
  entanglement},  {\em Journal of High Energy Physics} {\bf 2016} (2016) 28,
  [\href{http://arxiv.org/abs/1607.07506}{{\tt arXiv:1607.07506}}].

\bibitem{engelhardt2015quantum}
N.~Engelhardt and A.~C. Wall, {\it Quantum extremal surfaces: Holographic
  entanglement entropy beyond the classical regime},  {\em Journal of High
  Energy Physics} {\bf 2015} (2015), no.~73
  [\href{http://arxiv.org/abs/1408.3203}{{\tt arXiv:1408.3203}}].

\bibitem{dong2018entropy}
X.~Dong and A.~Lewkowycz, {\it Entropy, extremality, euclidean variations, and
  the equations of motion},  {\em Journal of High Energy Physics} {\bf 2018}
  (2018), no.~1 1--33, [\href{http://arxiv.org/abs/1705.08453}{{\tt
  arXiv:1705.08453}}].

\bibitem{almheiri2015bulk}
A.~Almheiri, X.~Dong, and D.~Harlow, {\it Bulk locality and quantum error
  correction in {A}d{S}/{CFT}},  {\em Journal of High Energy Physics} {\bf
  2015} (2015), no.~4 1--34, [\href{http://arxiv.org/abs/1411.7041}{{\tt
  arXiv:1411.7041}}].

\bibitem{may2021bulk}
A.~May, {\it {Bulk private curves require large conditional mutual
  information}},  {\em JHEP} {\bf 09} (2021) 042,
  [\href{http://arxiv.org/abs/2105.08094}{{\tt arXiv:2105.08094}}].

\bibitem{may2021quantum}
A.~May and D.~Wakeham, {\it {Quantum tasks require islands on the brane}},
  {\em Class. Quant. Grav.} {\bf 38} (2021), no.~14 144001,
  [\href{http://arxiv.org/abs/2102.01810}{{\tt arXiv:2102.01810}}].

\bibitem{dolev2022holography}
K.~Dolev and S.~Cree, {\it Holography as a resource},  {\em to appear}.

\bibitem{may2022complexity}
A.~May, {\it {Complexity and entanglement in non-local computation and
  holography}},  \href{http://arxiv.org/abs/2204.00908}{{\tt
  arXiv:2204.00908}}.

\bibitem{maximin}
A.~C. Wall, {\it Maximin surfaces, and the strong subadditivity of the
  covariant holographic entanglement entropy},  {\em Classical and Quantum
  Gravity} {\bf 31} (2014) 225007, [\href{http://arxiv.org/abs/1211.3494}{{\tt
  arXiv:1211.3494}}].

\bibitem{maximin2}
D.~Marolf, A.~C. Wall, and Z.~Wang, {\it Restricted maximin surfaces and {HRT}
  in generic black hole spacetimes},  {\em Journal of High Energy Physics} {\bf
  2019} (2019) 127, [\href{http://arxiv.org/abs/1901.03879}{{\tt
  arXiv:1901.03879}}].

\bibitem{CKNR}
B.~Czech, J.~L. Karczmarek, F.~Nogueira, and M.~V. Raamsdonk, {\it The gravity
  dual of a density matrix},  {\em Classical and Quantum Gravity} {\bf 29}
  (2012), no.~15 [\href{http://arxiv.org/abs/1204.1330}{{\tt
  arXiv:1204.1330}}].

\bibitem{jafferis2016relative}
D.~L. Jafferis, A.~Lewkowycz, J.~Maldacena, and S.~J. Suh, {\it Relative
  entropy equals bulk relative entropy},  {\em Journal of High Energy Physics}
  {\bf 2016} (2016), no.~6 1--20, [\href{http://arxiv.org/abs/1512.06431}{{\tt
  arXiv:1512.06431}}].

\bibitem{dong2016reconstruction}
X.~Dong, D.~Harlow, and A.~C. Wall, {\it Reconstruction of bulk operators
  within the entanglement wedge in gauge-gravity duality},  {\em Physical
  review letters} {\bf 117} (2016), no.~2 021601,
  [\href{http://arxiv.org/abs/1601.05416}{{\tt arXiv:1601.05416}}].

\bibitem{cotler2019entanglement}
J.~Cotler, P.~Hayden, G.~Penington, G.~Salton, B.~Swingle, and M.~Walter, {\it
  Entanglement wedge reconstruction via universal recovery channels},  {\em
  Physical Review X} {\bf 9} (2019), no.~3 031011,
  [\href{http://arxiv.org/abs/1704.05839}{{\tt arXiv:1704.05839}}].

\bibitem{headrick2014causality}
M.~Headrick, V.~E. Hubeny, A.~Lawrence, and M.~Rangamani, {\it Causality \&
  holographic entanglement entropy},  {\em Journal of High Energy Physics} {\bf
  2014} (2014), no.~12 [\href{http://arxiv.org/abs/1408.6300}{{\tt
  arXiv:1408.6300}}].

\bibitem{quantum-maximin}
C.~Akers, N.~Engelhardt, G.~Penington, and M.~Usatyuk, {\it Quantum maximin
  surfaces},  \href{http://arxiv.org/abs/1912.02799}{{\tt arXiv:1912.02799}}.

\bibitem{QFC}
R.~Bousso, Z.~Fisher, S.~Leichenauer, and A.~C. Wall, {\it A quantum focussing
  conjecture},  {\em Physical Review D} {\bf 93} (2016) 064044,
  [\href{http://arxiv.org/abs/1506.02669}{{\tt arXiv:1506.02669}}].

\bibitem{beigi2011simplified}
S.~Beigi and R.~K{\"o}nig, {\it Simplified instantaneous non-local quantum
  computation with applications to position-based cryptography},  {\em New
  Journal of Physics} {\bf 13} (2011), no.~9 093036,
  [\href{http://arxiv.org/abs/1101.1065}{{\tt arXiv:1101.1065}}].

\bibitem{raychaudhuri}
A.~Raychaudhuri, {\it Relativistic cosmology. {I}},  {\em Physical Review} {\bf
  98} (1955) 1123.

\bibitem{penrose1972techniques}
R.~Penrose, {\em Techniques in differential topology in relativity}.
\newblock SIAM, 1972.

\bibitem{cooney2015rank}
T.~Cooney, M.~Junge, C.~Palazuelos, and D.~P{\'e}rez-Garc{\'\i}a, {\it Rank-one
  quantum games},  {\em computational complexity} {\bf 24} (2015), no.~1
  133--196, [\href{http://arxiv.org/abs/1112.3563}{{\tt arXiv:1112.3563}}].

\bibitem{junge2022geometry}
M.~Junge, A.~M. Kubicki, C.~Palazuelos, and D.~P{\'e}rez-Garc{\'\i}a, {\it
  Geometry of {B}anach spaces: a new route towards position based
  cryptography},  {\em Communications in Mathematical Physics} (2022) 1--54,
  [\href{http://arxiv.org/abs/2103.16357}{{\tt arXiv:2103.16357}}].

\bibitem{buhrman2014position}
H.~Buhrman, N.~Chandran, S.~Fehr, R.~Gelles, V.~Goyal, R.~Ostrovsky, and
  C.~Schaffner, {\it Position-based quantum cryptography: {I}mpossibility and
  constructions},  {\em SIAM Journal on Computing} {\bf 43} (2014), no.~1
  150--178, [\href{http://arxiv.org/abs/1009.2490}{{\tt arXiv:1009.2490}}].

\bibitem{lau2011insecurity}
H.-K. Lau and H.-K. Lo, {\it Insecurity of position-based quantum-cryptography
  protocols against entanglement attacks},  {\em Physical review a} {\bf 83}
  (2011), no.~1 012322, [\href{http://arxiv.org/abs/1009.2256}{{\tt
  arXiv:1009.2256}}].

\bibitem{hoeffding1994probability}
W.~Hoeffding, {\it Probability inequalities for sums of bounded random
  variables},  in {\em The collected works of Wassily Hoeffding}, pp.~409--426.
\newblock Springer, 1994.

\bibitem{tomamichel2013monogamy}
M.~Tomamichel, S.~Fehr, J.~Kaniewski, and S.~Wehner, {\it A
  monogamy-of-entanglement game with applications to device-independent quantum
  cryptography},  {\em New Journal of Physics} {\bf 15} (2013), no.~10 103002,
  [\href{http://arxiv.org/abs/1210.4359}{{\tt arXiv:1210.4359}}].

\bibitem{harlow2017ryu}
D.~Harlow, {\it The ryu--takayanagi formula from quantum error correction},
  {\em Communications in Mathematical Physics} {\bf 354} (2017), no.~3
  865--912, [\href{http://arxiv.org/abs/1607.03901}{{\tt arXiv:1607.03901}}].

\bibitem{akers2019large}
C.~Akers, A.~Levine, and S.~Leichenauer, {\it Large breakdowns of entanglement
  wedge reconstruction},  {\em Physical Review D} {\bf 100} (2019), no.~12
  126006, [\href{http://arxiv.org/abs/1908.03975}{{\tt arXiv:1908.03975}}].

\bibitem{hayden2019learning}
P.~Hayden and G.~Penington, {\it Learning the alpha-bits of black holes},  {\em
  Journal of High Energy Physics} {\bf 2019} (2019), no.~12 1--55,
  [\href{http://arxiv.org/abs/1807.06041}{{\tt arXiv:1807.06041}}].

\bibitem{akers2021leading}
C.~Akers and G.~Penington, {\it Leading order corrections to the quantum
  extremal surface prescription},  {\em Journal of High Energy Physics} {\bf
  2021} (2021), no.~4 1--73, [\href{http://arxiv.org/abs/2008.03319}{{\tt
  arXiv:2008.03319}}].

\bibitem{akers2022quantum}
C.~Akers and G.~Penington, {\it Quantum minimal surfaces from quantum error
  correction},  {\em SciPost Physics} {\bf 12} (2022), no.~5 157,
  [\href{http://arxiv.org/abs/2109.14618}{{\tt arXiv:2109.14618}}].

\bibitem{bao2019beyond}
N.~Bao, G.~Penington, J.~Sorce, and A.~C. Wall, {\it Beyond toy models:
  distilling tensor networks in full ads/cft},  {\em Journal of High Energy
  Physics} {\bf 2019} (2019), no.~11 1--63,
  [\href{http://arxiv.org/abs/1812.01171}{{\tt arXiv:1812.01171}}].

\bibitem{hayden2007black}
P.~Hayden and J.~Preskill, {\it Black holes as mirrors: quantum information in
  random subsystems},  {\em Journal of high energy physics} {\bf 2007} (2007),
  no.~09 120, [\href{http://arxiv.org/abs/0708.4025}{{\tt arXiv:0708.4025}}].

\bibitem{Steane}
A.~Steane, {\it {Multiple particle interference and quantum error correction}},
   {\em Proc. Roy. Soc. Lond. A} {\bf 452} (1996) 2551,
  [\href{http://arxiv.org/abs/quant-ph/9601029}{{\tt quant-ph/9601029}}].

\bibitem{cree2022code}
S.~Cree and A.~May, {\it Code-routing: a new attack on position-verification},
  {\em arXiv preprint arXiv:2202.07812} (2022)
  [\href{http://arxiv.org/abs/2202.07812}{{\tt arXiv:2202.07812}}].

\bibitem{sorce2019cutoffs}
J.~Sorce, {\it Holographic entanglement entropy is cutoff-covariant},  {\em
  Journal of High Energy Physics} {\bf 2019} (2019), no.~10 1--42,
  [\href{http://arxiv.org/abs/1908.02297}{{\tt arXiv:1908.02297}}].

\bibitem{waldbook}
R.~M. Wald, {\em General {R}elativity}.
\newblock University of Chicago Press, 1984.

\bibitem{bennett1993teleporting}
C.~H. Bennett, G.~Brassard, C.~Cr{\'e}peau, R.~Jozsa, A.~Peres, and W.~K.
  Wootters, {\it Teleporting an unknown quantum state via dual classical and
  {E}instein-{P}odolsky-{R}osen channels},  {\em Physical review letters} {\bf
  70} (1993), no.~13 1895.

\bibitem{ishizaka2008asymptotic}
S.~Ishizaka and T.~Hiroshima, {\it Asymptotic teleportation scheme as a
  universal programmable quantum processor},  {\em Physical Review Letters}
  {\bf 101} (2008), no.~24 240501, [\href{http://arxiv.org/abs/0807.4568}{{\tt
  arXiv:0807.4568}}].

\bibitem{helstrom1969quantum}
C.~W. Helstrom, {\it Quantum detection and estimation theory},  {\em Journal of
  Statistical Physics} {\bf 1} (1969), no.~2 231--252.

\bibitem{wilde2013quantum}
M.~M. Wilde, {\em Quantum information theory}.
\newblock Cambridge University Press, 2013.

\bibitem{fuchs1999cryptographic}
C.~A. Fuchs and J.~Van De~Graaf, {\it Cryptographic distinguishability measures
  for quantum-mechanical states},  {\em IEEE Transactions on Information
  Theory} {\bf 45} (1999), no.~4 1216--1227,
  [\href{http://arxiv.org/abs/quant-ph/9712042}{{\tt quant-ph/9712042}}].

\bibitem{muller2013quantum}
M.~M{\"u}ller-Lennert, F.~Dupuis, O.~Szehr, S.~Fehr, and M.~Tomamichel, {\it On
  quantum {R}{\'e}nyi entropies: A new generalization and some properties},
  {\em Journal of Mathematical Physics} {\bf 54} (2013), no.~12 122203,
  [\href{http://arxiv.org/abs/1306.3142}{{\tt arXiv:1306.3142}}].

\bibitem{wilde2014strong}
M.~M. Wilde, A.~Winter, and D.~Yang, {\it Strong converse for the classical
  capacity of entanglement-breaking and {H}adamard channels via a sandwiched
  {R}{\'e}nyi relative entropy},  {\em Communications in Mathematical Physics}
  {\bf 331} (2014), no.~2 593--622, [\href{http://arxiv.org/abs/1306.1586}{{\tt
  arXiv:1306.1586}}].

\bibitem{PhysRevLett.83.648}
R.~Cleve, D.~Gottesman, and H.-K. Lo, {\it How to share a quantum secret},
  {\em Phys. Rev. Lett.} {\bf 83} (Jul, 1999) 648--651,
  [\href{http://arxiv.org/abs/quant-ph/9901025}{{\tt quant-ph/9901025}}].

\bibitem{eastin2009restrictions}
B.~Eastin and E.~Knill, {\it Restrictions on transversal encoded quantum gate
  sets},  {\em Physical review letters} {\bf 102} (2009), no.~11 110502,
  [\href{http://arxiv.org/abs/0811.4262}{{\tt arXiv:0811.4262}}].

\bibitem{zeng2011transversality}
B.~Zeng, A.~Cross, and I.~L. Chuang, {\it Transversality versus universality
  for additive quantum codes},  {\em IEEE Transactions on Information Theory}
  {\bf 57} (2011), no.~9 6272--6284,
  [\href{http://arxiv.org/abs/0706.1382}{{\tt arXiv:0706.1382}}].

\bibitem{chen2008subsystem}
X.~Chen, H.~Chung, A.~W. Cross, B.~Zeng, and I.~L. Chuang, {\it Subsystem
  stabilizer codes cannot have a universal set of transversal gates for even
  one encoded qudit},  {\em Physical Review A} {\bf 78} (2008), no.~1 012353,
  [\href{http://arxiv.org/abs/0801.23602}{{\tt arXiv:0801.23602}}].

\bibitem{PhysRevA.91.012305}
F.~Pastawski and B.~Yoshida, {\it Fault-tolerant logical gates in quantum
  error-correcting codes},  {\em Phys. Rev. A} {\bf 91} (Jan, 2015) 012305,
  [\href{http://arxiv.org/abs/1408.1720}{{\tt arXiv:1408.1720}}].

\bibitem{faist2020continuous}
P.~Faist, S.~Nezami, V.~V. Albert, G.~Salton, F.~Pastawski, P.~Hayden, and
  J.~Preskill, {\it Continuous symmetries and approximate quantum error
  correction},  {\em Physical Review X} {\bf 10} (2020), no.~4 041018,
  [\href{http://arxiv.org/abs/1902.07714}{{\tt arXiv:1902.07714}}].

\bibitem{kubica2021using}
A.~Kubica and R.~Demkowicz-Dobrza{\'n}ski, {\it Using quantum metrological
  bounds in quantum error correction: A simple proof of the approximate
  eastin-knill theorem},  {\em Physical Review Letters} {\bf 126} (2021),
  no.~15 150503.

\bibitem{tajima2021universal}
H.~Tajima and K.~Saito, {\it Universal limitation of quantum information
  recovery: symmetry versus coherence},  {\em arXiv preprint arXiv:2103.01876}
  (2021).

\bibitem{zhou2021new}
S.~Zhou, Z.-W. Liu, and L.~Jiang, {\it New perspectives on covariant quantum
  error correction},  {\em Quantum} {\bf 5} (2021) 521,
  [\href{http://arxiv.org/abs/2005.11918}{{\tt arXiv:2005.11918}}].

\bibitem{liu2021approximate}
Z.-W. Liu and S.~Zhou, {\it Approximate symmetries and quantum error
  correction},  {\em arXiv preprint arXiv:2111.06355} (2021)
  [\href{http://arxiv.org/abs/2111.06355}{{\tt arXiv:2111.06355}}].

\bibitem{liu2021quantum}
Z.-W. Liu and S.~Zhou, {\it Quantum error correction meets continuous
  symmetries: fundamental trade-offs and case studies},  {\em arXiv preprint
  arXiv:2111.06360} (2021) [\href{http://arxiv.org/abs/2111.06360}{{\tt
  arXiv:2111.06360}}].

\end{thebibliography}\endgroup

\end{document}